\newenvironment{enumerate-} 
{\begin{enumerate}
    
   \setlength{\parskip}{-1ex}              
   \setlength{\itemsep}{1.5ex}             
}
{
 \end{enumerate}
}
\newcommand{\K}{{\mathcal K}}
\newcommand{\T}{{\mathcal T}}
\newcommand{\pK}{\Psi_{\K}}
\newcommand{\D}{\mathcal{D}}
\newcommand{\alg}{\mathcal} 
\newcommand{\A}{\alg{A}}
\newcommand{\B}{\alg{B}}
\newcommand{\Pred}{{\sf Pred}}
\newcommand{\union}{\cup}
\keywords{Quantifier Elimination, Theory  Extensions, SMT, Hierarchical Reasoning, Ground Interpolation}
\begin{document}

\title[On Interpolation and Symbol Elimination in Theory Extensions]{On Interpolation and Symbol Elimination in Theory Extensions}

\author[V. Sofronie-Stokkermans]{Viorica Sofronie-Stokkermans}	
\address{University Koblenz-Landau, Koblenz, Germany}	
\email{sofronie@uni-koblenz.de}  





\begin{abstract}
In this paper we study possibilities of interpolation and symbol elimination 
in extensions of a theory $\T_0$ with additional function symbols whose 
properties are axiomatised using a set of clauses. 
We analyze situations in which we can perform such tasks in a hierarchical 
way, relying on existing mechanisms for symbol elimination in $\T_0$.
This is for instance possible if the base theory allows quantifier elimination. 
We analyze possibilities of extending such methods to situations in which 
the base theory does not allow quantifier elimination but has a model 
completion which does. We illustrate the method on various examples.
\end{abstract}

\maketitle

\section{Introduction}

\noindent Many problems in computer science (e.g.\ in program
verification) can be reduced to checking satisfiability of ground 
formulae w.r.t.\ a theory which can be a standard theory (for instance 
linear arithmetic) or a complex theory (typically the extension of 
a base theory $\T_0$ with additional function symbols axiomatized by 
a set $\K$ of formulae, or a combination 
of theories). SMT solvers are tuned for efficiently checking 
satisfiability of ground formulae in increasingly complex theories;
the output can be 
``satisfiable'',  
``unsatisfiable'', 
or ``unknown'' (if incomplete 
methods are used, or else termination 
cannot be guaranteed). 

More interesting  is to go beyond yes/no answers, i.e.\ to consider 
parametric systems and infer constraints on parameters
(which can be values or functions)
which guarantee that certain 
properties are met (e.g.\ constraints 
which guarantee the unsatisfiability of ground clauses in suitable theory 
extensions). 
In \cite{sofronie-ijcar10,sofronie-cade13} -- in a context 
specially tailored for the parametric verification of safety properties in 
increasingly more complex systems -- we showed that such 
constraints could be generated in extensions of a theory allowing 
quantifier elimination. 

In this paper, we propose a symbol elimination method in 
theory extensions and analyze its
properties. 
We also discuss possibilities of applying such methods to 
extensions of theories which do not allow quantifier elimination 
provided that they have a model completion which does.

Another problem we analyze is interpolation (widely used in program verification
\cite{McMillanCAV03,McMillanProver04,McMillanSurvey05,McMillanRelationApproximation,Kapur-et-all-06}). 
Intuitively, interpolants can be used for describing separations 
between the sets of ``good'' and ``bad'' states; they can 
help to discover relevant predicates in predicate abstraction with 
refinement and for over-approximation in model checking. 
It often is desirable to obtain ``ground'' interpolants of 
ground formulae. 
The first algorithms for interpolant generation in program 
verification required explicit constructions of proofs
\cite{Krajicek97,McMillanProver04}, which in general is a relatively difficult task. 
In \cite{Kapur-et-all-06} 
the existence of ground interpolants for 
{\em arbitrary formulae} w.r.t.\ a theory $\T$ is studied. It is
proved that this is the case if and only if $\T$ allows quantifier
elimination, which limits the applicability of the
results in \cite{Kapur-et-all-06}.  
Symbol elimination 
(e.g.\ using resolution and/or superposition) 
has been used for interpolant generation in e.g.\ \cite{KovacsVoronkov}.  
In \cite{Sofronie-lmcs} we identify classes of 
local theory extensions in which interpolants can be computed 
hierarchically, using a method of computing interpolants in the 
base theory. \cite{rybal-sofronie}
proposes an algorithm for the generation of interpolants for 
linear arithmetic with uninterpreted function 
symbols which reduces the problem to constraint solving 
in linear arithmetic. In both cases,  when considering 
theory extensions $\T_0 \subseteq \T_0 \cup \K$
we devise ways of 
``separating'' the instances of axioms in $\K$ and 
of the congruence axioms. 

There also exist results which relate ground interpolation to 
amalgamation or the injection transfer property 
\cite{Jonsson65,Bacsich75,Wronski86,Ghilardi-2014,Ghilardi-2017}.  
We use such results for obtaining criteria which allow us to 
recognize theories with ground interpolation. 
However, in general just knowing that ground interpolants 
exist is not sufficient: we 
want to construct the interpolants
fast (in a hierarchical or modular way) and characterize situations 
in which we know which (extension) terms these interpolants contain. 
For this, \cite{Wies,Wies-journal} introduce the notion
of $W$-separability and study its links to a form of hierarchical
interpolation. We here make the results in \cite{Wies, Wies-journal} more precise, 
and extend them. 

\

\noindent The main results of this paper can be summarized as follows: 
\begin{itemize}
\item We link the existence (and computation) of ground interpolants in a theory $\T$ to
  their existence (and computation) in a model completion $\T^*$ of $\T$. 
\item We study possibilities of effective symbol elimination in theory
  extensions (based on quantifier elimination in the base theory or in
  a model completion thereof) and analyze the properties of the formulae 
obtained this way. 
\item We analyze possibilities of hierarchical interpolation in
  local theory extensions. Our analysis extends
  both results in \cite{Sofronie-lmcs} and results in \cite{Wies} 
  by avoiding the restriction to convex base theories. We explicitly 
point out all conditions needed for hierarchical interpolation and
show how to check them. 
\end{itemize}
This paper is an extended version of \cite{sofronie-ijcar-2016}; it
extends and refines results described there as follows: We include 
a more comprehensive overview of prior work, as well as full proofs 
of the main results and detailed examples that explain the 
different procedures we propose. 
We expanded the considerations on the link between amalgamation 
and ground interpolation for theories which are not necessarily
universal and, when describing the symbol elimination procedure 
for theory extensions, we also explicitly consider situations in which 
instead of a single theory extension we need to consider chains of 
theory extensions. We analyze the relationship between the partial
amalgamation property proposed in \cite{Wies} and a weaker 
$W$-amalgamation property proposed in \cite{sofronie-ijcar-2016}.

\

\noindent {\em The paper is structured as follows.} In
Section \ref{sec:preliminaries} we present the main results on 
model theory needed in the paper.  In Section \ref{sec:interpolation}
we present existing results linking  
(sub-)amalgamation,  quantifier elimination
and the existence of ground interpolants, which
we then combine to obtain efficient ways of proving ground
interpolation and computing ground
interpolants. 
Section \ref{sec:local} contains the main definitions and 
results on local theory extensions; these are used in 
Section \ref{symb-elim} for 
symbol elimination and in 
Section \ref{interp-loc} for ground interpolation 
in theory extensions.


\section{Preliminaries}
\label{sec:preliminaries}

In this section we present the main results on 
model theory needed in the paper. 

\noindent 
We consider signatures of the form $\Pi = (\Sigma, {\sf Pred})$, 
where $\Sigma$ is a family of function symbols and ${\sf Pred}$
a family of predicate symbols.  
We assume known standard definitions from first-order logic  
such as $\Pi$-structures, models, homomorphisms, 
satisfiability, unsatisfiability.  We denote
``falsum'' with $\perp$.
%
%

\noindent Theories can be defined by specifying a set of axioms, or by specifying a
set of structures (the models of the theory). 
In this paper, (logical) theories are simply sets of sentences.

\begin{defi}[Entailment] 
 If $F, G$ are formulae and ${\mathcal T}$ is a theory we 
write: 
\begin{enumerate}
\item $F \models G$ to express the fact that every model of $F$ is a
  model of $G$; 
\item $F \models_{\mathcal T} G$ -- also written as ${\mathcal T} \cup F
  \models G$ -- 
to express the fact that every model of $F$ which is also a model of 
$\T$ is a model of $G$.
\end{enumerate}
\noindent If $F \models G$ we say that {\em $F$ entails $G$}. If $F \models_{\mathcal T}
 G$ we say that {\em $F$ entails $G$ w.r.t.\ ${\mathcal T}$}. 
\end{defi}
\noindent $F \models \perp$ means that $F$ is
unsatisfiable; $F \models_{\T} \perp$ means that there is no model of
$\T$ in which $F$ is true. 
If there is a model of $\T$ which is also a
model of $F$ we say
that $F$ is $\T$-consistent (or satisfiable w.r.t.\ ${\mathcal T}$). 

\medskip
\noindent 
If $\T$ is a theory over a signature $\Pi = (\Sigma, {\sf Pred})$ 
we denote by $\T_{\forall}$ (the universal theory of $\T$) the set of all
universal sentences which are entailed by $\T$. 

\medskip
\noindent 
If $\alg{A} = (A, \{ f_{\A} \}_{f \in \Sigma}, 
\{ P_\A \}_{P \in {\sf Pred}})$ is a $\Pi$-structure, in what follows we will sometimes
denote the universe $A$ of the structure $\alg{A}$ by $|\alg{A}|$. 

\begin{defi}[Embedding] 
For $\Pi$-structures ${\mathcal A}$ and ${\mathcal B}$, a map 
$\varphi : {\mathcal A} \rightarrow {\mathcal B}$ is an embedding if and only if 
it is an injective homomorphism and has the property that 
for every $P \in {\sf Pred}$ with arity $n$ and all 
$(a_1, \dots, a_n) \in |\alg{A}|^n$, $(a_1, \dots, a_n) \in P_\A$ iff
$(\varphi(a_1), \dots, \varphi(a_n)) \in P_{\mathcal B}$. 
\end{defi}
\medskip
\noindent  
In particular, an embedding preserves the truth of all literals.
\begin{defi}[Elementary Embedding]
An elementary embedding between two 
$\Pi$-structures is an 
embedding that preserves the truth of
all first-order formulae over $\Pi$. 
\end{defi} 
\begin{defi}[Elementarily Equivalent Structures]
Two $\Pi$-structures are elementarily equivalent if they satisfy 
the same first-order formulae over $\Pi$. 
\end{defi}
\begin{defi}[Diagram of a Structure]
Let $\alg{A} = (A, \{ f_{\A} \}_{f \in \Sigma}, \{ P_\A \}_{P \in {\sf
    Pred}})$ be a $\Pi$-structure. The
\emph{diagram} $\Delta(\A)$ of $\A$ is the set of all
literals true in the extension $\A^A$ 
of $\A$ where we have an additional constant for each element 
of $A$ (which we here denote with the same symbol) with the 
natural expanded interpretation
mapping the constant $a$ to the element $a$ of $|\A|$
(this is a set of  sentences over the
signature $\Pi^{\bar{a}}$ obtained by expanding 
$\Pi$ with a fresh constant $a$ 
for every element $a$ from $|\A|$).
\end{defi} 
\medskip
\noindent  
Note that if $\alg{A}$ is a $\Pi$-structure and $\T$ a theory and 
$\Delta(\alg{A})$  is $\T$-consistent then there exists a 
$\Pi$-structure $\alg{B}$ which is a model of $\T$
and into which $\alg{A}$ embeds. 

\smallskip 
\begin{defi}[Quantifier Elimination]
A theory $\T$ over a signature  
$\Pi$ {\em allows quantifier elimination} if for every formula $\phi$ over  
$\Pi$ there exists a quantifier-free formula $\phi^*$ over  
$\Pi$ which is equivalent to $\phi$ modulo $\T$. 
\end{defi} 
\medskip
\noindent  
Quantifier elimination can, in particular, be 
used for eliminating certain constants from ground 
formulae: 

\begin{thm}
Let $\T$ be a theory with 
signature $\Pi$ and $A(c_1, \dots, c_n, d_1, \dots, d_m)$ a ground 
formula over an extension $\Pi^C$ of $\Pi$ with additional constants $c_1, \dots, c_n$, $d_1, \dots, d_m$. 
If $\T$ has quantifier elimination then there exists a ground formula
$\Gamma(c_1, \dots, c_n)$ containing only constants $c_1, \dots, c_n$,
which is satisfiable w.r.t.\  $\T$ iff $A(c_1, \dots, c_n, d_1,\dots,d_m)$ 
is satisfiable w.r.t.\  $\T$.

\label{thm-qe-const}
\end{thm}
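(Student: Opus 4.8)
The plan is to treat the extra constants $d_1, \dots, d_m$ as existentially quantified variables and invoke quantifier elimination in $\T$. First I would note that $A(c_1, \dots, c_n, d_1, \dots, d_m)$, viewed as a formula over $\Pi$ with the $c_i$ and $d_j$ as free variables, gives rise to the formula $\phi(c_1, \dots, c_n) := \exists y_1 \dots \exists y_m\, A(c_1, \dots, c_n, y_1, \dots, y_m)$ over $\Pi$ with free variables among $c_1, \dots, c_n$ only. Since $\T$ allows quantifier elimination, there is a quantifier-free formula $\phi^*(c_1, \dots, c_n)$ over $\Pi$, with free variables among $c_1, \dots, c_n$, that is equivalent to $\phi$ modulo $\T$. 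Reintroducing the $c_i$ as the constants of $\Pi^C$, set $\Gamma(c_1, \dots, c_n) := \phi^*(c_1, \dots, c_n)$; this is a ground formula over $\Pi^C$ restricted to the constants $c_1, \dots, c_n$.

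The second step is to verify the stated equivalence. Suppose $A(c_1, \dots, c_n, d_1, \dots, d_m)$ is satisfiable w.r.t.\ $\T$: there is a model $\alg{M}$ of $\T$ and an interpretation of the constants making $A$ true, hence $\alg{M}$ with the chosen values for $c_1, \dots, c_n$ satisfies $\exists y_1 \dots \exists y_m\, A$, i.e.\ $\phi$, and therefore $\phi^*$, so $\Gamma(c_1, \dots, c_n)$ is satisfiable w.r.t.\ $\T$. Conversely, if $\Gamma(c_1, \dots, c_n)$ is satisfiable w.r.t.\ $\T$, pick a model $\alg{M}$ of $\T$ and values for $c_1, \dots, c_n$ satisfying $\phi^* = \Gamma$; by the $\T$-equivalence these same values satisfy $\phi$, so there exist elements of $|\alg{M}|$ witnessing the $y_j$, and interpreting $d_1, \dots, d_m$ by those witnesses shows $A(c_1, \dots, c_n, d_1, \dots, d_m)$ is satisfiable w.r.t.\ $\T$.

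There is no serious obstacle here: the argument is essentially the observation that satisfiability of a ground formula with Skolem constants for $d_1, \dots, d_m$ is the same as satisfiability of its existential closure over those constants, combined with the definition of quantifier elimination. The only points requiring a little care are bookkeeping ones: making sure the free variables of $\phi^*$ are genuinely among $c_1, \dots, c_n$ (which is guaranteed since quantifier elimination produces a $\Pi$-formula logically equivalent to $\phi$ modulo $\T$, and such a formula cannot depend on variables $\phi$ does not mention, up to replacing it by an equivalent one), and noting that $\phi^*$ may contain the new symbols only as the constants $c_i$ since $\Pi^C$ adds nothing else that $\phi$ mentions. One may also remark that if effective quantifier elimination is available for $\T$, the passage from $A$ to $\Gamma$ is effective.
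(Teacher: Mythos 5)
Your proposal is correct and takes essentially the same approach as the paper: replace the constants $d_1,\dots,d_m$ by existentially quantified variables, eliminate those quantifiers via QE in $\T$ to obtain $\Gamma$, and observe that satisfiability of $A$, of its existential closure over the $d_j$, and of $\Gamma$ all coincide w.r.t.\ $\T$. The only difference is that you spell out the model-theoretic verification of the two directions and the bookkeeping on free variables in slightly more detail than the paper does.
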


\begin{proof} Assume that $\T$ has quantifier elimination. 
Let $A(c_1, \dots, c_n, y_1, \dots, y_m)$ be the formula 
obtained from $A(c_1, \dots, c_n, d_1, \dots, d_m)$ by replacing 
every occurrence of $d_i$ with the variable $y_i$ for $i = 1, \dots,
m$. 
Let $\Gamma(c_1, \dots, c_n)$ be the formula obtained by eliminating the 
quantified variables $y_1, \dots y_m$ from the formula 
$\exists y_1, \dots, y_m A(c_1, \dots, c_n, y_1, \dots, y_m)$. 
The formula $\Gamma(c_1, \dots, c_n)$ is equivalent with $\exists y_1,
\dots, y_m A(c_1, \dots, c_n, y_1, \dots, y_m)$
w.r.t.\ $\T$, i.e.\ they are true in the same models of $\T$. 
The following are equivalent: 
\begin{itemize}
\item $A(c_1, \dots, c_n, d_1, \dots, d_m)$ is satisfiable w.r.t.\
  $\T$. 
\item $\exists y_1, \dots, y_m A(c_1, \dots, c_n, y_1, \dots, y_m)$ is satisfiable w.r.t.\
  $\T$. 
\item $\Gamma(c_1, \dots, c_n)$ is satisfiable w.r.t.\
  $\T$. \qedhere
\end{itemize}
\end{proof}

\begin{defi}[Model Complete Theory] 
A \emph{model complete} theory has the property that
all embeddings between its models are elementary.
\end{defi} 

\noindent Every theory which allows 
quantifier elimination (QE) is model complete (cf.\
\cite{hodges}, Theorem 7.3.1).
\begin{exa}
The following theories have QE 
and are therefore model complete.
\begin{enumerate-}
\item Presburger arithmetic with congruence mod. $n$ ($\equiv_n$),  
$n=2,3,...$ (\cite{enderton}, p.197).
\item Rational linear arithmetic in the signature $\{+,0,\leq\}$ (\cite{weispfenning-Q}).
\item Real closed ordered fields (\cite{hodges}, 7.4.4), e.g.,  the real numbers.
\item Algebraically closed fields (\cite{chang-keisler}, Ex.\ 3.5.2; Rem.\ p.204; \cite{hodges}, Ch.\ 7.4, Ex.\ 2). 
\item Finite fields (\cite{hodges},  Ch.\ 7.4, Example 2).
\item The theory of acyclic lists in the signature $\{ {\rm car}, {\rm cdr}, {\rm cons} \}$ (\cite{malcev,ghilardi:model-theoretic-methods}).
\end{enumerate-}
\label{examples-qe}
\end{exa}
\medskip
\noindent  
A model complete theory can sometimes be regarded as the completion of
another theory with the same universal fragment.  
Two theories $\T_1, \T_2$ are {\em companions} (or co-theories) if 
every model of $\T_1$ can be embedded (not necessarily elementarily) 
into a model of $\T_2$ and vice versa. This is the case iff $\T_1$ and
$\T_2$ 
have the same universal consequences (i.e.\ iff  ${\T_1}_{\forall} =
{\T_2}_{\forall}$). 

\begin{defi}[Model Companion]
A theory $\T^*$ is called a \emph{model companion} 
of $\T$ if 
\begin{enumerate}
\item[(i)] $\T$ and $\T^*$ are co-theories, 
\item[(ii)] $\T^*$ is model complete. 
\end{enumerate}
\end{defi}
\begin{defi}
\medskip
\noindent  
A theory $\T$ is called \emph{complete} if it has models and every two
models of $\T$ are elementarily equivalent (this is the same as saying
that for every formula $\phi$ in the language of $\T$ exactly one of
$\phi$, $\neg \phi$ is a consequence of $\T$). 
\end{defi}
\begin{defi}[Model Completion]
A theory $\T^*$ is called a \emph{model completion} of $\T$
if it is a model companion of $\T$ with the additional 
property 
\begin{enumerate-}
\item[(iii)] for every model $\alg{A}$ of $\T$, 
 $\T^* \union \Delta(\alg{A})$ is a complete theory  \\
(where $\Delta(\alg{A})$ is the diagram of $\A$).
\end{enumerate-}
\end{defi}

\noindent Thus, the model completion $\T^*$ of a theory $\T$ is model complete
(because it is a model companion of $\T$). Condition (iii) states that
every model of $\T$ is embeddable into a model of $\T^*$ ``in a unique
way''. 

A model complete theory is its own model completion. A theory that
admits quantifier elimination is the model completion of every one of
its companions. A theory $\T$ is the model completion of every one of its
companions iff it is the model completion of the weakest of them,
$\T_{\forall}$ (cf.\ e.g.\ \cite{poizat-book}). 
\begin{exa}
Below we present some examples of model completions:
\begin{enumerate-}
\item The theory of infinite sets is the model completion of the pure
theory of equality in the minimum signature containing only the
equality predicate (cf.\ e.g.\ \cite{ghilardi:model-theoretic-methods}). 
\item The theory of algebraically closed fields is the  model completion of the theory of fields.
This was the motivating example for developing the theory of model completions 
(\cite{chang-keisler}, Examples 3.5.2, 3.5.12; Remark 3.5.6 ff.; \cite{hodges}, 7.3).
\item The theory of dense total orders without endpoints is the model
  completion of the theory of total orders (cf.\ e.g.\ \cite{ghilardi:model-theoretic-methods}).
\item The theory of atomless Boolean algebras is the model completion
  of the theory of Boolean algebras (\cite{chang-keisler}, Example 3.5.12, cf. also p.196).
\item Universal Horn theories in finite signatures have a model completion if they are locally finite and have
the amalgamation property (e.g., graphs, posets) (\cite{wheeler:ec},
cf.\ also \cite{ghilardi:model-theoretic-methods}).
\end{enumerate-}
\label{example:model-completions}
\end{exa}

\noindent The following are easy consequences of the definitions: 

\begin{rem}
If $\T$ and $\T'$ are co-theories then $\T_{\forall} =
\T'_{\forall}$. 
 If $\T^*$ is a model companion (or model completion)
of $\T$ then $(\T^*)_{\forall} = \T_{\forall}$ and $\T^{**} = \T^*$. 
\label{cor:co-theories1}
\end{rem}

\begin{lem}
Let $\T_1$ and $\T_2$ be two co-theories with signature $\Pi$, and 
$A(c_1, \dots, c_n)$ be a ground clause over an extension $\Pi^C$ of $\Pi$
with additional constants $c_1, \dots, c_n$. 
Then $A$ is satisfiable w.r.t.\ $\T_1$  if and only
if it is satisfiable w.r.t.\ $\T_2$. 
\label{lem:co-theories-sat}
\end{lem}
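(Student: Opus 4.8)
The plan is to use the characterisation of co-theories recalled just above the statement, namely that $\T_1$ and $\T_2$ are co-theories iff every model of one embeds into a model of the other, combined with the fact (noted right after the definition of embedding) that embeddings preserve the truth of all literals, and hence of ground clauses, since a ground clause is a disjunction of ground literals. Throughout, ``$A$ is satisfiable w.r.t.\ $\T_i$'' is read in the usual way: there is a $\Pi^C$-structure whose $\Pi$-reduct is a model of $\T_i$ and which satisfies $A$; equivalently, there is a model $\A$ of $\T_i$ and elements $a_1, \dots, a_n \in |\A|$ such that $\A \models A(a_1, \dots, a_n)$ (interpreting each $c_j$ as $a_j$).

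First I would treat one direction; the other is symmetric. Assume $A(c_1, \dots, c_n)$ is satisfiable w.r.t.\ $\T_1$, so there is a model $\A$ of $\T_1$ and elements $a_1, \dots, a_n \in |\A|$ with $\A \models A(a_1, \dots, a_n)$. Since $\T_1$ and $\T_2$ are co-theories, there is an embedding $\varphi : \A \rightarrow \B$ into some model $\B$ of $\T_2$. Writing $A$ as $L_1 \vee \dots \vee L_k$ with each $L_j$ a ground literal over $\Pi^C$, some $L_j$ is true in $\A$ under the assignment $c_i \mapsto a_i$; as $\varphi$ preserves the truth of literals, that same $L_j$ is true in $\B$ under the assignment $c_i \mapsto \varphi(a_i)$, whence $\B \models A(\varphi(a_1), \dots, \varphi(a_n))$. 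Thus $A$ is satisfiable w.r.t.\ $\T_2$. Swapping the roles of $\T_1$ and $\T_2$ and using that every model of $\T_2$ embeds into a model of $\T_1$ gives the converse, and hence the equivalence.

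There is essentially no obstacle here; the only points needing (minor) care are that satisfiability of a ground formula over the extended signature $\Pi^C$ really does amount to naming finitely many elements in a model of the base theory, so that forgetting the interpretations of $c_1, \dots, c_n$ leaves a genuine $\Pi$-model of $\T_i$ to which the co-theory property applies, and that the preservation property of embeddings is used only in the forward direction and only for literals — exactly what the definition guarantees. Alternatively, one may argue purely syntactically: $A$ is unsatisfiable w.r.t.\ $\T_i$ iff $\T_i \models \forall \bar{x}\, \neg A(\bar{x})$, and since $\neg A$ is (equivalent to) a conjunction of literals, $\forall \bar{x}\, \neg A(\bar{x})$ is a universal $\Pi$-sentence; as $\T_1$ and $\T_2$ are co-theories we have ${\T_1}_{\forall} = {\T_2}_{\forall}$ (Remark~\ref{cor:co-theories1}), so this universal sentence is entailed by $\T_1$ iff it is entailed by $\T_2$, giving the claim.
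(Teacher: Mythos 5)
Your proof is correct, and in fact you give two arguments, the second of which (the one you call ``alternatively, one may argue purely syntactically'') is exactly the paper's proof: $A$ is unsatisfiable w.r.t.\ $\T_i$ iff $\T_i \models \forall \bar{y}\,\neg A(\bar{y})$, and since this is a universal $\Pi$-sentence and co-theories have the same universal fragment, the entailment holds for $\T_1$ iff it holds for $\T_2$. Your primary argument is a genuinely different, model-theoretic route: take a model of $\T_1$ satisfying $A$, use the co-theory property to embed it into a model of $\T_2$, and use that embeddings preserve literals (hence the disjunction $A$) to conclude satisfiability w.r.t.\ $\T_2$. The two routes are equivalent in strength here; the semantic one makes the role of embeddings explicit and connects nicely to the amalgamation machinery used later, while the syntactic one is shorter and, as written in the paper, works verbatim for any ground formula (the observation that $\neg A$ is a conjunction of literals is not actually needed, since $\forall \bar{y}\,\neg A(\bar{y})$ is universal as soon as $A$ is quantifier-free). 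For the semantic route you could similarly drop the decomposition into $L_1 \vee \dots \vee L_k$, because embeddings preserve the truth of all quantifier-free formulae, not just disjunctions of literals, though spelling it out at the literal level is harmless and more self-contained.
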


\begin{proof}For $i = 1, 2$, 
$A(c_1, \dots, c_n)$ is unsatisfiable w.r.t.\ $\T_i$ if and only if  
the formula $\exists y_1, \dots, y_n A(y_1, \dots, y_n)$ is false in
all models of $\T_i$. This is the case if and only if  
$\T_i \models \forall y_1, \dots, y_n \neg A(y_1, \dots, y_n)$. 
As $\T_1$ and $\T_2$ are co-theories, they have the same universal
fragment.
Thus, $\T_1 \models \forall y_1, \dots, y_n \neg A(y_1, \dots, y_n)$
if and only if  $\T_2 \models \forall y_1, \dots, y_n \neg A(y_1,
\dots, y_n)$. 
It follows that $A(c_1, \dots, c_n)$ is satisfiable w.r.t.\ $\T_1$  if and only
if it is satisfiable w.r.t.\ $\T_2$. 
\end{proof}

\noindent {\bf Notation.} We denote with (indexed
versions of) $x, y, z$ variables and with (indexed versions of) $a, b,
c, d$ constants. As we will often refer to tuples of variables or
constants, we will succinctly denote them as follows: 
${\overline x}$ will stand for a sequence of variables $x_1, \dots,
x_n$, ${\overline x}^i$ for a sequence of variables $x^i_1, \dots,
x^i_n$, and ${\overline c}$ for a sequence of constants $c_1, \dots, c_n$. 

\section{Ground Interpolation}
\label{sec:interpolation}

A $\Pi$-theory ${\mathcal T}$ has interpolation if, for all
$\Pi$-formulae  $\phi$ and $\psi$,  
if $\phi \models_{\mathcal T} \psi$ then there exists a formula 
$I$ containing only symbols common\footnote{For full first-order logic, the symbols
  common to $\phi$ and $\psi$ are the function and predicate symbols 
which occur in both $\phi$ and $\psi$. 
Remark~\ref{remark} discusses which symbols are considered to
be common to $\phi$ and $\psi$ in articles in which interpolation
modulo a theory is considered.} to $\phi$ and 
$\psi$ such that  $\phi \models_{\mathcal T} I$ and $I  \models_{\mathcal T}
\psi$.
The formula $I$ is then called {\em the interpolant} of $\phi$ and
$\psi$. 

\noindent Craig proved that first order logic has interpolation \cite{Craig57} but even if 
$\phi$ and $\psi$ are e.g.\ conjunctions of ground literals  
the interpolant $I$ may still be an arbitrary formula. 
It is often important to identify situations in which  
ground clauses have ground interpolants.
\begin{defi}[Ground Interpolation]
A theory ${\mathcal T}$ has the {\em ground interpolation 
property} (for short: ${\mathcal T}$ has {\em ground interpolation}) 
if for every pair of ground 
formulae 
$A({\overline c}, {\overline a})$ (containing constants ${\overline
  c}, {\overline a}$) 
and  $B({\overline c}, {\overline b})$ (containing
  constants ${\overline c}, {\overline b}$), if 
$A({\overline c}, {\overline a}) \wedge B({\overline c}, {\overline b}) \models_{\mathcal T} \perp$ then 
there exists a ground formula $I({\overline c})$, containing only 
the constants ${\overline c}$ occurring both in $A$ and $B$, such that 
$A({\overline c}, {\overline a}) \models_{\mathcal T} I({\overline c}) 
\text{ and } 
B({\overline c}, {\overline b}) \wedge 
I({\overline c}) \models_{\mathcal T} \perp.$
\label{ground-interpolation} 
\end{defi}

\noindent Let $\T$ be a theory in a signature $\Sigma$ and $\Sigma'$ a signature
disjoint from $\Sigma$. We denote by $\T \cup {\sf UIF}_{\Sigma'}$ the
extension of $\T$ with uninterpreted symbols in $\Sigma'$. 
\begin{defi}[General Ground Interpolation \cite{Ghilardi-2014}]
We say that a theory ${\mathcal T}$ in a signature $\Sigma$ 
has the {\em general ground interpolation 
property} (or, shorter, that ${\mathcal T}$ has {\em general ground interpolation}) 
if for every signature $\Sigma'$ disjoint from $\Sigma$ and every pair
of ground $\Sigma \cup \Sigma'$-formulae $A$ and  
$B$,  if 
$A \wedge B \models_{{\T} \cup {\sf UIF}_{\Sigma'}} \perp$ then 
there exists a ground formula $I$ such that: 
\begin{enumerate}
\item[(i)] all constants, predicate and function symbols from $\Sigma'$ 
occurring in $I$ occur both in $A$ and $B$,  and 
\item[(ii)] $A \models_{{\mathcal T} \cup {\sf UIF}_{{\Sigma}'}} I \text{ and } 
B  \wedge I \models_{{\mathcal T} \cup {\sf UIF}_{{\Sigma}'}}  \perp.$
\end{enumerate}
\label{general-ground-interpolation}
\end{defi}

\begin{rem}
{\em 
When defining ground interpolation, in many papers a difference is
made between interpreted and uninterpreted function symbols or
constants: The interpolant $I$ of two (ground) formulae $A$ and $B$ is
often required to contain only constants and uninterpreted function 
symbols occurring in both $A$ and $B$; no restriction is imposed on 
the interpreted function symbols. 
We explain how these aspects are addressed in the previously given
definitions: 
\
\begin{itemize}
\item Definition~\ref{ground-interpolation} 
assumes that all function and predicate symbols in the signature of
$\T$ which are not constant are 
{\em interpreted}, i.e.\ can be contained in the interpolant of two 
formulae $A$ and $B$ also if  they are not common to the two
formulae. 

\item In Definition~\ref{general-ground-interpolation} 
(with the notation used there) the function and predicate symbols 
from the signature
$\Sigma$ of the theory ${\mathcal T}$ are considered to be {\em
  interpreted}
(thus can be contained in the interpolant of two 
formulae $A$ and $B$ also if  they are not common to the two
formulae), whereas the constants and the function and predicate
symbols from $\Sigma'$ are considered to be {\em uninterpreted} 
(thus all constants and all predicate and function symbols occurring
in the interpolant of two 
formulae $A$ and $B$ must occur in both $A$ and $B$). 
\end{itemize}
}
\label{remark} 
\end{rem}


%

\subsection{Amalgamation and Ground Interpolation}

There exist results which relate ground interpolation to 
amalgamation 
\cite{Jonsson65,Bacsich75,Wronski86,Ghilardi-2014,Ghilardi-2017} 
and thus allow us to recognize 
many theories with ground interpolation. 

\noindent For instance, Bacsich \cite{Bacsich75} shows that every 
{\em universal theory} with the amalgamation property has 
ground interpolation. The terminology is defined below.
\begin{defi}[Amalgamation Property]
A theory $\T$ has the {\em amalgamation property} iff 
whenever 
we are given models $M_1$ and $M_2$ of $\T$ 
with a common substructure 
$A$ which is a model of $\T$, 
there exists a further model $M$ of $\T$ endowed with embeddings 
$\mu_i : M_i \rightarrow M$, 
$i = 1,2$ whose restrictions to $A$ coincide.

A theory $\T$ has the {\em strong amalgamation property} if the 
preceding embeddings 
$\mu_1, \mu_2$ and the preceding model $M$ can be chosen so as to 
satisfy the following additional condition: if for some $m_1, m_2$ 
we have $\mu_1(m_1) = \mu_2(m_2)$, then there exists an element 
$a \in A$ such that $m_1 = m_2 = a$. 
\end{defi}
\begin{thm}[\cite{Bacsich75}]
Every universal theory with the amalgamation property has the ground 
interpolation property. 
\label{bacsich75}
\end{thm}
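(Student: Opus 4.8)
The plan is to prove the contrapositive-flavoured statement directly: assume $A(\overline c,\overline a)\wedge B(\overline c,\overline b)\models_\T\perp$ and construct a ground interpolant $I(\overline c)$ over the shared constants. Since $\T$ is universal, every substructure of a model of $\T$ is again a model of $\T$; this is the key feature that makes the amalgamation hypothesis usable for \emph{ground} formulae. First I would set up the candidate interpolant semantically. Let $\Gamma(\overline c)$ be the set of all ground clauses $C(\overline c)$ over the signature of $\T$ together with the shared constants $\overline c$ such that $A(\overline c,\overline a)\models_\T C(\overline c)$. I claim $\Gamma(\overline c)\wedge B(\overline c,\overline b)\models_\T\perp$; the desired interpolant $I(\overline c)$ is then a finite conjunction extracted from $\Gamma$ by compactness. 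So the whole proof reduces to establishing this claim.

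To prove the claim, argue by contradiction: suppose $\Gamma(\overline c)\wedge B(\overline c,\overline b)$ is $\T$-consistent, and let $M_2$ be a model of $\T$ satisfying it, with the constants $\overline c,\overline b$ interpreted as elements of $M_2$. Consider the substructure $A_0$ of $M_2$ generated by the interpretations of the shared constants $\overline c$ (no other constants, no function symbols introduced) — since $\T$ is universal, $A_0\models\T$. The goal is to embed $A_0$ into a model $M_1$ of $\T\cup A(\overline c,\overline a)$ in such a way that $\overline c$ is interpreted consistently, and then amalgamate $M_1$ and $M_2$ over $A_0$ to contradict $A\wedge B\models_\T\perp$. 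The existence of $M_1$ is where the definition of $\Gamma$ does its work: by the diagram lemma stated after Definition (Diagram of a Structure), $T\cup A(\overline c,\overline a)\cup\Delta(A_0)$ is consistent iff no finite conjunction of literals from $\Delta(A_0)$ together with $A(\overline c,\overline a)$ is $\T$-unsatisfiable; but such an unsatisfiability would, by moving the $\Delta(A_0)$-literals to the other side of the turnstile and noting that $A_0$ is generated by $\overline c$, produce a ground clause over $\overline c$ entailed by $A(\overline c,\overline a)$ modulo $\T$, i.e.\ a member of $\Gamma(\overline c)$, which $A_0$ (being a substructure of $M_2\models\Gamma$) would have to satisfy — contradiction. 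Hence $M_1$ exists, $A_0$ embeds into $M_1$, and $M_1\models\T\cup A(\overline c,\overline a)$.

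Now I would apply the amalgamation property to $M_1$, $M_2$ and their common substructure $A_0$ (all models of $\T$): there is $M\models\T$ with embeddings $\mu_i:M_i\to M$ agreeing on $A_0$. Since embeddings preserve truth of all literals, and $A(\overline c,\overline a)$, $B(\overline c,\overline b)$ are ground (quantifier-free) formulae whose shared constants $\overline c$ receive the \emph{same} interpretation in $M$ via $\mu_1$ and $\mu_2$ (they live in $A_0$), we get $M\models A(\overline c,\mu_1(\overline a))\wedge B(\overline c,\mu_2(\overline b))$, contradicting $A\wedge B\models_\T\perp$. This proves the claim, and hence, by compactness, a finite $I(\overline c)$ works. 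The main obstacle I anticipate is the bookkeeping in the diagram argument of the previous paragraph — one must be careful that $A_0$ is generated purely by the shared constants so that literals of $\Delta(A_0)$ translate exactly into ground clauses over $\overline c$, and that the constants $\overline a$ and $\overline b$ play no role in $A_0$; once that is set up cleanly, the rest is a routine application of amalgamation together with the fact that universality guarantees substructures remain models.
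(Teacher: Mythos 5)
The paper cites this theorem from Bacsich \cite{Bacsich75} without proof, so there is no in-text proof to compare against. Your argument is the standard one and is essentially sound: you reduce by compactness to showing that the set $\Gamma(\overline c)$ of all ground-clause $\T$-consequences of $A$ over the shared constants is $\T$-inconsistent with $B$; if some $M_2\models\T\cup\Gamma\cup B$, you form the $\Pi$-substructure $A_0\subseteq M_2$ generated by $\overline c$ (a model of $\T$ since $\T$ is universal), use the diagram of $A_0$ together with the definition of $\Gamma$ to produce $M_1\models\T\cup A$ into which $A_0$ embeds, and amalgamate over $A_0$ to contradict $A\wedge B\models_\T\perp$. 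The diagram step works because the fresh constants of $\Delta(A_0)$ occur in neither $\T$ nor $A$, so from $A\models_\T\bigvee_i\neg\delta_i$ one can substitute $\overline c$-terms denoting the same elements of $A_0$ for those constants, yielding a clause of $\Gamma(\overline c)$ that $M_2$ would have to satisfy while $A_0$ refutes it, using the fact that embeddings reflect literals.

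Two points are glossed over, both minor. First, the paper's amalgamation property is stated for a literal common substructure, whereas your construction only gives an embedding of $A_0$ into $M_1$; the standard repair is to replace $M_1$ by an isomorphic copy in which the embedded image is $A_0$ itself, and this deserves a sentence. Second, if $\overline c$ is empty and the signature of $\T$ contains no constant symbols, the generated substructure $A_0$ would be empty; this degenerate boundary case is routinely handled by convention and does not affect the substance of the proof.
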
 

\noindent 
Theorem~\ref{bacsich75} can be used to show that 
equational classes such as (abelian) groups, 
partially-ordered sets, lattices, semilattices, distributive lattices and 
Boolean algebras have ground interpolation. 

\medskip
\noindent 
If $\T$ is not a universal theory, the amalgamation property does
not necessarily imply the ground interpolation property. We present two
possible solutions in this situation: 
\begin{description} 
\item[Solution 1] Apply Theorem~\ref{bacsich75} to the 
universal fragment $\T_{\forall}$ to check whether $\T_{\forall}$ has
ground interpolation and note that a theory $\T$ 
has ground interpolation iff its universal fragment $\T_{\forall}$
does. 
\item[Solution 2] 
Extend the amalgamation property. 
\end{description}
We discuss and compare these two approaches in what follows. 

\

\noindent {\bf Solution 1: Regard $\T_{\forall}$ instead of $\T$}. 
 We relate existence of ground interpolants in $\T$ and 
$\T_{\forall}$, and use Theorem~\ref{bacsich75} to check whether 
$\T_{\forall}$ has ground interpolation. 

\begin{lem} 
Let $\T$ be a logical theory. $\T$ has ground interpolation iff
$\T_{\forall}$ has ground interpolation. 
\label{t-forall-t-ground-int}
\end{lem}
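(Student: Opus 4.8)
The statement is an equivalence, and the key observation is that ground (un)satisfiability modulo a theory depends only on the universal consequences of that theory. I would make this precise first: for a ground formula $G({\overline c})$ over an expanded signature $\Pi^C$, we have $G({\overline c}) \models_{\T} \perp$ iff $\T \models \forall {\overline x}\, \neg G({\overline x})$, and since $\neg G$ is a quantifier-free formula, the sentence $\forall {\overline x}\, \neg G({\overline x})$ is universal; hence it is entailed by $\T$ iff it is entailed by $\T_{\forall}$. Thus $G({\overline c}) \models_{\T} \perp$ iff $G({\overline c}) \models_{\T_{\forall}} \perp$. (This is essentially the argument already used in the proof of Lemma~\ref{lem:co-theories-sat}, applied here with $\T$ and $\T_{\forall}$ in place of the two co-theories; indeed $\T$ and $\T_{\forall}$ trivially have the same universal fragment.) The same remark applies to entailments of the form $G({\overline c}) \models_{\T} H({\overline c})$ between ground formulae, since this is equivalent to $G({\overline c}) \wedge \neg H({\overline c}) \models_{\T} \perp$.

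\textbf{From this the equivalence is immediate in both directions.} Suppose $\T$ has ground interpolation, and let $A({\overline c}, {\overline a}) \wedge B({\overline c}, {\overline b}) \models_{\T_{\forall}} \perp$. By the observation above this is the same as $A \wedge B \models_{\T} \perp$, so there is a ground formula $I({\overline c})$ over the shared constants with $A \models_{\T} I$ and $B \wedge I \models_{\T} \perp$; applying the observation again in the reverse direction gives $A \models_{\T_{\forall}} I$ and $B \wedge I \models_{\T_{\forall}} \perp$, so $I$ witnesses ground interpolation for $\T_{\forall}$. The converse direction is the mirror image: an inconsistency modulo $\T$ is an inconsistency modulo $\T_{\forall}$, a $\T_{\forall}$-interpolant is produced, and it is promoted back to a $\T$-interpolant by the same translation of $\models_{\T_{\forall}}$ into $\models_{\T}$ on ground formulae.

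\textbf{Where the work is.} There is essentially no hard step; the only thing to be careful about is the translation between "$G$ is unsatisfiable w.r.t.\ $\T$'' and "$\T$ entails a universal sentence'', and in particular the handling of the auxiliary constants ${\overline a}$, ${\overline b}$, ${\overline c}$: one replaces each such constant with a fresh universally quantified variable, observes that the resulting sentence is universal (because $A$, $B$, $I$ are quantifier-free), and uses $\T_{\forall} = (\T_{\forall})_{\forall} \subseteq \T$ together with the definition of $\T_{\forall}$ as the set of universal consequences of $\T$. One should also note explicitly that if $I({\overline c})$ contains only constants common to $A$ and $B$ when these are read modulo $\T$, the same syntactic condition holds when they are read modulo $\T_{\forall}$, since the constant-sharing condition in Definition~\ref{ground-interpolation} is purely syntactic and does not depend on the theory. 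I expect the proof the authors give to be one short paragraph essentially reproducing this argument.
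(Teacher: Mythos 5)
Your proposal is correct and follows essentially the same approach as the paper: both directions hinge on the observation that ground (un)satisfiability and ground entailment modulo a theory depend only on that theory's universal fragment, since the universal closure of a quantifier-free formula is a universal sentence. You organize this slightly more cleanly than the paper does, by stating the transfer principle once up front (and noting it is an instance of Lemma~\ref{lem:co-theories-sat} applied to the co-theories $\T$ and $\T_{\forall}$) and then using it symmetrically in both directions, whereas the paper spells out the chain of equivalences inline in the ($\Rightarrow$) direction and uses a one-sided weakening argument in the ($\Leftarrow$) direction — but these are cosmetic differences, not a different proof.
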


\begin{proof}
($\Rightarrow$) Assume first that $\T$ has ground interpolation. We
show that $\T_{\forall}$ has ground interpolation.
Let $A({\overline a}, {\overline c})$ and $B({\overline b}, {\overline c})$ 
be ground formulae in the signature of $\T$ possibly 
containing new constants ${\overline a}, {\overline b}, {\overline c}$
such that $ A({\overline a}, {\overline c}) \wedge B({\overline b},
{\overline c}) \models_{\T_{\forall}} \perp$. As all formulae in 
$\T_{\forall}$ are consequences of $\T$, 
$ A({\overline a}, {\overline c}) \wedge 
B({\overline b}, {\overline c}) \models_{\T} \perp$.
As $\T$ has ground interpolation, there exists a ground formula 
$I({\overline c})$ 
in the signature of $\T$ containing only additional constants 
occurring in both $A$ and $B$, such that $A({\overline a}, {\overline c}) \models_{\T} I({\overline c})$ and 
$I({\overline c}) \wedge B({\overline b}, {\overline c}) \models_{\T} \perp$. We argue that in this case 
$A({\overline a}, {\overline c}) \models_{\T_{\forall}} I({\overline c})$ and 
$I({\overline c}) \wedge B({\overline b}, {\overline c})
\models_{\T_{\forall}} \perp$, i.e.\ $I({\overline c})$ is a ground interpolant
of $A({\overline a}, {\overline c})$ and $B({\overline b}, {\overline c})$ also w.r.t.\ $\T_{\forall}$. 
 Indeed, the following are equivalent: 
\begin{enumerate}
\item $A({\overline a}, {\overline c}) \models_{\T} I({\overline c})$ 
\item $A({\overline a}, {\overline c}) \wedge \neg I({\overline c}) \models_{\T} \perp$
\item $\exists {\overline a} \exists {\overline c} \, (A({\overline a},
{\overline c}) \wedge \neg I({\overline c})) \models_{\T} \perp$ 
\item 
$\T \models \forall  {\overline a} \forall {\overline c} \, \neg(A({\overline a},
{\overline c}) \wedge \neg I({\overline c}))$. 
\end{enumerate}
(To simplify notation, we regarded the
additional constants in $A$ and $I$ as existentially quantified variables; after negation they became universally quantified.) 

\noindent Thus, $A({\overline a}, {\overline c}) \models_{\T} I({\overline c})$ iff 
$\forall  {\overline a} \forall {\overline c} \, \neg(A({\overline a},
{\overline c}) \wedge \neg I({\overline c}) )\in \T_{\forall}$ iff 
$\T_{\forall} \models \forall  {\overline a} \forall {\overline c} \, \neg(A({\overline a},
{\overline c}) \wedge \neg I({\overline c}))$. We can now use the chain 
of equivalences established before to conclude that 
$A({\overline a}, {\overline c}) \models_{\T} I({\overline c})$ iff $A({\overline a}, {\overline c}) \models_{\T_{\forall}} I({\overline c})$. 
Similarly we can show that $I({\overline c}) \wedge B({\overline b}, {\overline c}) \models_{\T} \perp$ iff 
$I({\overline c}) \wedge B({\overline b}, {\overline c}) \models_{\T_{\forall}} \perp$. 

\medskip
\noindent $(\Leftarrow)$ Assume now that $\T_{\forall}$ has ground
 interpolation. Let $A({\overline a}, {\overline c}), B({\overline b}, {\overline c})$ 
be ground formulae in the signature of $\T$ possibly 
containing new constants ${\overline a}, {\overline b}, {\overline c}$
such that 
$ A({\overline a}, {\overline c}) \wedge B({\overline b},
{\overline c}) \models_{\T} \perp$.
Then 
$\exists {\overline a}, {\overline b}, {\overline c} \, 
A({\overline a}, {\overline c}) \wedge B({\overline b},
{\overline c}) \models_{\T} \perp$. (For the sake of simplicity we 
again regarded the
additional constants in $A$ and $I$ as constants, which we quantified 
existentially since we talk about satisfiability; after negation they became universally quantified.) 
Hence, $\models_{\T} \forall {\overline a}, {\overline b}, {\overline c} \, 
\neg(A({\overline a}, {\overline c}) \wedge B({\overline b},
{\overline c}))$, i.e.\ $\forall {\overline a}, {\overline b},
{\overline c} \, \neg(A({\overline a}, {\overline c}) \wedge B({\overline b},
{\overline c})) \in \T_{\forall}$. 
Then $\exists {\overline a}, {\overline b}, {\overline c} \, 
A({\overline a}, {\overline c}) \wedge B({\overline b},
{\overline c}) \models_{\T_{\forall}} \perp$, so 
 $ A({\overline a}, {\overline c}) \wedge B({\overline b},
{\overline c}) \models_{\T_{\forall}} \perp$.
%
%
From the fact that $\T_{\forall}$ has 
the ground interpolation property it follows that there exists a 
ground interpolant $I({\overline c})$ such that 
\[ A({\overline a}, {\overline c}) \models_{\T_{\forall}} I({\overline
  c}) \text{ and } B({\overline b}, {\overline c})  
\wedge I({\overline c})  \models_{\T_{\forall}} \perp.\]  
Since $\T \models \T_{\forall}$ we then know that $A({\overline a}, {\overline c}) \models_{\T} I({\overline c})$ and $B({\overline b}, {\overline c}) 
\wedge I({\overline c})  \models_{\T} \perp$, so $I$ is an interpolant of $A \wedge 
B$ also w.r.t. $\T$.
\end{proof}

\begin{cor}
Let $\T$ be a logical theory. Assume that $\T_{\forall}$ has the 
amalgamation property. Then both  $\T_{\forall}$ and $\T$ have ground 
interpolation. 
\label{cor-tforall}
\end{cor}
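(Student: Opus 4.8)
The plan is to obtain this statement as an immediate composition of the two results just established, Theorem~\ref{bacsich75} and Lemma~\ref{t-forall-t-ground-int}. First I would observe that $\T_{\forall}$ qualifies as a \emph{universal theory} in the sense required by Theorem~\ref{bacsich75}: by definition $\T_{\forall}$ is the set of all universal sentences entailed by $\T$, so it consists entirely of universal sentences and is in particular axiomatised by universal sentences. Hence, under the assumption that $\T_{\forall}$ has the amalgamation property, the hypotheses of Theorem~\ref{bacsich75} are met for $\T_{\forall}$.

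Second, I would apply Theorem~\ref{bacsich75} directly to $\T_{\forall}$ to conclude that $\T_{\forall}$ has the ground interpolation property. Third, I would invoke Lemma~\ref{t-forall-t-ground-int}, which asserts that $\T$ has ground interpolation if and only if $\T_{\forall}$ does; combining this equivalence with the previous step yields that $\T$ also has ground interpolation. Together these two conclusions give exactly the statement of the corollary.

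As for the main obstacle: there is essentially none, since all the real content has been packaged into Theorem~\ref{bacsich75} and Lemma~\ref{t-forall-t-ground-int}. The only point deserving a word of care is the verification that $\T_{\forall}$ really is a universal theory, which is immediate from its definition. It is worth emphasising, in line with the discussion preceding the corollary, that one genuinely needs the amalgamation property of $\T_{\forall}$ rather than of $\T$ itself: when $\T$ is not universal, amalgamation of $\T$ need not transfer to $\T_{\forall}$, and it is $\T_{\forall}$ to which Bacsich's theorem must be applied.
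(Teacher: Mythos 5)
Your proof is correct and follows exactly the same two-step argument as the paper: apply Theorem~\ref{bacsich75} to the universal theory $\T_{\forall}$, then transfer to $\T$ via Lemma~\ref{t-forall-t-ground-int}. The additional remark about why $\T_{\forall}$ rather than $\T$ must be used is a sensible clarification but does not change the substance.
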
 

\begin{proof}
Since $\T_{\forall}$ is a universal theory,
by Theorem~\ref{bacsich75} if $\T_{\forall}$ has the 
amalgamation property then it has ground interpolation. 
By Lemma~\ref{t-forall-t-ground-int} it follows that $\T$ has ground interpolation: 
\end{proof}


\medskip
\noindent {\bf Solution 2: Extend the amalgamation property.} 
In \cite{Ghilardi-2014}
Theorem~\ref{bacsich75} is extended to theories which are not necessarily 
universal. If a theory ${\mathcal T}$ is not necessarily universal its
class of models is not closed under substructures.  
In order to extend Theorem~\ref{bacsich75}  to this case it was 
necessary to define a variant of the amalgamation property 
(called the sub-amalgamation property), in which it is not required 
that the common substructure $A$ of $M_1$ and $M_2$ is a model 
of the theory ${\mathcal T}$. 
\begin{defi}[Sub-Amalgamation Property \cite{Ghilardi-2014}] 
A theory $\T$ has the {\em sub-amalgamation property} iff 
whenever 
we are given models $M_1$ and $M_2$ of $\T$ with a common substructure 
$A$, 
there exists a further model $M$ of $\T$ endowed with embeddings 
$\mu_i : M_i \rightarrow M$, 
$i = 1,2$ whose restrictions to $A$ coincide. 

A theory $\T$ has the {\em strong sub-amalgamation property} if the 
preceding embeddings 
$\mu_1, \mu_2$ and the preceding model $M$ can be chosen so as to 
satisfy the following additional condition: if for some $m_1, m_2$ 
we have $\mu_1(m_1) = \mu_2(m_2)$, then there exists an element 
$a \in A$ such that $m_1 = m_2 = a$. 
\label{def:strong-subamalgamation}
\end{defi}
Clearly, for universal theories the amalgamation property and the 
sub-amalgamation property coincide. 
\begin{defi}[Equality Interpolating Theories \cite{Ghilardi-2014}]
A theory $\T$ is equality interpolating iff it has the
ground interpolation property and has the property that 
for all tuples  ${\overline x} = x_1, \dots, x_n$, 
${\overline y}^1 = y^1_1, \dots, y^1_{n_1}$, ${\overline z}^1 = z^1_1, \dots, z^1_{m_1}$, 
${\overline y}^2 = y^2_1, \dots, y^2_{n_2}$, ${\overline z}^2 = z^2_1, \dots, z^2_{m_2}$ of constants, 
and for every pair of
ground formulae $A({\overline x}, {\overline z}^1, {\overline y}^1)$
and $B({\overline x}, {\overline z}^2, {\overline y}^2)$  
such that
$ \displaystyle{A({\overline x}, {\overline z}^1, {\overline y}^1)
  \wedge B({\overline x}, {\overline z}^2, {\overline y}^2)
  \models_{\mathcal T} \bigvee_{i = 1}^{n_1} \bigvee_{j = 1}^{n_2} y^1_i
\approx y^2_j}$ 
there exists a tuple of terms containing only
the constants in ${\overline x}$, $v({\overline x}) = v_1, \dots, v_k$ 
such that

$\displaystyle{A({\overline x}, {\overline z}^1, {\overline y}^1)  
\wedge B({\overline x}, {\overline z}^2, {\overline y}^2) \models_{\mathcal T} \bigvee_{i = 1}^{n_1}
\bigvee_{u = 1}^{k} y^1_i \approx v_u \vee \bigvee_{j = 1}^{n_2} \bigvee_{u = 1}^{k} 
v_u \approx y^2_j.}$ 
\end{defi}
\begin{thm}[\cite{Ghilardi-2014}]
The following hold:
\begin{enumerate-} 
\item A theory $\T$ has the sub-amalgamation property iff
  it has ground interpolation. 
\item $\T$ is strongly sub-amalgamable iff it has 
  general ground interpolation. 
\item If $\T$ has ground interpolation,  then $\T$ is
  strongly sub-amalgamable
iff it is equality interpolating. 
\item If $\T$ is universal and has quantifier elimination, $\T$  is equality interpolating. 
\end{enumerate-}
\label{thm:ghilardi-amalg-interp}
\end{thm}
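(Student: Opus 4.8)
The engine behind items (1)--(3) is the interplay of \emph{diagrams}, compactness, and the fact recorded above that a $\Pi$-structure $\A$ embeds into a model of $\T$ exactly when $\T\cup\Delta(\A)$ is consistent. I would treat the items in the order (1), (3), (2), (4). For (1), direction ``sub-amalgamation $\Rightarrow$ ground interpolation'': given ground $A$, $B$ with $A\wedge B\models_{\T}\perp$, I claim the set $S$ of all ground consequences of $A$ involving only the constants $\overline c$ common to $A$ and $B$ is $\T$-inconsistent with $B$; granting this, compactness yields a finite subconjunction $I(\overline c)$, which is the desired interpolant since $A\models_{\T}I$ by construction. To prove the claim, suppose $\T\cup S\cup\{B\}$ had a model $N$; then I build $M_1\models\T\cup\{A\}$ sharing with $N$ the substructure $D$ generated by $\overline c$ (consistency of $\T\cup\{A\}$ with the set of ground $\overline c$-literals true in $N$ follows since a finite obstruction would produce a member of $S$ refuting one of those literals), and then sub-amalgamate $M_1$ with $N$ over $D$, obtaining a $\T$-model of $A\wedge B$, a contradiction. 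Direction ``ground interpolation $\Rightarrow$ sub-amalgamation'': given models $M_1,M_2$ of $\T$ with common substructure $D$, show $\T\cup\Delta(M_1)\cup\Delta(M_2)$ is consistent; any finite obstruction $\alpha\wedge\beta\models_{\T}\perp$ with $\alpha\subseteq\Delta(M_1)$, $\beta\subseteq\Delta(M_2)$ has, by ground interpolation, an interpolant $I$ over the constants naming elements of $D$; since embeddings preserve literals, $I$ holds in $D$, hence in $M_2$, contradicting $\beta$ together with $I\wedge\beta\models_{\T}\perp$; a model of the whole set is the required amalgam.

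\textbf{Item (3)} (assuming $\T$ has ground interpolation, equivalently, by (1), is sub-amalgamable). ``Strongly sub-amalgamable $\Rightarrow$ equality interpolating'': if equality interpolation failed for $A(\overline x,\overline z^1,\overline y^1)$ and $B(\overline x,\overline z^2,\overline y^2)$, then by compactness there is a $\T$-model $M$ of $A\wedge B$ in which every $y^1_i$ and every $y^2_j$ differs from the value of every term over $\overline x$; let $D$ be the substructure of $M$ generated by $\overline x$, glue $M$ with a disjoint isomorphic copy of itself along $D$, and strongly amalgamate; the amalgam is a $\T$-model of $A\wedge B$, so $\mu_1(y^1_i)=\mu_2(y^2_j)$ for some $i,j$, but the strong condition forces $y^1_i\in D$ --- a contradiction. ``Equality interpolating $\Rightarrow$ strongly sub-amalgamable'': rerun the diagram argument of (1), but adjoin to $\Delta(M_1)\cup\Delta(M_2)$ all ``anti-identification'' literals $a\not\approx b$ for pairs $(a,b)\in|M_1|\times|M_2|$ that are not a shared element of $D$; a finite obstruction, after enlarging the diagram parts by the (true) inequalities separating the named elements of $M_i\setminus D$ from the named elements of $D$, has the shape $\sigma_1(\overline d,\overline e^1)\wedge\sigma_2(\overline d,\overline e^2)\models_{\T}\bigvee_{i,j}e^1_i\approx e^2_j$; equality interpolating supplies terms $\overline v(\overline d)$ with $\sigma_1\wedge\sigma_2\models_{\T}\bigvee_{i,u}e^1_i\approx v_u\vee\bigvee_{u,j}v_u\approx e^2_j$; enlarging the diagram parts once more by the (again true) inequalities $e^1_i\not\approx v_u$ and $v_u\not\approx e^2_j$ makes $\T\cup\{\sigma_1,\sigma_2\}$ inconsistent, which sub-amalgamation of $M_1,M_2$ over $\langle\overline d\rangle$ refutes; hence the enlarged set is consistent, and any model of it is a strong amalgam.

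\textbf{Items (2) and (4).} For (2): unwinding Definition~\ref{general-ground-interpolation}, ``$\T$ has general ground interpolation'' says precisely that $\T\cup{\sf UIF}_{\Sigma'}$ has ground interpolation for every finite $\Sigma'$ disjoint from $\Sigma$; by (1) this is equivalent to $\T\cup{\sf UIF}_{\Sigma'}$ being sub-amalgamable for every such $\Sigma'$, and I would prove the lemma that this holds iff $\T$ is \emph{strongly} sub-amalgamable (a strong amalgam of the $\T$-reducts admits a coherent interpretation of the fresh symbols precisely because strongness keeps apart what the two sides keep apart; conversely a single fresh function symbol already forces strongness). For (4): by (3) it suffices to show that a universal theory $\T$ with quantifier elimination is strongly sub-amalgamable --- it has ground interpolation because, being universal, $\T=\T_\forall$, and a model-complete universal theory (which $\T$ is, by QE) amalgamates, so Theorem~\ref{bacsich75} applies. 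Given $M_1,M_2\models\T$ with common substructure $D$, universality gives $D\models\T$ and model completeness makes $D$ an elementary submodel of each $M_i$; I then run the anti-identification construction of (3), and the one remaining point is consistency of the enlarged set, which by compactness reduces --- here using QE to pass from arbitrary formulas to quantifier-free ones --- to amalgamating finitely generated submodels of $\T$ strongly over a common finitely generated submodel. That last step, where QE is genuinely used, is the main obstacle; the secondary delicate point, throughout (3) and its reuse in (4), is forcing the finite obstructions into exactly the syntactic form that the equality-interpolating hypothesis consumes.
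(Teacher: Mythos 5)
The paper does not prove this theorem; it is imported verbatim from Bruttomesso, Ghilardi and Ranise \cite{Ghilardi-2014}, so there is no in-paper proof to compare against. Evaluating your proposal on its own merits: your treatments of items (1) and (3) are correct and are, in outline, the standard diagram-plus-compactness arguments used in the cited source. One small wording slip in (1): ``embeddings preserve literals'' should be ``embeddings preserve \emph{and reflect} quantifier-free sentences'' --- the ground interpolant $I$ is in general a Boolean combination of atoms, not a single literal, and you need the reflection direction (from $M_1$ down to $D$) before pushing $I$ back up into $M_2$.

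Items (2) and (4) contain genuine gaps. For (2), the pivotal lemma (that $\T\cup{\sf UIF}_{\Sigma'}$ is sub-amalgamable for every $\Sigma'$ iff $\T$ is strongly sub-amalgamable) is only asserted, and your parenthetical hint that ``a single fresh function symbol already forces strongness'' does not work as stated: if, say, $|M_2\setminus A|=1$ and $A=\emptyset$, no unary function can witness the failure of identification, whereas a fresh unary \emph{predicate} $P$ with $P_{M_1}=|M_1|$ and $P_{M_2}=|A|$ does the job cleanly and is what \cite{Ghilardi-2014} actually uses. For (4) you explicitly defer the core step --- showing that QE yields the strong amalgam --- calling it ``the main obstacle,'' so this item is not proven. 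The missing piece is roughly: universality makes any substructure $D$ of a model a model of $\T$, QE makes $\T$ model complete so the inclusion $D\hookrightarrow M_i$ is elementary, one amalgamates the \emph{elementary} diagrams of $M_1,M_2$ over $D$ by Robinson joint consistency, and the anti-identification literals you already set up in (3) survive because a hypothetical finite obstruction can be turned, via QE, into a quantifier-free consequence of $\T\cup\Delta(D)$ that would force some $a\in M_1\setminus D$ to equal a $D$-definable element, contradicting the elementarity of $D\preceq M_1$. You should either carry this out or prove (4) directly without routing through strong sub-amalgamation.
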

\begin{thm}[\cite{chang-keisler}] 
If ${\T^*}$ is a model companion of ${\T}$ the following are 
equivalent: 
\begin{enumerate-}
\item $\T^*$ is a model 
completion of $\T$. 
\item $\T$ has the amalgamation property. 
\end{enumerate-}
If, additionally, $\T$ has  universal axiomatization, either of the conditions 
(1) or (2) above is equivalent to (3) $\T^*$ allows quantifier elimination. 
\label{thm:criteria-model-compl} 
\end{thm}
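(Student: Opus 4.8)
The plan is to prove the equivalence $(1)\Leftrightarrow(2)$ in full generality and then, under the extra hypothesis that $\T$ is universally axiomatised, to establish $(1)\Leftrightarrow(3)$. Throughout I use that $\T$ and $\T^*$ are co-theories (every model of one embeds into a model of the other), that by Remark~\ref{cor:co-theories1} $(\T^*)_\forall=\T_\forall$, and that $\T^*$, being a model companion, is model complete, so every embedding between models of $\T^*$ is elementary. For \textbf{$(1)\Rightarrow(2)$}, let $M_1,M_2\models\T$ have a common substructure $A\models\T$. Embed $M_i$ into a model $M_i^*\models\T^*$; then $A$ is a common substructure of $M_1^*,M_2^*$, and, read over the signature expanded by a constant for each element of $A$, both $M_i^*$ model $\T^*\cup\Delta(A)$, which by condition (iii) is complete. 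I would then show $T:=\T^*\cup\Delta(M_1^*)\cup\Delta(M_2^*)$ is consistent, where the expanded signatures of $M_1^*$ and $M_2^*$ are taken to share exactly the constants naming elements of $A$. By compactness it suffices to check consistency of $\T^*\cup\Delta(M_1^*)\cup\{\psi(\overline a,\overline m)\}$ for a conjunction $\psi$ of literals from $\Delta(M_2^*)$, with $\overline a$ naming elements of $A$ and $\overline m$ naming elements outside $A$; since $M_2^*\models\exists\overline y\,\psi(\overline a,\overline y)$ and $M_2^*\models\T^*\cup\Delta(A)$, completeness of $\T^*\cup\Delta(A)$ forces $\T^*\cup\Delta(A)\models\exists\overline y\,\psi(\overline a,\overline y)$, hence $M_1^*\models\exists\overline y\,\psi(\overline a,\overline y)$ and we may interpret $\overline m$ by witnesses in $M_1^*$. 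A model $N$ of $T$ then carries embeddings $M_1^*\hookrightarrow N$, $M_2^*\hookrightarrow N$ agreeing on $A$; embedding $N$ into some $N'\models\T$ and composing yields embeddings $M_i\hookrightarrow N'$ agreeing on $A$, so $\T$ has the amalgamation property.

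For \textbf{$(2)\Rightarrow(1)$}, fix $A\models\T$; then $\T^*\cup\Delta(A)$ is consistent (embed $A$ into a model of $\T^*$), so it remains to show it is complete. Let $N_1,N_2$ be models of $\T^*\cup\Delta(A)$, i.e.\ models of $\T^*$ equipped with embeddings $e_i:A\to N_i$. Since a model of $\T^*$ need not be a model of $\T$, I first embed each $N_i$ into a model $M_i\models\T$ via $f_i$; then $A$ embeds into $M_i$ via $f_i\circ e_i$, and after renaming we may take $A$ to be a common substructure of $M_1,M_2\models\T$. By amalgamation there is $M\models\T$ with embeddings $\mu_i:M_i\to M$ agreeing on $A$; embed $M$ into $M^*\models\T^*$. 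The composites $\nu_i:N_i\to M^*$ are embeddings between models of $\T^*$, hence elementary, and $\nu_1\circ e_1=\nu_2\circ e_2$; therefore $N_1$ and $N_2$ satisfy the same sentences over the $A$-expanded signature. Thus $\T^*\cup\Delta(A)$ is complete, condition (iii) holds, and $\T^*$ is a model completion of $\T$.

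For \textbf{$(1)\Leftrightarrow(3)$} under the assumption that $\T$ is universal: the direction $(3)\Rightarrow(1)$ is immediate from the observation recalled above that a theory admitting quantifier elimination is the model completion of every one of its companions, $\T^*$ being in particular a companion of $\T$. For $(1)\Rightarrow(3)$ I would invoke the standard criterion (see, e.g., \cite{hodges}) that a theory $S$ has quantifier elimination iff $S\cup\Delta(A)$ is complete for every structure $A$ embeddable into a model of $S$, equivalently for every $A\models S_\forall$. Applying this with $S=\T^*$ and using $(\T^*)_\forall=\T_\forall=\T$ (the last equality because $\T$ is universal), condition (iii) of $(1)$ says exactly that $\T^*\cup\Delta(A)$ is complete for every $A\models\T=(\T^*)_\forall$; hence $\T^*$ eliminates quantifiers.

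The main obstacle I anticipate is not conceptual but organisational: in both directions of $(1)\Leftrightarrow(2)$ one must keep precise track of which constants of the various diagram expansions are identified, so that the compactness/diagram arguments really deliver embeddings that restrict to the \emph{same} map on $A$ — this is the only point where something can go wrong. The one place where a genuine external ingredient is needed is $(1)\Rightarrow(3)$, where the correct form of the quantifier-elimination test (completeness of $\T^*\cup\Delta(A)$ for \emph{all} substructures $A$ of models of $\T^*$) must be used, and where the universality of $\T$ is precisely what makes the class of models of $\T$ coincide with the class of structures embeddable in models of $\T^*$.
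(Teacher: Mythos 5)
The paper states this theorem as a citation to Chang--Keisler and does not include a proof, so there is no paper-internal argument to compare against. Your proof is correct and follows the standard textbook route (essentially the argument in Chang--Keisler for this result): for $(1)\Rightarrow(2)$, pushing $M_1,M_2$ into models $M_i^*$ of $\T^*$, using completeness of $\T^*\cup\Delta(A)$ to transfer the finite existential consequences of $\Delta(M_2^*)$ over to $M_1^*$, and concluding by compactness and the diagram lemma; for $(2)\Rightarrow(1)$, amalgamating over $A$ inside $\T$ and then observing that the resulting embeddings into a model of $\T^*$ are elementary, so $\T^*\cup\Delta(A)$ is complete; and for $(1)\Leftrightarrow(3)$, the substructure-completeness criterion for quantifier elimination together with the identification $\mathrm{Mod}((\T^*)_\forall)=\mathrm{Mod}(\T_\forall)=\mathrm{Mod}(\T)$ when $\T$ is universal, which is exactly where the universality hypothesis enters. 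The bookkeeping of shared constants in the two diagram expansions is handled correctly, and the step that could most easily go wrong --- ensuring the two embeddings $M_i^*\hookrightarrow N$ agree on $A$ because only the $A$-constants are identified between $\Delta(M_1^*)$ and $\Delta(M_2^*)$ --- is stated precisely. The argument is sound as written.
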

\begin{thm}[\cite{hodges}, p.390]
If ${\T^*}$ is a model companion of ${\T}$ the following are 
equivalent: 
\begin{enumerate-}
\item $\T^*$ allows quantifier elimination. 
\item $\T_{\forall}$ has the amalgamation property. 
\end{enumerate-}
\label{mc-qe-iff-univ-fragm-amalg}
\end{thm}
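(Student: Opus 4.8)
The plan is to reduce the statement to Theorem~\ref{thm:criteria-model-compl} applied not to $\T$ itself but to its universal fragment $\T_{\forall}$. The point is that $\T^*$ is a model companion of $\T_{\forall}$ as well, and $\T_{\forall}$ is by construction universally axiomatized, so the ``additionally'' clause of Theorem~\ref{thm:criteria-model-compl} (the equivalence with quantifier elimination) becomes available.

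First I would check that $\T^*$ is a model companion of $\T_{\forall}$. Model completeness of $\T^*$ is part of the hypothesis, so only the co-theory condition needs verification, and two theories are companions exactly when they have the same universal consequences. Now $(\T^*)_{\forall} = \T_{\forall}$ by Remark~\ref{cor:co-theories1}, since $\T^*$ is a model companion of $\T$; and $(\T_{\forall})_{\forall} = \T_{\forall}$, which is immediate, because any universal sentence entailed by $\T_{\forall}$ is entailed by $\T$ (as $\T \models \T_{\forall}$) and hence already lies in $\T_{\forall}$. Thus $(\T^*)_{\forall} = \T_{\forall} = (\T_{\forall})_{\forall}$, so $\T^*$ and $\T_{\forall}$ are co-theories, and therefore $\T^*$ is a model companion of $\T_{\forall}$.

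Now I would invoke Theorem~\ref{thm:criteria-model-compl} with $\T_{\forall}$ in the role of ``$\T$''. Since $\T_{\forall}$ has a universal axiomatization (namely $\T_{\forall}$ itself), that theorem gives that the three conditions ``$\T^*$ is a model completion of $\T_{\forall}$'', ``$\T_{\forall}$ has the amalgamation property'', and ``$\T^*$ allows quantifier elimination'' are all equivalent. The equivalence of the last two is precisely the equivalence of (1) and (2) in the present statement, which completes the argument.

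I do not expect a genuine obstacle: the only slightly delicate point is the bookkeeping that $\T^*$ is a model companion of $\T_{\forall}$ (so that one may apply the sharper, universally-axiomatized form of Theorem~\ref{thm:criteria-model-compl}), and this is routine once ``companion'' is unwound to ``same universal consequences''. If one prefers to avoid quoting that clause, the two implications can be obtained directly: for $(1)\Rightarrow(2)$, use that a theory admitting quantifier elimination is the model completion of each of its companions, in particular of $\T_{\forall}$, and then read off amalgamation of $\T_{\forall}$ from Theorem~\ref{thm:criteria-model-compl}; for $(2)\Rightarrow(1)$, combine amalgamation of $\T_{\forall}$ with the fact, just established, that $\T^*$ is a model companion of the universally axiomatized theory $\T_{\forall}$.
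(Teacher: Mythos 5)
The paper states this theorem with a citation to Hodges and gives no proof of its own, so there is nothing to compare against in the text; what can be said is that your derivation is correct and entirely self-contained within the paper's own toolbox. The key step --- showing that $\T^*$ is automatically a model companion of $\T_{\forall}$ --- is handled properly: you verify the co-theory condition via the characterization ``same universal consequences'' from Remark~\ref{cor:co-theories1} together with the observation $(\T_{\forall})_{\forall} = \T_{\forall}$, and model completeness of $\T^*$ is given. After that, invoking the universally-axiomatized clause of Theorem~\ref{thm:criteria-model-compl} with $\T_{\forall}$ in the role of the base theory gives the three-way equivalence directly, of which the (2)$\iff$(3) part is the statement to be proved. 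This is a nice observation: it shows that Theorem~\ref{mc-qe-iff-univ-fragm-amalg}, which the paper treats as a separate imported fact, is actually a corollary of the already-cited Theorem~\ref{thm:criteria-model-compl} plus the elementary remark that every theory and its universal fragment share a model companion. No gaps.
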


\noindent We now show that for every theory $\T$ which has a model
companion $\T^*$ Solutions 1 and 2 are equivalent. 
\begin{thm}
Let $\T$ be a theory and let $\T^*$ be a model companion of $\T$.  
Then $\T_{\forall}$ has the amalgamation
property iff $\T$ has the sub-amalgamation property. 
%
\end{thm}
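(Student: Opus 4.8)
The plan is to reduce the claim to a chain of equivalences that, apart from one elementary observation, have already been established in the excerpt, routed through the ground interpolation property. First I would note that $\T_{\forall}$ is a universal theory, so for $\T_{\forall}$ the amalgamation property and the sub-amalgamation property coincide (this is remarked right after Definition~\ref{def:strong-subamalgamation}). Hence it suffices to prove that $\T_{\forall}$ has the sub-amalgamation property if and only if $\T$ has the sub-amalgamation property.

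For this I would apply Theorem~\ref{thm:ghilardi-amalg-interp}(1) twice, once to $\T_{\forall}$ and once to $\T$: a theory has the sub-amalgamation property if and only if it has ground interpolation. Thus the statement reduces to ``$\T_{\forall}$ has ground interpolation if and only if $\T$ has ground interpolation'', which is precisely Lemma~\ref{t-forall-t-ground-int}. Concatenating, $\T_{\forall}$ has amalgamation iff $\T_{\forall}$ has sub-amalgamation iff $\T_{\forall}$ has ground interpolation iff $\T$ has ground interpolation iff $\T$ has sub-amalgamation, which is the claim. (For the first ``iff'' one may equivalently invoke Theorem~\ref{bacsich75} in the direction ``amalgamation $\Rightarrow$ ground interpolation'' together with Theorem~\ref{thm:ghilardi-amalg-interp}(1) for the converse.)

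An alternative, self-contained model-theoretic route avoids Ghilardi's characterization and uses only the standard fact that a structure is a model of $\T_{\forall}$ exactly when it embeds into a model of $\T$. For the forward direction, given models $M_1, M_2$ of $\T$ with a common substructure $A$, one amalgamates them over $A$ inside a model $N$ of $\T_{\forall}$ (using that $A$ is a model of $\T_{\forall}$, being a substructure of such a model), then embeds $N$ into a model of $\T$ and composes. For the converse, given models $M_1, M_2$ of $\T_{\forall}$ with a common substructure $A$, one embeds each $M_i$ into a model $N_i$ of $\T$; after a harmless renaming one may assume that the images of $A$ inside $N_1$ and $N_2$ are literally the same substructure and that the two inclusions $A \hookrightarrow N_i$ agree, then sub-amalgamates $N_1$ and $N_2$ over $A$ in $\T$ and restricts the resulting embeddings to $M_1$ and $M_2$; the amalgam is a model of $\T$, hence of $\T_{\forall}$.

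Finally I would point out that the hypothesis ``$\T^*$ is a model companion of $\T$'' is not actually used in this argument; its role is to supply, via Theorem~\ref{mc-qe-iff-univ-fragm-amalg}, a further equivalent condition -- that $\T^*$ allows quantifier elimination -- which is the reason for phrasing the result in terms of $\T^*$. As for difficulty, there is essentially none: the substance sits entirely in Lemma~\ref{t-forall-t-ground-int} and Theorem~\ref{thm:ghilardi-amalg-interp}. The only point requiring care, and only in the optional direct proof, is the renaming step that makes the common substructure $A$ literally shared by $N_1$ and $N_2$ in the converse direction.
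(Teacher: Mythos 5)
Your main argument is essentially identical to the paper's proof: both directions route through the equivalence of sub-amalgamation and ground interpolation (Theorem~\ref{thm:ghilardi-amalg-interp}(1)) and Lemma~\ref{t-forall-t-ground-int}, with the paper packaging the forward direction via Corollary~\ref{cor-tforall}. Your two additional remarks are also correct: the alternative direct model-theoretic argument (embedding into models of $\T$ and using downward preservation of universal sentences) works and is a slightly more self-contained route, and you are right that the hypothesis that $\T^*$ is a model companion of $\T$ plays no role in the proof itself -- it is there only to situate the statement alongside Theorem~\ref{mc-qe-iff-univ-fragm-amalg} and justify the theorem's role as showing the equivalence of the two ``solutions'' discussed earlier.
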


\begin{proof}($\Rightarrow$) Assume that 
$\T_{\forall}$ has the amalgamation property. Then, by
Corollary~\ref{cor-tforall}, 
$\T$ has ground interpolation, hence, 
by Theorem~\ref{thm:ghilardi-amalg-interp}
(1), $\T$ has the sub-amalgamation property. 

($\Leftarrow$) 
Assume now that 
$\T$ has the sub-amalgamation property.  By
Theorem~\ref{thm:ghilardi-amalg-interp} (1), $\T$ has
ground interpolation, so by Theorem~\ref{t-forall-t-ground-int} 
$\T_{\forall}$ has ground interpolation. Then, again by 
Theorem~\ref{thm:ghilardi-amalg-interp} (1), 
$\T_{\forall}$ has the sub-amalgamation property.  
Since $\T_{\forall}$ is a universal theory, it immediately follows
that $\T_{\forall}$ has the
amalgamation property.
\end{proof}

\subsection{Quantifier Elimination and Ground Interpolation}
Clearly, if a theory $\T$ allows quantifier elimination then it has
ground interpolation: Assume $A \wedge B \models_{\mathcal T} \perp$. 
We can simply use quantifier elimination to eliminate the non-shared 
constants  from $A$ w.r.t.\ $\T$ and obtain an interpolant.  
The converse is not true (the theory of uninterpreted function symbols 
over a signature $\Sigma$ has ground interpolation but does not allow
quantifier elimination). 
\begin{thm}
If $\T$ is a universal theory which allows quantifier elimination then
$\T$ has {\em general
ground interpolation}. 
\label{thm:QE-GGI}
\end{thm}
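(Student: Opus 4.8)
The plan is to derive this as an immediate consequence of the characterization of general ground interpolation via strong sub-amalgamation in Theorem~\ref{thm:ghilardi-amalg-interp}, rather than by manipulating the uninterpreted symbols in $\Sigma'$ directly. The key point is that the hypothesis ``$\T$ universal with quantifier elimination'' already hands us the strongest of the interpolation-flavoured properties catalogued in Theorem~\ref{thm:ghilardi-amalg-interp}, so that only a short chain of implications is needed.

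\noindent Concretely, first I would invoke Theorem~\ref{thm:ghilardi-amalg-interp}(4): since $\T$ is universal and admits quantifier elimination, $\T$ is equality interpolating, and by the very definition of that notion $\T$ has in particular the (plain) ground interpolation property. Next, because $\T$ has ground interpolation \emph{and} is equality interpolating, Theorem~\ref{thm:ghilardi-amalg-interp}(3) yields that $\T$ is strongly sub-amalgamable. Finally, Theorem~\ref{thm:ghilardi-amalg-interp}(2) says that a theory is strongly sub-amalgamable iff it has general ground interpolation, so $\T$ has general ground interpolation, as required. Thus the argument is a three-link chain through Theorem~\ref{thm:ghilardi-amalg-interp}, and there is no genuinely hard step; the only thing to be careful about is that part (4) is stated for the theory $\T$ itself (which is assumed universal), not merely for $\T_\forall$.

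\noindent For intuition, and as an alternative but more laborious route, one could argue directly from Definition~\ref{general-ground-interpolation}: given $\Sigma'$ disjoint from $\Sigma$ and ground $\Sigma\cup\Sigma'$-formulae $A, B$ with $A\wedge B\models_{\T\cup{\sf UIF}_{\Sigma'}}\perp$, one purifies and flattens by introducing a fresh constant for every subterm, so that each $\Sigma'$-function symbol is applied only to constants, and then replaces each $\Sigma'$-atom by a fresh constant while recording the induced congruence constraints. Non-shared $\Sigma'$-symbols become non-shared constants, which plain ground interpolation in $\T$ already lets us eliminate; the only delicate case is a shared $\Sigma'$-function symbol applied to colour-mixed arguments, which forces propagation of equalities of the form $f(\overline{a})\approx f(\overline{b})$ across the $A/B$ boundary. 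That is precisely the situation the equality interpolating property (available here by Theorem~\ref{thm:ghilardi-amalg-interp}(4)) is designed to handle, since it supplies shared terms $v(\overline{x})$ through which such mixed equalities can be routed — and that is the only subtle ingredient in the alternative proof. Since the citation-based argument above is immediate, I would present it and omit the purification details.
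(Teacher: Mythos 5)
Your proof is correct and follows essentially the same chain as the paper's: invoke Theorem~\ref{thm:ghilardi-amalg-interp}(4) to get equality interpolation, then (3) and (2) to pass through strong sub-amalgamation to general ground interpolation. The only cosmetic difference is that you obtain plain ground interpolation from the definition of ``equality interpolating,'' whereas the paper derives it directly from quantifier elimination; both are valid and the argument is otherwise identical.
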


\begin{proof}Clearly, if $\T$ allows quantifier elimination then it has
ground interpolation. By Theorem~\ref{thm:ghilardi-amalg-interp}(4) 
we know that if a theory
$\T$ is universal and allows quantifier elimination then it is equality
interpolating. 
By Theorem~\ref{thm:ghilardi-amalg-interp}(3), if a theory $\T$ has ground interpolation and 
is equality interpolating then it  has the strong
  sub-amalgamation property, hence, by Theorem~\ref{thm:ghilardi-amalg-interp}(2),  it has 
  general ground interpolation.
\end{proof}

\begin{exa}
\begin{enumerate-}
\item All theories in Example~\ref{examples-qe} allow quantifier
elimination, hence have ground interpolation.
\item The theory of pure equality has the strong (sub-)amalgamation
  property \cite{Ghilardi-2014},
hence by Theorem~\ref{thm:ghilardi-amalg-interp} it allows general
ground interpolation. 
\item The theory of absolutely-free data structures \cite{malcev}
is universal and has quantifier elimination, hence by
Theorem~\ref{thm:QE-GGI} 
it has general ground interpolation. 
\end{enumerate-}
\end{exa}

\subsection{Model Companions and Ground Interpolation}
 In what follows we establish links between ground interpolation 
resp.\ quantifier elimination in a theory and in its model companions 
(if they exist).

\begin{thm}
If $\T$ is a universal theory which has ground interpolation, 
and $\T^*$ is a model companion of $\T$
then $\T^*$ allows quantifier elimination (and it is a model
completion of $\T$).
\label{thn:univ-GI-QE}
\end{thm}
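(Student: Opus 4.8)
The plan is to derive the amalgamation property for $\T$ from ground interpolation, and then feed it into the model-companion criteria already established. First I would note that since $\T$ is universal, $\T$ is logically equivalent to its universal fragment $\T_{\forall}$; in particular, as observed right after Definition~\ref{def:strong-subamalgamation}, for $\T$ the sub-amalgamation property and the amalgamation property coincide.

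Next, I would use the hypothesis that $\T$ has ground interpolation: by Theorem~\ref{thm:ghilardi-amalg-interp}(1) this is equivalent to $\T$ having the sub-amalgamation property, which by the preceding remark means precisely that $\T$ has the amalgamation property.

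Finally, I would invoke Theorem~\ref{thm:criteria-model-compl} with the model companion $\T^*$ of $\T$: since $\T$ has a universal axiomatization, conditions (1), (2) and (3) there are all equivalent; as we have just established (2) (the amalgamation property), it follows that (3) holds, i.e.\ $\T^*$ allows quantifier elimination, and that (1) holds, i.e.\ $\T^*$ is a model completion of $\T$. (Alternatively, one could obtain quantifier elimination for $\T^*$ from Theorem~\ref{mc-qe-iff-univ-fragm-amalg}, using that $\T_{\forall}$ — being equivalent to $\T$ — has the amalgamation property, and then recover "model completion" from Theorem~\ref{thm:criteria-model-compl}.)

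There is no serious obstacle here; the argument is essentially a bookkeeping exercise stringing together Theorems~\ref{thm:ghilardi-amalg-interp} and~\ref{thm:criteria-model-compl} (and optionally~\ref{mc-qe-iff-univ-fragm-amalg}). The one point that needs care is the role of the universality hypothesis: it is what lets us replace $\T_{\forall}$ by $\T$, collapse sub-amalgamation into amalgamation, and — crucially — activate the extra equivalence with quantifier elimination in Theorem~\ref{thm:criteria-model-compl}, which need not hold for a theory without a universal axiomatization.
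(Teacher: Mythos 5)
Your proof is correct and follows essentially the same route as the paper: establish that $\T$ has the amalgamation property, then feed this into Theorem~\ref{thm:criteria-model-compl}. The only difference is in how amalgamation is obtained — the paper cites Bacsich's result directly for the implication from ground interpolation to amalgamation in a universal theory, whereas you derive it from Theorem~\ref{thm:ghilardi-amalg-interp}(1) combined with the observation that sub-amalgamation collapses to amalgamation for universal theories; your route has the minor advantage of relying only on results stated explicitly in the text (Theorem~\ref{bacsich75} as formulated gives only the converse direction).
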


\begin{proof}Assume that $\T$ is a universal theory which has ground 
interpolation. Then, by \cite{Bacsich75}, $\T$ has the amalgamation
property. 
By Theorem~\ref{thm:criteria-model-compl}, 
$\T^*$ is a model completion of $\T$ and it allows quantifier
elimination.
\end{proof}

\smallskip
\noindent 
We now analyze situations when $\T$ is not necessarily a universal
theory. 
\begin{thm}
Let $\T$ be a theory. Assume that $\T$ has a model companion $\T^*$. 
If $\T^*$ has ground interpolation then so does $\T$; the ground
interpolants computed w.r.t.\ $\T^*$ are also interpolants w.r.t.\
$\T$. 
%
\label{thm:mc-th}
\end{thm}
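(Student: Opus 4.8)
The plan is to reduce everything to the fact that $\T$ and $\T^*$, being a model companion pair, are co-theories and hence have the same universal fragment (Remark~\ref{cor:co-theories1}), together with Lemma~\ref{lem:co-theories-sat} (satisfiability of ground formulae is preserved between co-theories) and, optionally, Lemma~\ref{t-forall-t-ground-int}. First I would transfer ground unsatisfiability from $\T$ up to $\T^*$: if $A(\overline c, \overline a) \wedge B(\overline c, \overline b) \models_{\T} \perp$, then, viewing the new constants $\overline a, \overline b, \overline c$ as the extra constants of the extended signature and $A \wedge B$ as a ground formula over it, Lemma~\ref{lem:co-theories-sat} yields $A(\overline c, \overline a) \wedge B(\overline c, \overline b) \models_{\T^*} \perp$.

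Since $\T^*$ has ground interpolation, there is a ground formula $I(\overline c)$ over the shared constants $\overline c$ with $A(\overline c, \overline a) \models_{\T^*} I(\overline c)$ and $B(\overline c, \overline b) \wedge I(\overline c) \models_{\T^*} \perp$. It then remains to check that this same $I$ is an interpolant w.r.t.\ $\T$. The second condition says that the ground formula $B(\overline c, \overline b) \wedge I(\overline c)$ is unsatisfiable w.r.t.\ $\T^*$, so by Lemma~\ref{lem:co-theories-sat} again it is unsatisfiable w.r.t.\ $\T$, i.e.\ $B(\overline c, \overline b) \wedge I(\overline c) \models_{\T} \perp$. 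Likewise, $A(\overline c, \overline a) \models_{\T^*} I(\overline c)$ is equivalent to unsatisfiability w.r.t.\ $\T^*$ of the ground formula $A(\overline c, \overline a) \wedge \neg I(\overline c)$ (note $\neg I$ is again ground), which transfers to $\T$ by Lemma~\ref{lem:co-theories-sat}, giving $A(\overline c, \overline a) \models_{\T} I(\overline c)$. Hence $I$ is a ground interpolant of $A$ and $B$ w.r.t.\ $\T$ in the sense of Definition~\ref{ground-interpolation}, so $\T$ has ground interpolation and the interpolants obtained from $\T^*$ already work for $\T$.

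There is no real obstacle here; the only point needing a little care is that Lemma~\ref{lem:co-theories-sat} is stated for ground \emph{clauses}, whereas I apply it to the ground \emph{formulae} $A \wedge B$, $B \wedge I$ and $A \wedge \neg I$. Its proof, however, only uses that the sentence obtained by turning the constants into universally quantified variables and negating is a universal sentence, which holds for any quantifier-free formula; so I would either invoke this mild generalization, or split the formulae into clauses and argue clausewise, or bypass the issue entirely by chaining Lemma~\ref{t-forall-t-ground-int} with $\T^*_{\forall} = \T_{\forall}$ — $\T^*$ has ground interpolation $\Rightarrow$ $\T^*_{\forall}$ does $\Rightarrow$ $\T_{\forall}$ does $\Rightarrow$ $\T$ does — while the $(\Rightarrow)$ part of the proof of Lemma~\ref{t-forall-t-ground-int} shows the concrete interpolant survives each step (down to $\T^*_{\forall}$, then up to $\T$ since $\T \models \T_{\forall}$).
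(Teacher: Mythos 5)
Your proof is correct and follows essentially the same route as the paper: it uses Remark~\ref{cor:co-theories1} to get $\T_\forall = \T^*_\forall$ and then Lemma~\ref{lem:co-theories-sat} to transfer unsatisfiability from $\T$ to $\T^*$ and back again for both interpolant conditions. The one point you raise about Lemma~\ref{lem:co-theories-sat} being stated for ground clauses rather than arbitrary ground formulae is a genuine but minor technical wrinkle that the paper's own proof glosses over as well (it applies the lemma to $A \cup \neg I$), and any of the three fixes you sketch handles it.
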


\begin{proof} If $\T^*$ is the model companion of $\T$ they are
co-theories, so $\T_{\forall} = \T^*_{\forall}$, cf.\ Remark~\ref{cor:co-theories1}.
 Assume that $\T^*$ has ground interpolation.
Let $A$, $B$ be two sets of ground clauses such that 
$\T \cup A \cup B \models \perp$. As $\T_{\forall} = \T^*_{\forall}$,
by Lemma~\ref{lem:co-theories-sat}, 
$\T^* \cup A \cup B \models \perp$.
As 
$\T^*$ has ground interpolation, there exists a ground formula $I$ 
containing only constants occurring in both $A$ and $B$ such that
$\T^* \cup A \cup \neg I$ and $\T^* \cup B \cup I$ are unsatisfiable. 
Then, again by Lemma~\ref{lem:co-theories-sat},  $\T \cup A \cup \neg I$ and $\T \cup B \cup I$ are unsatisfiable,
i.e.\ $I$ is an interpolant w.r.t.\ $\T$.
\end{proof}
\begin{cor}
Let $\T$ be a universal theory. Assume that $\T$ has a model companion
$\T^*$. 
Then $\T$ has ground interpolation iff $\T^*$ has ground
interpolation.
\label{cor:gi}
\end{cor}

 \begin{proof}If $\T$ is a universal theory and has ground
interpolation then, by Theorem~\ref{thn:univ-GI-QE}, $\T^*$ allows quantifier
elimination hence has ground
interpolation. The converse follows from Theorem~\ref{thm:mc-th}.
\end{proof}

\begin{cor}
Let $\T$ be a theory. Assume that $\T$ has a model companion $\T^*$. 
If $\T^*$ allows quantifier elimination then $\T$ has ground
interpolation.
\end{cor}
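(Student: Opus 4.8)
The plan is to observe that this corollary is an immediate combination of two facts already established in the excerpt, so essentially no new work is required. First I would recall that any theory admitting quantifier elimination has ground interpolation: this was noted at the start of the subsection on quantifier elimination and ground interpolation (eliminate the non-shared constants from $A$ w.r.t.\ the theory to obtain the interpolant). Applying this to $\T^*$, we conclude that $\T^*$ has ground interpolation.

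Next I would invoke Theorem~\ref{thm:mc-th}: if $\T$ has a model companion $\T^*$ and $\T^*$ has ground interpolation, then $\T$ has ground interpolation as well (and, as that theorem states, the interpolants computed w.r.t.\ $\T^*$ are also interpolants w.r.t.\ $\T$). Combining the two steps yields the claim directly.

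There is no real obstacle here; the only thing to be careful about is making sure the hypotheses of Theorem~\ref{thm:mc-th} are met, namely that $\T^*$ is genuinely a model companion of $\T$ (given) and that it has ground interpolation (just derived from quantifier elimination). A concise proof therefore reads:

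\begin{proof}
Since $\T^*$ allows quantifier elimination, it has ground interpolation: given ground formulae $A \wedge B \models_{\T^*} \perp$, eliminating from $A$ the constants not shared with $B$ yields an interpolant. As $\T^*$ is a model companion of $\T$, Theorem~\ref{thm:mc-th} applies and gives that $\T$ has ground interpolation; moreover the interpolants computed w.r.t.\ $\T^*$ are also interpolants w.r.t.\ $\T$.
\end{proof}
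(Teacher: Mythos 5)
Your proof is correct and follows exactly the route the paper intends: the corollary is placed immediately after Theorem~\ref{thm:mc-th} precisely so that it follows by combining that theorem with the standing observation that quantifier elimination implies ground interpolation, and the paper in fact omits an explicit proof. Your reconstruction fills in the intended argument with no gaps.
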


\begin{exa} The following theories have ground 
interpolation: 
\begin{enumerate-}
\item The pure theory of equality (its model completion is the theory 
of an infinite set, which allows quantifier elimination, cf.\ e.g.\ \cite{ghilardi:model-theoretic-methods}).
\item The theory of total orderings (its model completion is the theory
  of dense total orders without endpoints, which allows quantifier
  elimination, cf.\ e.g.\ \cite{ghilardi:model-theoretic-methods}). 
\item The theory of Boolean algebras (its model completion is the theory
  of atomless Boolean algebras, which allows quantifier elimination,
  cf.\ \cite{chang-keisler}). 
\item The theory of fields (its model completion is the theory of 
  algebraically closed fields, which allows quantifier elimination,
  cf.\ \cite{chang-keisler}). 
\end{enumerate-}
\end{exa}

\

\noindent For a theory $\T$ which is not universal the amalgamation property does
not necessarily imply the ground interpolation property. Instead, we
can check whether the universal fragment $\T_{\forall}$ of $\T$ has
the amalgamation property as explained in Corollary~\ref{cor-tforall}. 

\begin{thm}
Let $\T$ be a logical theory such that $\T_{\forall}$ has the 
amalgamation property. 
If $\T$ has a model companion $\T^*$ then $\T^*$ allows 
quantifier elimination (so it is a model completion of $\T$) 
hence interpolants in $\T$ can be computed by quantifier 
elimination in $\T^*$. 
\end{thm}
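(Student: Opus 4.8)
The statement to prove is essentially a chaining together of results already established in the excerpt, so the plan is to assemble the pieces in the right order. We are given a theory $\T$ with a model companion $\T^*$, and the hypothesis that $\T_\forall$ has the amalgamation property. The goal is to conclude three things: that $\T^*$ allows quantifier elimination, that $\T^*$ is in fact a model completion of $\T$, and that interpolants in $\T$ can be computed by quantifier elimination in $\T^*$.

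\medskip

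First I would invoke Theorem~\ref{mc-qe-iff-univ-fragm-amalg}: since $\T^*$ is a model companion of $\T$, the condition ``$\T_\forall$ has the amalgamation property'' is equivalent to ``$\T^*$ allows quantifier elimination''. This immediately gives the first conclusion. Second, since a theory admitting quantifier elimination is the model completion of every one of its companions (a fact noted in the text following the definition of model completion, and $\T$ is a companion of $\T^*$), it follows that $\T^*$ is a model completion of $\T$; alternatively one can observe that $\T_\forall$ having the amalgamation property forces $\T$ itself to have it (any amalgam built for models of $\T$, which are in particular models of $\T_\forall$, works), and then apply Theorem~\ref{thm:criteria-model-compl}(2)$\Rightarrow$(1). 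Third, to justify that interpolants in $\T$ can be computed by quantifier elimination in $\T^*$: by Corollary~\ref{cor-tforall}, $\T_\forall$ having the amalgamation property gives that $\T$ (and $\T_\forall$) have ground interpolation; but more usefully, since $\T^*$ allows quantifier elimination it has ground interpolation (the non-shared constants of $A$ can be eliminated w.r.t.\ $\T^*$ by Theorem~\ref{thm-qe-const}), and by Theorem~\ref{thm:mc-th} any ground interpolant computed w.r.t.\ $\T^*$ is also a ground interpolant w.r.t.\ $\T$. Concretely: given ground clause sets $A,B$ with $\T\cup A\cup B\models\perp$, passing to $\T^*$ via Lemma~\ref{lem:co-theories-sat} (the two are co-theories, hence share their universal fragment) gives $\T^*\cup A\cup B\models\perp$, one eliminates the constants occurring only in $A$ from $A$ modulo $\T^*$ to obtain $I$, and Lemma~\ref{lem:co-theories-sat} again transfers unsatisfiability of $\T^*\cup A\cup\neg I$ and $\T^*\cup B\cup I$ back to $\T$.

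\medskip

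I do not anticipate a genuine obstacle here, since every implication is a direct citation; the only care needed is bookkeeping about which direction of each biconditional is used and making sure the hypotheses line up — in particular that $\T^*$ being a model companion is what licenses both Theorem~\ref{mc-qe-iff-univ-fragm-amalg} and the co-theory fact $\T_\forall = \T^*_\forall$ of Remark~\ref{cor:co-theories1}, and that no universality assumption on $\T$ itself is smuggled in (Theorem~\ref{thm:criteria-model-compl}'s third equivalence requires universal axiomatisation, so I would route the ``model completion'' claim through the quantifier-elimination fact rather than through that clause). The mild subtlety, if any, is purely expository: making explicit that ``interpolants in $\T$ can be computed by quantifier elimination in $\T^*$'' is a procedural statement whose correctness rests on combining Theorem~\ref{thm-qe-const} (constant elimination via QE) with the satisfiability transfer of Lemma~\ref{lem:co-theories-sat}, exactly as in the proof of Theorem~\ref{thm:mc-th}.
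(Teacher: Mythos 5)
Your proof is correct and takes essentially the same route as the paper: Theorem~\ref{mc-qe-iff-univ-fragm-amalg} yields quantifier elimination for $\T^*$, and Theorem~\ref{thm:mc-th} (via Lemma~\ref{lem:co-theories-sat}) transfers the resulting interpolants back to $\T$; you also supply the ``model completion'' claim, which the paper leaves implicit, correctly via the fact that a QE theory is the model completion of each of its companions. One small caveat on your parenthetical alternative: from $\T_\forall$ having AP you get an amalgam that is only a model of $\T_\forall$, so you would still need to embed it further into a model of $\T$ (possible since $\T$ and $\T_\forall$ are co-theories) before invoking Theorem~\ref{thm:criteria-model-compl}.
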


\begin{proof}
 If ${\T^*}$ is a model companion of ${\T}$  
and $\T_{\forall}$ has the amalgamation property 
then by Theorem~\ref{mc-qe-iff-univ-fragm-amalg} 
$\T^*$ allows quantifier elimination, so it has ground interpolation 
and by Theorem~\ref{thm:mc-th}  so does $\T$; the ground 
interpolants computed w.r.t.\ $\T^*$ are also interpolants w.r.t.\
$\T$.
\end{proof}

\noindent A summary of the results obtained in this section is given,
in succinct form, in Section~\ref{summary-1}. 

Until now, we discussed possibilities for symbol elimination and ground
interpolation in arbitrary theories. 
However, often the theories we consider are extensions of a ``base''
theory with additional function symbols satisfying certain properties
axiomatized using clauses; we now analyze such theories.
 In Section~\ref{sec:local} we recall 
the main definitions and results related to (local) theory extensions. 
We use these results in Section~\ref{symb-elim} to study possibilities of 
symbol elimination in theory extensions and in
Section~\ref{interp-loc} to identify theory extensions with ground
interpolation.

\section{Local Theory Extensions} 
\label{sec:local}

\noindent 
Let $\Pi_0 {=} (\Sigma_0, {\sf Pred})$ be a signature, and ${\mathcal T}_0$ be a 
``base'' theory with signature $\Pi_0$. 
We consider 
extensions $\T := {\mathcal T}_0 \cup \K$
of ${\mathcal T}_0$ with new function symbols $\Sigma$
({\em extension functions}) whose properties are axiomatized using 
a set $\K$ of (universally closed) clauses 
in the extended signature $\Pi = (\Sigma_0 \cup \Sigma, {\sf Pred})$, 
which contain function symbols in $\Sigma$. 
Let $C$ be a fixed countable set of fresh constants. We denote by 
$\Pi^C$ the extension of $\Pi$ with constants in $C$. 
 
\subsection{Locality Conditions}
If $G$ is a finite set of ground $\Pi^C$-clauses and $\K$ a set of
$\Pi$-clauses, we denote by  ${\sf st}({\mathcal K}, G)$ the set of all 
ground terms which occur in $G$ or ${\mathcal K}$. We denote by 
${\sf est}({\mathcal K}, G)$ the set of all extension ground terms (i.e.\ 
terms starting with a function in $\Sigma$) which occur in $G$ or ${\mathcal K}$.

We regard every finite set $G$
of ground clauses as the ground formula $\bigwedge_{C \in G} C$. 
If $T$ is a set of ground terms in the signature  $\Pi^C$, 
we denote by $\K[T]$ the set of all instances of $\K$ in which the terms 
starting with a function symbol in $\Sigma$ are in $T$. 
Formally: 
\begin{align*} 
\K[T] := \{ \varphi\sigma \,|\; & \forall \bar{x}.\,
\varphi(\bar{x}) \in \K, 
\text{ where (i) if } f \in \Sigma \text{ and } t = f(t_1,...,t_n) \text{ occurs in } \varphi\sigma\\[-1ex]                                     & \text{ then } t \in T \text{; (ii) if } x
\text{ is a variable that does not appear below some } \\[-1ex]
                                     & \text{ $\Sigma$-function in } \varphi \text{ then } \sigma(x) = x \}.
\end{align*}
%
\begin{defi}[\cite{Ihlemann-Sofronie-2010}]
\label{def-psi-loc}
Let $\Psi$ be a map associating with 
every set $T$ of ground $\Pi^C$-terms a set $\Psi(T)$ of ground $\Pi^C$-terms. 
For any set $G$ of ground $\Pi^C$-clauses we write 
$\K[\Psi_{\mathcal K}(G)]$ for $\K[\Psi({\sf est}({\mathcal K}, G))]$.
%
Let $\T_0 \cup \K$ be an extension of $\T_0$ 
with clauses in ${\mathcal K}$. 
We define: 

\medskip
\begin{tabbing}
\= ${\sf (Loc}_f^\Psi)$~ \quad \quad  \= For every finite set $G$ of ground clauses in $\Pi^C$ it holds that\\
\>\> $\T_0 \cup {\mathcal K} \cup G \models \bot$ if and only if $\T_0
\cup \K[\Psi_{\mathcal K}(G)] \cup G$ is unsatisfiable. 
\end{tabbing}

\medskip
\noindent Extensions satisfying condition ${\sf (Loc}_f^\Psi)$ are called
{\em $\Psi$-local}. 
\end{defi}
%
\noindent If $\Psi$ is the identity we obtain the notion of {\em local theory  
extensions} \cite{sofronie-cade-05,sofronie-frocos-07}, which generalizes the notion of 
{\em local theories}  
\cite{GivanMcAllester92,McAllester93,McAllester-acm-tocl-02,Ganzinger-01-lics}. 

\subsection{Partial Structures} 

\noindent In \cite{sofronie-cade-05} we showed that 
local theory extensions can be 
recognized by showing that certain partial models embed into total
ones, and in \cite{Ihlemann-Sofronie-2010} we established similar
results for $\Psi$-local theory extensions and generalizations
thereof. 
We introduce the main  
definitions here.

Let $\Pi = (\Sigma, {\sf Pred})$ be a
first-order signature with set of function symbols $\Sigma$ 
and set of predicate symbols ${\sf Pred}$. 
A \emph{partial $\Pi$-structure} is a structure 
$\A = (A, \{f_\A\}_{f\in\Sigma}, \{P_\A\}_{P\in \Pred})$, 
where $A$ is a non-empty set, for every $n$-ary $f \in \Sigma$, 
$f_\A$ is a partial function from $A^n$ to $A$, and for every $n$-ary 
$P \in {\sf Pred}$, $P_\A \subseteq A^n$. We consider constants (0-ary functions) to be always 
defined. $\A$ is called a \emph{total structure} if the 
functions $f_\A$ are all total. 
Given a (total or partial) $\Pi$-structure $\A$ and $\Pi_0 \subseteq \Pi$ 
we denote the reduct of 
$\A$ to $\Pi_0$ by $\A{|_{\Pi_0}}$.

The notion of evaluating a term $t$ with variables $X$ w.r.t.\ 
an assignment $\beta : X \rightarrow A$
 for its variables in a partial structure
$\A$ is the same as for total algebras, except that the evaluation is
undefined if $t = f(t_1,\ldots,t_n)$ 
and at least one of
$\beta(t_i)$ is undefined, or else $(\beta(t_1),\ldots,\beta(t_n))$ is
not in the domain of $f_\A$.

\begin{defi}
A \emph{weak $\Pi$-embedding} between two partial $\Pi$-structures
$\A$ and $\B$, where $\A = (A, \{f_\A\}_{f\in \Sigma}, \{P_\A\}_{P \in \Pred})$
 and  $\B = (B, \{f_\B\}_{f\in \Sigma}, \{P_\B\}_{P \in \Pred})$
is a total map $\varphi : A \rightarrow B$ such that 
\begin{enumerate-}
\item[(i)] $\varphi$ is
an embedding w.r.t.\ ${\sf Pred} \cup \{ = \}$, i.e.\ 
 for every $P \in {\sf Pred}$ with arity $n$ and every 
$a_1, \dots, a_n \in \A$, 
$(a_1, \dots, a_n) \in P_\A$ if and only if $(\varphi(a_1), \dots, \varphi(a_n))\in P_\B$. 
\item[(ii)]  
whenever $f_\A(a_1, \dots, a_n)$ is defined (in $\A$),  then 
$f_\B(\varphi(a_1), \dots, \varphi(a_n))$ is defined (in $\B$) and 
$\varphi(f_\A(a_1, \dots, a_n)) = f_\B(\varphi(a_1), \dots, \varphi(a_n))$,
for all $f \in \Sigma$. 
\end{enumerate-}
\end{defi} 
%

\begin{defi}[Weak validity]
Let $\A$ be a partial $\Pi$-algebra
and $\beta : X {\rightarrow} A$ a valuation for its variables.
$(\A, \beta)$ {\em weakly satisfies a clause  $C$} (notation: 
$(\A, \beta) \models_w C$) if either some of the literals in 
$\beta(C)$ are not defined or otherwise all literals are defined and
for at least one literal $L$ in $C$, $L$ is true in $\A$ w.r.t.\ $\beta$. 
$\A$ is a {\em weak partial model} 
of a set of clauses ${\mathcal K}$ if $(\A, \beta)  \models_w C$ for every 
valuation 
$\beta$ and every clause $C$ in ${\mathcal K}$. 
\end{defi}

\subsection{Recognizing \texorpdfstring{$\Psi$}{Psi}-Local Theory Extensions} 
In \cite{sofronie-cade-05} we proved that if 
every weak partial model of an extension 
${\mathcal T}_0 \cup {\mathcal K}$ of a base theory ${\mathcal T}_0$ 
with total base functions can be embedded into a total
model of the extension, then the extension is local. 
In \cite{sofronie-ihlemann-jacobs-tacas08} we lifted these results to $\Psi$-locality. 

Let $\alg{A} = (A, \{ f_{\A} \}_{f \in \Sigma_0 \cup \Sigma} \cup C, \{ P_\A
\}_{P \in {\sf Pred}})$ be a partial $\Pi^C$-structure with total
$\Sigma_0$-functions. 
Let $\Pi^A$ be the extension of the signature $\Pi$ with constants 
from $A$. We denote by $T(\A)$ the following set of ground 
$\Pi^A$-terms: 
$$T(\A) := \{ f(a_1,...,a_n) \,|\; f \in \Sigma, a_i \in A, i=1,\dots,n, f_{\A}(a_1,...,a_n) \text{ is defined }  \}. $$
Let ${\sf PMod}_{w,f}^\Psi({\Sigma}, {\mathcal T})$ be the class of all
weak partial models $\A$ of ${\mathcal T}_0 \cup {\mathcal K}$, such that
$\A{|_{\Pi_0}}$ is a total model of $\T_0$, the
$\Sigma$-functions are possibly partial, $T(\A)$ is finite and 
all terms in $\Psi({\sf est}(\K, T(\A)))$ are
defined (in the extension $\A^A$ with
constants from $A$).
We consider the following embeddability property of partial algebras:

\medskip
\begin{tabbing}
\= $({\sf Emb}_{w,f}^\Psi)$ \quad \= Every $\alg{A} \in {\sf PMod}_{w,f}^\Psi({\Sigma},  \T)$ weakly embeds into a total model of $\T$.
\index{$\Psi$-!embeddability}
%
%
\end{tabbing}

\medskip
\noindent 
We also consider the properties $({\sf EEmb}_{w,f}^{\Psi})$,  
which additionally requires the embedding to be {\em elementary} and 
$({\sf Comp}_f)$  which requires that every  structure $\alg{A} \in {\sf
  PMod}_{w,f}^\Psi({\Sigma}, \T)$ embeds
into a total model of $\T$ {\em with the same support}. 

\

When establishing links between locality and embeddability we require 
that the clauses in $\K$
are \emph{flat} 
and \emph{linear} w.r.t.\ $\Sigma$-functions.
When defining these notions we distinguish between ground and 
non-ground clauses.

\begin{defi}
An {\em extension clause $D$ is flat} (resp. \emph{quasi-flat}) 
when all symbols 
below a $\Sigma$-function symbol in $D$ are variables. 
(resp. variables or ground $\Pi_0$-terms).
$D$ is \emph{linear}  if whenever a variable occurs in two terms of $D$
starting with $\Sigma$-functions, the terms are equal, and 
no term contains two occurrences of a variable.

A {\em ground clause $D$ is flat} if all symbols below a $\Sigma$-function 
in $D$ are constants.
A {\em ground clause $D$ is linear} if whenever a constant occurs in two terms in $D$ whose root symbol is in $\Sigma$, the two terms are identical, and
if no term which starts with a $\Sigma$-function contains two occurrences of the same constant.
\label{flat}
\end{defi} 
 \begin{defi}[\cite{Ihlemann-Sofronie-2010}]
With the above notations, let $\Psi$ be a map associating with 
$\K$ and a set of $\Pi^C$-ground terms $T$ 
a set $\pK(T)$ of $\Pi^C$-ground terms. 
We call $\pK$ a \emph{term closure operator} if the following
holds for all sets of ground terms $T, T'$: 
\begin{enumerate}
\item $\mathrm{est}(\K, T) \subseteq \pK(T)$,
\item $T \subseteq T' \Rightarrow \pK(T) \subseteq \pK(T')$,
\item $\pK(\pK(T)) \subseteq \pK(T)$,
\item for
  any map $h: C \rightarrow C$, $\bar{h}(\pK(T)) = \Psi_{\bar{h}\K}(\bar{h}(T))$,
  where $\bar{h}$ is the canonical extension of $h$ to extension
  ground terms.
\end{enumerate}
\end{defi}

\begin{thm}[\cite{sofronie-ihlemann-jacobs-tacas08,
    Ihlemann-Sofronie-2010}]
Let ${\mathcal T}_0$ be a first-order theory and $\K$ a set of universally closed flat clauses in the signature
$\Pi$. The following hold: 
\begin{enumerate-}
\item If all clauses in $\K$ are linear  
and $\Psi$ is a term closure operator with the property 
that for every flat set of ground terms $T$, $\Psi(T)$ is
flat then either of the conditions $({\sf Emb}_{w,f}^\Psi)$ and $({\sf EEmb}_{w,f}^\Psi)$ implies
$({\sf Loc}_f^{\Psi})$.
\item If  the extension ${\mathcal T}_0 \subseteq {\mathcal T} {=} {\mathcal T}_0 {\cup} {\mathcal K}$
satisfies $({\sf Loc}_f^{\Psi})$ then $({\sf Emb}_{w,f}^\Psi)$ holds. 
\end{enumerate-}
\label{check-loc}
\end{thm}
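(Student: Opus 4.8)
\textbf{Proof plan for Theorem~\ref{check-loc}.}
The statement is essentially a packaging of two earlier results (\cite{sofronie-cade-05} for pure locality, \cite{sofronie-ihlemann-jacobs-tacas08} for the $\Psi$-local lifting), so the plan is to reconstruct those two arguments under the hypotheses stated here. The heart of part~(1) is the classical ``completion'' technique: starting from a weak partial model $\A \in \PModw$-style class, one builds a total model of $\T = \T_0 \cup \K$ that still refutes $G$ whenever $\T_0 \cup \K[\pKG] \cup G$ is satisfiable. The plan is as follows. First, assume $\T_0 \cup \K[\pKG] \cup G$ has a model; call it $\A_0$. Using the flatness and linearity of $\K$ together with the fact that $\Psi$ maps flat term sets to flat ones, turn $\A_0$ into a \emph{partial} $\Pi^C$-structure $\A$ by keeping the $\Sigma_0$-part total and declaring $f_\A(a_1,\dots,a_n)$ defined only on the (finitely many) arguments named by terms in $\pKG$; set the values as dictated by $\A_0$. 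The flat/linear hypotheses are exactly what guarantee this is well defined and that $\A$ weakly satisfies every clause of $\K$: each instance whose $\Sigma$-terms are all defined is an instance lying in $\K[\pKG]$, hence true in $\A_0$; every other instance is weakly satisfied vacuously because some literal is undefined. One also checks $\A{\rest}$ is a total model of $\T_0$ and that $T(\A)$ is finite with all terms of $\Psi(\mathrm{est}(\K,T(\A)))$ defined --- here the term-closure axioms (1)--(3) for $\pK$, in particular $\pK(\pK(T))\subseteq\pK(T)$, are what pin $T(\A)$ down so that no ``new'' required terms appear. Thus $\A \in {\sf PMod}_{w,f}^\Psi(\Sigma,\T)$.

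Next, apply the hypothesis: by $({\sf Emb}_{w,f}^\Psi)$ (or $({\sf EEmb}_{w,f}^\Psi)$) there is a weak embedding $\varphi$ of $\A$ into a total model $\B$ of $\T$. Since a weak embedding preserves truth of all literals that are \emph{defined} in $\A$, and since $G$ is ground and flat so that every literal of $G$ is evaluated using only defined terms of $\A$ (its extension terms are among those kept defined), $\B \models G$. Hence $\B$ is a total model of $\T_0 \cup \K \cup G$, so $\T_0 \cup \K \cup G \not\models \bot$. The converse direction of $({\sf Loc}_f^\Psi)$ --- if $\T_0\cup\K[\pKG]\cup G$ is unsatisfiable then $\T_0\cup\K\cup G\models\bot$ --- is trivial since $\K[\pKG]\subseteq \K$ up to logical consequence (every instance in $\K[\pKG]$ follows from the universally quantified $\K$). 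This establishes $({\sf Loc}_f^\Psi)$, proving part~(1).

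For part~(2), the implication runs the other way and is the easier half: assuming $({\sf Loc}_f^\Psi)$, take any $\A \in {\sf PMod}_{w,f}^\Psi(\Sigma,\T)$ and form the set $G$ consisting of the (flat, ground) diagram of $\A$ restricted to the finitely many defined extension terms --- i.e.\ introduce a fresh constant for each element of $|\A|$ and for each defined value $f_\A(a_1,\dots,a_n)=a$ add the ground literal $f(a_1,\dots,a_n)\approx a$, together with the $\Pi_0$-diagram. Because $\A{\rest}$ is a total model of $\T_0$ and $\A$ weakly satisfies $\K$, the set $\T_0\cup\K[\pKG]\cup G$ is satisfiable (witnessed by $\A$ itself, suitably expanded --- all instances in $\K[\pKG]$ have their $\Sigma$-terms defined and are therefore genuinely, not just weakly, true). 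By $({\sf Loc}_f^\Psi)$ it follows that $\T_0\cup\K\cup G$ is satisfiable; any model $\B$ of it, read as a $\Pi$-structure on the interpretations of the new constants, is a total model of $\T$ into which $\A$ weakly embeds via the map sending each element to the denotation of its constant. Hence $({\sf Emb}_{w,f}^\Psi)$ holds.

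The main obstacle is the bookkeeping in part~(1) around the set $T(\A)$: one must be sure that closing $\A$'s defined domain exactly at $\pKG$ does not force new extension terms into $\Psi(\mathrm{est}(\K,T(\A)))$ that are left undefined, which would eject $\A$ from ${\sf PMod}_{w,f}^\Psi(\Sigma,\T)$ and break the appeal to the embeddability hypothesis. This is precisely where the term-closure operator axioms and the flatness-preservation of $\Psi$ are used, and getting those containments lined up (rather than the embedding step, which is routine preservation of literals) is where care is needed.
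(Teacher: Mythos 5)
The paper itself does not prove Theorem~\ref{check-loc} --- it is cited from the earlier locality papers --- so your proposal is being compared against the standard argument those papers give. Your overall strategy matches that argument: for~(1), restrict a model of $\T_0 \cup \K[\Psi_\K(G)] \cup G$ to a partial structure whose extension functions are defined exactly on the tuples named by $\Psi_\K(G)$, verify that this partial structure lies in ${\sf PMod}_{w,f}^\Psi(\Sigma, \T)$, and then invoke $({\sf Emb}_{w,f}^\Psi)$; for~(2), go the other way via the diagram. You have also correctly identified the delicate bookkeeping in~(1): showing that $\Psi(\mathrm{est}(\K, T(\A)))$ stays inside what is already defined, using idempotence and the renaming-compatibility axiom~(4) of term closure operators. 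That much is sound (modulo one point: you implicitly assume $G$ is flat when you argue that $G$'s literals are defined in $\A$; the condition $({\sf Loc}_f^\Psi)$ quantifies over arbitrary ground $G$, so strictly one first reduces to flat $G$ by purification before running your argument).

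There is, however, a genuine gap in your treatment of part~(2). You form the diagram-like set $G$ for a partial structure $\A \in {\sf PMod}_{w,f}^\Psi(\Sigma,\T)$ and then apply $({\sf Loc}_f^\Psi)$ to it, but $({\sf Loc}_f^\Psi)$ is only stated for \emph{finite} sets $G$ of ground clauses, while ${\sf PMod}_{w,f}^\Psi(\Sigma,\T)$ only requires $T(\A)$ to be finite, not $|\A|$. If $|\A|$ is infinite (which it may well be, e.g.\ when $\T_0$ has no finite models), the diagram $\Delta(\A)$ --- in particular the distinctness literals $a \not\approx a'$ and the full $\Pi_0$-diagram --- is infinite, and your step ``by $({\sf Loc}_f^\Psi)$ it follows that $\T_0 \cup \K \cup G$ is satisfiable'' is not licensed. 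If instead you truly truncate $G$ to a finite piece (literals over the finitely many elements occurring in $T(\A)$), then a model $\B$ of $\T_0 \cup \K \cup G$ need not receive an injective total map from $|\A|$ at all, and the embedding you claim ``via the map sending each element to the denotation of its constant'' is undefined or non-injective on the rest of $|\A|$. The correct argument is by compactness: suppose $\T_0 \cup \K \cup \Delta(\A)$ is unsatisfiable; extract a finite $\Delta_0 \subseteq \Delta(\A)$ with $\T_0 \cup \K \cup \Delta_0 \models \perp$; apply $({\sf Loc}_f^\Psi)$ to the finite $\Delta_0$ to get $\T_0 \cup \K[\Psi_\K(\Delta_0)] \cup \Delta_0 \models \perp$; then observe that $\mathrm{est}(\K,\Delta_0) \subseteq T(\A)$, so by monotonicity $\Psi_\K(\Delta_0) \subseteq \Psi(\mathrm{est}(\K,T(\A)))$, whose terms are all defined in $\A^A$; hence any total completion of $\A$ satisfies $\T_0 \cup \K[\Psi_\K(\Delta_0)] \cup \Delta_0$, a contradiction. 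This compactness step is the missing ingredient.
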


\subsection{Examples of Local Theory Extensions} 
\label{sect:examples-local-theories}

Using a variant of Theorem~\ref{check-loc}, 
in \cite{sofronie-cade-05} we gave several examples of local 
theory extensions: 

\begin{enumerate-}
\item any extension $\T_0 \cup {\sf UIF}_{\Sigma}$ of a theory $\T_0$
  with free functions in a set $\Sigma$; 
\item extensions of a theory $\T_0$ with signature $\Sigma_0$
  having an injective function (constructor) $c$ with arity $n$
with suitable selector functions $s_1, \dots, s_n$; 
\item extensions of ${\mathbb R}$  with one or several 
  functions satisfying conditions such as boundedness, or boundedness
  on the slope; 
\item extensions of partially ordered 
theories -- in a class ${\sf Ord}$ 
consisting of the theories of posets, 
(dense) totally-ordered sets, semilattices, 
(distributive) lattices, Boolean algebras, or 
${\mathbb R}$ --  with 
a monotone function $f$, i.e.\ satisfying: 

\smallskip
~~~~~~~~~~~~~$ ({\sf Mon}(f)) \quad \quad \displaystyle{\bigwedge_{i =
    1}^n} x_i \leq y_i \rightarrow f(x_1, \dots, x_n) \leq f(y_1,
\dots, y_n).$

\item Generalized monotonicity conditions -- combinations of 
monotonicity in some arguments 
and antitonicity in other arguments  -- as well as extensions with
functions defined by case distinction (over a disjoint set of
conditions) were studied in 
\cite{Sofronie-Ihlemann07}. 
\end{enumerate-} 
We now present some more examples which were studied in 
\cite{Sofronie-lmcs}. 

\begin{thm}[\cite{Sofronie-lmcs}]
We consider the following base theories ${\mathcal T}_0$: 
(1) ${\mathcal P}$ (posets), 
(2) ${\sf  TOrd}$ (totally-ordered sets), 
(3) ${\sf SLat}$ (semilattices), 
(4) ${\sf DLat}$ (distributive lattices), 
(5) ${\sf Bool}$ (Boolean algebras),  
(6) the theory ${\mathbb R}$ of reals resp.\ 
${\sf LI}(\mathbb R)$ (linear arithmetic over ${\mathbb R}$), or 
the theory ${\mathbb Q}$ of rationals resp.\  
${\sf LI}(\mathbb Q)$ (linear arithmetic over ${\mathbb Q}$),   
or (a subtheory of) the theory of integers (e.g.\ Presburger arithmetic). 
The following theory extensions are local:

\
\begin{enumerate-} 
\item[(a)] Extensions of any theory ${\mathcal T}_0$ 
for which $\leq$ is reflexive with functions satisfying boundedness 
$({\sf Bound}^t(f))$ or guarded boundedness $({\sf GBound}^t(f))$ conditions 

\

$({\sf Bound}^t(f)) \quad  \quad \forall x_1, \dots, x_n (f(x_1, \dots, x_n) \leq t(x_1, \dots, x_n))$ 

\

$({\sf GBound}^t(f)) \quad \forall x_1, \dots, x_n (\phi(x_1, \dots, x_n) \rightarrow f(x_1, \dots, x_n) \leq t(x_1, \dots, x_n)),$

\

\noindent 
where $t(x_1, \dots, x_n)$ is a term in the base signature $\Pi_0$ and 
$\phi(x_1, \dots, x_n)$ a conjunction of literals in the signature $\Pi_0$, 
whose variables are in $\{ x_1, \dots, x_n \}$.

\medskip 
\item[(b)] Extensions of any theory  ${\mathcal T}_0$  in (1)--(6)  
with ${\sf Mon}(f) \wedge {\sf Bound}^t(f)$, if $t(x_1, \dots, x_n)$ is 
a term in the base signature $\Pi_0$ in the variables $x_1, \dots, x_n$ 
such that for every model of ${\mathcal T}_0$ the associated function is
monotone in the variables $x_1, \dots, x_n$.

\medskip
\item[(c)] Extensions of any theory $\T_0$ in  (1)--(6) 
with functions satisfying ${\sf Leq}(f,g) \wedge {\sf Mon}(f)$.

\

$({\sf Leq}(f,g)) \quad \forall x_1, \dots, x_n (\bigwedge_{i = 1}^n x_i \leq y_i \rightarrow f(x_1, \dots, x_n) \leq g(y_1, \dots, y_n))$

\

\item[(d)] Extensions of 
  any theory $\T_0$ which is one of the totally-ordered theories in (2) or (6)  (i.e.\ the theory ${\sf
    TOrd}$ of totally ordered sets or the theory of real numbers)
with functions satisfying 
${\sf SGc}(f,g_1, \dots, g_n) \wedge {\sf Mon}(f, g_1, \dots, g_n)$.

\

$({\sf SGc}(f,g_1, \dots, g_n)) \quad \forall x_1,\dots, x_n, x ( \bigwedge_{i = 1}^n x_i  \leq g_i(x) \rightarrow f(x_1, \dots, x_n) \leq x)$

\

\item[(e)] Extensions of  any theory $\T_0$ in (1)--(3) 
with functions satisfying ${\sf SGc}(f,g_1) \wedge {\sf Mon}(f, g_1)$.
\end{enumerate-}

\medskip
\noindent All the extensions above satisfy condition $({\sf Loc}_f)$. 
\label{examples-local}
\end{thm}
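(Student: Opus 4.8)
The plan is to derive Theorem~\ref{examples-local} from (a suitable variant of) the embeddability criterion Theorem~\ref{check-loc}. First I would check that all the axioms occurring in (a)--(e) are flat and linear w.r.t.\ $\Sigma$-functions in the sense of Definition~\ref{flat}: ${\sf Mon}$, ${\sf Bound}^t$, ${\sf GBound}^t$, ${\sf Leq}(f,g)$ are flat and linear as written, while ${\sf SGc}(f,g_1,\dots,g_n)$ is flat but \emph{not} linear (the variable $x$ occurs below the distinct $\Sigma$-functions $g_1,\dots,g_n$), so for cases (d) and (e) one has to invoke the variant of Theorem~\ref{check-loc} that tolerates such shared variables over the totally-ordered base theories in (2) and (6). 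Since the conclusion is plain locality $({\sf Loc}_f)$, we take $\Psi$ to be (essentially) the identity, so that it suffices to establish the embeddability property: every weak partial model $\A$ of $\T_0\cup\K$ whose reduct $\A|_{\Pi_0}$ is a total model of $\T_0$ and whose $\Sigma$-functions are defined only on the finite set $T(\A)$ weakly embeds into a total model $\B$ of $\T_0\cup\K$. Thus the whole proof reduces to constructing, for each family, such a $\B$ together with a weak embedding $\A\hookrightarrow\B$.

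For case (a) this is immediate and in fact yields the stronger completability property $({\sf Comp}_f)$: keep the base reduct and the underlying set of $\A$ unchanged, and for every extension function $f$ and every tuple $\bar a$ on which $f_\A$ is undefined set $f_\B(\bar a) := t_\B(\bar a)$ (in the guarded case ${\sf GBound}^t$ one may do this regardless of whether the guard holds). Because $\A$ weakly satisfies $\K$ and $t_\A$ is total, the literal $f(\bar a)\leq t(\bar a)$ is already \emph{true} wherever $f_\A(\bar a)$ is defined, and on the new tuples it holds because $\leq$ is reflexive --- the only property of $\T_0$ used here. The identity map is then the required weak embedding, so $({\sf Emb}_{w,f})$ holds.

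For cases (b)--(e) the completed function must additionally respect order conditions (monotonicity, ${\sf Leq}(f,g)$, or ${\sf SGc}(f,g_1,\dots,g_n)\wedge{\sf Mon}$), so one cannot simply read the values off the bounding term. The plan here is: if $\A|_{\Pi_0}$ does not already carry the finite suprema and infima used below --- which it does for lattices, distributive lattices, Boolean algebras, totally-ordered sets, and ${\mathbb R}/{\mathbb Q}$ --- first embed it into a model of $\T_0$ that does, using the Dedekind--MacNeille completion for posets and the ideal completion for semilattices (both are again models of $\T_0$); then, for each tuple $\bar a$ on which $f_\A$ is undefined, set
$$f_\B(\bar a)\ :=\ \sup\{\, f_\A(\bar b) : \bar b\leq\bar a,\ f_\A(\bar b)\text{ defined}\,\}\ \sqcap\ \inf\{\, f_\A(\bar b) : \bar b\geq\bar a,\ f_\A(\bar b)\text{ defined}\,\}$$
(in the bounded cases one also caps by $t_\B(\bar a)$, which is in fact automatic since $t$ is monotone; in the ${\sf SGc}$ cases one restricts using the guards $g_i$ and completes the $g_i$ dually). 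As $T(\A)$ is finite these are finite joins/meets, hence defined in the completed base model, and $\A$ weakly satisfies ${\sf Mon}$, so $\sup$-below $\leq\inf$-above and the definition is consistent with the partial data. One then checks that $\B$ is a total model of $\T_0\cup\K$: monotonicity holds because $f_\B(\bar a)$ is sandwiched between the join of the defined values below and the meet of the defined values above $\bar a$; boundedness and the ${\sf Leq}$ / ${\sf SGc}$ conditions follow using monotonicity of $t$ (resp.\ of the guarding functions) and the totally-ordered setting; and the identity (composed with the completion embedding) is a weak embedding since defined values and all ${\sf Pred}$-literals are preserved.

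The main obstacle is exactly the verification that a \emph{single} completion of $f$ --- and, in (c)--(e), of the auxiliary functions simultaneously --- meets \emph{all} the conjoined order axioms at once while the enlarged structure remains a model of $\T_0$. The delicate points are: showing the ``$\sup$-below $\sqcap\ \inf$-above'' element actually realises monotonicity against \emph{all} tuples, old and new, and against the cap by $t$; making sure the passage to a completion does not leave the class (this is why distributive lattices, Boolean algebras, totally-ordered sets and ${\mathbb R}/{\mathbb Q}$ are handled without completing, and why posets/semilattices use the order-theoretic completions); and, in (d)--(e), untangling the interaction between the semi-Galois guards $x_i\leq g_i(x)$ and the joint monotonicity of $f$ and of the $g_i$, which is the reason those cases are restricted to totally-ordered base theories and require the non-linear variant of Theorem~\ref{check-loc}. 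Once these are settled, $({\sf Emb}_{w,f})$ holds for every extension in the list, and Theorem~\ref{check-loc} (in its standard and its shared-variable forms) yields $({\sf Loc}_f)$, completing the proof of Theorem~\ref{examples-local}.
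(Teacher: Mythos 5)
Your overall strategy --- reduce $({\sf Loc}_f)$ to the embeddability criterion of Theorem~\ref{check-loc} and then, for each family, complete a finite weak partial model of $\T_0\cup\K$ to a total one --- is exactly the approach taken in \cite{Sofronie-lmcs} and \cite{sofronie-cade-05}, and your treatment of case (a) (extend $f$ by $f_\B(\bar a):=t_\B(\bar a)$ on undefined tuples, using only reflexivity of $\leq$) is correct and essentially the paper's argument, giving $({\sf Comp}_f)$ there.

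However, there are genuine gaps in (b)--(e). First, a factual slip: in case (e) we have $n=1$, so ${\sf SGc}(f,g_1)$ reads $\forall x_1,x\,(x_1\leq g_1(x)\to f(x_1)\leq x)$; the variable $x$ occurs under a single $\Sigma$-term $g_1(x)$ and $x_1$ under a single $\Sigma$-term $f(x_1)$, so this clause \emph{is} linear. Only case (d) (with $n\geq 2$ distinct $g_i$) falls outside the flat-and-linear hypothesis of Theorem~\ref{check-loc}, and grouping (e) with (d) misdiagnoses which variant of the locality criterion is needed where. Second, and more seriously, the uniform ``$\sup$-below $\sqcap\inf$-above'' recipe you propose for (b)--(e) is not actually verified; the final paragraph explicitly lists the hard verifications (compatibility of a \emph{single} completion with the conjunction of all axioms, interaction of the ${\sf SGc}$ guards with joint monotonicity of $f$ and the $g_i$, preservation of $\T_0$ under completion) and then stops short of carrying them out. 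Precisely those checks are the content of the theorem: for (d) and (e) the ${\sf SGc}$ guard ties the value of $f(\bar a)$ to the behaviour of the auxiliary functions $g_i$, and a naive sandwich completion of $f$ alone need not respect $\bigwedge_i a_i\leq g_i(a)\to f(\bar a)\leq a$ once the $g_i$ are themselves extended; the restriction of (d) to totally-ordered base theories and the restriction of (e) to posets/totally-ordered sets/semilattices are not explained by your construction but are exactly what the missing verification would have to justify. You also leave unaddressed the empty-join/meet corner cases and the fact that the constructed value must coincide with $f_\A$ on already-defined tuples (which does hold, but only because $\A$ weakly satisfies ${\sf Mon}$ --- this deserves an explicit line). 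In short, the outline is in the right spirit, but for (b)--(e), and particularly (d)--(e), the proof is a plan rather than a proof.
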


\subsection{Locality Transfer Results}
In \cite{Ihlemann-Sofronie-2010} we analyzed the way locality results 
can be transferred. 
Property ${\sf (EEmb_{w,f})}$, for instance, is preserved if we enrich 
the base theory $\T_0$:  

\begin{thm}[${\sf (EEmb)}$ Transfer,
  \cite{Ihlemann-Sofronie-2010}] 
\label{cor:eemb-transfer}
Let $\Pi_0 = (\Sigma_0, {\sf Pred})$ be a signature, 
$\T_0$ a theory in $\Pi_0$, 
$\Sigma_1$ and $\Sigma_2$ two disjoint sets of new function symbols,
$\Pi_i :=  (\Sigma_0 \cup \Sigma_i, {\sf Pred})$, $i = 1,2$. 
Assume that $\T_2$ is a $\Pi_2$-theory with $\T_0 \subseteq \T_2$, 
and $\K$ is a set of universally closed $\Pi_1$-clauses. 
If the extension $\T_0 \subseteq \T_0 \cup \K$ enjoys $({\sf EEmb}_{w,f})$ 
then so does the extension $\T_2 \subseteq \T_2 \cup \K.$

In particular, if $\K$ is flat and linear then the extension 
$\T_2 \subseteq \T_2 \cup \K$ satisfies condition $({\sf Loc}_f)$. 
If all the variables in clauses in ${\mathcal K}$ occur below $\Sigma_1$-functions, 
and ground satisfiability is decidable in ${\mathcal T}_2$, then 
ground satisfiability is decidable in ${\mathcal T}_2 {\cup} {\mathcal K}$. 
\label{cor-ext}
\end{thm}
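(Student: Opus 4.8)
The plan is to reduce the statement to the hypothesis $({\sf EEmb}_{w,f})$ for $\T_0 \subseteq \T_0 \union \K$ by passing to $\Pi_1$-reducts, and then to push the $\Sigma_2$-structure back onto the model produced this way by a back-and-forth amalgamation of expansions and elementary extensions over their common $\Pi_0$-reducts. The technical core I would isolate first is the following \emph{amalgamation lemma}: if $\varphi \colon M \to N$ is an elementary embedding of $\Pi_0$-structures, $\Pi \supseteq \Pi_0$, and $M^{+}$ is any $\Pi$-expansion of $M$, then there are a $\Pi$-structure $P$, a $\Pi$-elementary embedding $\alpha \colon M^{+} \to P$ and a $\Pi_0$-elementary embedding $\beta \colon N \to P|_{\Pi_0}$ with $\alpha = \beta \circ \varphi$ on $|M|$. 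This follows by compactness applied to $\mathrm{eldiag}_{\Pi}(M^{+}) \union \mathrm{eldiag}_{\Pi_0}(N)$ together with the identifications of the constants naming shared elements: a finite fragment is already satisfied in $M^{+}$ itself, interpreting the finitely many constants naming elements of $N$ by elements of $M$ that realize the relevant existential $\Pi_0$-statements --- here it is essential that on the $N$-side one takes the $\Pi_0$-\emph{elementary} diagram, so that only $\Pi_0$-formulae, which $\varphi$ preserves, must be transported.

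Next I would take an arbitrary $\A \in {\sf PMod}_{w,f}(\Sigma_1, \T_2 \union \K)$: a weak partial model of $\T_2 \union \K$ whose $\Sigma_0 \union \Sigma_2$-functions are total, whose $\Sigma_1$-functions may be partial, and with $T(\A)$ finite. Since $\T_0 \subseteq \T_2$ we have $\A|_{\Pi_0} \models \T_0$, so $\A|_{\Pi_1} \in {\sf PMod}_{w,f}(\Sigma_1, \T_0 \union \K)$; by $({\sf EEmb}_{w,f})$ for $\T_0 \subseteq \T_0 \union \K$ there are a total model $\B \models \T_0 \union \K$ and an elementary weak $\Pi_1$-embedding $\varphi \colon \A|_{\Pi_1} \to \B$, so in particular $\varphi \colon \A|_{\Pi_0} \to \B|_{\Pi_0}$ is $\Pi_0$-elementary. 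Starting from $M_0 = \A|_{\Pi_0}$, mapped by $\varphi$ into $M_1 = \B|_{\Pi_0}$, I would build an $\omega$-chain $M_0 \subseteq M_1 \subseteq M_2 \subseteq \cdots$ of $\Pi_0$-structures with $\Pi_0$-elementary inclusions, applying the lemma alternately to carry along a chain of $\Pi_2$-expansions that are models of $\T_2$ (even steps, beginning with $\A|_{\Pi_2}$) and a chain of $\Pi_1$-expansions that are models of $\K$ (odd steps, beginning with $\B$), at each step keeping the previously built expansion as an elementary substructure. Taking unions yields a $\Pi_0$-structure $M_\omega$ that simultaneously carries a $\Pi_2$-expansion $M_\omega^{(2)} \models \T_2$ and a $\Pi_1$-expansion $M_\omega^{(1)} \models \K$ with the same $\Pi_0$-reduct; fusing these gives a total $(\Sigma_0 \union \Sigma_1 \union \Sigma_2, \Pred)$-structure ${\mathcal C}$ with ${\mathcal C} \models \T_2 \union \K$.

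It then remains to check that the colimit map $\psi \colon \A|_{\Pi_0} \to M_\omega$ underlies an elementary weak embedding $\psi \colon \A \to {\mathcal C}$: it is $\Pi_2$-elementary by construction along the $\Sigma_2$-chain (hence handles $\Sigma_0$, $\Sigma_2$ and $\Pred$, and is injective), and it is a weak $\Pi_1$-embedding because $\varphi$ is one and the embeddings along the $\Sigma_1$-chain preserve $\Sigma_1$-function values and have underlying maps equal to the $\Pi_0$-chain maps, so $f_{\mathcal C}(\psi(\bar a)) = \psi(f_\A(\bar a))$ whenever $f_\A(\bar a)$ is defined. This establishes $({\sf EEmb}_{w,f})$ for $\T_2 \subseteq \T_2 \union \K$. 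For the remaining assertions: if $\K$ is flat and linear, then $({\sf Loc}_f)$ follows from $({\sf EEmb}_{w,f})$ by Theorem~\ref{check-loc}(1) with $\Psi$ the identity; and if moreover every variable in every clause of $\K$ occurs below a $\Sigma_1$-function, then for any finite set $G$ of ground clauses the instance set $\K[{\sf est}(\K,G)]$ is finite and \emph{ground}, so by $({\sf Loc}_f)$ satisfiability of $\T_2 \union \K \union G$ reduces to satisfiability of the finite ground set $\K[{\sf est}(\K,G)] \union G$ w.r.t.\ $\T_2$, which after purifying the finitely many ground $\Sigma_1$-terms (replacing them by fresh constants constrained only by congruence) becomes a ground satisfiability problem over $\T_2$, decidable by assumption.

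The step I expect to be the main obstacle --- and the point that makes the argument nontrivial --- is the chain construction, together with the realization that it is genuinely needed: a one-shot amalgamation of $\A|_{\Pi_2}$ and $\B$ over $\varphi$ does not work, because $\varphi$ is only $\Pi_0$-elementary and hence transports neither the $\Sigma_2$-facts of $\A|_{\Pi_2}$ nor the $\Sigma_1$-facts of $\B$. Making the back-and-forth converge to a single structure requires applying the amalgamation lemma so that each newly constructed expansion contains its predecessor as an elementary substructure and so that all underlying maps of the $\Pi_1$- and $\Pi_2$-embeddings along the chain coincide with the $\Pi_0$-chain maps; only then do the two unions share the reduct $M_\omega$, can be combined into one total model, and leave $\psi$ simultaneously $\Pi_2$-elementary and a weak $\Pi_1$-embedding.
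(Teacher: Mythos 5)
The paper does not prove this theorem---it is stated as a citation to \cite{Ihlemann-Sofronie-2010}---so there is no in-paper argument to compare against. Your reconstruction is correct: the amalgamation lemma you isolate is the standard elementary amalgamation theorem, and the alternating $\omega$-chain (pushing the $\Pi_2$-expansion of $\A$ and the $\Pi_1$-expansion of $\B$ along $\Pi_0$-elementary maps, then taking unions via Tarski--Vaught) does produce a single $\Pi_0$-structure $M_\omega$ simultaneously carrying a $\T_2$-model $\Pi_2$-expansion and a $\K$-model $\Pi_1$-expansion, with the colimit map $\Pi_2$-elementary and a weak $\Pi_1$-embedding as you check. Your locality and decidability corollaries also go through, using Theorem~\ref{check-loc}(1) and the fact that $\K[{\sf est}(\K,G)]$ is finite and ground when every variable of $\K$ sits under a $\Sigma_1$-symbol, followed by purification as in Theorem~\ref{lemma-rel-transl}.

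One remark: your claim that a one-shot amalgamation of $\A|_{\Pi_2}$ and $\B$ over $\varphi$ ``does not work'' is overstated. It is only the asymmetric lemma you chose as a building block that fails in one shot. Robinson's joint consistency theorem gives the amalgam directly: with constants for $|\A|$ identified with their $\varphi$-images, the theories $\mathrm{eldiag}_{\Pi_2}(\A|_{\Pi_2})$ and $\mathrm{eldiag}_{\Pi_1}(\B)$ have the same $(\Pi_0\cup|\A|)$-consequences---namely $\mathrm{eldiag}_{\Pi_0}(\A|_{\Pi_0})$, precisely because $\varphi$ is $\Pi_0$-elementary---so their union is consistent, and any model of it yields $\mathcal{C}$ and $\psi$ immediately. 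Your chain construction is of course a standard proof \emph{of} Robinson's theorem, so the two routes are essentially the same argument; the point is only that the assertion of non-existence of a shorter route is not accurate.
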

The result extends in a natural way to the case of  
$({\sf EEmb}^{\Psi}_{w,f})$ and $\Psi$-locality. 
Theorem~\ref{cor-ext} is a very useful result, 
which allows us to identify a large number 
of local extensions. Below we include an
example from \cite{Ihlemann-Sofronie-2010}. 
\begin{exa}[\cite{Ihlemann-Sofronie-2010}]
Let ${\sf Lat}$ be the theory of lattices and 
${\sf Mon}_f = \{ \forall x, y ~ (x {\leq} y \rightarrow f(x) {\leq} f(y)) \}$ 
be the axiom expressing monotonicity of a new function symbol $f$. 
We can prove that the 
extension ${\sf Lat} \subseteq {\sf Lat} {\cup} {\sf Mon}_f$ satisfies 
condition $({\sf Comp}_{w,f})$ hence also $({\sf EEmb}_{w,f})$. 
By Theorem~\ref{cor-ext},    
${\T} \subseteq {\T} {\cup} {\sf Mon}_f$ satisfies 
condition $({\sf EEmb}_{w,f})$, hence 
$({\sf Loc}_f)$ for any extension $\T$ of the theory of lattices
(i.e.\ for the theory of distributive lattices, Heyting algebras, 
Boolean algebras, any theory with a total order -- e.g.\ 
the (ordered) theory of integers or of reals, etc.).
\end{exa}

\begin{thm}
\label{cor-ext-2}
Let $\Pi_0 = (\Sigma_0, {\sf Pred})$ be a signature, 
$\T_0$ a theory in $\Pi_0$, 
$\Sigma_P$ and $\Sigma$ two disjoint sets of new function symbols,
$\Pi_P :=  (\Sigma_0 \cup \Sigma_P, {\sf Pred})$, and 
$\Pi :=  (\Sigma_0 \cup \Sigma_P \cup \Sigma, {\sf Pred})$. 
Let $\Gamma$ be a universal $\Pi_P$-formula, and 
$\K$ be a set of $\Pi$-clauses. 
If the extension $\T_0 \subseteq \T_0 \cup \K$ enjoys $({\sf Comp}_{f})$ 
then so does the extension $\T_0 \cup \Gamma \subseteq \T_0 \cup \Gamma \cup \K.$

In particular, if $\K$ is flat and linear then the extension 
$\T_0 \cup \Gamma \subseteq \T_0 \cup \Gamma \cup \K$ satisfies condition $({\sf Loc}_f)$. 
\end{thm}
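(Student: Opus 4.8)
The plan is to reduce condition $({\sf Comp}_f)$ for the extension $\T_0 \cup \Gamma \subseteq \T_0 \cup \Gamma \cup \K$ to condition $({\sf Comp}_f)$ for $\T_0 \subseteq \T_0 \cup \K$, much as in the transfer result of Theorem~\ref{cor-ext}; what makes this go through cleanly is that $({\sf Comp}_f)$ delivers a completion \emph{on the same support}. So I would take an arbitrary $\A \in {\sf PMod}_{w,f}(\Sigma, \T_0 \cup \Gamma \cup \K)$. By definition $\A$ is a weak partial model of $\T_0 \cup \Gamma \cup \K$ whose reduct $\A{|_{\Pi_P}}$ is a \emph{total} model of $\T_0 \cup \Gamma$ (so the $\Sigma_0$- and $\Sigma_P$-functions are total and $\A{|_{\Pi_0}}$ is a total model of $\T_0$), whose $\Sigma$-functions are possibly partial, for which $T(\A)$ is finite, and in which all terms of $\Psi({\sf est}(\K, T(\A)))$ are defined. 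Dropping the axiom $\Gamma$ leaves all of this intact: reading the extension $\T_0 \subseteq \T_0 \cup \K$ with $\Sigma$ as its extension symbols and the symbols of $\Sigma_P$ treated as unconstrained base symbols, this says exactly that $\A$ is also an admissible input to $({\sf Comp}_f)$ for $\T_0 \subseteq \T_0 \cup \K$, i.e.\ $\A \in {\sf PMod}_{w,f}(\Sigma, \T_0 \cup \K)$.

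Next I would apply $({\sf Comp}_f)$ for $\T_0 \subseteq \T_0 \cup \K$ to $\A$, obtaining a \emph{total} model $\B$ of $\T_0 \cup \K$ with the same support as $\A$, together with a weak embedding which, the support being unchanged, we may take to be the identity on $|\A|$. Since $|\B| = |\A|$ and the identity is a weak $\Pi$-embedding $\A \to \B$, the structure $\B$ coincides with $\A$ on all predicates and on every function value that was already defined in $\A$; thus $\B$ arises from $\A$ simply by completing the partial $\Sigma$-functions, and in particular $\B{|_{\Pi_P}} = \A{|_{\Pi_P}}$. Hence $\B{|_{\Pi_P}}$ is still a total model of $\T_0 \cup \Gamma$, so $\B$ is a total model of $\T_0 \cup \Gamma \cup \K$ into which $\A$ weakly embeds with the same support. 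As $\A$ was arbitrary, this gives $({\sf Comp}_f)$ for $\T_0 \cup \Gamma \subseteq \T_0 \cup \Gamma \cup \K$. For the ``in particular'' part, note that $({\sf Comp}_f)$ trivially implies $({\sf Emb}_{w,f})$ (a same-support completion is, in particular, a weak embedding into a total model); if in addition $\K$ is flat and linear, then, taking $\Psi$ to be the term closure operator ${\sf est}(\K,\cdot)$ that corresponds to plain locality (which maps flat sets of ground terms to flat sets), Theorem~\ref{check-loc}(1) yields that $({\sf Emb}_{w,f})$ implies $({\sf Loc}_f)$ for this extension.

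The conceptual heart of the argument is the contrast with Theorem~\ref{cor-ext}: there the model supplied by $({\sf EEmb}_{w,f})$ may have a strictly larger support and one has to re-expand it by a nontrivial model-theoretic construction, whereas here the completion provided by $({\sf Comp}_f)$ leaves the $\Pi_P$-reduct literally unchanged, so the \emph{universal} $\Pi_P$-sentence $\Gamma$ is automatically preserved; if the support could grow, $\Gamma$ (being only universal) might well fail in the larger structure. Accordingly, I expect the one point needing care to be the bookkeeping that $\A$ genuinely qualifies as an input to $({\sf Comp}_f)$ for $\T_0 \subseteq \T_0 \cup \K$: one must fix the reading of that extension in which the $\Sigma_P$-functions count as (free) base symbols, so that the finiteness requirement on the set of defined extension terms concerns $\Sigma$-terms only and is met by $T(\A)$, and one must check that the instances $\K[\Psi({\sf est}(\K, T(\A)))]$ stay evaluable in $\A$ — which holds because $\A$ already lies in ${\sf PMod}_{w,f}(\Sigma, \T_0 \cup \Gamma \cup \K)$ and $\Psi$ does not depend on $\Gamma$. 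With that set-up in place the rest is immediate.
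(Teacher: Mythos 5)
Your proof is correct and follows essentially the same route as the paper: take $\A$ a weak partial model of $\T_0 \cup \Gamma \cup \K$ with total $\Pi_P$-functions, note it lies in the relevant class of partial models for $\T_0 \subseteq \T_0 \cup \K$, apply $({\sf Comp}_f)$ to get a same-support completion $\B$, and observe that because $\B{|_{\Pi_P}}$ is unchanged the universal $\Pi_P$-sentence $\Gamma$ is preserved. The only difference is cosmetic: you spell out that the same-support weak embedding can be taken to be the identity (hence $\B{|_{\Pi_P}} = \A{|_{\Pi_P}}$ on the nose), whereas the paper compresses this into the observation that (i)--(iii) together force $\Gamma$ to hold in $\B$.
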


\begin{proof}Let $\A$ be a weak partial model of $\T \cup 
\Gamma \cup \K$ in which all $\Pi_P$-functions are total. 
Then $\A$ is a weak partial model of $\T \cup \K$, hence it weakly 
embeds into a total model $\B$ of $\T \cup \K$ such that 
$\A$ and $\B$ have the same support. 

Let $h : \A \rightarrow \B$ be the weak embedding.  
As (i)  all 
$\Pi_P$-functions are totally defined in $\A$; 
(ii) $\A$ and $\B$ have 
the same support, 
(iii) $\Gamma$ is a universal $\Pi_P$-formula which 
holds in $\A$,
it follows that $\Gamma$ holds also in $\B$. Thus, $\B$ is a 
total model of $\T_0 \cup \Gamma \cup \K$.
\end{proof}

\begin{thm}[\cite{Ihlemann-Sofronie-2010}]
\label{th-transfer12}
Let ${\mathcal T}_0$ be a theory. Assume that ${\mathcal T}_0$ has a model completion 
${\mathcal T}_0^*$ such that ${\mathcal T}_0 \subseteq {\mathcal T}_0^*$. 
Let ${\mathcal T} = {\mathcal T}_0 \cup {\mathcal K}$ be an extension of ${\mathcal T}_0$
with new function symbols $\Sigma$ whose properties are axiomatized by a 
set of flat and linear clauses ${\mathcal K}$ 
(all of which contain symbols in $\Sigma$). 
\begin{enumerate-}
\item [(1)] Assume that: 
\begin{enumerate-} 
\item[(i)] Every model of 
${\mathcal T}_0 \cup {\mathcal K}$ embeds 
into a model of ${\mathcal T}_0^* \cup {\mathcal K}$. 
\item[(ii)] ${\mathcal T}_0 \cup {\mathcal K}$ is a local extension of ${\mathcal T}_0$. 
\end{enumerate-}

\

\noindent 
Then ${\mathcal T}_0^* \subseteq {\mathcal T}_0^* \cup {\mathcal K}$ satisfies 
condition $({\sf EEmb}_{w,f})$, 
hence if ${\mathcal K}$ is a set of flat and linear clauses 
then ${\mathcal T}_0^*\subseteq {\mathcal T}_0^* \cup {\mathcal K}$ is a local extension. 
\item[(2)] If all variables in ${\mathcal K}$ occur below an extension 
function and 
${\mathcal T}_0^* \cup {\mathcal K}$ is a local extension of ${\mathcal T}_0^*$ 
then ${\mathcal T}_0 \cup {\mathcal K}$ is a local extension of ${\mathcal T}_0$. 
\end{enumerate-}
\end{thm}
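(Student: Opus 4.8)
The plan is to derive locality, in both parts, from the embeddability characterisation of Theorem~\ref{check-loc}: since $\K$ is flat and linear, it suffices to establish $({\sf Emb}_{w,f})$ (or $({\sf EEmb}_{w,f})$) for the relevant extension and then invoke Theorem~\ref{check-loc}(1). I will use repeatedly that $\T_0 \subseteq \T_0^*$ (so every model of $\T_0^*$ is a model of $\T_0$), that $\T_0$ and $\T_0^*$ are co-theories (so every model of $\T_0$ embeds into a model of $\T_0^*$), and that $\T_0^*$, being a model companion, is model complete.

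\emph{Part (1).} Take an arbitrary $\A \in {\sf PMod}_{w,f}(\Sigma, \T_0^* \cup \K)$, so $\A{|_{\Pi_0}}$ is a total model of $\T_0^*$, $\A$ is a weak partial model of $\K$, and $T(\A)$ is finite. Since $\T_0 \subseteq \T_0^*$, $\A{|_{\Pi_0}}$ is also a total model of $\T_0$, hence $\A \in {\sf PMod}_{w,f}(\Sigma, \T_0 \cup \K)$. By hypothesis (ii) and Theorem~\ref{check-loc}(2), $({\sf Emb}_{w,f})$ holds for $\T_0 \subseteq \T_0 \cup \K$, so there is a weak $\Pi$-embedding $h_1 : \A \to \B$ into a total model $\B$ of $\T_0 \cup \K$; by hypothesis (i), $\B$ embeds, via some $h_2$, into a total model $\mathcal{C}$ of $\T_0^* \cup \K$. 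The composite $h_2 \circ h_1$ is then a weak $\Pi$-embedding of $\A$ into the total model $\mathcal{C}$ of $\T_0^* \cup \K$, and its $\Pi_0$-reduct is an embedding between $\A{|_{\Pi_0}}$ and $\mathcal{C}{|_{\Pi_0}}$, both models of $\T_0^*$; by model completeness of $\T_0^*$ this reduct is elementary, which is exactly what $({\sf EEmb}_{w,f})$ for $\T_0^* \subseteq \T_0^* \cup \K$ requires. As $\K$ is flat and linear, Theorem~\ref{check-loc}(1) now gives $({\sf Loc}_f)$, i.e.\ $\T_0^* \cup \K$ is a local extension of $\T_0^*$.

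\emph{Part (2).} Assume now that every variable of $\K$ occurs below a $\Sigma$-function and that $\T_0^* \cup \K$ is local; by Theorem~\ref{check-loc}(2), $({\sf Emb}_{w,f})$ holds for $\T_0^* \subseteq \T_0^* \cup \K$. Let $\A \in {\sf PMod}_{w,f}(\Sigma, \T_0 \cup \K)$. Since $\T_0$ and $\T_0^*$ are co-theories, fix an embedding $h : \A{|_{\Pi_0}} \to \mathcal{M}_0$ with $\mathcal{M}_0 \models \T_0^*$, and define a partial $\Pi$-structure $\A'$ by $\A'{|_{\Pi_0}} := \mathcal{M}_0$, letting each $f \in \Sigma$ be defined exactly on the $h$-image of the domain of $f_\A$ with $f_{\A'}(h(\overline{a})) := h(f_\A(\overline{a}))$. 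Then $h : \A \to \A'$ is a weak $\Pi$-embedding and $T(\A')$ is finite. The crucial point is that $\A'$ is still a weak partial model of $\K$. Let $C \in \K$ and let $\beta$ be an assignment into $\mathcal{M}_0$; if some literal of $\beta(C)$ is undefined in $\A'$ then $(\A', \beta) \models_w C$ trivially, so assume every term of $\beta(C)$ starting with a $\Sigma$-function is defined in $\A'$. By construction of $\A'$, every argument of every such term lies in $h(|\A|)$; since $C$ is flat and every variable of $C$ occurs as an argument of some $\Sigma$-function, $\beta$ maps all variables of $C$ into $h(|\A|)$, so $\beta = h \circ \beta_0$ for an assignment $\beta_0$ into $\A$. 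Since $(\A, \beta_0) \models_w C$ and $h$ preserves truth of literals and, on the defined part, commutes with all function symbols, $(\A', \beta) \models_w C$. Hence $\A' \in {\sf PMod}_{w,f}(\Sigma, \T_0^* \cup \K)$, so $({\sf Emb}_{w,f})$ yields a weak embedding of $\A'$ into a total model $\B$ of $\T_0^* \cup \K$; as $\T_0 \subseteq \T_0^*$, $\B$ is a total model of $\T_0 \cup \K$, and composing with $h$ gives a weak embedding of $\A$ into $\B$. Thus $({\sf Emb}_{w,f})$ holds for $\T_0 \subseteq \T_0 \cup \K$, and since $\K$ is flat and linear, Theorem~\ref{check-loc}(1) gives $({\sf Loc}_f)$.

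\emph{Where the difficulty lies.} Most of the argument is routine bookkeeping with weak partial structures and with composites of weak embeddings. The one genuinely delicate step is the verification, in part (2), that the pushed-forward structure $\A'$ is a weak partial model of $\K$; this is precisely where the hypothesis that every variable of $\K$ occurs below an extension function is used, since otherwise an assignment $\beta$ could send a variable occurring in no $\Sigma$-term to a point of $\mathcal{M}_0 \setminus h(|\A|)$, and the weak validity of $C$ in $\A$ would give no information about that instance. In part (1) the corresponding subtlety is milder: neither $h_1$ nor $h_2$ needs to be elementary on the base, and elementarity of the composite is obtained only afterwards, from the fact that both of its endpoints are models of the model-complete theory $\T_0^*$.
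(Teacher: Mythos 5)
Your proof is correct, and it follows the natural route that the framework of Theorem~\ref{check-loc} invites: since $\K$ is flat and linear, one proves $({\sf Emb}_{w,f})$ / $({\sf EEmb}_{w,f})$ for the relevant extension and then reads off $({\sf Loc}_f)$. The paper itself does not supply a proof of this theorem (it is cited from \cite{Ihlemann-Sofronie-2010}), so there is nothing to contrast with; but let me record the verifications I would expect a careful reader to demand.

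In Part~(1), the only non-mechanical step is upgrading the weak embedding $h_2\circ h_1$ to an \emph{elementary} one on the $\Pi_0$-reduct: you correctly observe that the $\Pi_0$-reduct of a weak $\Pi$-embedding is an ordinary $\Pi_0$-embedding (because $\Sigma_0$-functions are total in $\A$ and item (ii) of the weak-embedding definition applies to all function symbols), that both endpoints are models of $\T_0^*$, and that $\T_0^*$, being a model companion, is model complete. This is exactly where model completeness enters, and your proof isolates it cleanly; note that neither $h_1$ nor $h_2$ needs to be elementary on its own.

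In Part~(2), you correctly identify the delicate point: showing the pushed-forward structure $\A'$ is still a weak partial model of $\K$. Your argument is sound, but it implicitly uses two facts worth stating: (a) injectivity of the $\Pi_0$-embedding $h$ is needed for $f_{\A'}$ to be well-defined on $h(\mathrm{dom}\,f_\A)$; and (b) if all $\Sigma$-terms of $\beta(C)$ are defined in $\A'$, then the pulled-back assignment $\beta_0 := h^{-1}\circ\beta$ exists \emph{and} makes all $\Sigma$-terms of $\beta_0(C)$ defined in $\A$ (indeed $f_{\A'}(\beta(\bar x))$ is defined precisely because $(\beta_0(\bar x))\in\mathrm{dom}\,f_\A$), so $(\A,\beta_0)\models_w C$ gives a \emph{true} literal rather than an undefined one, and truth transfers along $h$. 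Both points are present in spirit; making them explicit would tighten the exposition. You also rightly flag that the hypothesis ``every variable occurs below a $\Sigma$-function'' is what guarantees $\beta$ factors through $h$; without it, assignments into $\mathcal M_0\setminus h(|\A|)$ would escape control.

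One tiny bookkeeping item: you should also note (as is immediate) that $T(\A')$ is in bijection with $T(\A)$ via $h$, hence finite, and that since $\K$ is flat it contributes no ground $\Sigma$-terms, so the ``all terms in $\Psi({\sf est}(\K, T(\A')))$ are defined'' requirement of ${\sf PMod}_{w,f}$ is automatic for $\Psi=\mathrm{id}$.
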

The result extends in a natural way to $\Psi$-locality. 
These results were used in \cite{Ihlemann-Sofronie-2010} 
to give further examples of local theory
extensions: 

\begin{exa}[\cite{Ihlemann-Sofronie-2010}] The following hold: 
\begin{enumerate}
\item[(1)] The extension of the theory ${\sf TOrd}$ of total 
orderings with a strictly monotone function, i.e.\ a function 
$f$ satisfying the axiom: 
\[ {\sf SMon}(f) \quad  \quad   \forall x, y ( x < y \rightarrow f(x) < f(y)) \]
satisfies condition $({\sf Loc}_f)$. 

To show this, we used the fact  that the 
model completion ${\sf TOrd}^*$ of ${\sf TOrd}$ 
is the theory of dense total orderings without endpoints, 
and showed that the extension ${\sf TOrd}^* \subseteq 
{\sf TOrd}^* \cup {\sf SMon}(f)$ satisfies  condition $({\sf
  EEmb}_{w,f})$, hence it satisfies condition $({\sf Loc}_f)$. 

\item[(2)] The extension of the pure theory of 
equality with a function $f$ satisfying 
\[ {\sf Inj}(f) \quad \quad   \forall x, y \, ( x \not\approx y \rightarrow f(x) \not\approx
f(y)) \] 
is local.  (This can be proved in a similar way, using the fact that
the 
model completion of the pure theory of equality is the theory of
infinite sets.)
\end{enumerate}
\end{exa}

\subsection{Hierarchical Reasoning in Local Theory Extensions}
Consider a $\Psi$-local theory extension 
${\mathcal T}_0 \subseteq {\mathcal T}_0 \cup {\mathcal K}$. 
Condition $({\sf Loc}_f^{\Psi})$ requires that for every finite set $G$ of ground 
$\Pi^C$ clauses: 
\[ {\mathcal T}_0 \cup {\mathcal K} \cup G \models \perp \text{ if and 
  only if  } 
{\mathcal T}_0 \cup {\mathcal K}[\Psi_{\mathcal K}(G)] \cup G \models 
\perp.\] 

\

\noindent 
In all clauses in ${\mathcal K}[\Psi_{\mathcal K}(G)] \cup G$ the function 
symbols in $\Sigma$ only have ground terms as arguments, so  
${\mathcal K}[\Psi_{\mathcal K}(G)] {\cup} G$ can be flattened 
and purified\footnote{ i.e.\ the function symbols in $\Sigma$ 
are separated from the other symbols.}  
by introducing, in a bottom-up manner, new  
constants $c_t \in C$ for subterms $t {=} f(c_1, \dots, c_n)$ where $f {\in}
\Sigma$ and $c_i$ are constants, together with 
definitions $c_t {\approx} f(c_1, \dots, c_n)$ which are all included
in a set ${\sf Def}$. 
%
We thus obtain a set of clauses ${\mathcal K}_0 {\cup} G_0 {\cup} {\sf Def}$, 
where ${\mathcal K}_0$ and $G_0$ do 
not contain $\Sigma$-function symbols and ${\sf Def}$ contains clauses of the form 
$c {\approx} f(c_1, \dots, c_n)$, where $f {\in} \Sigma$, $c, c_1, \dots,
c_n$ are constants. 
\begin{thm}[\cite{sofronie-cade-05,sofronie-frocos-07,sofronie-ihlemann-jacobs-tacas08}]
Let ${\mathcal K}$ be a set of clauses. 
Assume that 
${\mathcal T}_0 \subseteq {\mathcal T}_1 = {\mathcal T}_0 \cup {\mathcal K}$ is a 
$\Psi$-local theory extension. 
For any finite set $G$ of ground clauses, 
let ${\mathcal K}_0 \cup G_0 \cup {\sf Def}$ 
be obtained from ${\mathcal K}[\Psi_{\mathcal K}(G)] \cup G$ by flattening and purification, 
as explained above. 
Then the following are equivalent to ${\mathcal T}_1 \cup G \models \perp$: 

\medskip
\begin{enumerate-}
\item ${\mathcal T}_0 {\cup} {\mathcal K}[\Psi_{\mathcal K}(G)] {\cup} G \models \perp.$ 
\item ${\mathcal T}_0 \cup {\mathcal K}_0 \cup G_0 \cup {\sf Con}_0 \models \perp,$ where 
$\displaystyle{{\footnotesize {\sf Con}_0  {=} \{ \bigwedge_{i = 1}^n c_i 
    {\approx} d_i \rightarrow c {\approx} d \, {\mid}
\begin{array}{l}
f(c_1, \dots, c_n) {\approx} c {\in} {\sf Def}\\
f(d_1, \dots, d_n) {\approx} d {\in} {\sf Def} 
\end{array} \}}}.$
\end{enumerate-}
\label{lemma-rel-transl}
\end{thm}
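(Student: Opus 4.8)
The plan is to prove the equivalence of $\mathcal{T}_1 \cup G \models \perp$ with conditions (1) and (2) by a standard chain of reductions. The equivalence of $\mathcal{T}_1 \cup G \models \perp$ with condition (1) is immediate: it is precisely the statement of $\Psi$-locality, i.e.\ condition $({\sf Loc}_f^\Psi)$, applied to the finite set $G$, using the notational convention that $\mathcal{K}[\Psi_{\mathcal K}(G)]$ abbreviates $\mathcal{K}[\Psi({\sf est}(\mathcal K, G))]$. So the real content is the equivalence of (1) and (2), which is the step where flattening, purification, and the replacement of the congruence clauses in ${\sf Def}$ by the clauses in ${\sf Con}_0$ are justified.

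For the equivalence of (1) and (2), I would proceed in two stages. First, I would observe that flattening and purification preserve (un)satisfiability: the set $\mathcal{K}[\Psi_{\mathcal K}(G)] \cup G$ and the set $\mathcal{K}_0 \cup G_0 \cup {\sf Def}$ are equisatisfiable w.r.t.\ $\mathcal{T}_0$, because the newly introduced constants $c_t$ are fresh and the definitions $c_t \approx f(c_1,\dots,c_n)$ in ${\sf Def}$ merely name existing subterms; any model of one expands/reduces to a model of the other. Second, and this is the heart of the argument, I would show that $\mathcal{T}_0 \cup \mathcal{K}_0 \cup G_0 \cup {\sf Def}$ is unsatisfiable iff $\mathcal{T}_0 \cup \mathcal{K}_0 \cup G_0 \cup {\sf Con}_0$ is unsatisfiable. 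The ($\Rightarrow$) direction here uses that ${\sf Con}_0$ is entailed by ${\sf Def}$ together with the congruence axioms for the $\Sigma$-functions (functionality), so any total model of $\mathcal{T}_0 \cup \mathcal{K}_0 \cup G_0 \cup {\sf Def}$ satisfies ${\sf Con}_0$. The ($\Leftarrow$) direction is where one uses the embeddability machinery: given a (total) model of $\mathcal{T}_0 \cup \mathcal{K}_0 \cup G_0 \cup {\sf Con}_0$, one builds a \emph{partial} $\Pi^C$-structure $\A$ by taking the $\Pi_0$-reduct as given and defining each $\Sigma$-function $f$ only on those tuples $(c_1^\A,\dots,c_n^\A)$ arising from a definition $f(c_1,\dots,c_n)\approx c$ in ${\sf Def}$, setting the value to $c^\A$; the clauses ${\sf Con}_0$ guarantee this partial function is well-defined. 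One checks that $\A$ is a weak partial model of $\mathcal{T}_0 \cup \mathcal{K}$ lying in ${\sf PMod}_{w,f}^\Psi(\Sigma,\mathcal T)$ --- here flatness of $\mathcal K$ is what ensures that the relevant instances are exactly the flattened ones and that $\A$ weakly satisfies $\mathcal K$. By $\Psi$-locality and Theorem~\ref{check-loc}, $\A$ weakly embeds into a total model of $\mathcal{T} = \mathcal{T}_0 \cup \mathcal{K}$, which still satisfies $G$, contradicting $\mathcal{T}_1 \cup G \models \perp$; this closes the loop and simultaneously re-derives the equivalence with the top statement.

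The main obstacle I anticipate is the careful bookkeeping in the ($\Leftarrow$) direction of the second stage: one must verify that the partial structure $\A$ constructed from the ${\sf Con}_0$-model genuinely belongs to ${\sf PMod}_{w,f}^\Psi(\Sigma, \mathcal{T})$ --- in particular that all terms in $\Psi({\sf est}(\mathcal K, T(\A)))$ are defined in $\A$, which requires that the flattening introduced enough definitions, i.e.\ that ${\sf Def}$ was built from $\mathcal K[\Psi_{\mathcal K}(G)] \cup G$ and not from a smaller set --- and that $\A$ weakly satisfies every clause in $\mathcal K$ (not just its flattened instances). Flatness and linearity of $\mathcal K$ are exactly the hypotheses that make this go through cleanly, since they ensure that the ground instances appearing in $\mathcal K[\Psi_{\mathcal K}(G)]$, after flattening, correspond to evaluations in $\A$ where either all subterms are defined (and then the clause holds because $\mathcal K_0$ holds in the underlying model) or some extension subterm is undefined (and then the clause is weakly satisfied vacuously). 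I would present this as a reduction to Theorem~\ref{check-loc} and Definition~\ref{def-psi-loc} rather than re-proving the embedding from scratch. The remaining bits --- freshness of the $c_t$, equisatisfiability under flattening --- are routine and I would dispatch them in a sentence each.
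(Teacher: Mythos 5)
The equivalence of the top-level statement with condition (1) and the first-stage reduction to $\mathcal{T}_0 \cup \mathcal{K}_0 \cup G_0 \cup {\sf Def}$ by purification are handled correctly, and the ($\Rightarrow$) direction of your second stage (total models satisfy ${\sf Con}_0$ because functions are functions) is fine. The problem is in the ($\Leftarrow$) direction of the second stage, where you route through $({\sf Emb}_{w,f}^\Psi)$: this is a genuine detour, and it introduces a hypothesis that is not in the theorem. You yourself note that you need $\mathcal{K}$ to be flat (and in fact linear) in order to check that your partial structure $\A$ lies in ${\sf PMod}_{w,f}^\Psi(\Sigma,\mathcal{T})$ and weakly satisfies every clause of $\mathcal{K}$. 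But the theorem only assumes that the extension is $\Psi$-local; it places no syntactic restriction on $\mathcal{K}$. Your argument, as written, is therefore a proof of a weaker statement.

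The point you are missing is that you do not need to produce a total model of $\mathcal{T}_0 \cup \mathcal{K} \cup G$ at all --- you only need a total model of $\mathcal{T}_0 \cup \mathcal{K}[\Psi_{\mathcal{K}}(G)] \cup G$, i.e., of condition (1); the step from (1) back up to the top statement is already supplied by $\Psi$-locality. Given a $\Pi_0^C$-model $\mathcal{M}$ of $\mathcal{T}_0 \cup \mathcal{K}_0 \cup G_0 \cup {\sf Con}_0$, define each $f \in \Sigma$ at exactly the tuples $(c_1^{\mathcal{M}}, \dots, c_n^{\mathcal{M}})$ coming from the definitions $f(c_1,\dots,c_n) \approx c$ in ${\sf Def}$, with value $c^{\mathcal{M}}$ (well-definedness is precisely what ${\sf Con}_0$ guarantees), and then extend each $f$ \emph{arbitrarily} to a total function. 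The resulting total $\Pi^C$-structure satisfies ${\sf Def}$ by construction and still satisfies $\mathcal{T}_0 \cup \mathcal{K}_0 \cup G_0$; un-purifying gives a model of $\mathcal{T}_0 \cup \mathcal{K}[\Psi_{\mathcal{K}}(G)] \cup G$, because every $\Sigma$-subterm occurring in $\mathcal{K}[\Psi_{\mathcal{K}}(G)] \cup G$ is ground and has its value pinned by ${\sf Def}$, so the arbitrary extension elsewhere is irrelevant. This closes the loop (2) $\Rightarrow$ (1) without any appeal to embeddability, without any flatness or linearity assumption on $\mathcal{K}$, and without the delicate bookkeeping you rightly flag as the main obstacle in your approach. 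Your version proves the right statement in the flat-and-linear case, but it is both more restrictive and more work; the direct construction is what the cited sources actually do and is the argument the theorem's generality requires.
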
 

\noindent We illustrate the ideas on an example first presented in
\cite{Sofronie-lmcs}. We chose this example because in Section~\ref{interp-loc}
we will use it to compare the method of computing interpolants in 
\cite{Sofronie-lmcs} with the method presented in this paper. 

\begin{exa}[\cite{Sofronie-lmcs}]
\label{example-hierarchic}
Let ${\mathcal T}_1 = \T_0 \cup {\sf SGc}(f, g) \cup {\sf Mon}(f, g)$
be the extension of the theory $\T_0 = {\sf SLat}$ of semilattices with two monotone functions
$f, g$ satisfying the semi-Galois condition 
\[ {\sf SGc}(f, g): \quad \forall x, y (x \leq g(y) \rightarrow f(x)
\leq y).\] 
Consider the following ground formulae $A$, $B$ in the signature of 
${\mathcal T}_1$: 

\bigskip
\noindent 
~~~~~~~~~~~~~$A:~~ d \leq g(a) ~\wedge~ a \leq c \quad \quad \quad \quad 
B:~~  b \leq d ~\wedge~ f(b) \not\leq c$

\bigskip
\noindent 
where $c$ and $d$ are shared constants. Let $G = A \wedge B$. 
By Theorem~\ref{examples-local}(e), 
${\mathcal T}_1$ is a local extension of the theory of semilattices. To prove that 
$G \models_{{\mathcal T}_1} \perp$ we proceed as follows:

\bigskip
\noindent
{\bf Step 1:} {\em Use locality.} By the locality condition, 
$G$ is unsatisfiable with respect to 
${\sf SLat} \wedge {\sf SGc}(f, g) \wedge {\sf Mon}(f, g)$ iff 
${\sf SLat} \wedge {\sf SGc}(f, g)[G] \wedge 
{\sf Mon}(f, g)[G] 
\wedge G$ has no weak partial model in which all terms in $G$
are defined. The extension terms occurring in $G$ are $f(b)$ and 
$g(a)$, hence:
\begin{eqnarray*}
{\sf Mon}(f, g)[G] & = & \{ a \leq a \rightarrow g(a) \leq g(a),~~ b \leq b \rightarrow f(b) \leq f(b) \} \\
{\sf SGc}(f, g)[G] & = & \{ b \leq g(a) \rightarrow f(b) \leq a \}  
\end{eqnarray*}

\bigskip
\noindent 
{\bf Step 2:} {\em Flattening and purification.}
We purify and flatten the formula ${\sf SGc}(f, g) \wedge {\sf Mon}(f, g)$ by 
replacing the ground terms starting with $f$ and $g$ with new constants. 
We obtain a set of definitions ${\sf Def} = \{ a_1 \approx g(a), b_1
\approx f(b) \}$, and a 
conjunction of formulae in the base signature, $G_0 \wedge {\sf SGc}_0
\wedge {\sf Mon}_0$
(where $G_0 = A_0 \wedge B_0$ is the purified form of $G = A \wedge B$).

\bigskip
\noindent {\bf Step 3:} 
{\em  Reduction to testing satisfiability in ${\mathcal T}_0$.}
As the extension ${\sf SLat} \subseteq {\mathcal T}_1$ is local, by 
Theorem~\ref{lemma-rel-transl} we know that
$$G \models_{{\mathcal T}_1} \perp  \text{ iff } 
G_0 \wedge  {\sf SGc}_0   \wedge {\sf Mon}_0 \wedge {\sf Con}_0
\text{ is unsatisfiable with respect to } {\sf SLat},$$ where
${\sf Con}_0 = {\sf Con}[G]_0$ consists of the flattened form of 
those instances 
of the congruence axioms containing only $f$- and $g$-terms which 
occur in ${\sf Def}$. 

$$\begin{array}{l|ll}
\hline 
{\sf Extension} & ~~~~~~{\sf Base} \\
{\sf Def} & ~G_0  \wedge {\sf SGc}_0 \wedge {\sf
  Mon}_0 & \wedge  {\sf Con}_0 ~~~~~~~~~~~ \\
\hline 
D_A =  a_1 \approx g(a) ~   & ~A_0 = d \leq a_1  \wedge a \leq c & {\sf SGc}_0   = b \leq a_1 \rightarrow b_1 \leq a  \\
D_B =  b_1 \approx f(b)  & ~B_0 = b \leq d \wedge  b_1 \not\leq c &
{\sf Con}_0 \wedge {\sf Mon}_0: \\
& & \quad  {\sf Con}_A \wedge {\sf Mon}_A = a \lhd a \rightarrow a_1 \lhd a_1, \lhd \in \{ \approx, \leq \}\\
& &  \quad {\sf Con}_B \wedge {\sf Mon}_B =  b \lhd b \rightarrow b_1 \lhd b_1,~ \lhd \in \{ \approx, \leq \} \\
\hline 
\end{array}$$

\

\noindent 
It is easy to see that $G_0 \wedge {\sf SGc}_0 \wedge {\sf Mon}_0 \wedge {\sf Con}_0 $  
is unsatisfiable with respect to ${\mathcal T}_0$: 
$G_0= A_0 \wedge B_0$ entails $b \leq a_1$, together with ${\sf SGc}_0$ this 
yields $b_1 \leq a$, which together with $a \leq c$ and $b_1 \not\leq c$ 
leads to a contradiction.
\end{exa}

\subsection{Chains of Theory Extensions} 
We can also consider chains of theory extensions: 
\begin{eqnarray*}
{\mathcal T}_0 & \subseteq &  {\mathcal T}_1  =  {\mathcal T}_0 \cup
{\mathcal K}_1  ~~~\subseteq~~~  {\mathcal T}_2 = {\mathcal T}_0  \cup
{\mathcal K}_1 \cup {\mathcal K}_2 ~~~\subseteq~~~ \dots ~~~ \subseteq~~~  {\mathcal
  T}_n = {\mathcal T}_0 \cup {\mathcal K}_1 \cup...\cup {\mathcal K}_n
\end{eqnarray*}
in which each theory is a local extension of the preceding one. 

\medskip
\noindent 
For a chain of local extensions  a satisfiability check w.r.t.\ the last extension can 
be reduced (in $n$ steps) to a satisfiability check w.r.t.\  
${\mathcal T}_0$. The only restriction we need to impose in order to
ensure that such a reduction is possible is that at each step the
clauses reduced so far need to be ground. 
Groundness is assured if each variable in a clause appears at least
once under 
an extension function.
This iterated instantiation procedure for chains of local theory
extensions has been implemented in H-PILoT \cite{hpilot}.\footnote{H-PILoT  ~allows the user to specify a chain of extensions by 
tagging the extension functions with their place in the chain 
(e.g., if $f$ occurs in ${\mathcal K}_3$ but not in ${\mathcal K}_1
\cup {\mathcal K}_2$ it is declared as level 3). }

\begin{exa}
\label{ex:chains}
Let $\T_0$ be the theory of dense total orderings without
endpoints. Consider the extension of $\T_0$ with functions $\Sigma_1 =
\{ f, g, h, c \}$
whose properties are axiomatized by:

\[ 
\K := \{ ~~\forall x (x \leq c \rightarrow g(x) \approx f(x)), ~~~ \forall x  ( c < x \rightarrow g(x) \approx h(x)) ~~\}.
 \] 

\medskip

\noindent The extension $\T_0 \subseteq \T_0 \cup \K$ can be ``refined'' to the 
following chain of theory extensions:

\[ \T_0 \subseteq \T_0 \cup {\sf UIF}_{\{ f, h \}} \subseteq (\T_0 \cup {\sf
  UIF}_{\{ f, h \}}) \cup \K = \T_0 \cup \K. \]  

\medskip
\begin{itemize}
\item The theory $\T_0 \cup {\sf UIF}_{\{ f, h \}}$ is a local extension of $\T_0$
because extensions with free function symbols are local 
(\cite{sofronie-cade-05}, see also the comments in 
Section~\ref{sect:examples-local-theories}). 

\bigskip
\item $\T_0 \cup \K$ is the extension of $\T_0 \cup {\sf UIF}_{\{ f, h \}}$ with
the function $g$, defined by case distinction (this is described in
the axioms $\K$). By the results in \cite{Sofronie-Ihlemann07} such 
extensions are also local. 
\end{itemize}
\noindent In fact, both extensions satisfy 
condition ${\sf Comp}_f$. 

\bigskip

\noindent 
Let $G$ be a set of ground clauses over the signature $\Sigma_0 \cup
\Sigma_1$. Then the following are equivalent:  
\begin{enumerate}
\bigskip
\item $\T_0 \cup \K \cup G \models \perp$; 

\bigskip
\item $(\T_0 \cup {\sf UIF}_{\{ f, h \}}) \cup \K \cup G \models
  \perp$; 

\bigskip
\item $(\T_0 \cup {\sf UIF}_{\{ f, h \}}) \cup \K[G] \cup G \models
  \perp$, ~~~
where $\K[G]$ is the set of all instances of $\K$ in which the terms 
starting with the function $g$ are ground terms occurring in $\K$ or
$G$; 

\bigskip
\item $(\T_0 \cup {\sf UIF}_{\{ f, h\}}) \cup G^1_0 \cup {\sf Def}^g
  \models \perp$, ~~~
where  $G^1 = \K[G] \cup G$ is a ground formula, $G^1_0$ is obtained
from $G^1$ by purification (replacing all ground terms starting with
$g$ with a new constant) and ${\sf Def}^g$ is the corresponding set of 
definitions, as explained in 
Theorem~\ref{lemma-rel-transl}; 

\bigskip
\item $\T_0 \cup G_0 \cup {\sf Def}^g\cup {\sf Def}^{f,h} \models
  \perp$, ~~~
where
$G_0$ is obtained from $G^1_0$ after one more round of purification in
which all ground terms starting with
$f$ and $h$ are replaced with new constants and 
  ${\sf Def}^{f,h}$ is the corresponding set of definitions, as explained
  in Theorem~\ref{lemma-rel-transl}; 

\bigskip
\item $\K_0 \cup G_0 \cup {\sf Con}_0$ is unsatisfiable w.r.t.\
  $\T_0$, ~~~ 
where ${\sf Con}_0$ is the set of congruence axioms
  corresponding to the set ${\sf Def} = {\sf Def}^g\cup {\sf
    Def}^{f,h} $ 
of definitions as explained in Theorem~\ref{lemma-rel-transl}. 
\end{enumerate}

\end{exa} 

\section{Symbol Elimination in Theory Extensions}
\label{symb-elim}

Let $\Pi_0 = (\Sigma_0, {\sf Pred})$. 
Let ${\mathcal T}_0$ be a $\Pi_0$-theory.  
We consider theory extensions $\T_0
\subseteq \T = \T_0 \cup \K$, in which among the extension functions
we identify a set of {\em parameters} $\Sigma_P$ (function and constant symbols). 
Let $\Sigma$ be a signature consisting of extension symbols which are not
parameters (i.e.\ such that $\Sigma \cap (\Sigma_0 \cup \Sigma_P) =
\emptyset$).
%
We assume that ${\mathcal K}$ is a set of clauses 
in the signature $\Pi_0 {\cup} \Sigma_P {\cup} \Sigma$ in which all 
variables occur also below functions in $\Sigma_1 = \Sigma_P \cup
\Sigma$.

\medskip

\noindent 
We identify situations in which we can
generate, 
for every ground formula $G$,   
a (universal) formula $\Gamma$ representing a family of
constraints on the parameters of $G$ 
such that $\T \cup \Gamma \cup G \models \perp$. 
We consider base theories $\T_0$ such that $\T_0$ or its model
completion $\T^*_0$ allows quantifier elimination, and use quantifier 
elimination to generate the formula $\Gamma$.  
Thus, we assume that one of the following conditions holds: 

\medskip

\begin{description}
\item[(C1)] $\T_0$ allows quantifier elimination, or 
\item[(C2)] $\T_0$ has a model
completion $\T^*_0$ which allows quantifier elimination. 
\end{description}

\medskip

\noindent 
Let $G$ be a finite set of ground clauses, and $T$ a finite set
of ground terms over the signature $\Pi_0 {\cup} \Sigma_P {\cup} \Sigma {\cup}
C$, where $C$ is a set of additional constants. 
We construct a universal 
formula $\forall y_1 \dots y_n \Gamma_T(y_1, \dots, y_n)$ over the
signature $\Pi_0 {\cup} \Sigma_P$ by following
the Steps 1--5 below: 

\medskip

\begin{description}
\item[Step 1] Let $\K_0 \cup G_0 \cup {\sf Con}_0$ be the set of
  $\Pi_0^C$ clauses obtained from $\K[T] \cup G$  after the purification step
  described in  Theorem~\ref{lemma-rel-transl} (with set of extension
  symbols $\Sigma_1$). 

\medskip

\item[Step 2] Let $G_1 = {\mathcal K}_0 \cup G_0\cup {\sf Con}_0$. 
Among the constants in $G_1$, we identify 
\begin{enumerate}
\item[(i)] the constants
${\overline c_f}$, $f \in \Sigma_P$, where either $c_f = f \in
\Sigma_P$ is a constant
parameter, or $c_f$ is 
introduced by a definition $c_f := f(c_1, \dots, c_k)$ in the hierarchical
reasoning method, 
\item[(ii)] all constants  ${\overline c_p}$ 
occurring as arguments of functions in $\Sigma_P$ in such definitions. 
\end{enumerate}
Let ${\overline  c}$ be the remaining constants. We replace the
constants in ${\overline  c}$
with existentially quantified variables ${\overline x}$ in $G_1$,
i.e.\ 
replace $G_1({\overline c_p}, {\overline c_f}, {\overline c})$ 
with $G_1({\overline c_p}, {\overline c_f}, {\overline x})$, and
consider the formula
$\exists {\overline x} G_1({\overline c_p},{\overline c_f}, {\overline x})$.

\medskip

\item[Step 3] Using the method for quantifier elimination in 
${\mathcal T}_0$ (if Condition (C1) holds) or in $\T_0^*$ (if Condition (C2)
holds) we can construct 
a formula  $\Gamma_1({\overline c_p}, {\overline c_f})$ equivalent to 
$\exists {\overline x} G_1({\overline c_p}, {\overline c_f},{\overline
  x})$
w.r.t.\ $\T_0$ (resp. $\T_0^*$). 

\medskip

\item[Step 4] Let $\Gamma_2({\overline c_p})$ be the formula 
obtained by replacing back in $\Gamma_1({\overline c_p}, {\overline c_f})$ 
the constants $c_f$ introduced by definitions $c_f := f(c_1, \dots,
c_k)$ with the terms $f(c_1, \dots,c_k)$. We replace ${\overline c_p}$ with existentially quantified variables ${\overline y}$. 

\medskip

\item[Step 5] Let $\forall {\overline y} \Gamma_T({\overline y})$ be
  $\forall {\overline y} \neg \Gamma_2({\overline y})$. 
\end{description}

\medskip
 
\noindent A similar approach is used in \cite{sofronie-ijcar10} 
for generating constraints on parameters 
which guarantee safety of parametric systems. 
We show that $\forall {\overline
  y} \Gamma_T({\overline y})$ guarantees unsatisfiability of 
$G$ and further study the properties of these formulae. 
At the end of Section~\ref{interp-loc}  we briefly indicate how this 
can be used for interpolant generation.

\subsection{Case 1: \texorpdfstring{$\T_0$}{T\_0} allows quantifier elimination} 
We first
analyze the case in which $\T_0$ allows quantifier elimination.
\begin{thm}
Assume that ${\mathcal T}_0$ allows quantifier elimination.  
For every finite set of ground clauses $G$, and every finite set $T$ of terms over the 
signature $\Pi_0 \cup \Sigma \cup \Sigma_P \cup C$ with 
${\sf est}(\K, G)
\subseteq T$ we can construct a universally quantified 
$\Pi_0 \cup \Sigma_P$-formula 
$\forall {\overline y} \Gamma_T({\overline y})$ 
with the following properties: 
\begin{enumerate}
\item[(1)] For every
structure ${\mathcal A}$ with signature 
$\Pi_0 \cup \Sigma \cup \Sigma_P \cup C$ 
which is a model of ${\mathcal T}_0 \cup {\mathcal K}$, if 
${\mathcal A} \models \forall {\overline y}
\Gamma_T({\overline y})$ then ${\mathcal A} \models \neg G$. 

\item[(2)] ${\mathcal T}_0 \cup \forall {\overline y}
\Gamma_T({\overline y}) \cup {\mathcal K} \cup
  G$ is unsatisfiable. 
\end{enumerate}
\label{inv-trans-qe}
\end{thm}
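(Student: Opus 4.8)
\noindent The plan is to establish item~(1) directly and to derive item~(2) from it. For~(2): if $\B$ were a model of $\T_0 \cup \forall \overline{y}\, \Gamma_T(\overline{y}) \cup \K \cup G$, I would expand $\B$, if necessary, to the full signature $\Pi_0 \cup \Sigma \cup \Sigma_P \cup C$ by interpreting the missing symbols arbitrarily; this yields a model of $\T_0 \cup \K$ satisfying $\forall \overline{y}\, \Gamma_T(\overline{y})$, which by~(1) must satisfy $\neg G$, contradicting $\B \models G$. So the real content is~(1).

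\smallskip
\noindent For~(1) the idea is to run Steps~1--5 backwards at the semantic level. Let $\A$ be a (total) model of $\T_0 \cup \K$ over $\Pi_0 \cup \Sigma \cup \Sigma_P \cup C$ with $\A \models \forall \overline{y}\, \Gamma_T(\overline{y})$, i.e.\ $\A \models \forall \overline{y}\, \neg \Gamma_2(\overline{y})$, and suppose towards a contradiction that $\A \models G$. Since all variables of $\K$ occur below $\Sigma_1$-functions and these are constrained to the finite set $T$, the set $\K[T]$ is a finite set of ground clauses, and $\K_0 \cup G_0 \cup {\sf Con}_0$ together with the set of definitions ${\sf Def}$ is its purification (and that of $G$) as in Theorem~\ref{lemma-rel-transl}. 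First expand $\A$ to $\A^+$ by interpreting, bottom-up along ${\sf Def}$, each fresh constant $c_t$ with $c_t \approx f(c_1, \dots, c_n) \in {\sf Def}$ as $f_{\A}(c_1^{\A^+}, \dots, c_n^{\A^+})$; this is well defined because $\A$ is total. Then $\A \models \K$ gives $\A \models \K[T]$, hence $\A^+ \models \K_0$; $\A \models G$ gives $\A^+ \models G_0$; and $\A^+ \models {\sf Con}_0$ holds automatically since each $c_t$ denotes the value of its defining term, so equal arguments force equal values. Hence $\A^+ \models G_1$, where $G_1 = \K_0 \cup G_0 \cup {\sf Con}_0$, written as $G_1(\overline{c_p}, \overline{c_f}, \overline{c})$ as in Step~2, and the interpretation of $\overline{c}$ in $\A^+$ witnesses $\A^+ \models \exists \overline{x}\, G_1(\overline{c_p}, \overline{c_f}, \overline{x})$. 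Since $\A^+$ is still a model of $\T_0$ (the symbols in $\Sigma \cup \Sigma_P$ and the fresh constants do not occur in $\T_0$) and, by Step~3, $\Gamma_1(\overline{c_p}, \overline{c_f})$ is equivalent to $\exists \overline{x}\, G_1(\overline{c_p}, \overline{c_f}, \overline{x})$ modulo $\T_0$ (cf.\ the proof of Theorem~\ref{thm-qe-const}), we obtain $\A^+ \models \Gamma_1(\overline{c_p}, \overline{c_f})$. Finally, each $c_f$ introduced by a definition $c_f \approx f(c_1, \dots, c_k) \in {\sf Def}$ with $f \in \Sigma_P$ satisfies $c_f^{\A^+} = f_{\A}(c_1^{\A^+}, \dots, c_k^{\A^+})$, so substituting those terms back as in Step~4 and reading the values of the argument constants $\overline{c_p}$ in $\A$ as a witnessing tuple gives $\A \models \exists \overline{y}\, \Gamma_2(\overline{y})$ --- here $\Gamma_2$ is a $\Pi_0 \cup \Sigma_P$-formula, so $\A$ and $\A^+$ agree on it --- contradicting $\A \models \forall \overline{y}\, \neg \Gamma_2(\overline{y})$. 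This proves~(1), and with it~(2).

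\smallskip
\noindent I expect the main obstacle to be the bookkeeping introduced in Step~2: the three families of constants $\overline{c_f}$ (parameter constants, or constants defining parameter terms), $\overline{c_p}$ (arguments of those parameter terms), and $\overline{c}$ (all the rest) must be kept strictly apart, and the substitutions of Steps~4--5 must be undone at the level of $\A^+$ in exactly the reverse order so that the formula eventually evaluated in $\A$ is literally $\Gamma_2$ with $\overline{y}$ instantiated by the values of $\overline{c_p}$. Two further points, each routine once noticed, are essential: totality of $\A$ is what makes the expansion $\A^+$ well defined, and the quantifier-elimination equivalence, which holds only modulo $\T_0$, may be applied at $\A^+$ precisely because $\A^+$ is still a model of $\T_0$. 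Note that locality of the extension $\T_0 \subseteq \T_0 \cup \K$ is not needed for this direction, and the hypothesis ${\sf est}(\K, G) \subseteq T$ is used here only to fix which instances of $\K$ enter the construction in Step~1; these ingredients will matter for the converse/optimality statements and for effectiveness, not for the soundness claims~(1) and~(2).
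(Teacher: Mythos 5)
Your proof is correct and follows essentially the same route as the paper's: both take a model of $\T_0 \cup \K \cup G$, pass to its expansion interpreting the fresh constants via ${\sf Def}$, note that $\K_0 \cup G_0 \cup {\sf Con}_0$ then holds, apply the quantifier-elimination equivalence modulo $\T_0$, and undo Steps~4--5 to conclude $\A \models \exists \overline{y}\,\Gamma_2(\overline{y})$, i.e.\ $\A \not\models \forall\overline{y}\,\Gamma_T(\overline{y})$. The only differences are presentational --- you cast~(1) as a contradiction rather than a direct contrapositive, and you spell out a few points the paper leaves implicit, such as why ${\sf Con}_0$ holds automatically for $\A^+$ and why totality of $\A$ makes the expansion well defined.
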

\begin{proof} Let $\forall {\overline y} \Gamma_T({\overline y})$ be the formula
obtained in Steps 1--5. 

\noindent (1) Let ${\mathcal A}$ be a $\Pi_0 \cup \Sigma \cup \Sigma_P \cup C$-structure such that 
${\mathcal A} \models {\mathcal T}_0 \cup {\mathcal K} \cup G$. 
Then ${\mathcal A} \models {\mathcal T}_0 \cup {\mathcal K}[T] \cup G$.  
Let ${\mathcal K}_0 \cup G_0
\cup {\sf Con}_0 \cup {\sf Def}$ be the formulae obtained from  ${\mathcal K}[T] \cup G$
after purification as explained in Theorem~\ref{lemma-rel-transl}.  
Clearly, ${\mathcal A} \models {\mathcal T}_0 \cup {\mathcal K}_0 \cup G_0
\cup {\sf Con}_0 \cup {\sf Def}$.\footnote{For simplicity, we here use
  the same symbol for $\A$ and its   expansion with new constants defined as in ${\sf Def}$.}

\smallskip
\noindent Let $G_1 = {\mathcal K}_0 \cup G_0\cup {\sf Con}_0$. 
Since  ${\mathcal  A} \models {\mathcal T}_0 \cup G_1 \cup  {\sf Def}$, we 
know that 
%
${\mathcal A} \models {\mathcal T}_0 \cup \exists {\overline x} G_1({\overline c_p}, {\overline c_f},
{\overline x}) \cup {\sf Def}.$
By quantifier elimination in ${\mathcal T}_0$ we can construct 
a formula  $\Gamma_1({\overline c_p}, {\overline c_f})$ equivalent to 
$\exists {\overline x} G_1({\overline c_p}, {\overline c_f},
{\overline x})$ w.r.t.\ $\T_0$.
Hence, ${\mathcal A} \models {\mathcal T}_0 \cup \Gamma_1({\overline c_p}, {\overline c_f}).$
Since ${\mathcal A}$ is also a model for ${\sf Def}$ we can replace
in $\Gamma_1$  the constants $c_f$ introduced by definitions 
$c_f := f(c_1, \dots, c_k)$ with the terms $f(c_1, \dots, c_k)$ they
replaced, thus obtaining the formula 
$\Gamma_2({\overline c_p})$, and ${\mathcal A} \models {\mathcal T}_0 \cup
\Gamma_2({\overline c_p})$. 
If $\Gamma_2({\overline y})$ is obtained
from $\Gamma_2({\overline c_p})$ by replacing the constants in
${\overline c_p}$ with new variables in ${\overline y}$, 
it follows that ${\mathcal A} \models {\mathcal T}_0 \cup
\exists y \Gamma_2({\overline y})$, hence 
(as $\Gamma_T = \neg \Gamma_2$),  
${\mathcal A} \models \exists {\overline y} \neg \Gamma_T({\overline
  y})$, i.e.\ ${\mathcal A} \models \neg \forall {\overline y} \Gamma_T({\overline
  y})$. 

We showed that if ${\mathcal A} \models {\mathcal T}_0 \cup {\mathcal K} \cup G$ then 
${\mathcal A} \models \neg \forall {\overline y} \Gamma_T({\overline
  y})$. 
Hence, if ${\mathcal A} \models {\mathcal T}_0 \cup \K \cup \forall  {\overline y} \Gamma_T({\overline
  y})$ then ${\mathcal A} \not\models {\mathcal T}_0 \cup {\mathcal K} \cup G$, hence 
$G$ is false in ${\mathcal A}$. 

\medskip
\noindent (2) The unsatisfiability of 
$\T_0 \cup \forall {\overline y}
\Gamma_T({\overline y})  \cup \K \cup G$ 
follows immediately from (1). 
\end{proof}
\noindent If we analyze the proof of Theorem~\ref{inv-trans-qe} we can make the following 
observations.
\begin{lem}
With the notation used in Steps 1--5 we can show that
the formulae $\Gamma_2({\overline c_p})$ and $\exists {\overline x} G_1({\overline c_p}, {\overline c_f},
{\overline x}) \wedge {\sf Def}$ are equivalent
modulo $\T_0 \cup {\sf UIF}_{\Sigma_P}$. 
\label{lemma:symb-elim}
\end{lem}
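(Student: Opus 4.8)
Recall the construction in Steps 1--5. Starting from $\K[T]\cup G$ we obtained, by purification (Theorem~\ref{lemma-rel-transl}), the sets $\K_0$, $G_0$, ${\sf Con}_0$, together with the definitions ${\sf Def}$; we set $G_1 = \K_0\cup G_0\cup{\sf Con}_0$, identified the parameter-related constants $\overline{c_p}, \overline{c_f}$ and the remaining constants $\overline c$, and replaced $\overline c$ by existentially quantified variables $\overline x$. Then $\Gamma_1(\overline{c_p},\overline{c_f})$ was produced from $\exists\overline x\, G_1(\overline{c_p},\overline{c_f},\overline x)$ by quantifier elimination in $\T_0$, and $\Gamma_2(\overline{c_p})$ by substituting back each $c_f := f(c_1,\dots,c_k)$. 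The assertion to prove is that $\Gamma_2(\overline{c_p})$ and $\exists\overline x\, G_1(\overline{c_p},\overline{c_f},\overline x)\wedge{\sf Def}$ are equivalent modulo $\T_0\cup{\sf UIF}_{\Sigma_P}$.

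**Plan of proof.** The plan is to factor the passage from $\exists\overline x\,G_1\wedge{\sf Def}$ to $\Gamma_2$ into the two transformations actually performed, and check each one is an equivalence modulo $\T_0\cup{\sf UIF}_{\Sigma_P}$. First I would argue that $\exists\overline x\, G_1(\overline{c_p},\overline{c_f},\overline x)$ is equivalent to $\Gamma_1(\overline{c_p},\overline{c_f})$ already modulo $\T_0$ alone: this is exactly the defining property of the quantifier-elimination procedure applied in Step~3, and since $\Gamma_1$ contains neither the variables $\overline x$ nor any $\Sigma_P$-symbols beyond the constants $\overline{c_f}$, the equivalence $\exists\overline x\, G_1\wedge{\sf Def}\;\leftrightarrow\;\Gamma_1\wedge{\sf Def}$ holds modulo $\T_0$, hence a fortiori modulo $\T_0\cup{\sf UIF}_{\Sigma_P}$. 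Second I would handle the back-substitution of Step~4: in the presence of ${\sf Def}$, which contains exactly the equations $c_f\approx f(c_1,\dots,c_k)$, replacing each $c_f$ by the term $f(c_1,\dots,c_k)$ inside $\Gamma_1$ yields a formula provably equivalent to $\Gamma_1$ — this is just the congruence/substitutivity of equality. Thus $\Gamma_1(\overline{c_p},\overline{c_f})\wedge{\sf Def}$ and $\Gamma_2(\overline{c_p})\wedge{\sf Def}$ are logically equivalent (no theory needed). Since ${\sf Def}$ is a conjunct on the left-hand side of the claimed equivalence but $\Gamma_2$ no longer mentions $\overline{c_f}$, the final point is that the $c_f$ are, up to ${\sf Def}$, just names for ground $\Sigma_P$-terms, so the existential content carried by ${\sf Def}$ on the side of $\Gamma_2$ is vacuous: given any model of $\T_0\cup{\sf UIF}_{\Sigma_P}\cup\Gamma_2(\overline{c_p})$ one interprets each fresh constant $c_f$ by $f^{\mathcal A}(c_1^{\mathcal A},\dots,c_k^{\mathcal A})$ to satisfy ${\sf Def}$, which is possible precisely because $\Sigma_P$ is treated as uninterpreted (total) function symbols and the $c_f$ are fresh. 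Chaining these observations gives the equivalence of $\Gamma_2(\overline{c_p})$ with $\exists\overline x\,G_1(\overline{c_p},\overline{c_f},\overline x)\wedge{\sf Def}$ modulo $\T_0\cup{\sf UIF}_{\Sigma_P}$.

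**Main obstacle and bookkeeping.** The steps above are conceptually routine; the real care is in the bookkeeping of which symbols are allowed to occur where and why the quantifier-elimination step does not reintroduce the eliminated constants or the $\Sigma$-extension symbols. The subtle point — and the one I would state most explicitly — is that the QE procedure in Step~3 is applied to a $\Pi_0^C$-formula, so $\Gamma_1$ lives over $\Pi_0$ plus the constants $\overline{c_p},\overline{c_f}$ only, and the $\Sigma$-extension symbols have already been purified away into ${\sf Def}$; this is what lets us add ${\sf Def}$ as a passive conjunct on both sides. I would also note that the direction "$\Leftarrow$" — from $\exists\overline x\,G_1\wedge{\sf Def}$ to $\Gamma_2$ — uses only $\T_0$, while the direction "$\Rightarrow$" is where the ${\sf UIF}_{\Sigma_P}$ part is genuinely used, namely to re-expand the definitions ${\sf Def}$ starting from a model of $\Gamma_2$ that need not a priori interpret the fresh constants $c_f$. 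The rest is substitution of equals for equals and the correctness statement of the quantifier-elimination algorithm, both of which may be invoked directly.
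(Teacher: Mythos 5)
Your proposal is correct and follows essentially the same route as the paper's own proof: both rest on the defining equivalence of quantifier elimination (so that $\exists\overline x\,G_1 \leftrightarrow \Gamma_1$ modulo $\T_0$), then substitution of equals for equals under ${\sf Def}$, with the model tacitly expanded so that each fresh $c_f$ denotes $f^{\mathcal A}(c_1^{\mathcal A},\dots,c_k^{\mathcal A})$. Your explicit remark about which direction genuinely needs total $\Sigma_P$-functions to re-satisfy ${\sf Def}$ simply makes visible what the paper relegates to a footnote.
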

\begin{proof} We show that for every $\Pi_0 \cup \Sigma_P \cup \Sigma \cup C$-structure
${\mathcal A}$ which
is a model of $\T_0$, ${\mathcal A} \models \exists {\overline x} G_1({\overline c_p}, {\overline c_f},
{\overline x}) \wedge {\sf Def}$ if and only if 
${\mathcal A} \models \Gamma_2({\overline c_p})$. 
Assume that ${\mathcal A}$ is a model of $\T_0$ and of 
$\exists {\overline x} G_1({\overline c_p}, {\overline c_f},
{\overline x}) \wedge {\sf Def}$. As  $\exists {\overline x} G_1({\overline c_p}, {\overline c_f},
{\overline x}) $ and $\Gamma_1({\overline c_p}, {\overline c_f})$  are
equivalent w.r.t. $\T_0$ (the second is obtained from the first by
quantifier elimination)
it follows that ${\mathcal A}$ is a model of 
$\Gamma_1({\overline c_p}, {\overline c_f}) \wedge {\sf Def}$, hence it
is a model of $\Gamma_2({\overline c_p})$. 

Assume now that ${\mathcal A} \models 
\Gamma_2({\overline c_p})$. We can purify $\Gamma_2$ by introducing
new constants renaming the terms $f(c_1, \dots, c_n)$ according to the
definitions in ${\sf Def}$. 
The formula obtained this
way is 
$\Gamma_1({\overline c_p},
{\overline c_f}) \wedge {\sf Def}$. 
As  $\exists {\overline x} G_1({\overline c_p}, {\overline c_f},
{\overline x}) $ and $\Gamma_1({\overline c_p}, {\overline c_f})$  are
equivalent w.r.t. $\T_0$, it follows\footnote{For simplicity,
  we use the same symbol for $\A$ and its   
  expansion with new constants defined as in ${\sf Def}$.}  that 
${\mathcal A} \models \exists {\overline x} G_1({\overline c_p}, {\overline c_f},
{\overline x}) \wedge {\sf Def}$.
\end{proof}
\begin{thm}
If $T_1 \subseteq T_2$ then $\forall {\overline y} \Gamma_{T_1}({\overline y})$ entails 
  $\forall {\overline y} \Gamma_{T_2}({\overline y})$ (modulo $\T_0$). 
\label{mon}
\end{thm}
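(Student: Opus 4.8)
The plan is to reduce the claim to a monotonicity statement about the intermediate formulae of Step~4: if $T_1 \subseteq T_2$ then $\Gamma_2^{T_2}$ entails $\Gamma_2^{T_1}$ modulo $\T_0 \cup {\sf UIF}_{\Sigma_P}$; since $\forall \overline{y}\,\Gamma_{T}(\overline{y}) = \forall \overline{y}\,\neg\Gamma_2^{T}(\overline{y})$, taking contrapositives and closing existentially over the parameter constants then yields the theorem. The key tool is Lemma~\ref{lemma:symb-elim}, which lets us replace each opaque formula $\Gamma_2^{T_i}(\overline{c_p})$ by the transparent $\exists \overline{x}\, \bigl(G_1^{T_i}(\overline{c_p}, \overline{c_f}, \overline{x}) \wedge {\sf Def}^{T_i}\bigr)$ up to $\T_0 \cup {\sf UIF}_{\Sigma_P}$-equivalence.

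First I would set up the bookkeeping. From $T_1 \subseteq T_2$ we get $\K[T_1] \subseteq \K[T_2]$, hence $\K[T_1] \cup G \subseteq \K[T_2] \cup G$ (recall ${\sf est}(\K, G) \subseteq T_1$, so $G$ is flattened in the same way in both cases). Choosing the fresh constants introduced in the purification step of Theorem~\ref{lemma-rel-transl} coherently --- reusing for $\K[T_1] \cup G$ the constants picked for $\K[T_2] \cup G$ --- yields ${\sf Def}^{T_1} \subseteq {\sf Def}^{T_2}$, hence $\K_0^{T_1} \subseteq \K_0^{T_2}$ and ${\sf Con}_0^{T_1} \subseteq {\sf Con}_0^{T_2}$, while $G_0$ is the same; so $G_1^{T_1} = \K_0^{T_1} \cup G_0 \cup {\sf Con}_0^{T_1}$ is a subset of $G_1^{T_2}$. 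One then checks that the partition of Step~2 behaves monotonically: $\overline{c_p}^{T_1} \subseteq \overline{c_p}^{T_2}$ and $\overline{c_f}^{T_1} \subseteq \overline{c_f}^{T_2}$ (any constant occurring in $G_1^{T_1}$ that renames a term built with a $\Sigma_P$-symbol has already been classified into $\overline{c_f}$ for $T_1$, since that renaming definition belongs to ${\sf Def}^{T_1}$). A constant may well be ordinary --- hence existentially quantified --- for $T_1$ but be classified as a parameter argument for $T_2$; I will note that this is harmless because only the existential closures over $\T_0$-models are compared at the end.

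Next I would run the argument. Since $G_1^{T_1} \wedge {\sf Def}^{T_1}$ is a sub-conjunction of $G_1^{T_2} \wedge {\sf Def}^{T_2}$ whose constants occur among the constants $\overline{c_p}^{T_2}, \overline{c_f}^{T_2}$ and the bound variables $\overline{x}$ of the latter, every $\T_0$-model satisfying $\exists \overline{x}\,(G_1^{T_2} \wedge {\sf Def}^{T_2})$ also satisfies $\exists \overline{x}\,(G_1^{T_1} \wedge {\sf Def}^{T_1})$: one simply restricts the witnessing valuation, using the previous remark for the constants that changed classification. By Lemma~\ref{lemma:symb-elim} this gives $\Gamma_2^{T_2} \models_{\T_0 \cup {\sf UIF}_{\Sigma_P}} \Gamma_2^{T_1}$. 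Replacing the parameter constants by the variables $\overline{y}$ (using $\overline{c_p}^{T_1} \subseteq \overline{c_p}^{T_2}$) we obtain $\exists \overline{y}\,\Gamma_2^{T_2}(\overline{y}) \models_{\T_0} \exists \overline{y}\,\Gamma_2^{T_1}(\overline{y})$; taking contrapositives yields $\forall \overline{y}\,\neg\Gamma_2^{T_1}(\overline{y}) \models_{\T_0} \forall \overline{y}\,\neg\Gamma_2^{T_2}(\overline{y})$, i.e.\ $\forall \overline{y}\,\Gamma_{T_1}(\overline{y}) \models_{\T_0} \forall \overline{y}\,\Gamma_{T_2}(\overline{y})$.

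I expect the main obstacle to be not the logical core --- which is just ``more instances make $\Gamma_2$ stronger, hence $\neg\Gamma_2$ weaker'' --- but the purification bookkeeping: making the choice of fresh constants coherent across $T_1$ and $T_2$, and verifying that the three-way partition $\overline{c_p}, \overline{c_f}, \overline{c}$ of Step~2 nests well enough for Lemma~\ref{lemma:symb-elim} to be applied uniformly and for the restriction of valuations to go through. The remaining steps are routine manipulations of quantifiers and entailments over $\T_0$.
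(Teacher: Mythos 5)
Your proof is correct and follows essentially the same route as the paper: observe that $T_1 \subseteq T_2$ gives $\K[T_1]\cup G\cup{\sf Con}_1 \subseteq \K[T_2]\cup G\cup{\sf Con}_2$, deduce an entailment between the corresponding existential closures, and then transfer this through the quantifier-elimination step and negate to get the monotonicity of $\forall\overline{y}\,\Gamma_T$. The only real stylistic difference is that you explicitly route the argument through Lemma~\ref{lemma:symb-elim} to identify $\Gamma_2^{T_i}$ with $\exists\overline{x}\,(G_1^{T_i}\wedge{\sf Def}^{T_i})$, whereas the paper refers more loosely to ``the chain of arguments used in the previous proofs''; and you flag the possibility that a constant may move from the $\overline{c}$-class (for $T_1$) to the $\overline{c_p}$-class (for $T_2$), a bookkeeping subtlety the paper passes over silently. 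Your handling of it---restricting the witnessing valuation and letting reclassified constants take their model-given interpretations---is the right fix and closes the gap.
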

\begin{proof}Let $T_1, T_2$ be two finite sets of terms.
 If $T_1 \subseteq T_2$ then $\K[T_1] \subseteq \K[T_2]$. We denote by
$\K_1$ the purified form of $\K[T_1]$ and by $\K_2$ the purified form of
$\K[T_2]$, and let ${\sf Con}_i$ be the set of axioms corresponding to
the terms in $\K[T_i] \cup G$, $i = 1,2$. Then $\K_1 \cup {\sf Con}_1
\subseteq \K_2 \cup {\sf Con}_2$, hence 
$\K_2 \wedge G_0 \wedge {\sf Con}_2 \models \K_1 \wedge G_0 \wedge {\sf
  Con}_1$. 
Then every model of $\T_0$ which is a model of 
$\K_2 \wedge G_0 \wedge {\sf Con}_2$ is also a model of  
$\K_1 \wedge G_0 \wedge {\sf  Con}_1$. 

Let ${\overline c}$ denote the sequence
consisting of all constants in $\K_2 \wedge G_0 \wedge {\sf
  Con}_2$ and not in $\Sigma$ (a superset of the constants occurring
in $\K_1 \wedge G_0 \wedge {\sf Con}_1$). 
We regard the elements in ${\overline c}$ as
variables.\footnote{Instead of renaming the constants ${\overline c}$
  with  new variables ${\overline x}$, we here keep their names, but
  treat them as variables.} 

\medskip
\noindent 
We first show that $\exists {\overline c} (\K_2 \wedge G \wedge {\sf Con}_2) \models \exists {\overline
  c} (\K_1 \wedge G \wedge {\sf Con}_1)$. 

Indeed, let ${\mathcal A}$ be a model of $\exists {\overline c} (\K_2 \wedge G
\wedge {\sf Con}_2)$. Then there exists a valuation $\beta$ which assigns
values in $A$ to the variables in  ${\overline c}$ such that 
${\mathcal A}, \beta \models \K_2 \wedge G
\wedge {\sf Con}_2$. Then ${\mathcal A}^{\overline c}  \models \K_2 \wedge G
\wedge {\sf Con}_2$ (where ${\mathcal A}^{\overline c}$ is the expansion
of $\A$ with new constants ${\overline c}$, interpreted as specified by $\beta$). 
Then ${\mathcal A}^{\overline c}   \models \K_1 \wedge G_0\wedge {\sf
  Con}_1$, 
hence ${\mathcal A} \models \exists {\overline
  c} (\K_1 \wedge G_0 \wedge {\sf Con}_1)$. 
This shows that $\exists {\overline c} (\K_2 \wedge G \wedge {\sf Con}_2) \models \exists {\overline
  c} (\K_1 \wedge G \wedge {\sf Con}_1)$.

\medskip
\noindent We show that $\forall {\overline y} \Gamma_{T_1}({\overline y}) \models
\forall {\overline y} \Gamma_{T_2}({\overline y})$ modulo $\T_0$, i.e.\ that every model of $\T_0 \cup
\forall {\overline y} \Gamma_{T_1}({\overline y})$ is also a model of $\T_0 \cup
\forall {\overline y} \Gamma_{T_2}({\overline y})$. 

 Let $\A$ be a model of $\T_0$. 
Assume that ${\mathcal A} \not\models \forall {\overline y} \Gamma_{T_2}({\overline y})$. Then ${\mathcal A}
\models \exists {\overline y} \neg \Gamma_{T_2}({\overline y})$, hence
(using the chain of arguments used in the previous proofs and the fact
that $\A$ is a model of $\T_0$) ${\mathcal A}
\models \exists {\overline c} (\K_2 \wedge G_0 \wedge {\sf Con}_2)$. 
But then ${\mathcal A}
\models \exists {\overline c} (\K_1 \wedge G_0\wedge {\sf Con}_1)$, 
hence (again using the chain of arguments used in the previous proofs and the fact
that $\A$ is a model of $\T_0$) ${\mathcal A} \models \exists {\overline
  y} \neg \Gamma_{T_1}({\overline y})$, so ${\mathcal A} \not\models
\forall {\overline y} \Gamma_{T_1}({\overline y})$. 
\end{proof}

\noindent We denote by $\forall {\overline y} \Gamma_G({\overline y})$
the formula 
obtained when $T = {\sf est}(\K, G)$. 
\begin{thm} 
   If the extension ${\mathcal T}_0 \subseteq {\mathcal T}_0 \cup
  {\mathcal K}$ satisfies condition $({\sf Comp}_{f})$ and $\K$ is flat
  and linear then 
$\forall y \Gamma_G(y)$ is entailed by every
  conjunction $\Gamma$ of 
  clauses with the property that ${\mathcal T}_0 \cup \Gamma \cup
  {\mathcal K} \cup G$ is unsatisfiable (i.e.\ it is the weakest such constraint). 
\label{symb-elim-weakest}
\end{thm}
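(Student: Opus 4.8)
\noindent The plan is to combine Theorem~\ref{inv-trans-qe}(2) -- which, applied with $T = {\sf est}(\K,G)$, already shows that $\forall {\overline y}\,\Gamma_G({\overline y})$ is \emph{one} constraint making $\T_0 \cup \forall {\overline y}\,\Gamma_G({\overline y}) \cup \K \cup G$ unsatisfiable -- with a model-theoretic argument showing it is the weakest. So let $\Gamma$ be an arbitrary (universally quantified) $\Pi_0 \cup \Sigma_P$-formula with $\T_0 \cup \Gamma \cup \K \cup G$ unsatisfiable; the goal is $\Gamma \models_{\T_0} \forall {\overline y}\,\Gamma_G({\overline y})$. Arguing by contraposition, I would fix a $\Pi_0 \cup \Sigma_P$-structure $\A$ with $\A \models \T_0 \cup \Gamma$ but $\A \not\models \forall {\overline y}\,\Gamma_G({\overline y})$, and build from it a model of $\T_0 \cup \K \cup \Gamma \cup G$, contradicting the choice of $\Gamma$.

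First I would unfold the construction of $\Gamma_G = \neg\Gamma_2$ from Steps~1--5. From $\A \not\models \forall {\overline y}\,\Gamma_G({\overline y})$ I obtain a witnessing expansion of $\A$ that interprets the constants ${\overline c_p}$ and satisfies $\Gamma_2({\overline c_p})$. As this expansion is trivially a model of $\T_0 \cup {\sf UIF}_{\Sigma_P}$, Lemma~\ref{lemma:symb-elim} provides a further expansion (interpreting the remaining fresh constants and the $\Sigma$-symbols partially, as prescribed by ${\sf Def}$) which satisfies $\exists {\overline x}\, G_1({\overline c_p},{\overline c_f},{\overline x}) \wedge {\sf Def}$, and hence satisfies all of $\K_0 \cup G_0 \cup {\sf Con}_0 \cup {\sf Def}$ -- the purified form of $\K[T] \cup G$ for $T = {\sf est}(\K,G)$, in the sense of Theorem~\ref{lemma-rel-transl}.

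Next I would read off a partial $\Pi$-structure $\A'$ from this data: its support and its $\Pi_0 \cup \Sigma_P$-reduct are those of $\A$, while for each $f \in \Sigma$ the partial function $f_{\A'}$ is defined exactly on the argument tuples named in ${\sf Def}$, with the value ${\sf Def}$ prescribes -- well defined because ${\sf Con}_0$ holds in $\A$. Using that $\K$ is flat and linear, precisely as in the embeddability-implies-locality direction of Theorem~\ref{check-loc}(1), I would verify that $\A'$ is a weak partial model of $\T_0 \cup \K$ with $\A' \models G$: any valuation under which all literals of a clause $C \in \K$ are defined in $\A'$ yields, after flattening, an instance of $C$ lying in $\K[T]$ and hence true in $\A$, while every other valuation weakly satisfies $C$ for lack of a defined literal; and $G$ holds because $G_0$ does. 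Since $T(\A')$ is finite, $\A' \in {\sf PMod}_{w,f}(\Sigma,\T)$.

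Finally I would apply $({\sf Comp}_f)$ to embed $\A'$ weakly into a \emph{total} model $\B$ of $\T = \T_0 \cup \K$ \emph{with the same support}. The same-support clause is the essential point: a weak embedding that is onto is a bijection, so its restriction to the totally interpreted symbols $\Pi_0 \cup \Sigma_P$ is an isomorphism, whence $\B{|_{\Pi_0 \cup \Sigma_P}} \cong \A$ and therefore $\B \models \Gamma$; plain weak embeddability $({\sf Emb}_{w,f})$ would not suffice, since a universal $\Gamma$ need not persist to a proper extension. Moreover $\B \models G$, because weak embeddings preserve the truth of literals and $G$ is ground. Thus $\B \models \T_0 \cup \K \cup \Gamma \cup G$ -- a contradiction -- so $\Gamma \models_{\T_0} \forall {\overline y}\,\Gamma_G({\overline y})$, and with Theorem~\ref{inv-trans-qe}(2) we conclude that $\forall {\overline y}\,\Gamma_G({\overline y})$ is indeed the weakest such constraint. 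I expect the delicate part to be the verification that $\A'$ is a weak partial model of \emph{all} of $\K$ (and lies in ${\sf PMod}_{w,f}(\Sigma,\T)$) rather than only of $\K[T]$ -- this is exactly where flatness and linearity of $\K$ and the choice $T = {\sf est}(\K,G)$ are needed; the bookkeeping with fresh-constant expansions and the transfer of $\Gamma$ and $G$ along the bijection should be routine.
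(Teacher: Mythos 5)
Your proof is correct and follows essentially the paper's route, presented by contraposition with the supporting transfer results unfolded: where the paper invokes Theorem~\ref{cor-ext-2} to transfer $({\sf Comp}_f)$ (hence locality, by flatness and linearity of $\K$) to the enriched extension $\T_0\cup\Gamma\subseteq\T_0\cup\Gamma\cup\K$ and then applies the hierarchical reduction of Theorem~\ref{lemma-rel-transl} to conclude that no model of $\T_0\cup\Gamma$ can satisfy $\exists{\overline y}\,\Gamma_2({\overline y})$, you re-derive both ingredients directly by constructing the partial model $\A'$ and invoking $({\sf Comp}_f)$ for the original extension, then pulling $\Gamma$ back across the resulting bijection. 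In particular you correctly isolate the reason $({\sf Comp}_f)$ rather than mere $({\sf Emb}_{w,f})$ is required — the universal formula $\Gamma$ must be carried to the total model via a same-support bijection — which is precisely the content of the paper's Theorem~\ref{cor-ext-2}.
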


\begin{proof}We show that for every set $\Gamma$ of constraints on the parameters, 
if ${\mathcal T}_0 \cup \Gamma \cup {\mathcal K} \cup G$ is unsatisfiable then 
every model of $\T_0  \cup \Gamma$ is a model of  $\T_0 \cup \forall
y \Gamma_G(y)$. 

We know, by Theorem~\ref{cor-ext-2}, that if the extension 
${\mathcal T}_0 \subseteq {\mathcal T}_0 \cup
  {\mathcal K}$ satisfies condition $({\sf Comp}_{f})$ then also the
  extension 
${\mathcal  T}_0 \cup \Gamma \subseteq {\mathcal T}_0 \cup \Gamma \cup   {\mathcal K}$
  satisfies condition $({\sf Comp}_{f})$. If $\K$ is flat and linear then the
  extension is local. 
Let $T = {\sf est}(\K, G)$. 
By locality, ${\mathcal T}_0 \cup \Gamma \cup {\mathcal K} \cup G$ is
unsatisfiable
if and only if ${\mathcal T}_0 \cup \Gamma \cup {\mathcal K}[T] \cup G$
is unsatisfiable, if and only if  (with the notations in Steps 1--5) 
${\mathcal T}_0 \cup \Gamma \cup {\mathcal K}_0 \cup G_0
\cup {\sf Con}_0 \cup {\sf Def}$ is unsatisfiable. 
Let ${\mathcal A}$ be a model of $\T_0  \cup \Gamma$. Then ${\mathcal A}$ 
cannot be a model of ${\mathcal K}_0 \cup G_0
\cup {\sf Con}_0 \cup {\sf Def}$, so (with the notation used when
describing Steps 1--5) 
${\mathcal A} \not\models \Gamma_2({\overline
  c_p})$, i.e.\ ${\mathcal A} \not\models \exists {\overline y}
\Gamma_2({\overline  y})$. 
It follows that ${\mathcal A} \models \forall  {\overline y} \Gamma_G({\overline
  y})$. 
\end{proof}

\noindent A similar result holds for chains of local theory
extensions. 

\begin{thm} 
 Assume that we have the following chain of theory extensions:
\[ {\mathcal T}_0 ~~~\subseteq~~~ {\mathcal T}_0 \cup {\mathcal K}_1
~~~\subseteq~~~  
{\mathcal T}_0 \cup \K_1 \cup \K_2 ~~~\subseteq~~~ \dots ~~~\subseteq~~~ 
{\mathcal  T}_0 \cup \K_1 \cup \K_2 \cup \dots \cup \K_n \] 
where every extension in the chain satisfies condition 
$({\sf Comp}_{f})$, $\K_i$ are all flat
  and linear, and in all $\K_i$ all variables occur below the 
extension terms on level $i$. 

Let $G$ be a set of ground clauses, and let $G_1$ be 
the result of the hierarchical reduction of satisfiability 
of $G$ to a satisfiability test w.r.t.\ $\T_0$. Let $T(G)$ be the set
of all instances used in the chain of hierarchical reductions and let 
$\forall y \Gamma_{T(G)}(y)$ be the formula obtained by applying 
Steps 2--5 to $G_1$. 

Then $\forall y \Gamma_{T(G)}(y)$ 
is entailed by every
  conjunction $\Gamma$ of 
  clauses with the property that ${\mathcal T}_0 \cup \Gamma \cup
  {\mathcal K}_1 \cup \dots \cup \K_n  \cup G$ is unsatisfiable (i.e.\ it is the weakest such constraint). 
\label{symb-elim-weakest-chains}
\end{thm}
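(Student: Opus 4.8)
The plan is to mimic, level by level, the proof of Theorem~\ref{symb-elim-weakest}, replacing the single transfer step (Theorem~\ref{cor-ext-2}) by its iteration along the chain. Write ${\mathcal K} := {\mathcal K}_1 \cup \dots \cup {\mathcal K}_n$; as in Theorem~\ref{symb-elim-weakest} we take $\Gamma$ to be a conjunction of universally quantified clauses over $\Pi_0 \cup \Sigma_P$ (a constraint on the parameters), and let $\Sigma$ be the non-parameter extension functions of the chain, so that $\Sigma_P \cap \Sigma = \emptyset$. The first task is to show that the enriched chain
\[ {\mathcal T}_0 \cup \Gamma ~\subseteq~ ({\mathcal T}_0 \cup \Gamma) \cup {\mathcal K}_1 ~\subseteq~ \dots ~\subseteq~ ({\mathcal T}_0 \cup \Gamma) \cup {\mathcal K}_1 \cup \dots \cup {\mathcal K}_n \]
is again a chain of local extensions, each satisfying $({\sf Comp}_f)$. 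Since the ${\mathcal K}_i$ are flat and linear and each variable of ${\mathcal K}_i$ occurs below a level-$i$ extension term, it suffices to establish $({\sf Comp}_f)$ for the $i$-th step; this follows from Theorem~\ref{cor-ext-2}, applied with base theory ${\mathcal T}_0 \cup {\mathcal K}_1 \cup \dots \cup {\mathcal K}_{i-1}$, the universal $\Pi_0 \cup \Sigma_P$-formula $\Gamma$ and the extension axioms ${\mathcal K}_i$ --- the disjointness hypothesis being satisfied because the level-$i$ extension functions that are not parameters are disjoint from $\Sigma_P$, while parameter symbols occurring only at higher levels are harmlessly carried along the base as uninterpreted functions. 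Equivalently, one may regard ${\mathcal T}_0 \subseteq {\mathcal T}_0 \cup {\mathcal K}$ as a single $\Psi$-local extension, with $\Psi$ the term-closure operator obtained by composing the level-wise closure operators used in the chain reduction for Theorem~\ref{lemma-rel-transl}, and invoke the $\Psi$-version of Theorem~\ref{cor-ext-2} once (with $\Sigma_P$ and $\Sigma$ disjoint) to conclude that ${\mathcal T}_0 \cup \Gamma \subseteq {\mathcal T}_0 \cup \Gamma \cup {\mathcal K}$ satisfies $({\sf Comp}_f^{\Psi})$, hence is $\Psi$-local.

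Given this, the hierarchical reduction applies to the enriched chain exactly as to the original one. Performing the chain of reductions on $G$ (Theorem~\ref{lemma-rel-transl}, applied level by level) produces the instance set $T(G)$, the purified $\Pi_0^C$-clause set $G_1 = {\mathcal K}_0 \cup G_0 \cup {\sf Con}_0$, and the definitions ${\sf Def}$; these are unchanged whether or not $\Gamma$ is present, since $\Gamma$ contains no non-parameter extension symbol and hence contributes no extension term to be instantiated. Locality of the enriched chain then gives: ${\mathcal T}_0 \cup \Gamma \cup {\mathcal K} \cup G$ is satisfiable if and only if ${\mathcal T}_0 \cup \Gamma \cup {\mathcal K}_0 \cup G_0 \cup {\sf Con}_0 \cup {\sf Def}$ is satisfiable.

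Now assume ${\mathcal T}_0 \cup \Gamma \cup {\mathcal K}_1 \cup \dots \cup {\mathcal K}_n \cup G$ is unsatisfiable and let ${\mathcal A}$ be an arbitrary model of ${\mathcal T}_0 \cup \Gamma$. Suppose, towards a contradiction, that ${\mathcal A} \not\models \forall \overline{y}\,\Gamma_{T(G)}(\overline{y})$, i.e.\ ${\mathcal A} \models \exists \overline{y}\,\Gamma_2(\overline{y})$ (recall $\Gamma_{T(G)} = \neg\Gamma_2$). Fixing an interpretation of the constants $\overline{c_p}$ witnessing this and interpreting $\overline{c_f}$ as forced by ${\sf Def}$, Lemma~\ref{lemma:symb-elim} produces an expansion of ${\mathcal A}$ that is a model of $\exists \overline{x}\, G_1(\overline{c_p},\overline{c_f},\overline{x}) \wedge {\sf Def}$; interpreting the remaining constants and the $\Sigma$-functions consistently with ${\sf Con}_0$ and ${\sf Def}$ yields an expansion of ${\mathcal A}$ that is a model of ${\mathcal K}_0 \cup G_0 \cup {\sf Con}_0 \cup {\sf Def}$ and still a model of $\Gamma$ (the expansion only interprets fresh symbols and leaves the $\Pi_0 \cup \Sigma_P$-reduct untouched). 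By the previous paragraph this makes ${\mathcal T}_0 \cup \Gamma \cup {\mathcal K} \cup G$ satisfiable, a contradiction. Hence every model of ${\mathcal T}_0 \cup \Gamma$ satisfies $\forall \overline{y}\,\Gamma_{T(G)}(\overline{y})$, i.e.\ $\Gamma \models_{{\mathcal T}_0} \forall \overline{y}\,\Gamma_{T(G)}(\overline{y})$. Since, by the chain analogue of Theorem~\ref{inv-trans-qe}, the formula $\forall \overline{y}\,\Gamma_{T(G)}(\overline{y})$ itself makes ${\mathcal K}_1 \cup \dots \cup {\mathcal K}_n \cup G$ unsatisfiable over ${\mathcal T}_0$, it is the weakest such constraint.

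I expect the first step to be the main obstacle: making it fully precise that a chain of $({\sf Comp}_f)$-extensions composes into a single $({\sf Comp}_f^{\Psi})$-extension, and that Theorem~\ref{cor-ext-2} (or its $\Psi$-counterpart) transfers $({\sf Comp}_f)$ across it when the parameters are distributed over several levels --- in particular keeping the disjointness hypothesis genuinely satisfied at each level. The accompanying bookkeeping (which terms enter $T(G)$ at which level, and that $T(G)$ and ${\sf Def}$ are independent of $\Gamma$) belongs here as well; the model-theoretic argument in the last paragraph is then a routine adaptation of the single-extension proof.
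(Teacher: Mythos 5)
Your proposal matches the paper's proof in structure and in the key lemmas invoked: you iterate Theorem~\ref{cor-ext-2} along the chain to transfer $({\sf Comp}_f)$ (hence locality) to the $\Gamma$-enriched chain, unfold the hierarchical reduction level by level exactly as in the paper's chain of equivalences, and then run the same model-theoretic argument via Lemma~\ref{lemma:symb-elim} (the paper states this step more tersely, you by contradiction). The extra remark about an alternative single-$\Psi$-local treatment and the discussion of disjointness bookkeeping are elaborations, not deviations, so this is essentially the paper's own proof.
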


\begin{proof}We show that for every set $\Gamma$ of constraints on the parameters, 
if ${\mathcal T}_0 \cup \Gamma \cup {\mathcal K}_1 \cup \dots \cup \K_n \cup
G$ is unsatisfiable then 
every model of $\T_0  \cup \Gamma$ is a model of  $\T_0 \cup \forall
y \Gamma_{T(G)}(y)$. 

We know, by Theorem~\ref{cor-ext-2}, that if the extension 
\[ {\mathcal T}_0 \cup \K_1 \cup \dots \K_{i-1} \subseteq {\mathcal T}_0 \cup
\K_1 \cup \dots \K_{i-1} \cup {\mathcal K}_i \]
satisfies condition $({\sf Comp}_{f})$ then also the extension 
\[ {\mathcal   T}_0 \cup \Gamma \cup \K_1 \cup \dots \K_{i-1} \subseteq {\mathcal T}_0 \cup \Gamma \cup  \K_1 \cup \dots \K_{i-1} \cup {\mathcal K}_i \]
  satisfies condition $({\sf Comp}_{f})$. If $\K_i$ is flat and linear then the
  extension is local. 
By locality, the following are equivalent: 
\begin{itemize}
\item ${\mathcal T}_0 \cup \Gamma \cup \K_1 \cup \dots \cup \K_{n-1}
\cup \K_n \cup G \models \perp$;
\item ${\mathcal T}_0 \cup \Gamma \cup \K_1 \cup \dots \cup \K_{n-1}\cup
  {\mathcal K}_n[T_n] \cup G^n \models \perp$ ~~~
 where $G^n = G$ and $T_n = {\sf est}(\K_n, G^n)$; 
\item ${\mathcal T}_0 \cup \Gamma \cup \K_1 \cup \dots \cup
  \K_{n-1}[T_{n-1}] \cup G^{n-1} \models \perp$ ~~~ 
where $G^{n-1} =  {\mathcal K}_n[T_n]_0 \cup G^n_0 \cup {\sf
  Con}^n_0$ is the set of ground clauses obtained from  $ {\mathcal K}_n[T_n] \cup G^n$
after purification and adding the corresponding instances of
congruence axioms and $T_{n-1} = {\sf est}(\K_{n-1}, G^{n-1})$; \\
 ...
\item ${\mathcal T}_0 \cup \Gamma \cup \K_1[T_1] \cup G^1 \models
  \perp$ ~~~
where $G^1 =  {\mathcal K}_2[T_2]_0 \cup {G^2}_0 \cup {\sf
  Con}^2_0$ is the set of ground clauses obtained from $\K_2[T_2] \cup G^2$
after purification and adding the corresponding instances of
congruence axioms and $T_1 = {\sf est}(\K_1, G^1)$; 
\item ${\mathcal T}_0 \cup \Gamma \cup  {\mathcal K}_0 \cup G_0
\cup {\sf Con}_0 \cup {\sf Def} \models \perp$  ~~~
where $\K_0 =   {\mathcal K}_1[T_1]_0$, and $G_0 = G^1_0$ are the sets of ground
clauses obtained from $ {\mathcal K}_1[T_1]$ resp. $G^1$ after purification
and ${\sf Con}_0$ is the corresponding set of instances of congruence
axioms. 
\end{itemize}
Let ${\mathcal A}$ be a model of $\T_0  \cup \Gamma$. Then ${\mathcal A}$ 
cannot be a model of ${\mathcal K}_0 \cup G_0
\cup {\sf Con}_0 \cup {\sf Def}$, as $\T_0  \cup \Gamma \cup {\mathcal K}_0 \cup G_0
\cup {\sf Con}_0 \cup {\sf Def}$ has no models. 
Therefore  (with the notation used when
describing Steps 1--5),  
${\mathcal A} \not\models \Gamma_2({\overline
  c_p})$, i.e.\ ${\mathcal A} \not\models \exists {\overline y}
\Gamma_2({\overline  y})$. 
It follows that ${\mathcal A} \models \forall  {\overline y} \Gamma_{T(G)}({\overline
  y})$. 
\end{proof}

\begin{exa}
\label{ex1-symb-elim}
Let $\T_0$ be the theory of dense total orderings without
endpoints. Consider the extension of $\T_0$ with functions $\Sigma_1 =
\{ f, g, h, c \}$
whose properties are axiomatized by 

\[ \begin{array}{ll} 
\K := \{ & \forall x (x \leq c \rightarrow g(x) \approx f(x)), \\
& \forall x  ( c < x \rightarrow g(x) \approx h(x)) \quad \}.
\end{array} \] 
Assume $\Sigma_P = \{ f, h, c \}$ and $\Sigma = \{ g \}$. 
We are interested in generating a set of constraints on the functions $f$ and
$h$ which ensure that $g$ is monotone, e.g. satisfies
\[ {\sf Mon}(g): ~~ \forall x, y (x \leq y \rightarrow g(x) \leq
g(y)), \] 
i.e.\ a set $\Gamma$ of $\Sigma_0 \cup \Sigma_P$-constraints such that 
\[ \T_0 \cup \Gamma \cup \K \cup \{ c_1 \leq c_2, g(c_1) > g(c_2) \}
\text{ is unsatisfiable,} \] 
where 
$G = \{ c_1 \leq c_2,  g(c_1) > g(c_2) \}$ is the negation of ${\sf  Mon}(g)$.  

\

\noindent As explained in Example~\ref{ex:chains} 
we have the follwing chain of local theory extensions: 

\[ \T_0 \subseteq \T_0 \cup {\sf UIF}_{\{ f, h \}} \subseteq (\T_0 \cup 
{\sf  UIF}_{\{f, g\}}) \cup \K = \T_0 \cup \K. \]  
Both extensions satisfy the 
condition ${\sf Comp}_f$, and 
$\T_0 \cup \K \cup G$ is satisfiable iff
$\T_0  \cup \K[G] \cup G$ is satisfiable, where $\K[G]$ contains all
instances of $\K$ in which the terms starting with the extension
symbol $g$ are ground terms in $G$: 
\[ \begin{array}{lll} 
\K[G] := \{ & c_1 \leq c \rightarrow g(c_1) \approx f(c_1), & c_2 \leq c \rightarrow g(c_2) \approx f(c_2), \\
& c < c_1 \rightarrow g(c_1) \approx h(c_1), & c < c_2 \rightarrow
g(c_2) \approx h(c_2)) \quad \}.
\end{array} \]

\

\noindent We construct $\Gamma$ as follows: 
\begin{description}
\item[Step 1] We compute $\T_0 \cup \K[G] \cup G$ as described above, 
then purify it
in two steps, because we have a chain of two local extensions: First we 
introduce new constants $g_1, g_2$ for the terms $g(c_1), g(c_2)$, then, in the
next step, we introduce new constants $f_1, f_2, h_1, h_2$  for the terms $f(c_1), f(c_2),
h(c_1),$ and $h(c_2)$. 
We obtain:  

\

\noindent ${\sf Def} = \{ g_1 {\approx} g(c_1),  g_2 {\approx} g(c_2), f_1 {=} f(c_1),
f_2 {\approx} f(c_2), h_1 {\approx} h(c_1), h_2 {\approx} h(c_2) \}$ and 

\

\noindent {$\begin{array}{ll} 
\K_0  \cup {\sf Con}_0 \cup G_0:= \{ & 
 c_1 \leq c \rightarrow g_1 \approx f_1 , \quad c_2 \leq c \rightarrow g_2
 \approx f_2,  \\
& c < c_1 \rightarrow g_1 \approx h_1,  \quad c < c_2 \rightarrow g_2
\approx h_2, \\
& c_1 \approx c_2 \rightarrow g_1 \approx g_2,  \quad c_1 \approx c_2
\rightarrow f_1 \approx f_2, \quad  c_1 \approx c_2
\rightarrow h_1 \approx h_2, \\
& c_1 \leq c_2, \quad g_1 > g_2 \quad \} 
\end{array}$ }

\
 
\item[Step 2] The parameters are contained in the set $\Sigma_P = \{
  f, h, c \}$.  
We want to eliminate the function symbol $g$, so
  we replace $g_1, g_2$ with existentially
  quantified variables $z_1, z_2$. 

\noindent 
We obtain the existentially quantified formula $\exists z_1,
  z_2 \, G_1(c_1, c_2, c, f_1, f_2, h_1, h_2, z_1, z_2)$:

\

\noindent 
$\begin{array}{l@{}l@{}l@{}l@{}l} 
\exists z_1, z_2 (& c_1 \leq c \rightarrow z_1 \approx f_1 \quad  \wedge \quad & 
c_2 \leq c \rightarrow z_2 \approx  f_2  \quad \wedge \quad  & c_1 \approx c_2
\rightarrow f_1 \approx f_2 \quad  \wedge \quad \\
& c < c_1 \rightarrow z_1 \approx h_1 \quad \wedge \quad  & 
c < c_2 \rightarrow z_2 \approx h_2 \quad \wedge \quad &  c_1 \approx c_2 \rightarrow h_1 \approx h_2 \quad  \wedge \quad \\
&  c_1 \approx c_2 \rightarrow z_1 \approx z_2  \quad \wedge \quad  &  c_1 \leq c_2  \, \wedge \,  z_1 > z_2) 
\end{array}$ 

\

\noindent We can simplify the formula $G_1(c_1, c_2, c, f_1, f_2, h_1, h_2, z_1, z_2)$
taking into account that in the theory of (dense)
total orderings the following equivalences hold: 

\

\begin{enumerate}
\item $(c_1 \approx c_2 \rightarrow z_1 \approx z_2) \wedge z_1 > z_2 \quad \equiv
  \quad c_1 \not\approx
c_2 \wedge z_1 > z_2$ 

\item $(c_1\approx  c_2 \rightarrow f_1 \approx  f_2) \wedge (c_1 \approx c_2 \rightarrow
  h_1 \approx h_2) \wedge c_1 \not\approx c_2 \quad \equiv \quad c_1 \not\approx c_2$
\end{enumerate} 

\

\noindent We obtain the formula $\exists z_1,
  z_2 \, G'_1(c_1, c_2, c, f_1, f_2, h_1, h_2, z_1, z_2)$:  

\

\noindent 
{ $\begin{array}{l@{}l@{}l@{}l@{}l} 
\exists z_1, z_2 [& (c_1 \leq c \rightarrow z_1 \approx f_1) ~ \wedge ~ & 
(c < c_1 \rightarrow z_1 \approx h_1) ~ \wedge ~  & \\
& (c_2 \leq c \rightarrow z_2 \approx f_2)  ~ \wedge ~  & 
(c < c_2 \rightarrow z_2 \approx h_2) ~ \wedge ~ & c_1 \not\approx c_2 ~ \wedge ~   c_1 \leq c_2  ~ \wedge ~ &  z_1 > z_2]
\end{array}$ } 

\

\item[Step 3] For quantifier elimination we can use a system such as Mathematica,
Redlog or QEPCAD. 
For convenience, we illustrate how the computations can be done by 
hand. 
Note that in
  propositional logic we have:
\[ (P \rightarrow Q) \wedge (\neg P \rightarrow
  Q') \equiv (P \wedge Q) \vee (\neg P \wedge Q'). \] 
 From this it
follows that:

\

$(P \rightarrow Q) \wedge (\neg P \rightarrow 
  Q') \wedge (R \rightarrow S) \wedge (\neg R \rightarrow S') \wedge 
  W$ 

\noindent $\begin{array}{@{}lll} 

\quad & \equiv & [(P \wedge Q) \vee (\neg P \wedge Q')] \wedge  
[(R \wedge S) \vee (\neg R \wedge S')] \wedge W \\
& \equiv & 
(P \wedge Q \wedge R \wedge S \wedge W) \vee 
 (P \wedge Q \wedge \neg
  R \wedge S' \wedge W) \vee \\
& & (\neg P \wedge Q'  \wedge R \wedge S \wedge W) \vee 
 (\neg
  P \wedge Q'  \wedge \neg R \wedge S' \wedge W).
\end{array}$ 

\

\noindent Therefore, 
 the formula above is equivalent to: 

\bigskip
\noindent $\begin{array}{ll} 
\exists z_1, z_2 (& (c_1 \leq c \wedge z_1 \approx f_1  \wedge c_2 \leq c
\wedge z_2 \approx f_2 \wedge c_1 \leq c_2 \wedge z_1 > z_2 \wedge c_1 \not\approx c_2)  ~~ \vee \\
& (c_1 \leq c \wedge z_1 \approx f_1  \wedge c < c_2
\wedge z_2 \approx h_2 \wedge c_1 \leq c_2 \wedge z_1 > z_2 \wedge c_1 \not\approx
c_2)  ~~ \vee \\
&(c < c_1 \wedge z_1 \approx h_1  \wedge c_2 \leq c
\wedge z_2 \approx f_2 \wedge c_1 \leq c_2 \wedge z_1 > z_2 \wedge c_1 \not\approx
c_2)  ~~ \vee \\
&(c < c_1 \wedge z_1 \approx h_1  \wedge c < c_2
\wedge z_2 \approx h_2 \wedge c_1 \leq c_2 \wedge z_1 > z_2 \wedge c_1 \not\approx
c_2) )\\
\end{array}$ 


\

\noindent 
Using the method for quantifier elimination for dense total orderings
  without endpoints for eliminating the existentially quantified
  variables $z_1, z_2$ in this last formula we obtain the formula $\Gamma_1(c_1, c_2, c, f_1,
  f_2, h_1, h_2)$:  

\

{ $ \begin{array}{ll} 
( & (c_1 \leq c ~~\wedge~~ c_2 \leq c ~~\wedge~~ c_1 \leq c_2 ~~\wedge~~ f_1 > f_2 ~~\wedge~~ c_1 \not\approx c_2)  ~~ \vee \\[-0.2ex]
& (c_1 \leq c ~~\wedge~~ c < c_2 ~~\wedge~~ c_1 \leq c_2 ~~\wedge~~ f_1 > h_2 ~~\wedge~~ c_1 \not\approx
c_2)  ~~ \vee \\[-0.2ex]
&(c < c_1 ~~\wedge~~  c_2 \leq c
~~\wedge~~ c_1 \leq c_2 ~~\wedge~~ h_1 > f_2 ~~\wedge~~ c_1 \not\approx
c_2)  ~~ \vee \\[-0.2ex]
&(c < c_1 ~~\wedge~~ c < c_2
~~\wedge~~ c_1 \leq c_2 ~~\wedge~~ h_1 > h_2 ~~\wedge~~ c_1 \not\approx
c_2) ~~~~~~)
\end{array}$ 

 } 

\

\

\item[Step 4] We construct the formula $\Gamma_2(c_1, c_2, c)$ 
obtained from $\Gamma_1$ 
by replacing $f_i$ by $f(c_i)$ and $h_i$ by
  $h(c_i)$, $i = 1,2$. We obtain (after further minor simplification
  and rearrangement for facilitating reading): 
\end{description}

\

{
$((c_1 < c_2 \leq c \wedge f(c_1){>} f(c_2))  \vee (c_1 {\leq} c < c_2 \wedge f(c_1) {>} h(c_2))  \vee (c {<} c_1 {<} c_2 \wedge h(c_1) {>} h(c_2) ))
$}

\

\begin{description}
\item[Step 5] Then we obtain the constraint on the parameters 
$\forall z_1, z_2 \Gamma_T(z_1, z_2)$, i.e.:  

\

$\begin{array}{ll}
\forall z_1, z_2 [ & (z_1 < z_2 \leq c \rightarrow f(z_1) \leq f(z_2))
\wedge \\
& (z_1 \leq c < z_2  \rightarrow f(z_1) \leq h(z_2))
\wedge \\
& (c < z_1 < z_2  \rightarrow h(z_1) \leq h(z_2)) \quad ]
\end{array}$

\

\noindent which guarantees that $g$ is monotone. 
\end{description}
\end{exa}

\subsection{Case 2: \texorpdfstring{$\T_0$}{T\_0} does not allow quantifier elimination, but its model completion
  does}

We now  analyze the case in which $\T_0$ does not necessarily allow quantifier
elimination, but has a model completion which allows  quantifier
elimination.
\begin{thm}
Let ${\mathcal T}_0$ be a theory. Assume that ${\mathcal T}_0$ has a model completion 
${\mathcal T}_0^*$ such that ${\mathcal T}_0 \subseteq {\mathcal T}_0^*$. 
Let ${\mathcal T} = {\mathcal T}_0 \cup {\mathcal K}$ be an extension of ${\mathcal T}_0$
with new function symbols $\Sigma_1 = \Sigma_P \cup \Sigma$  
whose properties are axiomatized by a 
set of 
clauses 
${\mathcal K}$ (all of which contain symbols in $\Sigma$)
in which all variables occur also below extension functions in $\Sigma_1$.  
Assume that: 
\begin{enumerate}
\item[(i)] every model of 
${\mathcal T}_0 \cup {\mathcal K}$ embeds
into a model of ${\mathcal T}_0^* \cup {\mathcal K}$, and 
\item[(ii)] $\T_0^*$ allows quantifier elimination. 
\end{enumerate}
Then, for every finite set of ground clauses $G$ and every finite set
$T$ of ground terms over the 
signature $\Pi^C = \Pi_0 \cup \Sigma \cup \Sigma_P \cup C$ with ${\sf
  est}(\K, G)
\subseteq T$ we can construct a universally quantified 
$\Pi_0 \cup \Sigma_P$-formula 
$\forall {\overline x} \Gamma_T({\overline x})$ 
such that: 
\begin{enumerate}
\item[(1)] For every
structure ${\mathcal A}$ with signature 
$\Pi_0 \cup \Sigma \cup \Sigma_P \cup C$ 
which is a model of ${\mathcal T}_0 \cup {\mathcal K}$, if 
${\mathcal A} \models \forall {\overline x}
\Gamma_T({\overline x})$ then ${\mathcal A} \models \neg G$. 

\item[(2)] ${\mathcal T}_0 \cup \forall y \Gamma_T(y) \cup {\mathcal K} \cup
  G$ is unsatisfiable. 
\end{enumerate}
\label{QE-mc}
\end{thm}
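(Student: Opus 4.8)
The plan is to mimic the proof of Theorem~\ref{inv-trans-qe}, replacing quantifier elimination in $\T_0$ by quantifier elimination in $\T_0^*$, and using assumption~(i) to transfer unsatisfiability between $\T_0 \cup \K$ and $\T_0^* \cup \K$. More precisely, let $\forall {\overline x}\,\Gamma_T({\overline x})$ be the formula produced by Steps~1--5, where now the quantifier elimination in Step~3 is performed in $\T_0^*$ (this is legitimate since Condition~(C2) holds). First I would prove part~(1). Let $\mathcal{A}$ be a $\Pi_0 \cup \Sigma \cup \Sigma_P \cup C$-structure with $\mathcal{A} \models \T_0 \cup \K \cup G$; the goal is to derive $\mathcal{A} \models \neg \forall {\overline x}\,\Gamma_T({\overline x})$, which then yields (1) by contraposition. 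Since $\mathcal{A} \models \T_0 \cup \K$, by assumption~(i) there is an embedding $h : \mathcal{A} \hookrightarrow \mathcal{B}$ where $\mathcal{B} \models \T_0^* \cup \K$. Now $G$ is a set of ground clauses; replacing its constants by the corresponding elements, $\mathcal{A} \models G$ means $\mathcal{A}$ satisfies a conjunction of ground literals over the extended signature, and embeddings preserve the truth of literals, so $\mathcal{B}$ (suitably expanded with the images of the constants) satisfies the image of $G$ under $h$; in other words $\mathcal{B} \models \K \cup G'$ for the appropriately renamed $G'$. To keep the argument clean I would instead argue directly via satisfiability: $\T_0 \cup \K \cup G$ is satisfiable, so by~(i) $\T_0^* \cup \K \cup G$ is satisfiable (embed a model of the former into a model of the latter, carrying along an interpretation of the constants in $G$).

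Now I am in exactly the situation of Theorem~\ref{inv-trans-qe} but with $\T_0^*$ as the base theory: there is a model $\mathcal{B} \models \T_0^* \cup \K \cup G$. Running Steps~1--2 on $\K[T] \cup G$ and adding the definitions, $\mathcal{B}$ (expanded with the new constants as in ${\sf Def}$) models $\T_0^* \cup \K_0 \cup G_0 \cup {\sf Con}_0 \cup {\sf Def}$, hence $\mathcal{B} \models \T_0^* \cup \exists {\overline x}\, G_1({\overline c_p}, {\overline c_f}, {\overline x}) \cup {\sf Def}$. By quantifier elimination in $\T_0^*$, the formula $\Gamma_1({\overline c_p}, {\overline c_f})$ from Step~3 is equivalent to $\exists {\overline x}\, G_1$ modulo $\T_0^*$, so $\mathcal{B} \models \T_0^* \cup \Gamma_1({\overline c_p}, {\overline c_f})$; using that $\mathcal{B}$ models ${\sf Def}$ we replace the $c_f$ by the terms $f(c_1,\dots,c_k)$ to get $\mathcal{B} \models \T_0^* \cup \Gamma_2({\overline c_p})$, hence $\mathcal{B} \models \T_0^* \cup \exists {\overline y}\, \Gamma_2({\overline y})$, i.e.\ $\mathcal{B} \models \exists {\overline y}\, \neg \Gamma_T({\overline y})$. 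The last step is to bring this conclusion back to $\T_0$: the formula $\exists {\overline y}\,\neg\Gamma_T({\overline y})$ is an existentially quantified formula over $\Pi_0 \cup \Sigma_P$, and I need it to fail in $\mathcal{A}$. Here I would use that $\Gamma_2({\overline c_p})$ is (by an argument parallel to Lemma~\ref{lemma:symb-elim}, now modulo $\T_0^* \cup {\sf UIF}_{\Sigma_P}$) equivalent to $\exists {\overline x}\, G_1 \wedge {\sf Def}$; since $\mathcal{A} \models \T_0 \cup \K \cup G$ and $\T_0 \subseteq \T_0^*$ only in one direction, the cleanest route is: $\mathcal{A} \models \T_0 \cup \K \cup G$ implies, by unfolding the purification of Theorem~\ref{lemma-rel-transl}, that $\mathcal{A} \models \T_0 \cup \K_0 \cup G_0 \cup {\sf Con}_0 \cup {\sf Def}$, hence $\mathcal{A} \models \exists {\overline x}\, G_1({\overline c_p}, {\overline c_f}, {\overline x}) \wedge {\sf Def}$, hence (since $\mathcal{A} \models {\sf Def}$ and $\exists {\overline x}\, G_1 \wedge {\sf Def}$ is, over $\T_0 \cup {\sf UIF}_{\Sigma_P}$, equivalent to $\Gamma_2({\overline c_p})$ — this equivalence holds in \emph{every} model of $\T_0$, so in particular $\mathcal{A}$, provided we are careful that the quantifier-elimination equivalence in Step~3 is asserted modulo $\T_0^*$, but the direction ``$\exists {\overline x}\, G_1 \Rightarrow \Gamma_1$'' need only be checked in the model $\mathcal{B}$, whereas ``$\Gamma_2 \Rightarrow \exists {\overline x}\, G_1 \wedge {\sf Def}$'' is the direction we do not need here). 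The safe formulation is: from $\mathcal{A} \models \T_0 \cup \K \cup G$ deduce $\mathcal{A} \models \exists {\overline x}\, G_1 \wedge {\sf Def}$ directly by purification, without touching $\Gamma$; then use the \emph{companion} $\mathcal{B}$ only to certify that $\exists {\overline x}\, G_1 \wedge {\sf Def}$ entails $\exists {\overline y}\,\neg\Gamma_T({\overline y})$ modulo $\T_0^*$, which by Lemma~\ref{lem:co-theories-sat} (or the analogue for the direction we need) is the same as modulo $\T_0$ since $\neg\Gamma_T$ is built from a universal formula. This gives $\mathcal{A} \models \exists {\overline y}\, \neg\Gamma_T({\overline y})$, i.e.\ $\mathcal{A} \not\models \forall {\overline x}\, \Gamma_T({\overline x})$, proving (1). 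Part~(2) is then immediate: if $\T_0 \cup \forall {\overline y}\,\Gamma_T({\overline y}) \cup \K \cup G$ had a model $\mathcal{A}$, then $\mathcal{A} \models \T_0 \cup \K \cup G$ would force $\mathcal{A} \models \neg \forall {\overline y}\,\Gamma_T({\overline y})$ by (1), a contradiction.

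I expect the main obstacle to be the bookkeeping around \emph{which theory the quantifier-elimination equivalence is stated relative to}: the equivalence of $\Gamma_1$ and $\exists {\overline x}\, G_1$ holds modulo $\T_0^*$, not modulo $\T_0$, so I cannot naively run the same chain of implications inside a model of $\T_0$ alone. The fix is exactly the asymmetry exploited above — I only ever need the implications in the ``easy'' direction in each model: in $\mathcal{B} \models \T_0^*$ I need $\exists {\overline x}\, G_1 \wedge {\sf Def} \Rightarrow \exists {\overline y}\,\neg\Gamma_T$, and this implication is a \emph{universal} consequence of $\T_0^*$ (its negation is $\exists$-over-$\exists$, refuted by a universal sentence), hence by $(\T_0^*)_\forall = (\T_0)_\forall$ (Remark~\ref{cor:co-theories1}) it is already a consequence of $\T_0$, so it can be applied inside $\mathcal{A}$. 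One must also check that Steps~1--2 (purification, ${\sf Con}_0$) are literally the same whether the base is $\T_0$ or $\T_0^*$, which they are since purification is theory-independent, and that assumption~(i) is used only once, to pass from satisfiability of $\T_0 \cup \K \cup G$ to satisfiability of $\T_0^* \cup \K \cup G$ — this is where the hypothesis ``every model of $\T_0 \cup \K$ embeds into a model of $\T_0^* \cup \K$'' does its work, together with the fact that embeddings preserve ground literals so the witnessing interpretation of the constants of $G$ survives.
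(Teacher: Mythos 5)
Your proposal is correct in substance but takes a genuinely different route from the paper's proof, and it contains one slip worth flagging. The paper uses assumption~(i) essentially: it embeds $\mathcal A$ into a model $\mathcal B$ of $\T_0^* \cup \K$, observes that the ground set $G$ transfers to $\mathcal B$, runs the argument of Theorem~\ref{inv-trans-qe} over $\T_0^*$ inside $\mathcal B$ to get $\mathcal B \models \Gamma_2({\overline c}_p)$, and then uses the fact that embeddings \emph{reflect} ground quantifier-free formulae to pull $\Gamma_2({\overline c}_p)$ back to $\mathcal A$. Your ``safe formulation'' bypasses $\mathcal B$ entirely: by purification, $\mathcal A$ (expanded with the definition constants) already satisfies $\exists {\overline x}\, G_1({\overline c}_p, {\overline c}_f, {\overline x}) \wedge {\sf Def}$, and you transfer the one needed direction of the quantifier-elimination equivalence from $\T_0^*$ to $\T_0$ on the grounds that it is a universal consequence. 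The idea is right, but the implication you claim is universal is not quite the right one: the universal closure of $\exists {\overline x}\, G_1 \wedge {\sf Def} \to \exists {\overline y}\, \neg\Gamma_T({\overline y})$ is a $\forall\exists$ sentence (the $\exists {\overline y}$ on the right pushes it out of the universal fragment), and $(\T_0^*)_\forall = (\T_0)_\forall$ does not transfer $\Pi_2$-sentences; moreover Lemma~\ref{lem:co-theories-sat} is about satisfiability of ground clauses, not entailment, so it is the wrong citation. The repair is to keep ${\overline c}_p, {\overline c}_f$ as free constants: the universal closure of $G_1({\overline c}_p,{\overline c}_f,{\overline x}) \to \Gamma_1({\overline c}_p,{\overline c}_f)$ \emph{is} a universal $\Pi_0$-sentence entailed by $\T_0^*$, hence by $\T_0$ by Remark~\ref{cor:co-theories1}; then from $\mathcal A \models \Gamma_1({\overline c}_p, {\overline c}_f)$ and $\mathcal A \models {\sf Def}$ obtain $\mathcal A \models \Gamma_2({\overline c}_p)$, and only afterwards existentially generalize ${\overline c}_p$. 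With this fix your argument is sound; it is also slightly stronger than the paper's, since it does not actually invoke assumption~(i) at all (you bring in $\mathcal B$ early but then abandon it), whereas the paper's proof relies on~(i) to reflect $\Gamma_2({\overline c}_p)$ back through the embedding.
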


\begin{proof}Let $G$ be a finite set of $\Pi^C$-clauses and
$T$ be a finite set of ground $\Pi^C$-terms
containing ${\sf est}({\mathcal K}, G)$.  
Since, by assumption (ii), $\T_0^*$ has quantifier elimination, 
by Theorem~\ref{inv-trans-qe} we know
that we can construct a universally quantified 
$\Pi_0 \cup \Sigma_P$-formula 
$\forall {\overline x} \Gamma_T({\overline x})$ (containing 
some parameters in $\Sigma_P$) with the following properties: 
\begin{itemize}
\item For every
structure ${\mathcal A}$ with signature 
$\Pi_0 \cup \Sigma \cup \Sigma_P \cup C$ 
which is a model of ${\mathcal T}^*_0 \cup {\mathcal K}$, if 
${\mathcal A} \models \forall {\overline x}
\Gamma_T({\overline x})$ then ${\mathcal A} \models \neg G$; 

\item ${\mathcal T}^*_0 \cup \forall y \Gamma_T(y) \cup {\mathcal K} \cup
  G$ is unsatisfiable; 
\end{itemize}
and that $\forall {\overline x} \Gamma_T({\overline x})$ is
constructed using Steps 1--5. 
We show that (1) and (2) hold. 

(1) We prove the contrapositive. Let ${\mathcal A}$ be a structure with signature  
$\Pi_0 \cup \Sigma \cup \Sigma_P \cup C$ 
which is a model of ${\mathcal T}_0 \cup {\mathcal K} \cup G$. 
As  ${\mathcal A}$ is a model of ${\mathcal T}_0 \cup {\mathcal K}$, by Assumption
(i),  ${\mathcal A}$
embeds into a model ${\mathcal B}$ of ${\mathcal T}^*_0 \cup {\mathcal K}$. 
Since $G$ is a set of ground clauses which are true in ${\mathcal A}$
and ${\mathcal A}$ embeds into ${\mathcal B}$, $G$ is also true in 
${\mathcal  B}$. Thus, ${\mathcal B}$ is a model of ${\mathcal T}^*_0 \cup {\mathcal K} \cup
G$. 
By the proof of Theorem~\ref{inv-trans-qe} and with the notation used there 
it follows that ${\mathcal B} \models \T^*_0 \cup \Gamma_2({\overline
  c}_p)$. Again, since ${\mathcal A}$ embeds into ${\mathcal B}$, and 
since $\Gamma_2({\overline c}_p)$ is a ground formula in the signature
of ${\mathcal A}$, 
${\mathcal A} \models \Gamma_2({\overline c}_p)$. It follows (as in 
the proof of Theorem~\ref{inv-trans-qe}) that ${\mathcal A} \models \exists {\overline
  y} \Gamma_2({\overline y})$. 

We showed that if ${\mathcal A} \models {\mathcal T}_0 \cup {\mathcal K} \cup G$ then 
${\mathcal A}$ is a model of $\exists x \neg \Gamma_T(x)$. 
Hence, if ${\mathcal A} \models {\mathcal T}_0 \cup \K \cup \forall  {\overline y} \Gamma_T({\overline
  y})$ then ${\mathcal A} \not\models {\mathcal T}_0 \cup {\mathcal K} \cup G$, hence 
$G$ is false in ${\mathcal A}$. \\
(2) follows directly from (1).
\end{proof}
\begin{exa} 
Consider the problem in Example~\ref{ex1-symb-elim} when the base theory $\T_0$ is 
the theory of total orderings.  We first show that conditions (i) and
(ii)  in Theorem~\ref{QE-mc} hold: 

\smallskip
\noindent $\T_0^*$ is the theory of dense total orderings without
endpoints, which allows quantifier elimination, so 
(ii) holds. 

\smallskip 
\noindent Let $\A$ be a model of $\T_0 \cup \K$, where 
\[ \begin{array}{ll} 
\K := \{ & \forall x (x \leq c \rightarrow g(x) \approx f(x)), \\
& \forall x  ( c < x \rightarrow g(x) \approx h(x)) \quad \}.
\end{array} \] 
Then $\A$ is a totally ordered set, which clearly embeds into a 
model $\B$ of $\T_0^*$ (a dense, totally ordered set without
endpoints). We can use the definitions of the functions of $f, g, h$ 
in $\A$ to define a partial $\Sigma_P \cup \Sigma$-structure on $\B$. 
Due to the form of $\K$ it is easy to see that we can extend this 
partial structure to a total model of $\T_0^* \cup \K$:
\begin{itemize}
\item the functions 
$f, h$ can be defined arbitrarily wherever they are not defined; 
\item $g$ is then defined by case distinction, such that for every $b \in
\B$, if $b \leq c$ then $g_{\B}(b) = f_{\B}(b)$ and if $b > c$ then
$g_{\B}(b) = h_{\B}(b)$. 
\end{itemize}
By Theorem~\ref{QE-mc}, the formula $\forall z_1, z_2 \Gamma_T(z_1,
z_2)$ constructed in Example~\ref{ex1-symb-elim} ensures that $g$ is monotone also in this case. 
\label{ex-dense}
\end{exa}
%

\

\noindent Unfortunately, under the assumptions of Theorem~\ref{QE-mc},
in the case of local theory extensions satisfying the conditions in 
Theorem~\ref{symb-elim-weakest} 
we cannot guarantee that the formula $\forall y \Gamma_G(y)$ is the
weakest among all universal formulae $\Gamma$ with ${\mathcal T}_0 \cup \Gamma \cup
  {\mathcal K} \cup G \models \perp$, as is illustrated by the following
  example. 
%
\begin{exa}
Let $\T_0$ be the theory of total orderings and 
\[ G:= \{ a < g(a), ~g(a) < h(a)\}.\]  
We apply Steps 1--5 for $\T_0^*$, $\K = \emptyset$ and $G$, with $T = {\sf st}(G) = \{
a, g(a), h(a) \}$, where $\Sigma_P = \{ h \}$: 

\medskip
\begin{description}
\item[Step 1] We compute $\T_0 \cup \K[G] \cup G$, then purify it. 
We obtain:  

\medskip
${\sf Def} = \{ g_1 \approx g(a), h_1 \approx h(a) \} $ 
\quad \quad \quad \quad 
$\K_0 \cup {\sf Con}_a \cup G_0 = \{ a < g_1, g_1 < h_1 \}$. 

\medskip
\item[Step 2] $\Sigma_P = \{ h \}$.  We want to eliminate $g$, so
  we replace $g_1$ with the existentially
  quantified variable $z_1$. We obtain the existentially quantified
  formula 
$\exists z_1 (a < z_1 \wedge z_1 < h_1)$. 

\medskip
\item[Step 3] Using a method for quantifier elimination for the theory
  of dense total orderings
  without endpoints for eliminating the existentially quantified
  variable $z_1$ in this formula we obtain the formula 
$a < h_1$. 

\medskip
\item[Step 4] We construct the formula $a < h(a)$ from this formula by replacing 
$h_1$ back with $h(a)$. 

\medskip
\item[Step 5] By replacing $a$ with an existentially quantified
  variable and negating we obtain 
the formula:  $\forall y \Gamma_G(y) = \forall y (h(y) \leq y).$ 
\end{description}

\medskip
\noindent We argue that this last formula is not the most general universal formula that
entails $\neg G$ (w.r.t.\ $\T_0$). 
%
%
Let $\Gamma :=  \forall x, y, z (x < y \rightarrow y \geq z)$. 
Then $\Gamma \wedge G$ is unsatisfiable w.r.t.\ $\T_0$: 
Indeed, assume that $\Gamma \wedge G$ has a model $\A$. 
Then in $\A$, $a < g(a)$ and $g(a) < h(a)$. 
\begin{itemize} 
\item As $a < g(a)$, $g(a) \geq a'$ for every $a' \in A$, so $g(a)$ is a
maximal element of $\A$. 
\item But then $g(a) \geq h(a)$. This contradicts the fact that, in
  $\A$, $g(a)  < h(a)$. 
\end{itemize}
This shows that $\Gamma \wedge G$ is unsatisfiable w.r.t.\ $\T_0$.

\noindent However, there exists a structure $\A_1$ with two elements
$a_1, a_2$ where $a_1 < a_2$ 
such that $h_{\A_1}(a_1) = a_2$ which satisfies $\Gamma$ but not $\Gamma_G$:  
\begin{enumerate}
\item This structure clearly satisfies $\Gamma$: for every valuation $\beta
: \{ x, y, z \} \rightarrow \{ a_1,  a_2 \}$ we have the following
situations: 
\begin{itemize}
\item $\beta(x) \geq \beta(y)$: Then $\A_1, \beta \models (x <
  y \rightarrow y \geq z)$ since the premise is false. 
\item $\beta(x)  < \beta(y)$: Then $\beta(x) = a_1, \beta(y) = a_2$,
  so $\beta(y) \geq \beta(z)$ no matter what the value of $\beta(z)$
  is. Thus, also in this case $\A_1, \beta \models (x <
  y \rightarrow y \geq z)$. 
\end{itemize}
\item The structure does not satisfy $\forall y \Gamma_G(y) := \forall y
  (h(y) \leq y)$: for the valuation with $\beta(y) = a_1$ we have  
$a_2 = h_{\A_1}(a_1)  > a_1$, so $\A_1, \beta \not\models \forall y
  (h(y) \leq y)$. 
\end{enumerate}

\noindent 
Note that this situation cannot occur when $\T_0$ has quantifier
elimination: Then the formula $\exists {\overline x} G_1({\overline x})$ 
is either true or false in $\T_0$. If it is true then
to achieve unsatisfiability we have to add $\Gamma = \perp$, 
which entails any other constraint. 
If it is false then we do not need to add any constraints to achieve 
unsatisfiability, so $\Gamma = \top$, which is entailed by any other
constraint. 
\end{exa}

\section{Ground Interpolation in Theory Extensions}
\label{interp-loc}

In this section we present criteria for recognizing whether a 
theory extension $\T = \T_0 \cup \K$ has ground interpolation provided 
that $\T_0$ has (general) ground interpolation. 
 In particular, we are interested in giving criteria for 
 checking whether a theory $\T$ (resp.\ a theory extension 
 $\T = \T_0 \cup \K$) has a special form of the ground 
 interpolation property, in which 
 for every pair of ground formulae $A, B$ with $A \wedge B \models_{\T} \perp$ 
 there exists an interpolant $I$ of $A$ and $B$  such that the terms
 (resp.\ the extension terms) 
occurring in $I$ are in a set $W(A, B)$ which can be 
 constructed from the set of ground terms of $A$ and $B$. 
 
\subsection{Previous Work} 
In \cite{Sofronie-lmcs} we identified classes of local extensions in 
which ground interpolants can be computed hierarchically
(for this, we had to find ways of  separating 
the instances of axioms in $\K$ and of the congruence axioms).
We present the ideas below. 

Let $\T_0 \subseteq \T = \T_0 \cup \K$ be a local theory extension
with function symbols in a set $\Sigma_1$ and let
$A({\overline a}, {\overline c}), B({\overline b}, {\overline c})$ 
be sets of ground clauses over the signature of $\T$, possibly
containing additional constants in a set $C$, such that 
$A \wedge B \models_{\T_0  \cup \K} \perp$. 
From Theorem~\ref{lemma-rel-transl} we know that in such extensions 
hierarchical reasoning is possible: 
if $A$ and $B$ are sets of ground clauses in a signature $\Pi^C$, and 
$A_0 \wedge D_A$ (resp.\ $B_0 \wedge D_B$) are 
obtained from $A$ (resp.\ $B$) by purification and flattening -- where
${\sf Def} = D_A \cup D_B$ the union of the set $D_A$ 
containing those equalities $c_t \approx t$, where $t$ is an extension
term in $A$ and the set $D_B$ containing those equalities $c_t \approx
t$, 
where $t$ is an extension
term in $B$ --  then the following are equivalent: 
\begin{itemize}
\item $A \wedge B \models_{{\T}_1} \perp$; 
\item ${\T}_0 \wedge \K[A \wedge B] \wedge (A \wedge B) \models \perp$; 
\item ${\T}_0 \wedge \K[A \wedge B] \wedge (A_0 \wedge D_A) \wedge (B_0 \wedge D_B) \models \perp$;  
\item $\K_0 \wedge 
A_0 \wedge B_0 \wedge {\sf Con}[D_A \wedge D_B]_0 \models_{{\T}_0} \perp,$   
\end{itemize}
where ${\K}_0$ is obtained from 
${\K}[A \wedge B]$ by replacing the $\Sigma_1$-terms with the 
corresponding constants contained in the definitions $D_A$ and  $D_B$ and 
 
\noindent $ {\sf Con}[D_A \wedge D_B]_0 = 
{\sf Con}_0 = \displaystyle{ \bigwedge  
\{ \bigwedge_{i = 1}^n} c_i \approx d_i   \rightarrow c \approx d 
\mid \begin{array}{l} 
f(c_1, \dots, c_n) \approx c \in {\sf Def} = D_A \cup D_B \\
f(d_1, \dots, d_n) \approx d \in {\sf Def} = D_A \cup D_B 
\end{array} \}.$

\noindent In general the method for hierarchical reasoning in local theory 
extensions is not sufficient for computing hierarchically interpolants 
because:  
\begin{enumerate}
\item[(i)] ${\mathcal K}[A \wedge B]$ may contain free variables. 

\item[(ii)] If some clause in ${\mathcal K}$ contains two or 
more different extension functions, 
these extension functions cannot always be ``separated''. 

\item[(iii)]  The clauses ${\mathcal K}_0 \wedge {\sf Con}[D_A \wedge D_B]_0$
may contain combinations of constants and extension functions from $A$ and $B$. 
\end{enumerate}
In order to avoid problem (i), in \cite{Sofronie-lmcs} we considered only
extensions with sets of clauses ${\mathcal K}$ 
of clauses in which all variables occur below some extension term.
To address (ii), we defined an equivalence relation 
$\sim$ between extension functions, where 
$f \sim g$ if $f$ and $g$ occur in the same clause in ${\mathcal K}$, 
and considered that a function $f \in \Sigma_1$ is 
common to  $A$ and $B$ if there exist $g, h \in \Sigma_1$ 
such that $f \sim g$, $f \sim h$, $g$ occurs in $A$ and $h$ occurs in $B$.

\medskip
\noindent In order to address (iii), we identified situations in which it is
possible to separate mixed instances of 
axioms in ${\mathcal K}_0$, or of congruence axioms in 
${\sf Con}[D_A \wedge D_B]_0$, into 
an $A$-part and a $B$-part. 
We illustrate this on an example discussed in \cite{Sofronie-lmcs}. 

\begin{exa}[\cite{Sofronie-lmcs}] 
\label{ex-idea-interp}
Consider the conjunction $A_0 \wedge D_A \wedge B_0 \wedge D_B \wedge 
{\sf Con}[D_A \wedge D_B]_0 \wedge {\sf Mon}_0 \wedge {\sf SGc}_0$  in 
Example~\ref{example-hierarchic}, where $\T_0 = {\sf SLat}$. The $A$ and $B$-part share the constants 
$c$ and $d$, and no function symbols. However, as $f$ and $g$ occur together 
in ${\sf SGc}$, $f \sim g$, 
so they are considered to be all shared. (Thus, the interpolant 
is allowed to contain both $f$ and $g$.)  
We can obtain a separation for the clause 
$b \leq a_1 \rightarrow b_1 \leq a$ of ${\sf SGc}_0$ as follows:
\begin{enumerate}
\item[(i)] We note that $A_0 \wedge B_0 \models b \leq a_1$. 
\item[(ii)] We can find an ${\sf SLat}$-term $t$ containing only shared 
constants of 
$A_0$ and $B_0$ such that $A_0 \wedge B_0 \models b \leq t \wedge t \leq a_1$.
(Indeed, such a term is $t = d$.) 
\item[(iii)] We show that, instead of the axiom 
$b \leq g(a) \rightarrow f(b) \leq a$, 
whose flattened form is in ${\sf SGc}_0$, we can use, without loss 
of unsatisfiability:

\medskip

\begin{enumerate}
\item[(1)] an instance of the monotonicity axiom for $f$: 
$b \leq d \rightarrow f(b) \leq f(d)$, 

\item[(2)] another instance of ${\sf SGc}$, namely: 
$d \leq g(a) \rightarrow f(d) \leq a$. 
\end{enumerate}

\medskip

\noindent For this, we introduce a new constant $c_{f(d)}$ for $f(d)$
(its definition, $c_{f(d)} \approx f(d)$, is stored in a set $D_T$), 
and 
the corresponding instances ${\mathcal H}_{\sf sep} = {\mathcal H}^{A}_{\sf sep} 
\wedge {\mathcal H}^{B}_{\sf sep}$ 
of the congru\-ence, monotonicity and 
${\sf SGc}(f, g)$-axioms, which are now  
separated into an $A$-part 
(${\mathcal H}^{A}_{\sf sep}: d \leq a_1 \rightarrow c_{f(d)} \leq a$) and a 
$B$-part (${\mathcal H}^{B}_{\sf sep}: b \leq d \rightarrow b_1 \leq c_{f(d)}$).
We thus obtain a separated conjunction
${\overline A}_0 \wedge {\overline B}_0$  (where 
${\overline A}_0 =  {\mathcal H}^{A}_{\sf sep} \wedge A_0$ and  
${\overline B}_0 = {\mathcal H}^{B}_{\sf sep} \wedge B_0$),  
which can be proved to be unsatisfiable in 
${\mathcal T}_0 = {\sf SLat}$. 
\item[(iv)] To compute an interpolant in ${\sf SLat}$ for 
${\overline A}_0 \wedge {\overline B}_0$ 
note that 
${\overline A}_0$ is logically equivalent to the conjunction of unit 
literals 
\/ $d \leq a_1 ~\wedge~ a \leq c ~\wedge~ c_{f(d)} \leq a$
and ${\overline B}_0$ is logically equivalent to 
\/ $b \leq d ~\wedge~ b_1 {\not\leq} c ~\wedge~ b_1 \leq c_{f(d)}$. 
An interpolant  is 
$I_0 = c_{f(d)} \leq c$. 
\item[(v)] By replacing the new constants with the 
terms they denote we obtain the interpolant 
$I = f(d) \leq c$ for $A \wedge B$. 
\end{enumerate}

\noindent The same ideas can be used when $\T_0 = {\sf TOrd}$. 
\end{exa}

\noindent 
Criteria linking hierarchical ground interpolation to a notion of
``separability'' and to an amalgamability 
property for partial algebras were given in \cite{Wies, Wies-journal}. 
We present the ideas in \cite{Wies, Wies-journal} and then extend some of the
results presented there. 

\subsection{W-Separability}

\begin{defi}[\cite{Wies}] 
An amalgamation closure for a theory 
extension $\T = \T_0 \cup  \K$ is a function $W$ 
associating with finite 
sets of ground terms $T_A$ and $T_B$, a finite set 
$W(T_A, T_B)$ of ground terms such that 
\begin{enumerate-}
\item all ground subterms in $\K$ and $T_A$
are in $W(T_A, T_B)$; 
\item $W$ is monotone, i.e., for all $T_A \subseteq T'_A$,
$T_B \subseteq T'_B$, $W(T_A, T_B) \subseteq W(T'_A, T'_B)$; 
\item $W$ is a closure,
i.e., $W(W(T_A, T_B),W(T_B, T_A)) \subseteq  W(T_A, T_B)$; 
\item
$W$ is compatible with any map $h$ between constants satisfying 
$h(c_1) \neq h(c_2)$, for all constants $c_1 \in {\sf st}(T_A), c_2 
\in {\sf st}(T_B)$ that 
are not shared between $T_A$ and $T_B$, i.e., for any such $h$ we require 
$W(h(T_A), h(T_B)) = h(W(T_A, T_B))$; and 
\item $W(T_A, T_B)$  only contains $T_A$-pure terms (i.e.\ terms
  containing only constants in $C$ which occur in $T_A$). 
\end{enumerate-}
\end{defi}
For sets of ground clauses $A, B$ we write 
$W(A, B)$ for $W({\sf st}(A), {\sf st}(B))$. 
In what follows,  when we use a binary function $W$ 
we always refer to an amalgamation closure.

\begin{defi}[\cite{Wies}] 
A theory extension $\T = \T_0 \cup  \K$  is 
$W$-separable if for all sets of ground clauses $A$ and $B$, 
\[ T_0 \cup \K \cup A \cup B \models \perp \text{ iff }  T_0 \cup  \K[W(A,B)] \cup  A \cup  \K[W(B,A)] \cup B \models 
\perp. \] 
\end{defi}

\begin{exa}
Let $\T_0$  be the theory ${\sf TOrd}$ of total orderings\footnote{We
 chose here $\T_0$ to be the theory of total orderings in order to simplify the
  example: The signature of ${\sf TOrd}$ does not contain function
  symbols, so the amalgamation closure $W$ is easier to describe.}. We consider
the extension of $\T_0$ with function symbols $f, g$ satisfying the
axioms $\K = \{ {\sf SGC}(f, g), {\sf Mon}(f, g) \}$ discussed 
in Examples~\ref{example-hierarchic} and
\ref{ex-idea-interp} (cf.\ also \cite{Sofronie-lmcs}), where: 

\begin{itemize}
\item ${\sf {\sf SGC}}(f, g): \forall x, y (x \leq g(y) \rightarrow f(x) \leq y)$; 
\item ${\sf Mon}(f, g):  \forall x, y (x \leq y \rightarrow f(x) \leq
  f(y)) \wedge \forall x, y (x \leq y \rightarrow  g(x) \leq
  g(y))$. 
\end{itemize}

\noindent The theory of total orderings is $\leq$-interpolating
(for details cf.\ \cite{Sofronie-lmcs}): If $A_0$ and $B_0$ are sets of ground clauses 
in the signature of ${\sf TOrd}$ and $A_0 \wedge B_0 \models_{\sf
  TOrd} a \leq b$, where $a$ is a constant in $A_0$ and $b$ a constant
in $B_0$ then there exists a constant $d$ (common to $A_0$ and $B_0$) 
such that $A_0 \wedge B_0 \models_{\sf
  TOrd} a \leq d \wedge d \leq b$. 
Thus, the terms needed for $\leq$-interpolation are the common
constants of $A$ and $B$. 

\medskip
\noindent If $C_A$ ($C_B$) are the constants in $A$
($B$) then,   
from the form of the clauses in $\K$ and the results in
\cite{Sofronie-lmcs}
we can show that $\T_0 \cup \K$ is $W$-separable where $W(A, B) = {\sf
  st}(A) \cup \{ f(c), g(c) 
\mid c \in C_A \cap C_B \}$ and $W(B, A) = {\sf st}(B) \cup \{ f(c),
g(c) 
\mid c \in C_A \cap C_B \}$. 

In fact, the results in \cite{Sofronie-lmcs} show that if $A \wedge B
\models \perp$, $W$ can be defined more precisely as 
$W(A, B) = {\sf
  st}(A) \cup \{ f(c), g(c) 
\mid c \in D_{AB} \}$ and $W(B, A) = {\sf st}(B) \cup \{ f(c), g(c) 
\mid c \in D_{AB} \}$, where $D_{AB}$ is the set of constants
common to $A$ and $B$ which can be used for  $\leq$-interpolation.

\label{ex1-interp}
\end{exa}

\subsection{\texorpdfstring{$W$}{W}-Separability and Partial \texorpdfstring{$W$}{W}-Amalgamation} 
In \cite{Wies} it is shown that 
if $\T = \T_0 \cup  \K$ is $W$-separable, and $\K$ 
is flat and linear, then the extension $\T_0 \subseteq \T = \T_0 \cup \K$ is $\Psi$-local 
where $\Psi(T) = W(T, T)$ 
for all sets of ground 
terms $T$. Then, a notion of partial $W$-amalgamability is defined as
follows: 
 \begin{defi}[\cite{Wies}]
A theory extension $\T_0 \subseteq \T = \T_0 \cup \K$ is said to
have the partial amalgamation property
with respect to amalgamation closure $W$ (for short: the partial
$W$-amalgamation property) 
if whenever 
$M_A, M_B, M_C \in {\sf PMod}_{w,f}(\Sigma, \T)$ are such that: 
\begin{enumerate-}
\item $M_C$ is a substructure of $M_A$ and of $M_B$, i.e.\
the universe $|M_C|$ of $M_C$ is included in the universes of $M_A$ and $M_B$
and the inclusions into $M_A, M_B$ are weak embeddings; 
\item $|M_C| = |M_A| \cap |M_B|$; 
\item the sets $T_{M_A} = \{ f(a_1, \dots, a_n) \mid a_1, \dots, a_n \in
  M_A, f_{M_A}(a_1, \dots, a_n) \text{ defined} \}$ and $T_{M_B} = \{ f(a_1, \dots, a_n) \mid a_1, \dots, a_n \in
  M_B, f_{M_B}(a_1, \dots, a_n) \text{ defined}  \}$ of terms which are defined in $M_A$ resp.\ $M_B$ are closed
  under the operator $W$, i.e.\ $W(T_{M_A}, T_{M_B}) \subseteq T_{M_A}$ and
  $W(T_{M_B}, T_{M_A}) \subseteq T_{M_B}$; 
\item $T(M_A) \cap T(M_B) \subseteq T(M_C)$; 
\end{enumerate-}
there exists a model $M_D$ of $\T$, and weak embeddings $h_A : M_A
\rightarrow M_D$ 
and $h_B : M_B \rightarrow M_D$,
such that ${h_A}_{|_{|M_C|}} = {h_B}_{|_{|M_C|}}$.
\label{def:part-W-amalg}
\end{defi}

\noindent It is shown that if  
$\T_0 \subseteq \T = \T_0 \cup  \K$ is a theory extension with $\K$ 
flat and linear and $\T_1$ has the 
partial $W$-amalgamation property  w.r.t.\ $W$, 
then $\T_1$ is $W$-separable. 

\medskip

\noindent 
We  make this last result more precise by showing that: 
\begin{itemize}
\item In order to obtain a criterion for $W$-separability we only need a
  weak version of partial $W$-amalgamability, namely partial
  $W$-amalgamability for partial structures with the same $\Pi_0$-reduct
  (Definition~\ref{def:part-amalg}). 
\item   We then prove that also the converse holds, i.e.\ that for
  extensions $\T = \T_0 \cup \K$ of a first-order theory $\T_0$ if 
  (i) the extension is W-separable and 
  (ii) $\T_0$ allows general ground interpolation 
  then $\T$ has the partial
  $W$-amalgamability property 
  for partial structures with the same $\Pi_0$-reduct. 
  This implication was not studied in \cite{Wies}.
\end{itemize}
We will then show that in general  partial $W$-amalgamability
implies partial  $W$-amalgamability 
for partial structures with the same $\Pi_0$-reduct, and that 
the converse implication holds under the additional assumption 
that $\T_0$ allows general ground interpolation.


\begin{defi} 
Let $W$ be an amalgamation closure operator. 
A theory extension $\T = \T_0 \cup \K$ has the {\em partial $W$-amalgamation
property for models with the same $\Pi_0$-reduct} if whenever 
$M_A, M_B \in {\sf PMod}_{w,f}(\Sigma, \T)$ are such that: 
\begin{enumerate-}
\item $M_A, M_B$  have the same reduct $M$ to $\Pi_0$; 
\item For all $m_1, \dots, m_n \in |M| = |M_A| = |M_B|$ if 
$f_{M_A}(m_1, \dots, m_n)$ is defined and is equal to $m$ and  
$f_{M_B}(m_1, \dots, m_n)$ is defined and is equal to $m'$ then $m =
m'$; 
\item The sets $T_{M_A} = \{ f(a_1, \dots, a_n) \mid a_1, \dots, a_n \in
  M_A, f_{M_A}(a_1, \dots, a_n) \text{ defined} \}$ and $T_{M_B} = \{ f(a_1, \dots, a_n) \mid a_1, \dots, a_n \in
  M_B, f_{M_B}(a_1, \dots, a_n) \text{ defined}  \}$ of terms which are defined in $M_A$ resp.\ $M_B$ are closed
  under the operator $W$, i.e.\ $W(T_{M_A}, T_{M_B}) \subseteq T_{M_A}$ and
  $W(T_{M_B}, T_{M_A}) \subseteq T_{M_B}$; 
\end{enumerate-}
there exists a model $M_D$ of $\T_0 \cup \K$ and weak embeddings $h_A:
M_A \rightarrow M_D, h_B : M_B \rightarrow M_D$ which agree on
$M$ (and thus coincide as functions).\footnote{The partial $W$-amalgamation
property for models with the same $\Pi_0$-reduct can also be regarded
as an embeddability property for partial algebras in which the set of
terms which are defined can be seen as the union of two sets $T_{M_A}$
and $T_{M_B}$ which are closed under the application of $W$.  
We decided to use the notion ``partial $W$-amalgamation
property for models with the same $\Pi_0$-reduct'' in this paper 
for the sake of consistency with the terminology introduced in
\cite{sofronie-ijcar-2016}.} 
\label{def:part-amalg} 
\end{defi}
\begin{defi}[\cite{Wies}]
Let $M$ be a model of $\T_0 \cup \K$ and 
let $T$ be a finite set of ground terms such that ${\sf st}(\K)
\subseteq T$. 
We assume that the terms in $T$ are flat (cf.\ Definition~\ref{flat})
or quasi-flat (i.e.\ for all terms of the form $f(t_1, \dots, t_n)$  of $T$, 
where $f$ is an extension function, $t_1, \dots, t_n$ are constants or 
ground terms over $\Pi_0^C$). 

We denote by $M_{|T}$  the partial structure which has the same
support as $M$, and in which all symbols in $\Pi_0$ are defined as in
$M$, but in which the interpretation
of extension symbols $f \in \Sigma$ is restricted as follows: 
For all elements $a_1, \dots , a_n  \in |M|$   if there
exist terms $t_1, \dots , t_n$ such that the interpretation $(t_i)_M$ of $t_i$
in $M$ is $a_i$ for all $i \in \{ 1, \dots, n \}$, and 
$f (t_1, \dots , t_n) \in T$, then
$f_{M_{|T}}(a_1, \dots , a_n) = f_M(a_1, . . . , a_n)$; otherwise 
$f_{M_{|T}}(a_1, \dots , a_n)$ is undefined. 

%
\end{defi}
\begin{lem}
  \label{lem:lemma-68}
Let $M$ be a model of $\T = \T_0 \cup \K$ and $T$ be a finite set of
ground quasi-flat $\Pi$-terms (closed under subterms) with ${\sf
  st}({\mathcal K}) \subseteq T$.
Then the following hold: 
\begin{enumerate-}
\item $M_{|T} \in {\sf PMod}_{w,f}(\Sigma,\T)$. 
\item 
$M_{|T}$ is a partial model of ${\mathcal K}[T]$ in which all
  extension terms  
  in ${\mathcal K}[T]$ are defined. 
\end{enumerate-} 
\label{lem:restriction-T-tot}
\end{lem}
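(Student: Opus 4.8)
\medskip
\noindent\textbf{Proof plan.} The plan is to derive both statements from a single structural fact: the restriction $M_{|T}$ evaluates every term of $T$ in exactly the same way as $M$ does. So first I would prove, by induction on term structure, the auxiliary claim that every ground subterm $t$ of a term in $T$ is defined in $M_{|T}$ and satisfies $(t)_{M_{|T}} = (t)_M$. The base cases --- $t$ a constant or a $\Pi_0$-term --- are immediate, since $M_{|T}$ has the same $\Pi_0$-reduct as $M$. The relevant case is $t = f(t_1, \dots, t_n)$ with $f \in \Sigma$: as $T$ consists of quasi-flat terms, each $t_i$ is a constant or a ground $\Pi_0$-term, hence (base case) evaluates in $M_{|T}$ to $(t_i)_M$; and since $f(t_1, \dots, t_n) \in T$, the terms $t_1, \dots, t_n$ are precisely the witnesses that the definition of $M_{|T}$ uses to declare $f_{M_{|T}}$ defined at $((t_1)_M, \dots, (t_n)_M)$, with value $f_M((t_1)_M, \dots, (t_n)_M) = (t)_M$. (The same induction also handles $\Pi_0$-function symbols occurring above $\Sigma$-applications.) Combining this with ${\sf st}(\K) \subseteq T$ and subterm-closure, I get for free that every extension ground term of $\K$, and every extension term occurring in a clause of $\K[T]$ (all of which are ground, by definition of $\K[T]$), lies in $T$ and is therefore defined in $M_{|T}$ with its $M$-value --- which is already the second half of statement~(2).

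\medskip
\noindent For statement~(1) I would then check the defining conditions of ${\sf PMod}_{w,f}(\Sigma, \T)$ one by one. The $\Pi_0$-reduct of $M_{|T}$ is that of $M$, hence a total model of $\T_0$, and the $\Sigma$-functions are partial by construction. Weak satisfaction of $\K$ follows from the standard observation that, whenever all literals of $\beta(C)$ are defined in $M_{|T}$ for $C \in \K$, their truth values there agree with those in $M$ (as $M_{|T}$ coincides with $M$ on $\Pi_0$ and wherever a $\Sigma$-function is defined), so some literal is true because $M \models \K$; if a literal is undefined, weak satisfaction is vacuous. Finiteness of $T(M_{|T})$ is read off from the auxiliary claim: a tuple $(a_1, \dots, a_n)$ can lie in the domain of $f_{M_{|T}}$ only if it equals the $M$-image of the argument tuple of some $f(t_1, \dots, t_n) \in T$, so $T(M_{|T})$ is the image of a map defined on the finite set $T$. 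Definedness of the terms in ${\sf est}(\K, T(M_{|T}))$ is then immediate: those occurring in $\K$ lie in $T$ and were handled above, those in $T(M_{|T})$ are defined in $M_{|T}$ by construction.

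\medskip
\noindent Statement~(2) is completed along the same lines: a clause $\varphi\sigma \in \K[T]$ is an instance of some $\varphi \in \K$, all of its $\Sigma$-subterms are ground and lie in $T$ (hence are defined in $M_{|T}$), but $\varphi\sigma$ may still contain the variables of $\varphi$ that do not occur below a $\Sigma$-function. For weak validity under a valuation $\beta$ of these remaining variables the dichotomy from the previous paragraph applies verbatim: either some literal of $\beta(\varphi\sigma)$ is undefined in $M_{|T}$ and we are done, or all are defined, their truth values coincide with those in $M$, and $M \models \varphi$ (hence the instance $\varphi\sigma$ under the corresponding valuation) yields a true literal.

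\medskip
\noindent I expect the only genuinely delicate point to be the auxiliary claim of the first step: one has to use both that $T$ is closed under subterms and that $T$ (and $\K$) is quasi-flat, so that no extension-function application occurring in $\K[T]$ is fed an argument that $M_{|T}$ fails to evaluate to its $M$-value. Once this ``consistency'' of the restriction is in place, the rest is a routine transfer of (weak) satisfaction from the total model $M$ to the partial model $M_{|T}$.
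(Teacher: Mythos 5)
Your proof is correct and follows essentially the same route as the paper's: the central observation in both is that $M_{|T}$ agrees with $M$ on every term it evaluates, which lets one transfer (weak) satisfaction of $\K$ and of $\K[T]$ from $M$ to the restriction. You are more explicit than the paper (which defers to Totla--Wies) about two points the paper treats implicitly --- the inductive verification that ground subterms of $T$ get the same value in $M_{|T}$ as in $M$, and the checks of finiteness of $T(M_{|T})$ and definedness of ${\sf est}(\K, T(M_{|T}))$ required for membership in ${\sf PMod}_{w,f}(\Sigma,\T)$ --- but the strategy is the same.
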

\begin{proof}
(1) The proof proceeds as in \cite{Wies}. 
Clearly, $M_{|T}$ is a total model of $\T_0$. Let $C \in \K$ and
$\beta : X \rightarrow M_{|T}$. 
If some of the terms in 
$\beta(C)$ are undefined in $M_{|T}$ then by definition $M_{|T}, \beta
\models_w C$. 
Assume now that all terms in $\beta(C)$ are defined in $M_{|T}$. 
By the definition of $M_{|T}$ they have the same values as in $M$. 
Therefore, as $M, \beta \models C$ it follows that $M_{|T}, \beta
\models_w C$
also in this case.

(2) The only extension terms occurring in $\K[T]$ are those in $T$, and these are defined
in $M_{|T}$. Let $D \in \K[T]$ and $\beta : X \rightarrow
M_{|T}$. Then $D = C\sigma$ where $C \in \K$ and $\sigma$ is a 
substitution such that for every term $t$ occurring in $C$ which starts with a function
symbol in $\Sigma$,  $t\sigma \in T$. 
Thus, in $D$ all terms starting with an extension function $f \in \Sigma$ are ground
terms $f(t_1, \dots t_n) \in T$. If $D$ contains variables, these do
not occur below extension functions. Thus, $\beta(D)$ is defined in
$M_{|T}$. 
Let $\gamma : X \rightarrow M$ be 
$\gamma(x) = \beta(\sigma(x))$ for every $x \in X$. 
Then $\beta(D) = \beta(\sigma(C)) = \gamma(C)$. In $\beta(D)$, the
$\Sigma$-terms are in $T$, hence they are defined in $M_{|T}$ and have
the same value as in $M$.  
Since $M \models \K$,  $M, \gamma \models C$, so there exists at least one literal in 
$C$ which is true in $M$ w.r.t.\ $\gamma$, thus there exists at least
one literal in 
$D$ which is true in $M_{|T}$ w.r.t.\ $\beta$.
\end{proof}

\begin{rem} If we impose, in addition, that all variables
in $\K$ occur below an extension function, then (2) is immediate:
$\K[T]$ is ground and contains only $\Sigma$-terms in $T$; those 
terms are all defined in $M_{|T}$ and their value is the same as in
$M$. (We do not need to consider variable assignments in that case.)
\end{rem}

\begin{lem}
Let $\T_0 \cup \K$ be an extension of $\T_0$ with a set $\K$ of flat
and linear clauses. 
Let $T$ be a finite set of
ground flat $\Pi$-terms (closed under subterms) with ${\sf
  st}({\mathcal K}) \subseteq T$, and let 
$M$ be a model of $\T_0 \cup \K[T]$. Then $M_{|T} \in {\sf PMod}_{w,f}(\Sigma,\T)$. 
\label{lem:restriction-T}
\end{lem}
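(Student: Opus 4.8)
The plan is to verify that $M_{|T}$ satisfies each clause of the definition of ${\sf PMod}_{w,f}(\Sigma,\T)$, following the template of Lemma~\ref{lem:lemma-68}(1) but replacing the step that used $M\models\K$ by one that uses only $M\models\K[T]$ together with flatness and linearity of $\K$. First I would dispatch the easy conditions: $M_{|T}$ has the same support as $M$ and interprets every $\Pi_0$-symbol as $M$ does, so $(M_{|T})|_{\Pi_0}=M|_{\Pi_0}$ is a total model of $\T_0$ and the $\Sigma$-functions are partial by construction; $T(M_{|T})$ is finite because every $f(a_1,\dots,a_n)\in T(M_{|T})$ is, by the definition of $M_{|T}$, witnessed by a term $f(t_1,\dots,t_n)\in T$ with $(t_i)_M=a_i$, and the witness determines the tuple, so $|T(M_{|T})|\le|T|$; and since $\K$ is flat it contains no ground extension term, whence ${\sf est}(\K,T(M_{|T}))=T(M_{|T})$, all of whose terms are defined in $M_{|T}$ by construction.

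The main step will be to show that $M_{|T}$ is a weak partial model of $\K$. Fix $C\in\K$ and a valuation $\beta:X\to|M_{|T}|$. If some literal of $\beta(C)$ is undefined in $M_{|T}$, then $(M_{|T},\beta)\models_w C$ by definition, so assume all literals — hence all terms — of $\beta(C)$ are defined in $M_{|T}$. Because $C$ is flat, each of its terms headed by a $\Sigma$-function is of the form $f(x_1,\dots,x_n)$ with the $x_i$ variables; since $f_{M_{|T}}(\beta(x_1),\dots,\beta(x_n))$ is defined, the definition of $M_{|T}$ hands us ground $\Pi$-terms $t_1,\dots,t_n$ with $f(t_1,\dots,t_n)\in T$ and $(t_i)_M=\beta(x_i)$. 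I would set $\sigma(x_i):=t_i$, and $\sigma(x):=x$ for the variables of $C$ that do not occur below a $\Sigma$-function. Linearity of $C$ makes this well defined: no variable occurs twice in a single $\Sigma$-headed term, and any variable occurring in two such terms forces those terms to be syntactically identical, so the choices cannot conflict. Then $C\sigma\in\K[T]$ by construction, so from $M\models\K[T]$ I get $M\models C\sigma$; choosing the valuation $\gamma$ agreeing with $\beta$ on the variables of $C\sigma$ (which are exactly the variables of $C$ not below a $\Sigma$-function) yields a literal $L$ of $C$ with $(M,\gamma)\models L\sigma$. A routine induction on term structure — using $f_{M_{|T}}(\beta(\bar x))=f_M(\beta(\bar x))$ on the witnessed tuples and the identity of the $\Pi_0$-reducts elsewhere — shows that every term $s$ of $C$ evaluates under $\beta$ in $M_{|T}$ to the element that $s\sigma$ evaluates to under $\gamma$ in $M$; hence the literal $L$, which is defined under $\beta$ by assumption, is true in $M_{|T}$, giving $(M_{|T},\beta)\models_w C$.

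I expect the main obstacle to lie in the bookkeeping around $\sigma$: checking that the ground witnesses $t_i$ can be chosen consistently across all occurrences of a variable in $C$ (exactly the point where linearity is indispensable and where the argument would collapse without it), and the accompanying claim that evaluation under $\beta$ in $M_{|T}$ matches evaluation of the instance under $\gamma$ in $M$, which is what lets the satisfied literal transfer back. Flatness is used twice — to ensure $\Sigma$-headed terms of $C$ have only variable arguments, so that a first-order substitution on variables already produces an instance in $\K[T]$, and to identify ${\sf est}(\K,T(M_{|T}))$ with $T(M_{|T})$ in the easy part. The rest parallels the corresponding portion of Lemma~\ref{lem:lemma-68} and is routine.
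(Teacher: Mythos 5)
Your proof is correct and follows essentially the same approach as the paper: use the definedness of $\beta(C)$ in $M_{|T}$ to extract ground witnesses $t_i \in T$, assemble them (via flatness and linearity) into a substitution $\sigma$ with $C\sigma \in \K[T]$, invoke $M \models \K[T]$, and transfer the satisfied literal back to $(M_{|T},\beta)$. You simply make explicit some bookkeeping that the paper's proof leaves implicit (the finiteness of $T(M_{|T})$, the treatment of variables not below $\Sigma$-functions, and the evaluation-matching induction).
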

\begin{proof}
The proof is similar to the beginning of the proof of Theorem~2 in
\cite{sofronie-cade-05} (cf.\ also \cite{Wies}). 
Clearly, $M_{|T}$ is a total model of $\T_0$. 
To show that $M_{|T} \models_w \K$ we use the fact that if $D$ is a
clause in $\K$ and 
$\beta : X \rightarrow M_{|T}$ 
is an assignment in which $\beta(t)$ is defined for every term $t$
occurring in $D$, then $D$ is true in $M_{|T}$ w.r.t.\ $\beta$. 

For this, note that if for 
every term $t = f(x_1, \dots, x_n)$ of $D$, $\beta(t) = f_{M_{|T}}(\beta(x_1),
\dots, \beta(x_n))$ is defined in $M_{|T}$ then there exist terms $t_1,
\dots, t_n$ such that $(t_i)_{M} = \beta(x_i)$ and $f(t_1, \dots, t_n)
\in T$. Let $\sigma$ be the  substitution with $\sigma(x_i) = t_i$
($\sigma$ is well-defined because $D$ is flat and linear). 
Then $\sigma(D) \in \K[T]$ and 
$\beta(\sigma(t)) = \beta(t)$. Since  $M$ be a model of $\T_0 \cup
\K[T]$, $\sigma(D)$ is true in $M$ w.r.t.\ $\beta$, thus (as
$\beta(\sigma(t)) = \beta(t)$ for every term in $D$) 
$D$ is true in $M_{|T}$ w.r.t.\ $\beta$.
\end{proof}
%


\begin{thm}
Let $W$ be an amalgamation closure operator with the additional
property that if $T_1$ and $T_2$ are sets of flat ground terms,
$W(T_1, T_2)$ is a set of flat ground terms\footnote{The result
  holds also if $W(T_1, T_2)$ is a set of quasi-flat ground terms
  whenever $T_1, T_2$ are sets of quasi-flat ground terms, but
  condition (4) in the definition of an amalgamation closure needs then
  to be adapted.}. 
Assume that $\T_0$ is a first-order theory and let $\K$ be a set of
flat and linear 
clauses over $\Pi_0 \cup \Sigma$. 
If $\T_0 \cup \K$ has the partial amalgamation property 
for models with the same $\Pi_0$-reduct then $\T_0 \cup \K$ is
$W$-separable. 
\label{thm:p-amalg-impl-Wsep} 
\end{thm}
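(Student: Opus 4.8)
The plan is to establish the two implications in the definition of $W$-separability separately. The implication ``$\T_0 \cup \K[W(A,B)] \cup A \cup \K[W(B,A)] \cup B \models \perp$ implies $\T_0 \cup \K \cup A \cup B \models \perp$'' is immediate, since $\K[W(A,B)]$ and $\K[W(B,A)]$ consist of ground instances of clauses of $\K$, so every model of $\T_0 \cup \K \cup A \cup B$ is a model of $\T_0 \cup \K[W(A,B)] \cup A \cup \K[W(B,A)] \cup B$; if the latter has no model, neither does the former. The content lies in the converse, which I would prove by contraposition: from a model $N$ of $\T_0 \cup \K[W(A,B)] \cup A \cup \K[W(B,A)] \cup B$ I would construct a total model of $\T_0 \cup \K \cup A \cup B$.

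First I would work with the flattened, purified forms of $A$ and $B$ (which preserves satisfiability), so that ${\sf st}(A)$ and ${\sf st}(B)$, and hence -- by the hypothesis on $W$ and property~(1) of an amalgamation closure -- $W(A,B)$ and $W(B,A)$ are sets of flat ground terms (enlarging them by their base subterms if needed, which does not change the corresponding instance sets). Set $M_A := N_{|W(A,B)}$ and $M_B := N_{|W(B,A)}$, the partial structures obtained from $N$ by keeping each extension function defined only on tuples named by terms in $W(A,B)$ resp.\ $W(B,A)$ (notation as in \cite{Wies}). Since $N \models \T_0 \cup \K[W(A,B)]$ and ${\sf st}(\K) \subseteq W(A,B)$, Lemma~\ref{lem:restriction-T} gives $M_A \in {\sf PMod}_{w,f}(\Sigma, \T)$, and symmetrically $M_B \in {\sf PMod}_{w,f}(\Sigma, \T)$. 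Moreover every extension subterm of $A$ lies in ${\sf st}(A) \subseteq W(A,B)$, so all literals of $A$ are defined in $M_A$ with the same values as in $N$; hence $M_A \models A$, and likewise $M_B \models B$.

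Next I would check the hypotheses of Definition~\ref{def:part-amalg} for $M_A, M_B$: they share the $\Pi_0$-reduct $N_{|\Pi_0}$; any extension-term application defined in both gets the same value (inherited from $N$); and the sets $T_{M_A}$, $T_{M_B}$ of terms defined in $M_A$, $M_B$, which correspond to $W(A,B)$ and $W(B,A)$, are closed under $W$ by the closure property $W(W(A,B), W(B,A)) \subseteq W(A,B)$ combined with property~(4) (compatibility with the evaluation map sending a term to its value in $N$). The partial $W$-amalgamation property for models with the same $\Pi_0$-reduct then yields a total model $M_D$ of $\T_0 \cup \K$ together with weak embeddings $h_A : M_A \rightarrow M_D$, $h_B : M_B \rightarrow M_D$ agreeing on the common support $|N|$. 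Interpreting each constant occurring in $A$ by its $h_A$-image and each constant occurring in $B$ by its $h_B$-image is consistent, because $h_A$ and $h_B$ coincide on $|N|$ and hence on the constants shared by $A$ and $B$; since weak embeddings preserve the truth of defined literals and $M_D$ is total, $M_A \models A$ and $M_B \models B$ transfer to $M_D \models A$ and $M_D \models B$. Thus $M_D \models \T_0 \cup \K \cup A \cup B$, completing the contraposition.

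The step I expect to be the main obstacle is the verification of condition~(3) of Definition~\ref{def:part-amalg}: one must make precise the identification of the syntactic term sets $W(A,B)$, $W(B,A)$ with the ``semantic'' sets $T_{M_A}$, $T_{M_B}$, and then invoke property~(4) of the amalgamation closure with respect to the evaluation map into $N$ (which is not injective in general, but is injective on the distinct $A$-private and $B$-private constants, exactly the situation property~(4) is designed for), so that the purely syntactic closure property of $W$ actually delivers closure of $T_{M_A}$ and $T_{M_B}$. The remaining verifications -- applicability of Lemma~\ref{lem:restriction-T} after flattening, and the transfer of $A$ and $B$ along the weak embeddings -- are routine.
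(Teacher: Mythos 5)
Your proposal takes essentially the same route as the paper's proof: prove the nontrivial implication by contraposition, restrict the model of $\T_0 \cup \K[W(A,B)] \cup A \cup \K[W(B,A)] \cup B$ to $W(A,B)$ and $W(B,A)$ to obtain $M_A, M_B \in {\sf PMod}_{w,f}(\Sigma,\T)$ via Lemma~\ref{lem:restriction-T}, verify the hypotheses of Definition~\ref{def:part-amalg} (using property~(4) of the closure with the evaluation-into-$M$ map for condition~(3)), apply partial $W$-amalgamation to get $M_D$, and transfer the truth of $A$ and $B$ along the weak embeddings. The only minor discrepancy is your aside that the evaluation map is injective on distinct $A$-private and $B$-private constants — this is not guaranteed, since $M$ may identify such constants — but the paper's own proof applies property~(4) to the same evaluation map without addressing this point, so your argument is faithful to it.
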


\begin{proof}
The proof proceeds like the proof in \cite{Wies} (but we reformulated some hypotheses). 
Assume that $\T_0 \cup \K$ has the partial $W$-amalgamation property 
for models with the same $\Pi_0$-reduct. 
Let $A$ and $B$ be sets of ground clauses over $\Pi_0 \cup \Sigma \cup
C$. Without loss of generality we assume that 
$A$ and $B$ are flat sets of ground clauses, 
thus ${\sf st}(A)$ and ${\sf  st}(B)$ consist of flat terms only. 
We show that 
$\T_0 \cup \K \cup A \cup B$ is unsatisfiable 
iff $\T_0  \cup  (\K[W(A,B)] \cup A) \cup  (\K[W(B,A)] \cup B)$ 
is unsatisfiable.

\medskip

\noindent 
The converse implication is obvious. We prove the direct implication. 
Assume that $\T_0 \cup \K \cup A \cup B$ is unsatisfiable  but 
$\T_0  \cup  (\K[W(A,B)] \cup A) \cup  (\K[W(B,A)] \cup B)$ 
has a model $M$. 
We define $M_A := M_{|W(A,B)}$, $M_B := M_{|W(B,A)}$. 
As 
${\sf st}(\K) \subseteq W(A, B) \cap W(B, A)$, and $\K$ is flat and
linear we know, by Lemma~\ref{lem:lemma-68} and Lemma~\ref{lem:restriction-T}, that: 
\begin{enumerate}
\item[(i)] $M_A, M_B \in {\sf
  PMod}_{w,f}(\Sigma, \T)$. 
\item[(ii)] $M_A$ is a model of $\K[W(A,B)] \cup A$ and 
$M_B$ is a model of 
$\K[W(B,A)] \cup B$, and 
\item[(iii)] all terms in $W(A, B)$ and $A$ are defined in
$M_A$, and \\
all terms in $W(B, A)$ and $B$ are defined in $M_B$. 
\end{enumerate}

\medskip

\noindent The models $M_A$ and $M_B$ 
satisfy the conditions in Definition~\ref{def:part-amalg}:
\begin{enumerate}
\item they clearly have
the same reduct to $\Pi_0$ (namely $M_{|{\Pi_0}}$), 
\item $f_{M_A}(m_1, \dots, m_n)$ is defined and equal to $m$ iff
  there exists ground  $\Pi_0$-terms $t_1, \dots, t_n$ with 
  $f(t_1, \dots, t_n) \in W(A,B)$, $(t_i)_M = m_i$ for all $i = 1,
  \dots, n$, and $f_M(m_1, \dots, m_n) = m$; \\
$f_{M_B}(m_1, \dots, m_n)$ is defined and equal to $m'$ iff
  there exists ground  $\Pi_0$-terms $t'_1, \dots, t'_n$ with 
  $f(t'_1, \dots, t'_n) \in W(B,A)$, $(t'_i)_M = m_i$ for all $i = 1,
  \dots, n$ and $f_M(m_1, \dots, m_n) = m'$. 

Thus, if $f_{M_A}(m_1, \dots, m_n) = m$ and $f_{M_B}(m_1, \dots, m_n)
= m'$ then $f_M(m_1, \dots, m_n) = m = m'$, so $m = m'$.

\item $W(T_{M_A}, T_{M_B}) \subseteq T_{M_A}$ and $W(T_{M_B}, T_{M_A})
  \subseteq T_{M_B}$. Indeed, let $h$ be the map that maps all ground
  $\Pi_0$-terms occurring in $W(A, B)$ resp.\ $W(B, A)$ (which are
  constants if $W(A, B)$ and $W(B, A)$ are flat) to the value
  of these terms in $M$. Then:
\begin{itemize}
\item $W(T_{M_A}, T_{M_B}) = W(h(W(A, B)),
  h(W(B, A))) \subseteq h(W(A, B)) = T_{M_A}$ and 
\item $W(T_{M_B}, T_{M_A}) = W(h(W(B, A)),
  h(W(A, B))) \subseteq h(W(B, A)) = T_{M_B}$. 
\end{itemize}
\end{enumerate}
Since $\T_0 \cup \K$ has the partial amalgamation property 
for models with the same $\Pi_0$-reduct, there exists a model 
$M_D$ of $\T_0 \cup \K$ and  weak embeddings $h_A:
M_A \rightarrow M_D, h_B : M_B \rightarrow M_D$ which agree on
$M_{|\Pi_0}$. 

Clearly,  weak embeddings of a partial into a total algebra 
preserve the truth of ground formulae which are defined in the partial
algebra. As all terms in $A$ are defined in $M_A$ and $A$ is true in
$M_A$, and as all terms in $B$ are defined in $M_B$ and $B$ is true in $M_B$, 
it follows therefore that 
$M_D$ is a model of $A$ and $B$, hence a model of 
$\T_0 \cup \K \cup A \cup B$. Contradiction. 

It follows that $\T_0
\cup  (\K[W(A,B)] \cup A) \cup  (\K[W(B,A)] \cup B)$ is satisfiable
iff $T_0 \cup \K \cup A \cup B$ is satisfiable.
\end{proof}

\begin{exa}
In \cite{Wies} it was proved that the theory of arrays with difference 
function and the theory of linked lists with reachability 
have partial amalgamation, hence are 
$W$-separable (for suitable versions of $W$, described in \cite{Wies}). 
\end{exa}

\medskip
\noindent We now prove a converse of Theorem~\ref{thm:p-amalg-impl-Wsep}. 
\begin{thm}
Let $W$ be an amalgamation closure operator. 
Assume that $\T_0$ is a first-order theory which allows general ground
interpolation and has the property that for every set $\Sigma$ of
additional function symbols and ground $\Sigma _0\cup \Sigma$-formulae
$A, B$, the interpolant $I$ contains only $\Sigma$-terms in $W(A, B) \cap W(B, A)$. 
Let $\K$ be a set of
clauses over $\Pi_0 \cup
\Sigma$, such that all variables occur below an extension symbol. 

If $\T = \T_0 \cup \K$ is
$W$-separable then it has the partial $W$-amalgamation property 
for models with the same $\Pi_0$-reduct. 
\label{thm:Wsep+intW-impl-pamalg}
\end{thm}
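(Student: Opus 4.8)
The plan is to establish the contrapositive of ``there is no amalgam'' at the level of diagrams. Write $M := M_A|_{\Pi_0} = M_B|_{\Pi_0}$, let $\Delta(M)$ be the diagram of the $\Pi_0$-structure $M$ over a set of constants naming $|M|$, and let $D_A$ (resp.\ $D_B$) be the $\Sigma$-diagram of $M_A$ (resp.\ $M_B$), i.e.\ the set of literals $f(\overline a) \approx a$ for those $f \in \Sigma$ and tuples $\overline a$ over $|M|$ on which $f_{M_A}$ (resp.\ $f_{M_B}$) is defined with value $a$. A model of $\T_0 \cup \K \cup \Delta(M) \cup D_A \cup D_B$, with the element-constants read as their interpretations, is precisely a total model $M_D$ of $\T_0 \cup \K$ together with a map $e$ giving weak embeddings $h_A : M_A \to M_D$ and $h_B : M_B \to M_D$ that agree on $M$ (indeed the diagram $\Delta(M)$ forces $e$ to be a $\Pi_0$-embedding, $D_A$ forces it to be a weak $\Pi$-embedding of $M_A$, and $D_B$ of $M_B$); so it suffices to prove this set satisfiable whenever $M_A, M_B$ satisfy conditions (1)--(3) of Definition~\ref{def:part-amalg}.

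Suppose the set is unsatisfiable. By compactness there are finite subsets $\Delta' \subseteq \Delta(M)$, $D_A' \subseteq D_A$, $D_B' \subseteq D_B$ with $\T_0 \cup \K \cup \Delta' \cup D_A' \cup D_B' \models \perp$. Put $A := \Delta' \cup D_A'$ and $B := \Delta' \cup D_B'$ (duplicating the common $\Pi_0$-part is harmless), so $A, B$ are finite sets of ground $\Pi_0 \cup \Sigma$-clauses with extra constants and $\T_0 \cup \K \cup A \cup B \models \perp$; we may also take $A, B$ flat. Since $\T$ is $W$-separable, $\T_0 \cup \widehat A \cup \widehat B \models \perp$, where $\widehat A := A \cup \K[W(A,B)]$ and $\widehat B := B \cup \K[W(B,A)]$. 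These are ground, because all variables of $\K$ occur below an extension symbol, and since $\Sigma \cap \Sigma_0 = \emptyset$ this is the same as $\widehat A \wedge \widehat B \models_{\T_0 \cup {\sf UIF}_{\Sigma}} \perp$. By general ground interpolation of $\T_0$ (Definition~\ref{general-ground-interpolation}) together with the hypothesised localisation of interpolants, there is a ground $\Sigma_0 \cup \Sigma$-formula $I$ with $\widehat A \models_{\T_0 \cup {\sf UIF}_{\Sigma}} I$, $\widehat B \wedge I \models_{\T_0 \cup {\sf UIF}_{\Sigma}} \perp$, and all $\Sigma$-subterms of $I$ in $W(\widehat A, \widehat B) \cap W(\widehat B, \widehat A)$; since $W$ is a monotone closure and ${\sf st}(\widehat A) \subseteq W(A,B)$, ${\sf st}(\widehat B) \subseteq W(B,A)$, this intersection is contained in $W(A,B) \cap W(B,A)$.

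Now I close the loop using conditions (2) and (3) of Definition~\ref{def:part-amalg}. Condition (3) gives that the $\Sigma$-rooted terms of $W(A,B)$ are defined in the partial structure $M_A$ and those of $W(B,A)$ are defined in $M_B$ (the $\Sigma$-terms of ${\sf st}(A)$ lie in $T_{M_A}$, those of ${\sf st}(B)$ in $T_{M_B}$, and $W(T_{M_A}, T_{M_B}) \subseteq T_{M_A}$, $W(T_{M_B}, T_{M_A}) \subseteq T_{M_B}$). Hence every term of $\widehat A$ is defined in $M_A$, and — $M_A$ being a weak partial model of $\K$ over a total model of $\T_0$ — $M_A$ classically satisfies $\widehat A$; extending $M_A$ arbitrarily to a total $\Pi$-structure $M_A^+$ yields a model of $\T_0 \cup {\sf UIF}_{\Sigma} \cup \widehat A$, so $M_A^+ \models I$, and as every term of $I$ is already defined in $M_A$ we get $M_A \models I$. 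Symmetrically, $M_B^+$ is a model of $\T_0 \cup {\sf UIF}_{\Sigma} \cup \widehat B$. Every $\Sigma$-subterm of $I$ lies in $W(A,B) \cap W(B,A)$, hence is defined in \emph{both} $M_A$ and $M_B$; by induction on subterms — using that $M_A$ and $M_B$ have the same $\Pi_0$-reduct $M$ and that, by condition (2), $f_{M_A}$ and $f_{M_B}$ agree wherever both are defined — every term of $I$, and hence $I$ itself, has the same value/truth in $M_A$ and $M_B$. Since $M_A \models I$, also $M_B \models I$, so $M_B^+ \models I$; then $M_B^+$ is a model of $\T_0 \cup {\sf UIF}_{\Sigma} \cup \widehat B \cup \{I\}$, contradicting $\widehat B \wedge I \models_{\T_0 \cup {\sf UIF}_{\Sigma}} \perp$. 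This contradiction proves satisfiability of $\T_0 \cup \K \cup \Delta(M) \cup D_A \cup D_B$, hence the existence of the amalgam $M_D$.

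The main obstacle is the bookkeeping around the amalgamation-closure axioms: making precise that ${\sf st}(\widehat A) \subseteq W(A,B)$ and ${\sf st}(\widehat B) \subseteq W(B,A)$ (so that, by the closure property $W(W(A,B),W(B,A)) \subseteq W(A,B)$, the interpolant's $\Sigma$-terms land in $W(A,B) \cap W(B,A)$), and above all that $W(A,B)$ is ``$M_A$-evaluable'' while $W(B,A)$ is ``$M_B$-evaluable'' — this is exactly what condition (3) of Definition~\ref{def:part-amalg} buys us, together with the reduction to flat $A, B$. A secondary point to watch is the ``all variables below an extension symbol'' hypothesis: it is what makes every $\K[\cdot]$ ground and guarantees that the partial structures $M_A$ and $M_B$ indeed (classically) satisfy the relevant instances of $\K$.
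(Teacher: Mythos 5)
Your proof is correct and follows essentially the same strategy as the paper's: reduce to showing that the joint diagram $\T_0 \cup \K \cup \D_A \cup \D_B$ is satisfiable, suppose not, apply compactness and $W$-separability, invoke general ground interpolation with the hypothesised $W$-localisation of interpolants, and derive a contradiction by observing that the interpolant $I$ must hold in $M_A$ and, because its $\Sigma$-terms are defined and agree in both partial structures, also in $M_B$. The small presentational differences — splitting the diagram into $\Delta(M) \cup D_A \cup D_B$ rather than the paper's single $\D_A, \D_B$, and making the total extensions $M_A^+, M_B^+$ explicit — do not change the argument.
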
  
\begin{proof}Assume that $\T_0$ satisfies the assumptions of this
theorem and $\T = \T_0 \cup \K$ is $W$-separable. 
We show that $\T$ has the partial $W$-amalgamation property 
for models with the same $\Pi_0$-reduct.
Let $M_A, M_B \in {\sf PMod}_{w, f}(\Sigma, \T)$ 
be such that: 
\begin{enumerate-}
\item $M_A, M_B$ 
have the same reduct $M$ to $\Pi_0$. 
\item For all $m_1, \dots, m_n \in |M|$, if $f_{M_A}(m_1, \dots, m_n)$
  and $f_{M_B}(m_1, \dots, m_n)$ are defined and $f_{M_A}(m_1, \dots,
  m_n) = m$ and  $f_{M_B}(m_1, \dots, m_n) = m'$ then $m = m'$.
\item The sets $T_{M_A} = \{ f(a_1, \dots, a_n) \mid a_1, \dots, a_n \in
  M_A, f_{M_A}(a_1, \dots, a_n) \text{ defined} \}$ and $T_{M_B} = \{ f(a_1, \dots, a_n) \mid a_1, \dots, a_n \in
  M_B, f_{M_B}(a_1, \dots, a_n) \text{ defined}  \}$ of terms which are defined in $M_A$ resp.\ $M_B$ are closed
  under the closure operator $W$, i.e.\ $W(T_{M_A}, T_{M_B}) \subseteq T_{M_A}$ and
  $W(T_{M_B}, T_{M_A}) \subseteq T_{M_B}$. 
\end{enumerate-}
In order to show that there exists a model $M_D$ of $\T_0 \cup \K$ and weak embeddings $h_A:
M_A \rightarrow M_D, h_B : M_B \rightarrow M_D$ which agree on $M$ 
we show that $\T_0 \cup \K \cup \D_A \cup \D_B$
is satisfiable, where $\D_A$ is defined by: 
\begin{eqnarray*}
\D_A & = & \{ f(a_1, \dots a_n) \approx a \mid \text{ if } 
f_{M_A}(a_1, \dots, a_n) \text{ is defined and equal to } a \} \\
          & & \cup \{ f(a_1, \dots a_n) \not\approx a \mid \text{ if } 
f_{M_A}(a_1, \dots, a_n) \text{ is defined and not equal to } a \} \\
          & & \cup \{ P(a_1, \dots, a_n) \mid P \in {\sf Pred} \text{ and } (a_1, \dots, a_n) \in P_{M_A} \} \\
          & & \cup \{ \neg P(a_1, \dots, a_n) \mid P \in {\sf Pred} 
\text{ and } (a_1, \dots, a_n) \not\in P_{M_A} \} \\
          & & \cup \{ a \not\approx a' \mid  a, a' \in |M_A| \text{
            different } \} 
\end{eqnarray*}
and $\D_B$ is defined analogously (the elements in the universe
of $M$ are used as additional constants).

Assume $\T_0 \cup \K \cup \D_A \cup \D_B$ is
unsatisfiable. Then, by compactness, there exist finite subsets 
$A \subseteq \D_A, B \subseteq \D_B$ such that 
$\T_0 \cup \K \cup A \cup B$ is unsatisfiable. As $\T_1 = \T_0 \cup  \K$
is $W$-separable, it follows that 
$\T_0 \cup  (\K[W(A,B)] \cup A) \cup  (\K[W(B,A)] \cup B)$ 
is unsatisfiable. 

On the other hand, $A$ contains only terms in $T_{M_A}$ and $B$ only
terms in $T_{M_B}$. Thus, $W(A, B) \subseteq W(T_{M_A}, T_{M_B})
\subseteq T_{M_A}$ and $W(B, A) \subseteq W(T_{M_B}, T_{M_A})
\subseteq T_{M_B}$. 

Therefore, all extension terms in $\K[W(A,B)] \cup A$ are defined in
$M_A$ and all extension terms in $\K[W(B,A)] \cup B$ are defined in
$M_B$.

As $\T_0$ has general ground interpolation, $\T_0 \cup {\sf
  UIF}_{\Sigma}$ has ground interpolation. Thus, 
there exists an interpolant $I$ for  $(\K[W(A,B)] \cup A)$ and
$(\K[W(B,A)] \cup B)$ w.r.t.\ 
$\T_0 \cup {\sf  UIF}_{\Sigma}$ such that all $\Sigma$-terms in $I$ are in
$W(W(A,B), W(B, A)) \cap W(W(B,A), W(A, B))$, thus as $W$ is a
closure, in $W(A, B) \cap W(B, A)$.

As all terms in $\K[W(A,B)] \cup A$ and $I$ are defined in $M_A$ and 
$\K[W(A,B)] \cup A \models_{\T_0 \cup {\sf UIF}_{\Sigma}} I$, it follows that $I$
is true in $M_A$. 
As $I$ contains only terms that are defined in $M_B$ and the
definitions of the extension functions in $M_A$ and $M_B$ agree for
defined terms, $I$ is also true in $M_B$. 
On the other hand, we know that $I \cup \K[W(B,A)] \cup B \models_{\T_0 \cup {\sf UIF}_{\Sigma}} \perp$, 
hence $\K[W(B,A)] \cup B \models_{\T_0 \cup {\sf UIF}_{\Sigma}} \neg
I$. As all terms in $\K[W(B,A)] \cup B$ and of $I$ are defined in
$M_B$, and  $\K[W(B,A)] \cup B$ is true in $M_B$, $\neg I$ must be
true in $M_B$. Contradiction. 

It follows that $\T_0 \cup \K \cup \D_A \cup \D_B$ is satisfiable. 
Let $M_D$ be a model for $\T_0 \cup \K \cup \D_A \cup \D_B$. 
Define $h_A : M_A \rightarrow M_D$ and $h_B : M_B \rightarrow M_D$ by
$h_A(a) = h_B(a) = a_{M_D}$ for every $a \in M$ (where $a_{M_D}$ is
the interpretation of the constant $a$ in $M_D$). 
 
Then $M_D$ is a model of $\T_0 \cup \K$. 
$h_A$ and $h_B$ are clearly injective: Assume that $a, a' \in M$ and
$a \not\approx a'$. Then $a \not\approx a' \in \D_A$ so this literal is true in
$M_D$, hence $h_A(a) = h_B(a) = a_{M_D} \neq a'_{M_D} = h_B(a') =
h_A(a')$. 
Also, whenever 
$f_{M_A}(a_1, \dots, a_n)$ is defined and is equal to $a$, $f(a_1,
\dots, a_n) \in \D_A$, hence $f_{M_D}({a_1}_{M_D}, \dots, {a_n}_{M_D})
= a_{M_D}$. 
Moreover, $(a_1, \dots, a_n) \in P_{M_A}$ iff $P(a_1, \dots, a_n) \in
\D_A$ iff $({a_1}_{M_D}, \dots, {a_n}_{M_D}) \in P_{M_D}$. 
Thus, $h_A$ is a weak embedding; the proof that $h_B$ is a
weak embedding is similar. 
These embeddings clearly agree on $M$.
\end{proof}

We now show that in general 
partial $W$-amalgamation implies partial  $W$-amalgamation for models
with the same $\Pi_0$-reduct, and that if 
$\T_0$ allows general ground interpolation 
the two notions are equivalent.

\begin{prop}
Let $W$ be an amalgamation closure operator. 
Let $\K$ be a set of clauses over $\Pi_0 \cup \Sigma$. The following
hold: 
\begin{enumerate-}
\item[(1)] If $\T = \T_0 \cup \K$ has the partial $W$-amalgamation
  property then it has the partial $W$-amalgamation property 
for models with the same $\Pi_0$-reduct. 

\item[(2)] If $\T_0$ is a first-order theory which allows general ground
interpolation and $\T = \T_0 \cup \K$ has  the partial $W$-amalgamation property 
for models with the same $\Pi_0$-reduct then $\T = \T_0 \cup \K$ 
has the partial $W$-amalgamation property.  
\end{enumerate-}
\end{prop}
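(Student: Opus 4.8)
The plan is to prove the two implications by direct model constructions; in (2) we use that, by Theorem~\ref{thm:ghilardi-amalg-interp}(2), general ground interpolation of $\T_0$ is equivalent to $\T_0$ having the strong sub-amalgamation property.

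\emph{Part (1).} Assume $\T = \T_0 \cup \K$ has the partial $W$-amalgamation property and let $M_A, M_B \in {\sf PMod}_{w,f}(\Sigma, \T)$ share the $\Pi_0$-reduct $M$, satisfy the value-agreement condition~(2) of Definition~\ref{def:part-amalg}, and have $T_{M_A}, T_{M_B}$ closed under $W$. I would let $M_C$ be the partial $\Pi$-structure with universe $|M|$ and $\Pi_0$-reduct $M$ in which, for $f \in \Sigma$, $f_{M_C}(a_1, \dots, a_n)$ is defined precisely when both $f_{M_A}(a_1, \dots, a_n)$ and $f_{M_B}(a_1, \dots, a_n)$ are defined, in which case they coincide by condition~(2) and we take that common value. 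Then $M_C \in {\sf PMod}_{w,f}(\Sigma, \T)$: its $\Pi_0$-reduct is the total $\T_0$-model $M$, $T(M_C) = T(M_A) \cap T(M_B)$ is finite, and $M_C \models_w \K$, since any clause instance all of whose literals are defined in $M_C$ has all its literals defined with the same values in $M_A$, where it is weakly true. The identity maps $M_C \hookrightarrow M_A$ and $M_C \hookrightarrow M_B$ are weak embeddings, $|M_C| = |M| = |M_A| \cap |M_B|$, and conditions~(3)--(4) of Definition~\ref{def:part-W-amalg} hold (the first by hypothesis, the second because $T(M_A) \cap T(M_B) = T(M_C)$). Hence partial $W$-amalgamation applies to $(M_A, M_B, M_C)$, producing a model $M_D$ of $\T$ and weak embeddings $h_A : M_A \to M_D$, $h_B : M_B \to M_D$ that agree on $|M_C| = |M|$ and therefore coincide as functions, which is exactly the conclusion of Definition~\ref{def:part-amalg}.

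\emph{Part (2).} Assume $\T_0$ allows general ground interpolation and $\T$ has the partial $W$-amalgamation property for models with the same $\Pi_0$-reduct; let $M_A, M_B, M_C$ satisfy conditions~(1)--(4) of Definition~\ref{def:part-W-amalg}. The idea is to first amalgamate the base reducts and then transport the extension structure. Applying the strong sub-amalgamation property of $\T_0$ to the common substructure $M_C{|_{\Pi_0}}$ of $M_A{|_{\Pi_0}}$ and $M_B{|_{\Pi_0}}$ yields a model $N$ of $\T_0$ with embeddings $\mu_A : M_A{|_{\Pi_0}} \to N$ and $\mu_B : M_B{|_{\Pi_0}} \to N$ that agree on $|M_C|$ and satisfy: $\mu_A(m_1) = \mu_B(m_2)$ implies $m_1 = m_2 \in |M_C|$. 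I then push the partial $\Sigma$-functions of $M_A$ forward along $\mu_A$ and those of $M_B$ along $\mu_B$, obtaining partial $\Pi$-structures $N_A$, $N_B$ with common $\Pi_0$-reduct $N$, into which $M_A$, $M_B$ weakly embed via $\mu_A$, $\mu_B$. Next I would check that $N_A, N_B$ meet the hypotheses of Definition~\ref{def:part-amalg}: the value-agreement condition follows from the strongness of the sub-amalgamation (a tuple on which both $f_{N_A}$ and $f_{N_B}$ are defined must come from $|M_C|$) together with condition~(4) on $M_C$ and the weak embeddings $M_C \hookrightarrow M_A$, $M_C \hookrightarrow M_B$, which force $f_{M_A}$ and $f_{M_B}$ to agree on arguments from $|M_C|$; closure of $T_{N_A}, T_{N_B}$ under $W$ follows because $T_{N_A}$ is the $\mu_A$-image of $T_{M_A}$ and $T_{N_B}$ the $\mu_B$-image of $T_{M_B}$, using condition~(3) on $M_A, M_B$ and the compatibility of the amalgamation closure $W$ with the combined constant map $\mu_A \cup \mu_B$ --- whose injectivity on non-shared constants is again exactly the strongness. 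The hypothesis then gives a model $M_D$ of $\T_0 \cup \K$ with weak embeddings $g_A : N_A \to M_D$, $g_B : N_B \to M_D$ agreeing on $N$; setting $h_A := g_A \circ \mu_A$ and $h_B := g_B \circ \mu_B$ yields weak embeddings of $M_A, M_B$ into $M_D$ which agree on $|M_C|$ (since $\mu_A, \mu_B$ do and $g_A, g_B$ agree on $N$), and $M_D \models \T_0 \cup \K = \T$, which is the partial $W$-amalgamation property.

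The main obstacle I anticipate is verifying that the transported structures $N_A, N_B$ actually lie in ${\sf PMod}_{w,f}(\Sigma, \T)$ --- in particular $N_A \models_w \K$ --- since $N$ may be strictly larger than $\mu_A(|M_A|)$. The argument is that an assignment into $N$ sending a variable occurring below an extension symbol outside $\mu_A(|M_A|)$ leaves the corresponding extension term undefined in $N_A$, so the clause instance is weakly satisfied, while an assignment whose relevant values all lie in $\mu_A(|M_A|)$ pulls back to an assignment into $M_A$, where $\K$ holds weakly; as in Theorem~\ref{thm:Wsep+intW-impl-pamalg} this is cleanest when all variables of $\K$ occur below an extension function, and the general case reduces to it. The remaining routine points --- that the pushforward along $\mu_A, \mu_B$ is a weak embedding, and that $T(N_A), T(N_B)$ stay finite --- should be straightforward.
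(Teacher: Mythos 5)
Your proposal is correct and follows essentially the same route as the paper's own proof: in (1) you build the same $M_C$ from the intersection of defined extension terms, and in (2) you invoke the strong sub-amalgamation property of $\T_0$ (via Theorem~\ref{thm:ghilardi-amalg-interp}(2)), push the partial $\Sigma$-structures forward along $\mu_A,\mu_B$, verify the hypotheses of Definition~\ref{def:part-amalg} using strongness exactly as the paper does, and then compose the resulting weak embeddings. You are in fact slightly more careful than the paper on one point: the paper does not explicitly verify that the pushed-forward structures $N_A, N_B$ (its $M'_A, M'_B$) still lie in ${\sf PMod}_{w,f}(\Sigma,\T)$, a check you rightly flag as needing the observation that extension terms with an argument outside $\mu_A(|M_A|)$ are undefined, hence weakly satisfied, which is cleanest when all variables of $\K$ occur below extension symbols (the standing assumption elsewhere in the paper).
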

\begin{proof}(1) 
Assume that $\T = \T_0 \cup \K$ has the partial $W$-amalgamation
property. Let $M_A, M_B \in {\sf PMod}_{w,f}(\Sigma, \T)$ having 
the same reduct $M$ to $\Pi_0$ and such that: 
\begin{enumerate-}
\item[(i)] for all $m_1, \dots, m_n \in |M|$, if 
$f_{M_A}(m_1, \dots, m_n)$ and $f_{M_B}(m_1, \dots, m_n)$
are both defined and $f_{M_A}(m_1, \dots, m_n) = m$ and  
$f_{M_B}(m_1, \dots, m_n) = m'$ then $m =
m'$, and 
\item[(ii)] the sets $T_{M_A}$ and $T_{M_B}$ of terms 
defined in $M_A$ resp.\ $M_B$ are closed under $W$.
\end{enumerate-}
Let $M_C$ be the partial structure with the same reduct $M$ to $\Pi_0$ as
$M_A$ and $M_B$, and where for every $f \in \Sigma$ with arity $n$, 
and every $m_1, \dots, m_n \in |M|$, 
$f_{M_C}(m_1, \dots, m_n)$ is defined iff $f_{M_A}(m_1, \dots, m_n)$
and $f_{M_B}(m_1, \dots, m_n)$ are both defined, and if so 
$f_{M_C}(m_1, \dots, m_n) = f_{M_A}(m_1, \dots, m_n) = f_{M_B}(m_1,
\dots, m_n)$. Clearly, the conditions in
Definition~\ref{def:part-W-amalg}
are satisfied: 
\begin{itemize}
\item The universe $|M_C|$ of $M_C$ is included in the universes of $M_A$ and $M_B$
and the inclusions into $M_A, M_B$ are weak embeddings; 
\item $|M_C| = |M| = |M| \cap |M| = |M_A| \cap |M_B|$; 
\item The sets $T_{M_A}$ and $T_{M_B}$ of terms which are
defined in $M_A$ resp.\ $M_B$ are closed under $W$.
\item $T(M_A) \cap T(M_B) \subseteq T(M_C)$: If $f(a_1, \dots, a_n)
  \in T(M_A) \cap T(M_B)$ then $f_{M_A}(a_1, \dots, a_n)$ is defined and
  $f_{M_B}(a_1, \dots, a_n)$ is defined (and they are equal by (i)), so by the definition of $M_C$
  $f_{M_C}(a_1, \dots, a_n)$ is also defined. 
\end{itemize}
It follows that there exists a model $M_D$ of $\T$, and weak embeddings $h_A : M_A
\rightarrow M_D$ 
and $h_B : M_B \rightarrow M_D$,
such that ${h_A}_{|_{|M_C|}} = {h_B}_{|_{|M_C|}}$, i.e.\ $h_A$ and $h_B$
coincide on $M$.  

\medskip
\noindent (2) Assume that $\T_0$ allows general ground
interpolation and $\T = \T_0 \cup \K$ has  the partial $W$-amalgamation property 
for models with the same $\Pi_0$-reduct. 
Let $M_A, M_B, M_C \in {\sf PMod}_{w,f}(\Sigma, \T)$ satisfying the
conditions from Definition~\ref{def:part-W-amalg}. 
Then ${M_C}_{|_{\Pi_0}}$ is a substructure  of ${M_A}_{|_{\Pi_0}}$ and of
${M_B}_{|_{\Pi_0}}$. 

By Theorem~\ref{thm:ghilardi-amalg-interp} (cf.\ also
\cite{Ghilardi-2014}),  $\T_0$ allows general ground
interpolation iff $\T_0$ is strongly sub-amalgamable
(cf.\ Definition~\ref{def:strong-subamalgamation}). Therefore there
exists a further model $M$ of $\T_0$ and embeddings $\mu_1 :
{M_A}_{|_{\Pi_0}} \rightarrow M$ and $\mu_2 : {M_B}_{|_{\Pi_0}}\rightarrow M$
whose restrictions to ${M_C}_{|_{\Pi_0}}$ coincide, 
such that if $\mu_1(m_1) = \mu_2(m_2)$ then there exists $m \in
|{M_C}_{|_{\Pi_0}}|$ with $m = m_1 = m_2$. 

We use the embeddings $\mu_1$ and $\mu_2$ to construct two partial
algebras $M'_A$ and $M'_B$  as
follows: 
$M'_A$ has universe $|M|$, all $\Pi_0$ operations are defined as
  in $M$, and for every $f \in \Sigma$ with arity $m$ and every $m_1,
  \dots, m_n \in |M|$, $f_{M'_A}(m_1, \dots, m_n)$ is defined iff 
there exist ${\overline m}_1, \dots, {\overline m}_n \in |M_A|$ such
that $\mu_1({\overline m}_i) = m_i$ and $f_{M_A}({\overline m}_1,
\dots, {\overline m}_n)$ is defined. 
If this is the case, then $f_{M'_A}(m_1, \dots, m_n) = \mu_1(f_A({\overline m}_1, \dots,
{\overline m}_n))$. $M'_B$ is defined analogously.
We show that $M'_A, M'_B$ satisfy the conditions from
Definition~\ref{def:part-amalg}. 
\begin{itemize}
\item $M'_A, M'_B$  have the same reduct $M$ to $\Pi_0$, 
\item For $m_1, \dots, m_n \in |M|$, if $f_{M'_A}(m_1, \dots, m_n)$ is defined and equal to $m$ and  
$f_{M'_B}(m_1, \dots, m_n)$ is defined and equal to $m'$ then $m =
m'$. \\
Indeed, if $f_{M'_A}(m_1, \dots, m_n)$ is defined then there
exist ${\overline m}_1, \dots, {\overline m}_n \in |M_A|$ with
$\mu_1({\overline m}_i) = m_i$ such that 
$f_{M_A}({\overline m}_1, \dots,{\overline m}_n)$ is defined 
and $f_{M'_A}(m_1, \dots, m_n) = \mu_1(f_{M_A}({\overline m}_1, \dots,
{\overline m}_n))$. \\
If $f_{M'_B}(m_1, \dots, m_n)$ is defined then there
exist ${\overline {\overline m}}_1, \dots, {\overline {\overline m}}_n \in |M_B|$ with
$\mu_2({\overline {\overline m}}_i) = m_i$ such that 
$f_{M_B}({\overline {\overline m}}_1, \dots, {\overline {\overline m}}_n)$ is defined 
and $f_{M'_B}(m_1, \dots,m_n) = \mu_2(f_{M_B}({\overline {\overline m}}_1, \dots,
{\overline {\overline m}}_n))$. Since $\mu_1({\overline m}_i) =
\mu_2({\overline {\overline m}}_i) = m_i$, there exists $m'_i \in |M_C|$
such that $m'_i = {\overline m}_i = {\overline {\overline m}}_i$. 

As $f_{M_A}({\overline m}_1, \dots,
{\overline m}_n)$ and 
$f_{M_B}({\overline {\overline m}}_1, \dots, {\overline {\overline
    m}}_n)$ are defined, $f({\overline m}_1, \dots,
{\overline m}_n) \in T_{M_A}$ and 
$f({\overline {\overline m}}_1, \dots, {\overline {\overline m}}_n)
\in T_{M_B}$. 
Since for every $i$, $m'_i = {\overline m}_i = {\overline {\overline m}}_i$
and we assumed that $T(M_A) \cap T(M_B) \subseteq T(M_C)$, it follows that $f(m'_1,
\dots, m'_n) \in T(M_C)$, so $f_{M_C}(m'_1, \dots, m'_n)$ is defined.
Since $M_C$ is a weak substructure of $M_A, M_B$ it follows that 
$f_{M_A}(m'_1, \dots, m'_n)  = f_{M_C}(m'_1, \dots, m'_n) =
f_{M_B}(m'_1, \dots, m'_n)$, and therefore (since $\mu_1$ and $\mu_2$
agree on $M_C$) we have: 
$m = \mu_1(f_{M_A}(m'_1, \dots, m'_n)) = \mu_1(f_{M_C}(m'_1, \dots,
m'_n)) = \mu_2(f_{M_C}(m'_1, \dots,
m'_n)) = \mu_2(f_{M_B}(m'_1, \dots, m'_n)) = m'$. 
\item Up to renaming of constants $h$ defined using $\mu_1$ and $\mu_2$: $T_{M'_A} =
 h(T_{M_A})$, $T_{M'_B} = h(T_{M_B})$, 
so $T_{M'_A}$ and $T_{M'_B}$ are closed under  $W$. 
\end{itemize}
It follows that there exists a model $M_D$ of $\T_0 \cup \K$ and weak embeddings $h'_A:
M'_A \rightarrow M_D, h'_B : M'_B \rightarrow M_D$ which agree on $M$.
The maps $h_A = \mu_1 \circ h'_A : M_A \rightarrow M_D$ and 
$h_B = \mu_2 \circ h'_B : M_B \rightarrow M_D$ are then weak embeddings
such that ${h_A}_{|_{M_C}} = {h_B}_{|_{M_C}}$, thus $\T_0 \cup \K$ has the partial
$W$-amalgamation property.
\end{proof}

\subsection{Separability, Locality and Interpolant Computation}
If the extension ${\mathcal T}_0 \subseteq \T_0 \cup \K$ is
$W$-separable and ${\mathcal T}_0$ has ground interpolation, 
then we can hierarchically compute interpolants in ${\mathcal T}_0
\subseteq \T_0 \cup \K$ (cf.\ also \cite{Wies-journal}). 
%
%
%
\begin{thm}
Let $W$ be an amalgamation closure operator. 
Assume that the theory ${\mathcal T}_0$ has general ground interpolation,
and there is a method for effectively computing general ground
interpolants w.r.t.\ ${\mathcal T}_0$. 
Let ${\mathcal T}_0 \cup {\mathcal K}$ be a $W$-separable extension of 
${\mathcal  T}_0$ with a set of clauses ${\mathcal K}$ in which every variable
occurs below an extension function. 
Let $A$ and $B$ be two ground $\Sigma_0 \cup \Sigma$-formulae. 
Assume that $A \wedge B \models_{{\mathcal T}_0 \cup {\mathcal K}} \perp$. 
Then we can effectively compute a ground interpolant for $A$ and $B$,
by computing an interpolant of $\K[W(A, B)] \cup A$ and $\K[W(B, A)]
\cup B$. 
\label{hierarchical-gr-int}
\end{thm}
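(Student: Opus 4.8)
The plan is to turn $W$-separability into a reduction to a single interpolation query over the base theory $\T_0$. Since $A \wedge B \models_{\T_0 \cup \K} \perp$ and the extension $\T_0 \subseteq \T_0 \cup \K$ is $W$-separable, we obtain $\T_0 \cup (\K[W(A,B)] \cup A) \cup (\K[W(B,A)] \cup B) \models \perp$ (regarding $A$ and $B$ in clause form, and conjunctions of clauses as formulae). Because every variable of every clause of $\K$ occurs below an extension function, the instance sets $\K[W(A,B)]$ and $\K[W(B,A)]$ are ground: the substituted variables are replaced by ground subterms of the terms in $W(A,B)$ resp.\ $W(B,A)$. Hence $A' := \K[W(A,B)] \cup A$ and $B' := \K[W(B,A)] \cup B$ are ground $\Sigma_0 \cup \Sigma$-formulae, and since $W$ and $\K[\cdot]$ are effectively computable, $A'$ and $B'$ are computable from $A$ and $B$.

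Next I would observe that $\T_0 \cup A' \cup B' \models \perp$ is nothing but $A' \wedge B' \models_{\T_0 \cup {\sf UIF}_{\Sigma}} \perp$: as no symbol of $\Sigma$ occurs in $\T_0$, a model of $\T_0 \cup {\sf UIF}_{\Sigma}$ is exactly a model of $\T_0$ equipped with an arbitrary total interpretation of the symbols in $\Sigma$. Since $\T_0$ has general ground interpolation, $\T_0 \cup {\sf UIF}_{\Sigma}$ has ground interpolation, and since we assume an effective procedure for general ground interpolants w.r.t.\ $\T_0$, we can compute a ground formula $I$ with $A' \models_{\T_0 \cup {\sf UIF}_{\Sigma}} I$ and $I \wedge B' \models_{\T_0 \cup {\sf UIF}_{\Sigma}} \perp$, in which every constant and every function symbol of $\Sigma$ occurring in $I$ occurs both in $A'$ and in $B'$ (cf.\ Definition~\ref{general-ground-interpolation}).

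It then remains to check that $I$ is admissible as an interpolant of $A$ and $B$ w.r.t.\ $\T_0 \cup \K$ and that it has the required entailment properties. For admissibility, recall that the extension functions in $\Sigma$ belong to the signature of $\T_0 \cup \K$, hence are interpreted and may occur freely in the interpolant; so one only needs $I$ to contain no $C$-constant outside those common to $A$ and $B$. This follows from the amalgamation-closure axioms for $W$: by axiom~(1) the ground subterms of $\K$ and of $A$ lie in $W(A,B)$, so every extension term occurring in $A'$ lies in $W(A,B)$, and by axiom~(5) the terms in $W(A,B)$ are $A$-pure, so every $C$-constant of $A'$ is a $C$-constant of $A$; symmetrically every $C$-constant of $B'$ is one of $B$. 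Hence a $C$-constant common to $A'$ and $B'$ is common to $A$ and $B$. For the entailments, $\K[W(A,B)]$ consists of instances of the universally quantified clauses of $\K$, so $A \models_{\T_0 \cup \K} A'$; and every model of $\T_0 \cup \K$ is a model of $\T_0 \cup {\sf UIF}_{\Sigma}$, so $A' \models_{\T_0 \cup {\sf UIF}_{\Sigma}} I$ gives $A' \models_{\T_0 \cup \K} I$ and therefore $A \models_{\T_0 \cup \K} I$; symmetrically $B \models_{\T_0 \cup \K} B'$ together with $I \wedge B' \models_{\T_0 \cup {\sf UIF}_{\Sigma}} \perp$ yields $I \wedge B \models_{\T_0 \cup \K} \perp$. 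Thus $I$ is a ground interpolant for $A$ and $B$, and it has been obtained effectively.

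The substantive content is already packaged into the hypotheses ($W$-separability and the existence of a general ground interpolation procedure for $\T_0$), so I expect no deep obstacle. The only delicate point will be the bookkeeping in the third step: confirming that the symbols the interpolation oracle is entitled to keep (those shared by $A'$ and $B'$) are exactly the ones permitted in an interpolant of $A$ and $B$ -- this is precisely where axioms~(1) and~(5) of the amalgamation closure $W$, together with the restriction that all variables of $\K$ occur below extension functions, come into play. One must also take care to read off the entailments with respect to $\T_0 \cup {\sf UIF}_{\Sigma}$ rather than $\T_0$ alone, but this is exactly what the definition of general ground interpolation supplies.
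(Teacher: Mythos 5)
Your proof is correct and follows the same route as the paper: $W$-separability collapses the problem to a single ground interpolation query in $\T_0\cup{\sf UIF}_\Sigma$ over $\K[W(A,B)]\cup A$ versus $\K[W(B,A)]\cup B$, and the resulting interpolant is read back as an interpolant of $A$ and $B$ modulo $\T_0\cup\K$. You fill in the bookkeeping the paper's three-sentence sketch elides -- notably the use of axioms~(1) and~(5) of the amalgamation closure to show the oracle's interpolant contains only $C$-constants shared by $A$ and $B$, and the two entailment checks via $A\models_{\T_0\cup\K}\K[W(A,B)]$ and the inclusion of models of $\T_0\cup\K$ among models of $\T_0\cup{\sf UIF}_\Sigma$ -- which is exactly what a complete proof requires.
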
 

\begin{proof}By $W$-separability 
${\mathcal T}_0 \cup {\mathcal K} \cup (A \wedge B) \models \perp$ iff 
${\mathcal T}_0 \cup {\mathcal K}[W(A, B)] \cup A \cup 
{\mathcal K}[W(A, B)] \cup B \models \perp$. 
As every variable occurs in $\K$ below an extension function, 
${\mathcal K}[W(A, B)] \cup A \cup 
{\mathcal K}[W(B, A)] \cup B$ is a set of ground formulae. 

We can use the method for computing general ground interpolants in
$\T_0$ for computing the interpolant  $I_0$, which is an interpolant
for $A$ and $B$ w.r.t.\ $\T_0 \cup \K$.
\end{proof}

\begin{cor}
Let $W$ be an amalgamation closure operator, and 
let ${\mathcal T}_0 \cup {\mathcal K}$ be a $W$-separable extension of ${\mathcal  
  T}_0$ with a set of clauses ${\mathcal K}$ in which every variable  
occurs below an extension function.  
Then $\T_0 \cup \K$ has ground interpolation in each of the
following cases: 
\begin{enumerate-}
\item $\T_0$ has ground interpolation and is equality
interpolating. 
\item $\T_0$ allows quantifier elimination and is equality
interpolating. 
\item $\T_0$ is universal and allows quantifier elimination. 
\end{enumerate-}
\label{cor:loc-th-int}
\end{cor}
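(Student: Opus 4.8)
The plan is to derive each of the three cases from Theorem~\ref{hierarchical-gr-int}, whose only nontrivial hypothesis (beyond $W$-separability and the syntactic restriction on $\K$, which are given) is that $\T_0$ has \emph{general} ground interpolation together with an effective procedure for computing general ground interpolants. So the whole argument reduces to showing that, in each case, $\T_0$ has general ground interpolation; Theorem~\ref{hierarchical-gr-int} then yields ground interpolation for $\T_0 \cup \K$. (For the purposes of this corollary we only need \emph{existence} of interpolants, so the effectiveness clause can be dropped; if one wants the effective version, one invokes the effectivity of quantifier elimination in case (2)/(3) and assumes an effective general-ground-interpolation procedure in case (1).)

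First I would handle case (1). By Theorem~\ref{thm:ghilardi-amalg-interp}(3), if $\T_0$ has ground interpolation and is equality interpolating, then $\T_0$ is strongly sub-amalgamable; and by Theorem~\ref{thm:ghilardi-amalg-interp}(2), strong sub-amalgamability is equivalent to general ground interpolation. Hence $\T_0$ has general ground interpolation, and Theorem~\ref{hierarchical-gr-int} applies. Next, case (2): if $\T_0$ allows quantifier elimination, then in particular it has ground interpolation (this is the observation made just before Theorem~\ref{thm:QE-GGI}: eliminate the non-shared constants from $A$ w.r.t.\ $\T_0$). Combined with the assumption that $\T_0$ is equality interpolating, case (2) reduces to case (1). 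Finally, case (3): by Theorem~\ref{thm:QE-GGI}, a universal theory which allows quantifier elimination has general ground interpolation directly, so again Theorem~\ref{hierarchical-gr-int} applies. (Alternatively, by Theorem~\ref{thm:ghilardi-amalg-interp}(4) a universal theory with quantifier elimination is equality interpolating, reducing case (3) to case (2).)

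In each case, once $\T_0$ is known to have general ground interpolation, Theorem~\ref{hierarchical-gr-int} tells us that for ground $\Sigma_0 \cup \Sigma$-formulae $A, B$ with $A \wedge B \models_{\T_0 \cup \K} \perp$, an interpolant of the ground formulae $\K[W(A,B)] \cup A$ and $\K[W(B,A)] \cup B$ computed w.r.t.\ $\T_0$ (in fact w.r.t.\ $\T_0 \cup {\sf UIF}_\Sigma$) is a ground interpolant of $A$ and $B$ w.r.t.\ $\T_0 \cup \K$; this is precisely the ground interpolation property for $\T_0 \cup \K$. I do not expect any real obstacle here: the corollary is a packaging of Theorem~\ref{hierarchical-gr-int} with the implications among ground interpolation, equality interpolation, strong sub-amalgamability and general ground interpolation recorded in Theorem~\ref{thm:ghilardi-amalg-interp}, plus the elementary fact that quantifier elimination implies ground interpolation. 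The only point needing a little care is to make sure the syntactic hypothesis on $\K$ (every variable occurs below an extension function) is carried through so that $\K[W(A,B)] \cup A$ and $\K[W(B,A)] \cup B$ really are \emph{ground}, which is exactly what is assumed and what Theorem~\ref{hierarchical-gr-int} uses.
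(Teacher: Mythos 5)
Your proof is correct and takes essentially the same approach as the paper: in each case you reduce to showing that $\T_0$ has general ground interpolation (via Theorem~\ref{thm:ghilardi-amalg-interp}(3),(2) for case (1), and via Theorem~\ref{thm:QE-GGI} or Theorem~\ref{thm:ghilardi-amalg-interp}(4) for cases (2),(3)), and then invoke Theorem~\ref{hierarchical-gr-int}. The only cosmetic divergence is that the paper's main route for case (3) reduces to case (2) via Theorem~\ref{thm:ghilardi-amalg-interp}(4), which you yourself record as an alternative.
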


\begin{proof}
(1) If $\T_0$ has ground interpolation and is equality
interpolating, then by Theorem~\ref{thm:ghilardi-amalg-interp}(3) and
(2), the extension of $\T_0$
with uninterpreted function symbols in $\Sigma$ has ground
interpolation.\\ 
(2) If $\T_0$ allows quantifier elimination then it has ground
interpolation, so (1) can be used. 
(3) follows from (2) and
Theorem~\ref{thm:ghilardi-amalg-interp}(4).
\end{proof}

\subsection{Ground Interpolation and Model Completions} 
It is sometimes difficult to check directly whether the theory $\T_0$
has ground interpolation. If $\T_0$ has a model completion with good 
properties, this becomes easier to check. In this case, we can use
quantifier elimination in the model completion to compute the
interpolant. 

\begin{thm}
Let $W$ be an amalgamation closure operator, and 
let ${\mathcal T}_0 \cup {\mathcal K}$ be a $W$-separable extension of ${\mathcal  
  T}_0$ with a set of clauses ${\mathcal K}$ in which every variable  
occurs below an extension function.  

Assume that $\T_0$ has a model companion $\T_0^*$ with the following 
properties: 
\begin{enumerate-}
\item $\T_0 \subseteq \T_0^*$; 
\item $\T^*_0$ has general ground interpolation. (This can
    happen for instance when $\T^*_0$ allows quantifier elimination and
    is equality interpolating.) 
\end{enumerate-}
Then $\T_0 \cup \K$ has ground interpolation.  
\label{thm-int-local-mc}
\end{thm}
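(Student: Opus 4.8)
The plan is to peel off the local‑extension layer with Theorem~\ref{hierarchical-gr-int}, so the only thing that really needs proof is its hypothesis on the base theory: that $\T_0$ itself (not merely $\T_0^*$) has general ground interpolation. To see this, fix a signature $\Sigma'$ disjoint from $\Sigma_0$ and ground $\Sigma_0 \cup \Sigma'$-formulae $A, B$ with $A \wedge B \models_{\T_0 \cup {\sf UIF}_{\Sigma'}} \perp$. Since $\T_0 \subseteq \T_0^*$, every model of $\T_0^* \cup {\sf UIF}_{\Sigma'}$ is a model of $\T_0 \cup {\sf UIF}_{\Sigma'}$, hence $A \wedge B \models_{\T_0^* \cup {\sf UIF}_{\Sigma'}} \perp$; as $\T_0^*$ has general ground interpolation, $\T_0^* \cup {\sf UIF}_{\Sigma'}$ has ground interpolation, producing a ground formula $I$ whose constants and $\Sigma'$-symbols all occur in both $A$ and $B$, with $A \models_{\T_0^* \cup {\sf UIF}_{\Sigma'}} I$ and $B \wedge I \models_{\T_0^* \cup {\sf UIF}_{\Sigma'}} \perp$. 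To transport $I$ back to $\T_0$ I would use that $\T_0$ and $\T_0^*$ are co-theories (by definition of a model companion, Remark~\ref{cor:co-theories1}) and that adding a common family of uninterpreted function symbols to two co-theories again yields co-theories: a model of $\T_0 \cup {\sf UIF}_{\Sigma'}$ has a $\Pi_0$-reduct that embeds into some model of $\T_0^*$, which can then be expanded with $\Sigma'$-operations (copied from the source on the image of the embedding, chosen arbitrarily elsewhere) so that the embedding becomes a $(\Sigma_0 \cup \Sigma')$-embedding, and symmetrically. Then Lemma~\ref{lem:co-theories-sat} (in its routine extension to ground formulae) shows that $A \wedge \neg I$ and $B \wedge I$ are satisfiable w.r.t.\ $\T_0 \cup {\sf UIF}_{\Sigma'}$ iff w.r.t.\ $\T_0^* \cup {\sf UIF}_{\Sigma'}$; both being unsatisfiable over the latter, they are unsatisfiable over the former, i.e.\ $A \models_{\T_0 \cup {\sf UIF}_{\Sigma'}} I$ and $B \wedge I \models_{\T_0 \cup {\sf UIF}_{\Sigma'}} \perp$. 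As $\Sigma'$ was arbitrary, $\T_0$ has general ground interpolation (equivalently, by Theorem~\ref{thm:ghilardi-amalg-interp}, $\T_0$ has ground interpolation and is equality interpolating).

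With this in hand the conclusion is immediate: $\T_0$ has general ground interpolation, $\T_0 \cup \K$ is $W$-separable, and every variable in $\K$ occurs below an extension function, so Theorem~\ref{hierarchical-gr-int} (equivalently, Corollary~\ref{cor:loc-th-int}(1)) applies and gives that $\T_0 \cup \K$ has ground interpolation, an interpolant of $A$ and $B$ being obtained as an interpolant of $\K[W(A,B)] \cup A$ and $\K[W(B,A)] \cup B$ w.r.t.\ $\T_0 \cup {\sf UIF}_{\Sigma}$. If in addition one has a procedure for computing general ground interpolants in $\T_0^*$ (for instance via quantifier elimination in $\T_0^*$), the whole construction is effective, since the interpolants produced in $\T_0^* \cup {\sf UIF}_{\Sigma'}$ are already interpolants in $\T_0 \cup {\sf UIF}_{\Sigma'}$.

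The delicate point is the back-transfer in the first step: one has to check that attaching uninterpreted function symbols to a pair of co-theories again gives co-theories, so that Lemma~\ref{lem:co-theories-sat} applies at the level of $\T_0 \cup {\sf UIF}_{\Sigma'}$ rather than merely of $\T_0$; this rests on the fact that embeddings preserve all quantifier-free (in particular ground) formulae, together with the freedom to interpret the extra symbols arbitrarily off the image of an embedding. Note that, in contrast with Theorem~\ref{th-transfer12}, this argument does \emph{not} require models of $\T_0 \cup \K$ to embed into models of $\T_0^* \cup \K$: the model-companion hypotheses are used purely on the base theory $\T_0$. The remaining verifications are routine and are already performed inside the proof of Theorem~\ref{hierarchical-gr-int}: that the $C$-constants of the interpolant are common to $A$ and $B$ (clause (5) of an amalgamation closure forces $W(A,B)$ to consist of $T_A$-pure terms), and that a $\models_{\T_0 \cup {\sf UIF}_{\Sigma}}$-entailment is a fortiori a $\models_{\T_0 \cup \K}$-entailment.
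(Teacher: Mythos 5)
Your proof is correct and takes essentially the same route as the paper's: the key step in both is showing that adding uninterpreted function symbols preserves the co-theory relation between $\T_0$ and $\T_0^*$ (via the same partial-structure extension argument), so that Lemma~\ref{lem:co-theories-sat} transfers the $\T_0^*$-interpolant back to $\T_0$, after which $W$-separability finishes the job. You merely factor the argument more cleanly by first extracting the intermediate claim that $\T_0$ inherits general ground interpolation from its model companion $\T_0^*$ and then invoking Theorem~\ref{hierarchical-gr-int}/Corollary~\ref{cor:loc-th-int}(1), whereas the paper performs the $W$-separability reduction first and runs the co-theory transfer inline on the resulting ground formulae.
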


\begin{proof}
Assume that ${\mathcal T}_0 \cup {\mathcal K} \cup (A \wedge B)
\models \perp$. 
By $W$-separability 
${\mathcal T}_0 \cup {\mathcal K} \cup (A \wedge B) \models \perp$ iff 
${\mathcal T}_0 \cup {\mathcal K}[W(A, B)] \cup A \cup 
{\mathcal K}[W(A, B)] \cup B \models \perp$. 
As every variable occurs in $\K$ below an extension function, 
${\mathcal K}[W(A, B)] \cup A \cup 
{\mathcal K}[W(A, B)] \cup B$ is a set of ground formulae. 

Condition (1) implies that every model of $\T_0^* \cup {\sf
  UIF}_{\Sigma}$ is a model of $\T_0\cup {\sf
  UIF}_{\Sigma}$.  

From the assumption that $\T_0^*$ is a model companion of $\T_0$ we
know that every model of $\T_0$ embeds into a model of $\T_0^*$.
This implies that every model of $\T_0 \cup {\sf
  UIF}_{\Sigma}$ embeds into a model of $\T_0^* \cup {\sf
  UIF}_{\Sigma}$. Indeed, let $\A$ be a model of 
$\T_0 \cup {\sf UIF}_{\Sigma}$. Then $\A_{|\Pi_0}$ is a model of $\T_0$ thus 
it embeds into a model $\B$ of $\T_0^*$.
We define a partial $\Sigma$-structure $P$, having as
$\Pi_0$-reduct $\B$, and such 
that for every $f \in \Sigma$ with arity $n$, $f_{P}(a_1, \dots,
a_n)$ is defined iff $a_1, \dots, a_n \in |\A|$, and if so 
$f_{P}(a_1, \dots,a_n) = f_{\A}(a_1, \dots, a_n)$. $P$ can be 
transformed into a total algebra ${\mathcal P}$ by selecting an element $c$ in its support
and setting the values of all function symbols in $\Sigma$ to $c$ 
if they are undefined in $P$. ${\mathcal P}$ has the same 
$\Pi_0$-reduct as $P$, namely $\B$, and is thus a model of $\T_0^*
\cup {\sf UIF}_{\Sigma}$.

It therefore follows that $\T_0 \cup {\sf
  UIF}_{\Sigma}$ and $\T_0^* \cup {\sf
  UIF}_{\Sigma}$ are co-theories, so, 
by Lemma~\ref{lem:co-theories-sat}, ${\mathcal T}^*_0 \cup {\mathcal K}[W(A, B)] \cup A \cup 
{\mathcal K}[W(A, B)] \cup B \models \perp$. 
As $\T_0^*$ has general ground interpolation, we know that there exists a
ground formula $I$ containing only the common constants and extension 
functions of ${\mathcal K}[W(A, B)] \cup A$ and
${\mathcal K}[W(A, B)] \cup B$ such that:
\begin{enumerate}
\item[(a)]  ${\mathcal K}[W(A, B)] \cup A \cup \neg I \models_{\T^*_0 \cup {\sf UIF}_{\Sigma}}
  \perp$, hence by Lemma~\ref{lem:co-theories-sat}, \\
${\mathcal K}[W(A, B)] \cup A \cup \neg I \models_{{\T}_0 \cup {\sf UIF}_{\Sigma}}
  \perp$; 
\item[(b)] ${\mathcal K}[W(B, A)] \cup B \cup I \models_{\T^*_0 \cup {\sf UIF}_{\Sigma}} 
  \perp$, hence by Lemma~\ref{lem:co-theories-sat}, \\
${\mathcal K}[W(B, A)]
  \cup B \cup I \models_{\T_0 \cup {\sf UIF}_{\Sigma}} \perp$. 
\end{enumerate}
Thus, $I$ is an interpolant w.r.t.\ $\T_0 \cup {\sf UIF}_{\Sigma}$, hence,
as can be seen from the proof of Theorem~\ref{hierarchical-gr-int}, it is an interpolant of $A$ and $B$ w.r.t.\ $\T_0
\cup \K$.
\end{proof}

\begin{exa}
Consider the theory extension $\T_0 \cup \K$ in Examples~\ref{example-hierarchic},
\ref{ex-idea-interp} and \ref{ex1-interp}, where $\T_0 = {\sf TOrd}$ is the theory
of total orderings and $\K = \{ {\sf SGC}(f, g), {\sf Mon}(f, g) \}$. 
Let $A$ and $B$ be as in Example~\ref{example-hierarchic}: 

\medskip
\noindent 
~~~~~~~~~~~~~$A:~~ d \leq g(a) ~\wedge~ a \leq c \quad \quad 
B:~~  b \leq d ~\wedge~ f(b) \not\leq c.$

\medskip
\noindent 
We already proved that $A \wedge B \models_{\T_0 \cup \K} \perp$, 
by using hierarchical reasoning in local theory extensions;  after 
instantiation and purification we obtained: 
$$\begin{array}{l|ll}
\hline 
{\sf Extension} & ~~~~~~{\sf Base} \\
D_A \wedge D_B & ~A_0 \wedge B_0  \wedge {\sf SGc}_0 \wedge {\sf
  Mon}_0 & \wedge {\sf Con}_0  \\
\hline 
a_1 \approx g(a) ~   & ~A_0 = d \leq a_1  \wedge a \leq c & {\sf SGc}_0   = b \leq a_1 \rightarrow b_1 \leq a  \\
b_1 \approx f(b)  & ~B_0 = b \leq d \wedge  c < b_1  & {\sf Con}_A \wedge {\sf Mon}_A = a \lhd a \rightarrow a_1 \lhd a_1, \lhd \in \{ \approx, \leq \}\\
& &  {\sf Con}_B \wedge {\sf Mon}_B =  b \lhd b \rightarrow b_1 \lhd b_1,~ \lhd \in \{ \approx, \leq \} \\
\hline 
\end{array}$$
As mentioned before, 
$A_0 \wedge B_0 \models b \leq a_1$ and, in fact: 
$A_0 \wedge
B_0 \models b \leq d \wedge d \leq a_1$ ($d$ is a shared
constant).
After separation and purification of the newly introduced
instances of the axioms ${\sf SGc}$ and ${\sf Mon}$ 
(using $d_1$ for $f(d)$) as explained in \cite{Sofronie-lmcs}
we obtain: 
\begin{itemize}
\item ${\overline A}_0 = d \leq a_1  \wedge a \leq c \wedge (d \leq
  a_1 \rightarrow d_1 \leq a)$ equiv. to $(d \leq a_1
  \wedge a \leq c \wedge d_1 \leq a)$
\item ${\overline B}_0 = b \leq d \wedge c <  b_1  \wedge (b
  \leq d \rightarrow b_1 \leq d_1)$ equiv. to $(b \leq
  d \wedge  b_1 \not\leq c \wedge b_1 \leq d_1)$
\end{itemize}
We can use a method for ground interpolation in the theory of total
orderings to obtain a ground interpolant $I_0$. However, it might be
more efficient to do so by using quantifier elimination in the model
completion of $\T_0$ (the theory of dense total orderings
without endpoints) to eliminate the constants $a, a_1$ from
${\overline A}_0$. 
We can eliminate quantifiers as follows: 
\[ \begin{array}{lll}
\exists a \exists a_1 \, ( d \leq a_1  \wedge a \leq c \wedge (d \leq
  a_1 \rightarrow d_1 \leq a)) & \equiv & \exists a \exists a_1 \, ( d
  \leq a_1  \wedge a \leq c \wedge d_1 \leq a)\\
& \equiv & \exists a \, ( a \leq c \wedge d_1 \leq a)\\
& \equiv & d_1 \leq c.
\end{array} \] 
We obtain the interpolant $I_0 = d_1 \leq c$. Since $d_1$ is used 
as an abbreviation for $f(d)$, we replace it back and obtain the
interpolant $I = f(d) \leq c$. 

\medskip
A similar result can also be obtained using $W$-separability with the second 
of the amalgamation operators defined in Example~\ref{ex1-interp}: 
If $D_{AB} = \{ d \}$ is the set of constants
common to $A$ and $B$ which can be used for  $\leq$-interpolation then 
an amalgamation closure which can be used is 
$W(A, B) = {\sf
  st}(A) \cup \{ f(e), g(e) \mid e \in D_{AB} \} = \{ a, c, d, g(a) \} \cup
\{ f(d), g(d) \}$ and $W(B, A) = {\sf st}(B) \cup \{ f(e), g(e)
\mid e \in D_{AB} \} = \{ b, c, d, f(b) \} \cup \{ f(d), g(d) \}$. 
The results in \cite{Sofronie-lmcs} show that the smaller sets 
$W'(A, B) = \{ a, c, d, g(a) \} \cup
\{ f(d) \}$ and $W'(B, A) = \{ b, c, d, f(b) \} \cup \{ f(d) \}$ are
sufficient for this example. 
After instantiation and purification we obtain the conjunctions 
${\overline A}_0$ and ${\overline B}_0$ of ground clauses we
considered above. 
\end{exa}

\subsection{Symbol Elimination and Interpolation} For $W$-separable theories
we can  use the method for symbol elimination  in Section~\ref{symb-elim}
for computing interpolants. If $\T_0 \cup \K[W(A, B)] \cup A \cup
\K[W(B, A)] \cup B \models \perp$, the formula $\Gamma_2$ obtained 
using Steps 1--4 in Section~\ref{symb-elim} for $\T_0 \cup \K[W(A, B)] \cup
A$ (with $\Sigma_P$ consisting of the common constants) is an
interpolant. 

\begin{thm}
If $\T_0 \cup \K[W(A, B)] \cup A \cup
\K[W(B, A)] \cup B \models \perp$, the formula $\Gamma_2$ obtained 
using Steps 1--4 in Sect.~\ref{symb-elim} for $\T_0 \cup \K[W(A, B)] \cup
A$ (with $\Sigma_P$ consisting of the common extension functions and constants) is an
interpolant of $A$ and $B$ w.r.t.\ $\T_0 \cup \K$. 
\label{app:th:se}
\end{thm}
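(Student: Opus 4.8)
The plan is to apply Steps 1--4 of Section~\ref{symb-elim} to the ground formula $G := \K[W(A,B)] \cup A$ (this is ground because, by hypothesis, every variable of $\K$ occurs below an extension function, so no further instantiation is needed), taking as parameters $\Sigma_P$ the extension function symbols and constants common to $A$ and $B$ and writing $\Sigma$ for the remaining, $A$-only extension function symbols occurring in $\K[W(A,B)]\cup A$; then we check that the formula $\Gamma_2(\overline{c_p})$ produced by Step 4 (keeping the common argument-constants $\overline{c_p}$ rather than quantifying them) satisfies the three defining properties of a ground interpolant of $A$ and $B$ w.r.t.\ $\T_0 \cup \K$. Note that $\K[W(B,A)] \cup B$ is ground for the same reason, and that the statement gives us exactly the unsatisfiability $\T_0 \cup \K[W(A,B)] \cup A \cup \K[W(B,A)] \cup B \models \perp$ that we will contradict.

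\textbf{Shared symbols.} By construction the only extension symbols occurring in $\Gamma_2$ are those of $\Sigma_P$, hence common to $A$ and $B$, and the only constants occurring in it are the constant parameters in $\Sigma_P$ together with the constants $\overline{c_p}$ that appear as arguments of $\Sigma_P$-functions in definitions $c_f := f(c_1,\dots,c_k)$; every other constant of $\K[W(A,B)]\cup A$ — in particular the fresh constants naming $\Sigma$-terms, or terms mixing an $A$-only symbol with other symbols — is placed into $\overline{c}$ in Step 2 and removed by the quantifier elimination of Step 3. That the constants in $\overline{c_p}$ are genuinely common is where the amalgamation-closure properties of $W$ enter (the $T_A$-purity of $W(A,B)$ together with the $\sim$-based notion of a common extension function), and making this bookkeeping precise is, I expect, the main obstacle: the constant partition of Step 2 was designed for the parameter/non-parameter split, and one has to argue that in this interpolation instantiation it in fact coincides with the common/non-common split, so that $\Gamma_2$ is a $\Pi_0\cup\Sigma_P$-formula over common constants only.

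\textbf{$A \models_{\T_0 \cup \K} \Gamma_2$.} This is the argument in the proof of Theorem~\ref{inv-trans-qe}(1). If $\A$ is a model of $\T_0 \cup \K \cup A$ then $\A \models \K[W(A,B)] \cup A$, so after expanding $\A$ by the definitions in ${\sf Def}$ it satisfies $G_1 \wedge {\sf Def}$, where $G_1 = \K_0 \cup G_0 \cup {\sf Con}_0$ is the purified formula of Step~1--2; hence $\A \models \exists \overline{x}\, G_1(\overline{c_p},\overline{c_f},\overline{x}) \wedge {\sf Def}$, and since $\A$ is in particular a model of $\T_0 \cup {\sf UIF}_{\Sigma_P}$, Lemma~\ref{lemma:symb-elim} yields $\A \models \Gamma_2(\overline{c_p})$.

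\textbf{$\Gamma_2 \wedge B \models_{\T_0 \cup \K} \perp$.} Suppose $\A \models \T_0 \cup \K \cup \Gamma_2 \cup B$. Then $\A \models \K[W(B,A)] \cup B$, and by Lemma~\ref{lemma:symb-elim} (again using $\A \models \T_0 \cup {\sf UIF}_{\Sigma_P}$) there is a valuation $\beta$ of $\overline{x}$ with $(\A,\beta) \models G_1(\overline{c_p},\overline{c_f},\overline{x})$ and $\A \models {\sf Def}$. Expand $\A$ to $\A'$ by interpreting the $A$-only plain constants in $\overline{x}$ via $\beta$ and, for each fresh constant $c_t \in \overline{x}$ with $t = g(\overline e)$ and $g \in \Sigma$, setting $g_{\A'}(\beta(\overline e)) := \beta(c_t)$ — this is well defined because the congruence clauses in ${\sf Con}_0 \subseteq G_1$ hold for $(\A,\beta)$ — and choosing the remaining values of the $\Sigma$-functions arbitrarily. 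By the $\sim$-based definition of a common symbol and the properties of $W$, the $\Sigma$-symbols occur neither in $B$, nor in $\K[W(B,A)]$, nor in $\Gamma_2$, so $\A'$ has the same $\Pi_0$-reduct and the same interpretation of all those formulae's symbols as $\A$; hence $\A' \models \T_0 \cup \K[W(B,A)] \cup B$. Since $(\A',\beta)$ now satisfies $G_1$ together with its definitions, reversing the purification gives $\A' \models \K[W(A,B)] \cup A$. Thus $\A'$ is a model of $\T_0 \cup \K[W(A,B)] \cup A \cup \K[W(B,A)] \cup B$, contradicting the hypothesis. Combining the three parts, $\Gamma_2$ is a ground interpolant of $A$ and $B$ w.r.t.\ $\T_0 \cup \K$.
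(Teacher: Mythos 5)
Your proof follows essentially the same route as the paper's: both use Lemma~\ref{lemma:symb-elim} to establish $A \models_{\T_0\cup\K} \Gamma_2$, and both prove $\Gamma_2 \wedge B \models_{\T_0\cup\K}\perp$ by contradiction, expanding a hypothetical countermodel back to a model of $\T_0\cup\K[W(A,B)]\cup A\cup\K[W(B,A)]\cup B$, contradicting the hypothesis. You are in fact somewhat more careful than the paper in the un-purification step: the paper passes from $\A^{\overline{c}_f,\overline{c}} \models (\K[W(A,B)])_0 \cup A_0 \cup {\sf Def}$ to $\A^{\overline{c}} \models \K[W(A,B)]\cup A$ without noting that the $\Sigma$-part of ${\sf Def}$ need not hold for $\A$'s native interpretation of the $\Sigma$-functions, whereas you explicitly reinterpret the $\Sigma$-functions (citing ${\sf Con}_0$ for well-definedness) and then observe that the reinterpretation leaves $B$, $\K[W(B,A)]$ and $\Gamma_2$ unaffected, so the contradicting model is still available. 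Your worry about whether the constants in $\overline{c}_p$ are genuinely common to $A$ and $B$ is legitimate but not a defect specific to your proof: the paper's own argument establishes only the two entailment conditions and takes the symbol-confinement part of the interpolant definition for granted, implicitly relying on the fact that $W(A,B)$ is $T_A$-pure and that common function symbols are applied to shared constants in the relevant examples.
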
 

\begin{proof}By Lemma~\ref{lemma:symb-elim}, we know that
the formulae 
$\exists {\overline  x} G_1({\overline c}_p, {\overline c}_f,
{\overline x}) \wedge {\sf Def}$ and $\Gamma_2({\overline c}_p)$ are 
equivalent w.r.t.\ $\T_0 \cup {\sf UIF}_{\Sigma}$.
In other words, for every model $\A$ of $\T_0 \cup {\sf
  UIF}_{\Sigma}$, $\A \models \Gamma_2({\overline c}_p)$ if and only
if its extension $\A^{{\overline c}_f}$ with the new constants ${\overline c}_f$ 
with interpretations defined such that ${\sf Def}$ are true, is a 
model of $\exists {\overline  x} G_1({\overline c}_p, {\overline c}_f,
{\overline x})$. 

We first prove that $A \models_{\T_0 \cup \K} 
\Gamma_2({\overline c}_p)$. 
Let $\A$ be a model of $\T_0 \cup \K \cup A$. 
Then $\A$ is a model of $\T_0 \cup \K[W(A, B)] \cup A$, so 
its extension $\A^{{\overline c}_f}$ with the new constants ${\overline c}_f$ 
with interpretations defined such that ${\sf Def}$ are true is a model
of $\T_0 \cup \K[W(A, B)])_0 \cup
A_0 \cup {\sf Def}$. 
It follows therefore that  $\A^{{\overline c}_f}$  is a model of 
$\exists {\overline
  x} G_1({\overline c}_p, {\overline c}_f, {\overline x}) \cup {\sf  Def}$. 
By Lemma~\ref{lemma:symb-elim}, $\A$ is then a model of $\Gamma_2({\overline c}_p)$.

\medskip
We now show that $\Gamma_2({\overline c}_p) \cup B \models_{\T_0 \cup
  \K} \perp$. Assume that this is not the case, i.e.\ there exists a
structure $\A$ which is a model of $\T_0 \cup \K$, of $B$  and of 
$\Gamma_2({\overline  c}_p)$.
Then: 
\begin{itemize}
\item $\A \models \K[W(B, A)] \cup B$.
\item As $\A$ is a model of $\Gamma_2({\overline  c}_p)$, its
  extension $\A^{{\overline c}_f}$ with the new constants ${\overline c}_f$ 
with interpretations defined such that ${\sf Def}$ are true 
is a model of $\exists {\overline
  x} G_1({\overline c}_p, {\overline c}_f, {\overline x})$. 
Thus,  there exists $\beta : X \rightarrow |\A|$ such that 
$A, \beta 
\models G_1({\overline c}_p, {\overline c}_f, {\overline  x})$. 

Let $G_1({\overline c}_p, {\overline c}_f, {\overline  c})$ be the
ground formula obtained from 
$G_1({\overline c}_p, {\overline c}_f, {\overline  x})$ by replacing
every variable $x_i$ in ${\overline x} = x_1, \dots, x_n$ with the 
constant $c_i$ in ${\overline c} = c_1, \dots, c_n$. 
Let $\A^{{\overline c}_f, {\overline c}}$ be the structure
that coincides with $\A^{{\overline c}_f}$, except
for the values of the constants in ${\overline c}$ which are given 
by the values of the variables ${\overline x}$ w.r.t.\ $\beta$. 
Then $\A^{{\overline c}_f, {\overline c}}\models  G_1({\overline c}_p,
{\overline c}_f, {\overline  c})$.

From the definition of $G_1$, $\A^{{\overline c}_f, {\overline
    c}}\models  \T_0 \cup \K[W(A, B)])_0 \cup
A_0 \cup {\sf Def}$, hence

$\A^{{\overline c}}\models \T_0 \cup \K[W(A, B)] \cup A$. 

\item 
As the constants ${\overline c}$ do not occur in $B$ or $\K[W(B, A)]$, 
$\A^{{\overline c}} \models \K[W(B, A)] \cup B$.
\end{itemize}
It follows that $\K[W(A, B)] \cup A \cup
\K[W(B, A)] \cup B$ is satisfiable w.r.t.\ $\T_0 \cup {\sf UIF}_{\Sigma}$. Contradiction. 
This shows that $\Gamma_2({\overline
  c}_p) \wedge B \models_{\T_0 \cup \K} \perp$.
\end{proof}

\begin{exa}
Consider the theory $\T_0 \cup \K$ in 
Example ~\ref{ex1-symb-elim}: $\T_0$ is the theory of dense total orderings without
endpoints; we consider its extension with functions $\Sigma_1 =
\{ f, g, h, c \}$
whose properties are axiomatized by 

\[ \begin{array}{ll} 
\K := \{ & \forall x (x \leq c \rightarrow g(x) \approx f(x)), \quad \quad  \forall x  ( c < x \rightarrow g(x) \approx h(x)) \quad \}.
\end{array} \] 

\

\noindent Let $A$ and $B$ be the formulae: 
\begin{itemize}
\vspace{1mm}
\item $A := \{ c_1 \leq c_2, \quad g(c_1) \approx a_1, \quad  g(c_2) \approx a_2, a_1 > a_2  \}$
\vspace{1mm}
\item $B := \{ c_1 \leq c < c_2, \quad f(c_1) \approx b_1, \quad
  h(c_2) \approx b_2, \quad  b_1 \leq
b_2 \}$. 
\end{itemize}
\vspace{1mm}
\noindent It is easy to check that $\T_0 \cup \K \cup A \cup B \models
\perp$. 
The common symbols of $A$ and $B$ are $c_1$ and $c_2$, as well the
constant $c$ which is part of the signature of the theory $\T_0 \cup
\K$. All function symbols $f, g, h$ can be considered to be shared
because they are used together in the axioms in $\K$. 
An interpolant of $A$ and $B$ cannot contain the constants $a_1, a_2, b_1, b_2$. 

We can compute an interpolant by eliminating the symbols $a_1, a_2$ 
from $A$ with the method described in Section~\ref{symb-elim}. 
Note that the formula $A$ coincides with the set obtained from the
family $G$ of ground clauses considered in Example~\ref{ex1-symb-elim}
after introducing new constants $a_1, a_2$ for the extension terms $g(c_1),
g(c_2)$.   We apply Steps 1--4 to this formula. 
Let $\Gamma_2$ be the formula computed in Step 4 in
Example~\ref{ex1-symb-elim}, namely: 

\[ \begin{array}{ll} 
\Gamma_2 = ( & (c_1 < c_2 \leq c ~~\wedge~~ f(c_1)> f(c_2))  ~~\vee \\
& (c_1 \leq c < c_2 ~~\wedge~~ f(c_1) > h(c_2))  ~~\vee \\
& (c < c_1 < c_2 ~~\wedge~~ h(c_1) > h(c_2) ) \quad )
\end{array}
\] 

\

\noindent 
By Theorem~\ref{app:th:se}, this formula is an interpolant of $A$ and $B$. 
\end{exa}


\section{Conclusions, Summary of Results}
In this paper we studied several problems related to symbol
elimination and ground interpolation in theories and theory 
extensions. We here briefly summarize these results, then discuss 
some directions in which we would like to extend them.  

\subsection{Amalgamation, Ground Interpolation, Quantifier Elimination}
\label{summary-1}

 It is well-known that if a theory has quantifier 
elimination then this can be used for symbol elimination and 
also for computing ground interpolants of ground formulae. 
However, the great majority of logical theories do not have 
quantifier elimination. We showed that if a theory $\T$ has a 
model completion $\T^*$, then interpolants computed w.r.t.\
$\T^*$ are also interpolants w.r.t.\ $\T$. As there are many 
examples of model completions of theories $\T$ which allow 
quantifier elimination, this can be used for computing interpolants 
w.r.t.\ $\T$.

\medskip

\noindent 
The links between the different notions amalgamation, quantifier-free
interpolation, quantifier elimination and the quality of being
equality interpolating are summarized below:


\medskip

\begin{diagram}
\T \text{ AP} & \rEquivalent^{\T
  \text{universal}}_{\text{Thm.\ref{bacsich75} \cite{Bacsich75}}} &
\T \text{ QF-Int} &
\rEquivalent_{\text{Lem.~\ref{t-forall-t-ground-int}}}& \T_{\forall}
\text{ QF-Int}& \rEquivalent_{\text{Thm.\ref{bacsich75}}}&
\T_{\forall} \text{  AP} \\
\uImplies &
\ruEquivalent_{\text{Thm.\ref{thm:ghilardi-amalg-interp}(1),
    \cite{Ghilardi-2014}}} & \uImplies & \luImplies& & & \uEquivalent\\
\T \text{  subAP}& & \T \text{  GQF-Int} & \lImplies^{\T
  \text{univ.}}_{\text{Thm.}\ref{thm:QE-GGI}} & \T \text{ QE} & & \T_{\forall} \text{ subAP} \\
\uImplies &
\ruEquivalent^{\text{Thm.\ref{thm:ghilardi-amalg-interp}(2),
    \cite{Ghilardi-2014}}} & &  \ldImplies^{\T
  \text{univ.}}_{\text{Thm.\ref{thm:ghilardi-amalg-interp}(4), \cite{Ghilardi-2014}}}\\
\T \text{ strong subAP} & \rEquivalent^{\text{ QF-Int}}_{\text{Thm.\ref{thm:ghilardi-amalg-interp}(2),
    \cite{Ghilardi-2014}}} & \T \text{  EQ-Int}
\end{diagram}

\

\noindent We used the following abbreviations: 
\begin{itemize}
\item AP: amalganation property;  
\item (strong) subAP: (strong) sub-amalgamation property; 
\item (G)QF-Int: (general) quantifier free interpolation; 
\item EQ-Int: Equality Interpolating
\end{itemize}

\

\noindent 
Let $\T$ be a theory and $\T^*$ a model companion of $\T$. 
The links between the properties of $\T^*$ and 
amalgamation in $\T$ and the links between ground interpolation in
$\T$ and $\T^*$ are summarized below: 

\medskip

\begin{diagram}
\T^* \text{ model completion of } \T &
\rEquivalent_{\text{Thm.\ref{thm:criteria-model-compl}
    \cite{chang-keisler}}}  & \T \text{ has AP} \\
& \luImplies & \dEquivalent^{\T \text{universal}}_{\text{Thm.\ref{thm:criteria-model-compl}
    \cite{chang-keisler}}}   \\
\T \text{ has QF-Int}& \rImplies^{\T
  \text{universal}}_{\text{Thm.}\ref{thn:univ-GI-QE}} &    \T^* \text{ has
  QE} & \rEquivalent_{\text{Thm.}\ref{mc-qe-iff-univ-fragm-amalg}} & \T_{\forall} \text{ has AP} \\
& \luImplies^{\text{Thm.\ref{thm:mc-th}}} & \dImplies \\
\T \text{ has QF-Int} & \lEquivalent^{\T \text{ universal}}_{\text{Cor.\ref{cor:gi}}}& \T^* \text{ has
  QF-Int} \\
\end{diagram}

\subsection{Symbol Elimination in Theory Extensions}

 We analyzed how this approach can be lifted to 
{\em extensions} of a theory $\T$, by identifying situations in 
which we can use existing methods for symbol elimination in 
$\T$ for symbol elimination or for ground interpolation 
in the extension. If $\T$ has a model completion $\T^*$, 
we analyzed under which conditions we can use possibilities 
of symbol elimination in  $\T^*$ for such tasks. 

\

\noindent The results we obtained are
schematically presented below: 
Assume that $\T_0$ is a theory and $\T = \T_0 \cup \K$ is an extension
of $\T_0$ with additional function symbols, whose properties are 
axiomatized by a set $\K$ of clauses, and $\T_0^*$ a model completion
of $\T_0$ (when applicable). 

\

\noindent \begin{tabular}{@{}|l@{}l|l|}
\hline 
& Condition & Symbol Elimination \\
\hline 
\hline 
$\T_0$~ &~~  allows quantifier elimination   & For every set $G$ of
clauses and every set $T$ of \\
& &  terms  there exists a universal formula $\forall y  \Gamma_T(y)$ \\
& & s.t. ~~~ (*)~~~ $\T_0 \cup \K \cup \forall y \Gamma_T(y) \models \neg G$. \\[2ex]
\cline{2-3}
& ~+ $\T_0 \subseteq \T_0 \cup \K$ sat.  ${\sf Comp}_f$ & $\forall y  \Gamma_T(y)$ weakest universal constraint 
with (*)\\
& ~+ $\K$ flat \& linear& \\ 
\hline 
$\T_0$~ &~~ does not allow quantifier elimination & $\forall y  \Gamma_T(y)$ satisfying (*)
can be obtained \\
& + $\T^*_0$ allows quantifier elimination; & using quantifier elimination in $\T_0^*$ \\
& + every model of $\T_0 \cup \K$ & \\
& ~~ embeds into one of $\T^*_0 \cup
\K$ & \\
\hline 
\end{tabular}

\subsection{Separability, Amalgamation and Ground Interpolation}

In the study of ground interpolation in extensions $\T {\cup} \K$
of a theory $\T$ with a set of clauses $\K$ we 
followed an approach proposed in \cite{Wies, Wies-journal}, in which 
the terms needed to separate the instances of $\K$ are 
considered explicitly.  Our analysis extends
  both the results in \cite{Sofronie-lmcs} and those in \cite{Wies} 
  mainly by avoiding the restriction to convex base theories
  (in \cite{Wies-journal} the formulation is more general) and by
  identifying conditions under which $W$-separability implies
an amalgamation property. 
In addition, when formulating our theorems we explicitly 
pointed out all conditions needed for hierarchical interpolation 
which were missing or only implicit in \cite{Wies}.



\subsection{Future Work}
The results we established in this paper have direct applicability to
the verification of parametric systems. In the future we plan to
further analyze such situations. 
The results about the links between separability, amalgamation and 
ground interpolation we established in Theorem~\ref{thm:Wsep+intW-impl-pamalg} 
use the fact that we assume that the sets 
of terms which need to be used in the separations, for 
equality interpolation, and in the interpolants themselves 
can be described using a closure operator. 
We would like to obtain criteria that 
guarantee the existence of interpolants containing terms that can be 
described using such operators. 
In future work we would like to also extend the approach to
interpolation and symbol elimination described here such that it can
be used for the 
study of {\em uniform interpolation} in logical theories and 
theory extensions.

\section*{Acknowledgments}
I thank the reviewers for their helpful comments.



\begin{thebibliography}{10}


\bibitem{Bacsich75}
Bacsich, P.D. 
\newblock Amalgamation properties and interpolation theorem for equational
  theories.
\newblock {\em Algebra Universalis}, 5:45--55, 1975.




\bibitem{Ghilardi-2014}
Bruttomesso, R.,  Ghilardi, S. and Ranise, S.
\newblock Quantifier-free interpolation in combinations of equality interpolating
theories. 
\newblock {ACM} Trans. Comput. Log. 15(1):5 (2014) 

\bibitem{chang-keisler}
Chang, C.C., Keisler, J.J.
\newblock Model Theory. North-Holland, Amsterdam (1990)


\bibitem{Craig57}
Craig, W.
\newblock Linear reasoning. {A} new form of the {H}erbrand-{G}entzen theorem.
\newblock {\em J. Symb. Log.}, 22(3):250--268, 1957.


\bibitem{enderton}
Enderton, H.B. 
\newblock A Mathematical Introduction to Logic. Harcourt Academic Press,
  2nd edn. (2002)

\bibitem{Ganzinger-01-lics}
Ganzinger, H. 
\newblock Relating semantic and proof-theoretic concepts for polynomial
  time decidability of uniform word problems. 
\newblock In: Logic in {C}omputer
  {S}cience, LICS'01.  81--92. IEEE Computer Society Press (2001)


\bibitem{ghilardi:model-theoretic-methods}
Ghilardi, S. 
\newblock Model-theoretic methods in combined constraint satisfiability.
\newblock  Journal of Automated Reasoning  33(3-4),  221--249 (2004)


\bibitem{Ghilardi-2017}
Ghilardi, S. and Gianola, A.
\newblock Interpolation, Amalgamation and Combination (The Non-disjoint Signatures Case).
\newblock In: Dixon, C. and Finger, M. (eds.), Proc.\ FroCoS 2017, LNCS 10483, 316--332, Springer (2017)


\bibitem{GivanMcAllester92}
Givan, R., McAllester, D.A. 
\newblock New results on local inference relations. 
\newblock In: Nebel, B., Rich, C., Swartout, W.R. (eds.), Knowledge Representation and
  Reasoning, KR'92.  403--412 (1992)


\bibitem{McAllester-acm-tocl-02}
Givan, R., McAllester, D.A.
\newblock  Polynomial-time computation via local inference relations.
\newblock ACM Transactions on Comp. Logic, 3(4), 521--541 (2002)

\bibitem{KovacsVoronkov}
Hoder, K., Kov{\`a}cs, L. and Voronkov, A. 
\newblock Interpolation and symbol elimination in Vampire. 
\newblock In: Giesl, J. and H{\"{a}}hnle, R. (eds.), 
Proc.\ IJCAR 2010, LNAI, vol. 6173, 188-195. Springer (2010) 




\bibitem{hodges}
Hodges, W. 
\newblock A Shorter Model Theory. Cambridge University Press (1997)

\bibitem{sofronie-ihlemann-jacobs-tacas08}
Ihlemann, C., Jacobs, S., Sofronie-Stokkermans, V. 
\newblock On local reasoning in
  verification. 
\newblock In: Ramakrishnan, C.R., Rehof, J. (eds.),  Proc.\ TACAS'08. 
  LNCS, vol. 4963,  265--281. Springer (2008)


\bibitem{hpilot}
Ihlemann, C., Sofronie-Stokkermans, V.: System description:
  \mbox{{H}-{P}{I}{L}o{T}}. In: Schmidt, R.A. (ed.) Proc.\ CADE-22. LNAI, vol. 5663, 
  131--139. Springer (2009)

\bibitem{Ihlemann-Sofronie-2010}
Ihlemann, C., Sofronie-Stokkermans, V.  
\newblock On hierarchical reasoning in
combinations of theories. 
\newblock In: Giesl, J. and H{\"{a}}hnle, R. (eds.)
Proc.\ IJCAR 2010, LNAI, vol. 6173, 30--45. Springer (2010)



\bibitem{McMillanRelationApproximation}
Jhala, R. and McMillan, K.L.
\newblock Interpolant-based transition relation approximation.
\newblock In: Proc.\ CAV'2005, LNCS, vol. 3576 , pages 39--51. Springer, 2005.

\bibitem{Jonsson65}
J{\'o}nsson, B.
\newblock Extensions of relational structures.
\newblock In Addison, J.W., Henkin, L. and Tarski, A. (eds.), {\em The Theory
  of Models, Proc. of the 1963 Symposium at Berkeley}, pages 146--157,
  Amsterdam, 1965. North-Holland.

\bibitem{Kapur-et-all-06}
Kapur, D., Majumdar, R. and Zarba C.G.
\newblock  Interpolation for data structures.
\newblock In: Proc.\ 14th ACM SIGSOFT International Symposium on Foundations of Software Engineering, pages 105--116, ACM 2006.


\bibitem{Krajicek97}
Kraj\'{\i}cek, J..
\newblock Interpolation theorems, lower bounds for proof systems, and
  independence results for bounded arithmetic.
\newblock {\em J. Symb. Log.}, 62(2):457--486, 1997.

\bibitem{malcev}
Mal'cev, A.I. 
\newblock Axiomatizable classes of locally free algebras of various types.
\newblock   The Metamathematics of Algebraic Systems. Collected
  Papers: 1936-1967, Studies in Logic and the Foundation of Mathematics,
  vol.~66, chap.~23. North-Holland, Amsterdam (1971)

\bibitem{McAllester93}
McAllester, D.A. 
\newblock Automatic recognition of tractability in inference relations.
\newblock   Journal of the  ACM  40(2),  284--303 (1993)


\bibitem{McMillanCAV03}
McMillan, K.L.
\newblock Interpolation and {SAT}-based model checking.
\newblock In: Proc.\ CAV'2003, LNCS,
vol. 2725, pages 1--13. Springer, 2003.

\bibitem{McMillanProver04}
McMillan, K.L.
\newblock An interpolating theorem prover.
\newblock In: Proc.\ TACAS'2004, LNCS, vol. 2988, pages 16--30. Springer, 2004.

\bibitem{McMillanSurvey05}
McMillan, K.L.
\newblock Applications of {C}raig interpolants in model checking.
\newblock In: Proc.\ TACAS'2005, LNCS, vol. 3440, pages 1--12. Springer, 2005.

\bibitem{poizat-book}
Poizat, B.
\newblock A Course in Model Theory: An Introduction to Contemporary
Mathematical Logic.
\newblock Springer, 2000


\bibitem{presburger}
Presburger, M. 
\newblock \"{U}ber die {V}ollst\"{a}ndigkeit eines gewissen {S}ystems der
  {A}rithmetik ganzer {Z}ahlen, in welchem die {A}ddition als einzige
  {O}peration hervortritt. 
\newblock Comptes Rendus du Premier Congr\`{e}s des
  Math\'{e}maticiens des Pays Slaves  92--101 (1929)

\bibitem{rybal-sofronie}
Rybalchenko, A. and Sofronie-Stokkermans, V. 
\newblock Constraint solving for
interpolation. 
\newblock  J. Symb. Comput. 45(11): 1212-1233 (2010)

\bibitem{sofronie-cade-05}
Sofronie-Stokkermans, V. 
\newblock Hierarchic reasoning in local theory extensions. 
\newblock In:
  Nieuwenhuis, R. (ed.),  Proc.\ CADE-20, LNAI, vol. 3632,  219--234. Springer (2005)


\bibitem{sofronie-frocos-07}
Sofronie-Stokkermans, V. 
\newblock Hierarchical and modular reasoning in complex
  theories: The case of local theory extensions. 
\newblock In: Konev, B., Wolter, F.
  (eds.),  Proc.\ FroCos'07, LNCS, vol. 4720,  47--71. Springer (2007)

\bibitem{Sofronie-lmcs}
Sofronie-Stokkermans, V. 
\newblock Interpolation in local theory extensions. 
\newblock Logical Methods in Computer Science 4(4) (2008)

\bibitem{sofronie-ijcar10}
Sofronie{-}Stokkermans, V. 
\newblock Hierarchical reasoning for the verification of parametric systems.
\newblock In: Giesl, J. and H{\"{a}}hnle, R. (eds.), Proc.\ {IJCAR}
2010, LNCS, vol. 6173, pages
  171--187. Springer (2010)

\bibitem{sofronie-cade13}
Sofronie{-}Stokkermans, V.:
\newblock Hierarchical reasoning and model generation for the verification of
  parametric hybrid systems.
\newblock In: Bonacina, M.P. (ed.),  Proc.\ {CADE-24}, LNCS, vol. 7898, pages 360--376.
  Springer  (2013)

\bibitem{sofronie-ijcar-2016}
Sofronie{-}Stokkermans, V.:
\newblock On interpolation and symbol elimination in theory extensions.
\newblock In: Olivetti, N. and Tiwari, A. (eds.), Proc.\ {IJCAR
    2016}, LNCS, vol.  9706, pages 273--289, Springer (2016)


\bibitem{Sofronie-Ihlemann07}
Sofronie-Stokkermans, V., Ihlemann, C.: Automated reasoning in some local
extensions of ordered structures. Journal of Multiple-Valued Logics and 
Soft Computing  13(4-6),  397--414 (2007)



\bibitem{Wies}
Totla, N. and Wies, T. 
\newblock Complete instantiation-based interpolation. 
\newblock In: Giacobazzi, R. and Cousot, R. (eds), Proc.\ POPL 2013, ACM (2013)


\bibitem{Wies-journal}
Totla, N. and Wies, T. 
\newblock Complete instantiation-based interpolation. 
\newblock Journal of Automated Reasoning 57: 37--65 (2016)


\bibitem{weispfenning-Q}
Weispfenning, V.
\newblock The complexity of linear problems in fields. 
\newblock Journal
 of Symbolic Computation  5(1/2),  3--27 (1988)

\bibitem{wheeler:ec}
Wheeler, W.H. 
\newblock Model-companions and definability in existentially complete
 structures. 
\newblock Israel Journal of  Mathematics  25,  305--330 (1976)


\bibitem{Wronski86}
Wro{\'n}ski, A.
\newblock On a form of equational interpolation property.
\newblock In {\em Foundations of logic and linguistics (Salzburg, 1983)}, pages
  23--29, New York, 1985. Plenum.

\end{thebibliography}
\end{document}